\tikzset{
  he/.style={
    rounded corners,
    inner sep=4pt,
  }
}
\definecolor{airforceblue}{rgb}{0.36, 0.54, 0.66}
\definecolor{battleshipgrey}{rgb}{0.52, 0.52, 0.51}
\definecolor{charcoal}{rgb}{0.21, 0.27, 0.31}
\newtheorem{lemma}{{\bf \sc Lemma}}
\newtheorem{corollary}{{\bf \sc Corollary}}
\newtheorem{proposition}{{\bf \sc Proposition}}
\newtheorem{observation}{{\bf \sc Observation}}
\def\Re{I\!\!R}
\def\eproof{\hbox{\hskip3pt\vrule width4pt height8pt depth1.5pt}}
\long\def\symbolfootnote[#1]#2{\begingroup \def\thefootnote{\fnsymbol{footnote}}\footnote[#1]{#2}\endgroup}
\renewcommand{\max}{\operatornamewithlimits{max}}
\renewcommand{\min}{\operatornamewithlimits{min}}
\pgfplotsset{width=10cm,compat=1.9}
\begin{document}

\title{Supply Chain Disruptions, the Structure of Production Networks, and the Impact of Globalization 
}
\author{Matthew L. Elliott and Matthew O. Jackson\thanks{%
Elliott is at the Faculty of Economics at Cambridge University.
Jackson is at the Department of Economics at Stanford University, and also an external faculty member of the Santa Fe Institute.  Web sites:  \href{https://sites.google.com/view/matthewlelliott/}{https://sites.google.com/view/matthewlelliott/} and
 \href{http://www.stanford.edu/\%7Ejacksonm/}{http://www.stanford.edu/$\sim$jacksonm/} .
We thank David Baqaee, Vasco Carvalho, Glenn Magerman, Ankur Mani, Mike Smith, Evan Storms, and Chao Wei for helpful discussions.
We gratefully acknowledge support under NSF grant SES-2018554, the European Research Council (ERC) under the European Union's Horizon 2020 research and innovation programme (grant agreement number \#757229) and the JM Keynes Fellowships Fund. Alastair Langtry, Shane Mahen, Sarah Taylor and Yifan Zhang provided excellent research assistance.
Disclosure: Jackson is a scientific advisor of (with stock options in)
Altana AI, which is involved in collecting supply chain data and advising companies and governments on supply chain issues.}
}
\date{January 2026}
\maketitle

\begin{abstract}
    We introduce a parsimonious multi-sector model of international production and use it to study the impact of a disruption in the production of some goods propagates to other goods and consumers, and how that impact depends on the goods' positions in, and overall structure of, the production network. We show that the short-run impact of a disruption can be dramatically larger than the long-run impact. The short-run disruption depends on the value of all of the final goods whose supply chains involve a disrupted good, while by contrast the long-run disruption depends only on the cost of the disrupted goods. We use the model to show how increased complexity of supply chains leads to increased fragility in terms of the probability and expected short-run size of a disruption. We also show how decreased transportation costs can lead to increased specialization in production, lowering the chances for disruption but increasing the impact conditional upon disruption.  We use the model to characterize the power that a country has over others via diversions of its production as well as quotas on imports and exports.

\smallskip

\noindent{\bf Keywords:}   Supply Chains, Globalization, Fragility, Production Networks, International Trade

\smallskip

\noindent{\bf JEL Classification Numbers:} D85, E23, E32, F44, F60, L14

\end{abstract}

\setcounter{page}{0}\thispagestyle{empty}


\section{Introduction}\label{sec:introduction}

One among many global supply chain failures stemming from the labor and transport disruptions of the COVID pandemic was a worldwide-shortage of integrated circuits and, in particular, basic computer chips.\footnote{ Long lead times in the necessary capital equipment, a drought in Taiwan that impacted on the production capacity of the industry, and high demand as people switched expenditure from experiences to products due to the Covid-19 pandemic, all played a role, along with several other factors including a fire at a production plant. See, for example, \cite{McLain2021}; \cite{Jeong2021}; \cite{Davidson2021}.}
These appear in almost all consumer goods that involve electronics, and integrated circuits rank fourth among products traded internationally \citep{Jeong2021}. Although basic computer chips are a commodity good that in typical times can sell for a few cents each,\footnote{ For example, an AC-DC Switching Power Supply Pulse-Width Modulation Integrated Circuit sells for around 6 cents per unit with a minimum order of 1000 
.} their shortage stalled the production of many downstream goods, having global economic implications.\footnote{There are a variety of studies documenting that disruptions propagate through supply chains \cite{BarrotSuppliers2016}, \cite{BohemInput2019} and \cite{carvalho2021supply}.}
How can we predict the economic impact of the disruption in the supply of one or more goods in a production network? How does this vary over time as the economy adjusts?

Equilibrium approaches neglect important constraints on production that are needed to assess the short-run impact. We show that these constraints are important as they can significantly amplify shocks in the short-run and the magnitude of this amplification varies substantially with the network position of the shocked industry. 

It is important to emphasize that existing models cannot be adjusted to estimate short-run disruptions simply by reducing the elasticity of substitution between inputs and/or making labor immobile. Although this reduces the ability of the economy to adjust in some ways, it is conceptually and qualitatively distinct from an out of equilibrium analysis in which resources (especially labor) are wasted and production is temporarily halted. Constraints capture the reality that some shocks in the short run are such that alternative suppliers cannot be quickly found and that supply relationships governed by explicit or implicit contracts prevent prices from adjusting and efficiently rationing goods.  For illustration, consider the chip example discussed above, and suppose there is a product that must have a particular kind of fairly inexpensive chip to function (e.g., a car) and so the demand for that chip is very inelastic. If the car manufacturer sources that chip (directly or indirectly via its direct suppliers) from one particular plant or country and some event, for instance political, closes access to that plant then the production of the car is completely disrupted in the short run. This inefficient shutdown of production is a fundamentally out-of-equilibrium phenomenon. In an equilibrium model, there would instead be more expensive production of that chip, and that would drive up the cost of the chip, but as a percentage of the price of the car, the impact would be relatively small even with completely inelastic demand for the input. 
However, stockpiled inventories are quickly exhausted (even if optimally redirected) and it can take months if not years for new sources of production to arise.  Indeed, during the pandemic, this is exactly what we saw.\footnote{ See, for example, \href{https://www.reuters.com/legal/legalindustry/semiconductor-shortage-us-auto-industry-2021-06-22/}{Reuters, (06-222021)}.} Thus, an equilibrium model that omits the misallocation of resources that occurs in the aftermath of disruptions, does not account for 
lags in building new production, and abstracts from limitations in reallocating existing production, can fail to account for much if not most of the short-run impact.\footnote{ Extensive anecdotal evidence of production being disrupted or halted because of difficulty sourcing inputs is presented in Section 2 of \cite{elliott2022networks}.}

In this paper, we introduce an international production and trade model that incorporates such short-run constraints on production, and can also evaluate full equilibrium adjustments. Thus, we can use it to characterize and contrast the extremes of short-run as well as long-run impacts of shocks to production.
The extremes are both benchmarks, but both are of importance, as well as their contrast.
We use the model to characterize the impact of production disruptions and how this depends on the structure of supply chains, the position of the disrupted goods in these chains, and the value of the final goods produced via the impacted chains. We contrast the short- and long-run impact of a drop in some producers' productivities. We also show how the potential for short-term disruption increases linearly in the complexity of supply chains (for small disruption probabilities) and how this compares with the long run. 
In addition, we examine how the potential for disruption changes as transportation costs drop and supply chains evolve to be more specialized and involve more trade across countries.  We also use the model to characterize the power that one country has over some other(s) by disrupting trade flows in and/or out of its borders.  This has become increasingly important as international trade bargaining stakes have increased \citep{clayton2025putting}.

The model is important to have in hand given the increasingly complex supply chains that have emerged over time, and the fact that
the production of some key inputs is highly consolidated,\footnote{For example, studies suggest that the majority of complex computer chips are assembled in Taiwan (close to 90 percent for the most advanced), and yet the inputs cross borders more than 70 times before reaching the final stage of production (\url{https://www.gsaglobal.org/globality-and-complexity-of-the-semiconductor-ecosystem/}).} while others are not. 

Our model and analysis provides a look at both extreme benchmarks:  the short run which occurs ``out of equilibrium'' in which disruptions are constraints on production,  and the long run in which there is a full equilibrium adjustment, as well as a look in between. 
Regardless of whether either benchmark applies, there are several important reasons for the literature to have these benchmarks and a model in which to examine the full spectrum of possibilities.   
One is that having careful theoretical benchmarks allows us to measure where an industry or supply chain falls relative to those benchmarks and how that trajectory adjusts with time after a disruption.   Second, one can then see how such trajectories from short run to long run depend on the shocked industry and other aspects of the supply chain.  Do parts of a supply chain adjust more quickly?  Which characteristics of a supply chain or industry make it more flexible?  How does this depend on which part of a supply chain is disrupted?
When is it even worthwhile for an industry to rewire rather than just absorb short term disruption(s) and wait until production returns to normal?
Third, developing such a theoretical framework provides us with a laboratory in which we can develop intuitive understandings of these issues and further develop answers to the above questions, both theoretical and as a guide for deeper empirical analyses and policy development.
This is especially necessary in an age in which supply chains are increasingly global and subject to political influence and international disruptions.   

Moreover, beyond providing an important benchmark, 
short-run supply chain disruptions include all shocks: temporary shocks as well as the first periods of longer-term shocks, and hence are the norm rather than the exception. 
Moreover, temporary shocks are 
of high frequency.  
Even before the myriad of problems associated with Covid-19 and the tariff shocks of recent times, \cite{snyder2016or} (p. 89) write, ``It is tempting to think of supply chain disruptions as rare events. However, although a given type of disruption (earthquake, fire, strike) may occur very infrequently, the large number of possible disruption causes, coupled with the vast scale of modern supply chains, makes the likelihood that some disruption will strike a given supply chain in a given year quite high.''
Indeed, every month there are many such disruptions that cause companies nontrivial losses and are such that it may not make sense for them to rewire, but instead to simply absorb the losses.   
Just as a recent typical example, in the last quarter of 2025 Honda Motors reported a loss of 1 billion dollars (more than a quarter of their annual operating profit) due to a shortage of chips from the Dutch firm Nexperia, which was caught in a trade dispute.  This resulted in temporary shutdown of a Honda plant in Mexico and slowdowns in their US and Canadian plants.\footnote{Reuters: \url{https://www.reuters.com/world/asia-pacific/honda-reports-25-fall-q2-operating-profit-2025-11-07/ }.}
Finding a new supplier for such custom chips is expensive and can simply result in a different source of risk, and so such a short term shock was simply absorbed.  

More generally, diversifying production of custom parts or service relationships is time-consuming and expensive, as it requires large investments in firm-specific ties and technologies.
A firm has a limited ability to intervene following such a shock in time to mitigate its impact. More likely, the firm instead accepts the losses during the shock. Note that there is no equilibrium adjustment here, just out of equilibrium losses. Despite this, only equilibrium models exist for estimating such losses.
Even if such losses only account for a percent of a typical firm or industry's production, this would be the difference between a two and three percent growth in GDP,
and it could be not in firms' interests to pay large fixed costs to diversify against such losses.

Our analysis begins with the long run.
There we show that the change in productivity of an input can be completely compensated for by
equilibrium adjustments of quantities and sources of all inputs, a version of Hulten's \citeyearpar{hulten1978growth} Theorem holds.  
Thus, our model in the long run matches the usual competitive long-run marginal benchmark.   That is, the marginal impact of a shock is proportional to the total amount spent on that input relative to total GDP.
 For instance, a shock that reduces the productivity of a \$500 billion market like that for integrated circuits by around 5 percent, would have a long-run impact of around \$25 billion.
Roughly, at the margin, if productivity drops by 5 percent then the circuits become about 5 percent more expensive, and so the total resource loss is an extra 5 percent of what was being spent on
that input originally.\footnote{This intuition applies at the margin, and can even
overestimate the impact as it does not account for the possibility that the input mix can be changed to mitigate the impact of the shock as one moves away from the margin.} However, in the short run, the impact is much larger. A 5 percent shortage of integrated circuits propagates to prevent the production of 5 percent of all final goods that use them as inputs, directly or indirectly, at some stage of production. Those final goods, valued at around \$5 trillion, lose production valued at around \$250 billion, or ten times the long-run impact. This comparison is even starker if a shock to basic computer chips is considered given their low value and the breadth of their use as inputs. 
The contrast can be summarized as the short-run impact being proportional to the value of all downstream final products, while the long-run impact is proportional to the change in the cost of producing the input itself. 

Central to our theory is that shocks propagate through the supply network and are amplified in the short-run. Much evidence is consistent with this.  \cite{carvalho2021supply} analyze the Great East Japan Earthquake and the resulting propagation of supply shocks. The localized impact of the disaster accounts for less than a 0.1 percentage point reduction in overall GDP growth, and this includes the propagation of disruptions within the affected areas. So, the direct impact of the shock is likely to be considerably below this. At the same time, there was a 0.4 percentage point reduction in the growth of GDP year-on-year and their counterfactual analysis estimates the actual impact at 0.47 percentage points. Further, tracing through affected customers and suppliers in the year following the shock, they find that the impact diminishes, but slowly. A further point of interest is that while they find inputs to be weak substitutes on the time frame of a year, \cite{BarrotSuppliers2016} find them to be complements on a quarterly basis. As we show, this is a prediction of our theory. Intuitively, as time progresses (which can be years, as in the \cite{carvalho2021supply} case) constraints are relaxed. This allows alternative suppliers to increase their production where multisourcing is already present, and for new suppliers to be found. This makes inputs more substitutable over time.

The lack of a short-run model able to capture out-of-equilibrium propagation through a network, and inadequacy of a long-run equilibrium approach to substitute for such calculations has been lamented by Larry Summers:

``I always like to think of these crises as analogous to a power failure or analogous to what would happen if all the telephones were shut off for some time. Consider such an event. The network would collapse. The connections would go away. And output would, of course, drop very rapidly. There would be a set of economists who would sit around explaining that electricity was only 4\% of the economy, and so if you lost 80\% of electricity, you couldn’t possibly have lost more than 3\% of the economy. Perhaps in Minnesota or Chicago there would be people writing such a paper, but most others would recognize this as a case where the evidence of the eyes trumped the logic of straightforward microeconomic theory. ...  And we would understand that somehow, even if we didn’t exactly understand it in the model, that when there wasn’t any electricity, there wasn’t really going to be much economy.''\citep{summers2013}.\footnote{ See \cite{carvalho2019production} for some further discussion around this quote and the impact of equilibrium disruptions when Hulten's theorem doesn't hold. Of particular note is \cite{baqaee2019macroeconomic}. See also \cite{wei2003energy} for a discussion of the timing of adjustments.}

We develop an algorithm to characterize how the impact of a short-run disruption depends on the structure of the supply chains involved, and show that this algorithm finds the solution to a well-defined constrained optimization problem.
An upper bound is that the disruption is equal to the percentage reduction in the output of the shocked technologies multiplied by the value of production of all final goods that use the shocked good as an input directly or indirectly. We identify several natural and intuitive situations in which this bound is tight. One such case is when there is conformity in the inputs used by producers within industries, and there is an industry shock---i.e., all producers of a certain good are shocked. Moreover, when all transportation costs are reduced sufficiently, countries specialize in what they produce and hence technology-specific shocks become industry-specific shocks, such that the bound is always obtained.

The more general calculation can differ from the bound depending on two issues. One is diversity of sourcing: some technologies might source the same input from multiple suppliers, and if not all of those suppliers are affected, then impact of the shock is reduced. The other is diversity of production technologies: different technologies that require different inputs might be used for producing the same good, and so be unaffected by a shock that propagates downstream to disrupt a competitor. However, if there are cycles in the supply network, then these can feed back and amplify disruptions allowing the bound to be obtained even in the presence of diversity in sourcing and technology. To calculate the short-run impact of a shock and how it propagates we provide a (convergent) algorithm,  allowing for cycles in the production network. The algorithm converges to the maximum amount that can be produced of the final goods subject to the shocks.

We also examine the medium-run in which supply is constrained, but in which there is equilibrium adjustment so that goods flow to the highest-value downstream goods that use a given input.
For example, if production of chips are disrupted in the very short run, all manufacturers using those chips may face delayed deliveries and lost production. Over time, the prices of chips can adjust and redirect the chips to the uses where they are most valued, and this can improve production.  We show that this mitigation can lead to an impact that falls anywhere between the effect of the short versus long run.
We show that this medium-run impact depends on the diversity in the values of the downstream production using an input.

We use our characterizations to examine the expected disruption due to an independent probability of a shock to different inputs.  We show that as the complexity of the supply chains, as measured by the number of inputs, increases, the expected loss in GDP holding all else fixed increases linearly in the complexity.
We go on to discuss how the long-run impact varies with the depth vs breadth of supply chains, while the short-run impact is more sensitive to how many final goods lie downstream of the shocked inputs.

We then use this to examine some comparative statics in trade costs. When all trade costs are reduced sufficiently the local prices of intermediate goods are equalized across locations and are sourced from a lowest cost technology and so production becomes specialized. This leads to lower diversity in sourcing and diversity in production technologies is lost.  There are countervailing effects:  there are fewer sources to disrupt, which can lower the chance of a disruption occurring, but conditional upon occurrence, the impact is larger. This is consistent with
empirical evidence on globalization, specialization, and fragility \citep{giovanni2009trade,magerman2016heterogeneous,di2022global,bernard2022origins,baldwin2022risks}.

Our results can be applied to a variety of situations in which supply chains become disrupted, including the imposition of sanctions or other directed trade disruptions and anticipating their effects.\footnote{In this context, \cite{Moll2022} estimate the impact of Russian energy sanctions for the German economy, varying assumptions about elasticities of substitution to represent the shorter run and longer run effects.} As such, the model serves as a foundation to understand the {\sl power} that one country holds over another in terms of how much disruption the first can cause in the second by disrupting imports and/or exports\textemdash e.g., via sanctions, quotas, or tariffs.
We measure this by examining the damage a disrupting country can cause to a target country per unit of damage to the disrupting country's domestic industries.   
In measuring the impact of intentional disruptions we need to track the distribution of disrupted (idle) labor across countries.  We develop an algorithm for evaluating the total disrupted labor, up and downstream from a given disrupted technology and show how the potentially combinatorially intensive exercise of identifying the disruptions that maximize one country's power over another, simplifies. Intuitive aspects emerge from our characterization of power: one country wields substantial power over another when it can disrupt a relatively low value-added industry domestically and this has far-reaching up and/or downstream consequences in the target country, that do not cause significant further disruptions at home. An example of such an industry in practice is the industry for rare-earth metals and magnets in China, which are (directly and indirectly) important inputs into high-value production in the US among other countries.\footnote{See, for example, \href{https://www.reuters.com/world/china-hits-back-us-tariffs-with-rare-earth-export-controls-2025-04-04/}{Reuters (04-04-2025)}.}  Our analysis also shows how it is possible for two countries each to have substantial power over each other (via different supply chains), or for power to be very asymmetric, depending on the structure of the supply network.

It is worth emphasizing that short-run disruptions in supply chains add up.  Although widespread short-run shocks that cause multiple percentage point losses in GDP such as COVID-19 are rare, shocks to industries that cause temporary shortages and price spikes to consumers, such as the car industry example discussed above, disruptions in shipping lanes, local wars, strikes, natural disasters, climate shocks, trade wars, etc., are common and can lead to death-by-a-thousand-cuts.  Thus, it is important to understand and better manage such short-run disruptions.  It could be that in an ideal world without any such disruptions, GDP would be percentage points higher.

\paragraph{Related Literature}
More generally, the three most closely related strands of literature to our work are: a macroeconomic literature on production networks, a more microeconomic literature on supply chain robustness, and the literature on international trade and global value chains.

Building on the seminal work of \cite{leontief1936quantitative}, \cite{long1983real} and \cite{acemoglu2012network}, a series of papers examine the propagation of shocks through sectoral and firm inter-linkages in the economy.\footnote{See \cite{carvalho2019production} and \cite{baqaee2022micro} for recent surveys.} Some of the recent work has incorporated cascading failures, production shut-downs and endogenized the network structure \citep{costinot2013elementary,dhyne2015belgian,magerman2016heterogeneous,brummitt2017contagious,baqaee2018cascading, oberfield2018endogenous,acemoglu2020firms,acemoglu2020endogenous,baqaee2021entry, kopytov2021endogenous,konig2022aggregate,chen2023invest,grossman2023supply,pin2025network}. However, within that literature, the focus has been on the (long-run) equilibrium impacts of shocks in which all factors are perfectly flexible and the economy re-equilibrates. We contribute to this literature by considering the short-run impact of a shock, with no adjustments by introducing a first model that incorporates constraints into networked production.\footnote{There is a  literature on ``disequilibrium'' models, beginning with \cite{galbraith1947disequilibrium,barro1971general,benassy1977neokeynesian}.  That literature is more interested in modeling macroeconomic dynamics under price controls or other frictions on convergence to equilibrium prices, rather than on supply chain disruptions, and hence those models are not closely related to ours.}

Perhaps closest to us is research that considers the impact of shocks in the presence of frictions, including limited substitutability of inputs, fixed factors of production, building delays and search and matching frictions (e.g.,\cite{baqaee2019macroeconomic}, \cite{Moll2022, bachmann2024if}, \cite{bui2022information}, \cite{pellet2023rigid}, \cite{bai2024causal}, \cite{baslandze2025price}, \cite{brancaccio2025rigidities}, \cite{arkolakis2025production} and \cite{schaal2025echoes}). In terms of the typology of time horizons we develop in Section \ref{sec:time_horizons}, that work relates to our long-run or medium-run analysis. Our main focus is on the short-run which fundamentally differs by, instead of being in (constrained) equilibrium, being out-of-equilibrium (through the inability of prices to adjust to clear markets).

There is also considerable work studying networks and fragility.\footnote{See \cite{elliott2022networks} for a recent survey, while \cite{baldwin2022risks} reviews the literature on risks in global supply chains.} Like ours, that work is mainly theoretical. The work closest to us in this area focuses on the fragility of supply chains and is complementary in so far as it focuses on better understanding the frictions that lead to the formation of inefficient supply networks. However, it abstracts from general equilibrium considerations and tends to focus on supply networks that are only a couple of layers deep. It includes, for example, \cite{bimpikis2018multisourcing}, \cite{bimpikis2019} and \cite{amelkin2020strategic}. Perhaps closest is \cite{elliott2022supply}. Like us the focus there is on the macroeconomic implications of shocks in the short-run and deep networks are accommodated. However, they investigate firms' strategic investments into the local robustness of their supply chains, which we do not consider, and study when equilibrium investments will yield fragile networks.\footnote{ The set up of their model (simple networks with no cycles to facilitate percolation and small firms) is also very different, as is the amplification mechanism. In \cite{elliott2022supply} individual firms are small and in equilibrium production is robust with respect to the failure of any one firm. In contrast, we have representative firm/technologies and study how shocks to individual technologies propagate.}  In contrast our analysis provides details on the how the size and scope of a disruption depend on specific details of the supply chain, which are abstracted away from in their work.
On the empirical side, \cite{diem2022quantifying} develop and estimate using Hungarian value added tax data, a measure of the systemic risk different firms pose to the economy by considering the short-run impact on aggregate output of a randomly selected firm failing.  They find that the empirical impacts are large in magnitude, consistent with the contrast that we draw between the short- and long-run.

Finally, there is related research on networks and international trade.\footnote{For recent surveys see \cite{bernard2018networks} and \cite{antras2022global}.} Some of the more closely related work in that literature seeks to better understand which importers match to which exporters, and how this is influenced by various frictions (see, for example, \cite{chaney2014network,Chaney2016, bernard2019production,grossman2024tariffs}). In terms of the macro modelling of international trade, the approach taken has tended to be very different from ours. For example, the workhorse models of \cite{melitz2003impact} and \cite{caliendo2015estimates}, while being well suited for answering a variety of questions and fitting various aspects of the trade data, are not so well suited to understanding how shocks' propagate and amplify depending upon the position of disruption in the trade network.

\section{A Model of (International) Supply Networks}\label{sec:model}

\subsection{The Model}

We use the same notation for each set and its cardinality.

\noindent \textbf{Goods and Countries:}
There is a set $N$ of countries indexed $n\in \{1,\dots ,N\}$.

Goods consist of a set $M$ of intermediate goods, including raw materials, indexed by $m\in \{1,\dots ,M\}$ that are used as inputs to production; and a set $F$ of final goods indexed $f\in \{1,\dots ,F\}$ that are consumed.  We assume that final goods are never used as inputs to production to simplify notation, but it is trivial to extend the model to permit this.\footnote{ Simply create a duplicate industry: If a country produces a good that is used as both an intermediate good and a final good, then let there be two industries producing the good, one of which sells it only as an intermediate good, and another that sells it only as a final good to consumers.  Prices depend on costs of inputs and so identically-produced goods end up with the same
prices in equilibrium.}

We use the term ``goods'' throughout the paper but emphasize that these include not only agricultural and manufactured goods, but also include services.

\noindent \textbf{Labor and Endowments:}
Country $n$'s endowment of labor is denoted $L_n>0$, and it is supplied completely inelastically so that in equilibrium $L_n$ units are all used in production.\footnote{This simply makes prices and examples easier to compute in our model, but would not affect our results (Hulten's Theorem would still hold in the long run, and our constrained approach does not adjust labor inputs in the short run).}

Access to raw materials (which are a special type of intermediate good) within a country is represented via the available production technologies: for instance, a country that has oil has a technology that outputs oil, while if there is no oil in a country then there is no technology available in that country to produce oil.

\noindent \textbf{Technologies:}
We work with Arrow-Debreu production economies. A technology is described, as in the classic model of \cite{arrow1954existence} (a production plan in their parlance), by a vector $\tau\in \Re^{1+M+F}$.
A technology lists the combinations of labor and intermediate goods required as inputs to produce positive amounts of output; with the interpretation that $\tau_k<0$ implies that good $k$ is an input and $\tau_k>0$ implies good $k$ is an output.

We focus on constant returns to scale technologies.  Although constant returns are not needed for our main results, we make it for simplicity as the model is already complicated by accounting for the supply network.  Importantly, for short-term disruptions and measurements, constant returns are appropriate.

A technology $\tau$ satisfies three additional conditions.

First, $\{k:\tau_k>0\}$ has exactly one element, interpreted as the output of the technology.  Other entries are either 0 or negative, with the negative ones being the inputs.
Let $O(\tau)=\{k:\tau_k>0\}$  denote the output good associated with technology $\tau$ and $I(\tau)=\{k:\tau_k<0\}$ denote the input goods.

Second, we normalize the output to $1$ so that $\max_k \tau_k=1$. Thus, a technology indicates the amounts of inputs needed to produce one unit of the output good, and this can be scaled to any level given the constant returns to scale.

Third, the leading entry in the technology vector, representing the amount of labor required, is strictly negative for all technologies, so that labor is needed in each production process.
This ensures that there are no infinite production paths that are costless, and ensures existence.

Throughout we associate each technology with a unique producer, but nothing would change if we instead associated the production of a given good $g$ in a given country $n$ with a unique producer endowed with all technologies $\{\tau\in T_n:O(\tau)=g\}$.

An example of technologies appears in Figure \ref{fig:flow_network}.

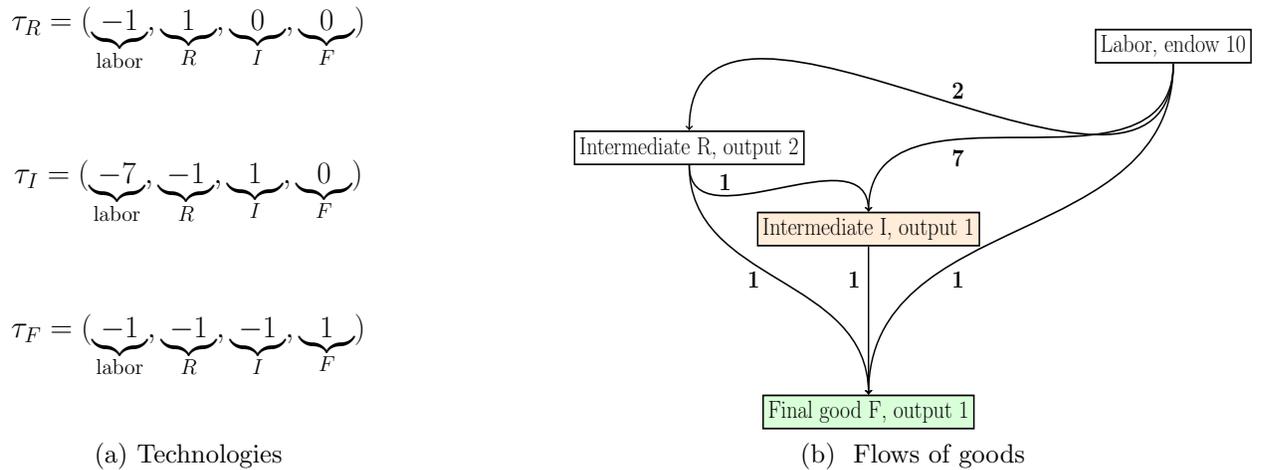
\begin{figure}[ht!]
\subfloat[Technologies\label{subfig:flow}]
{
\centering			
\resizebox{2in}{2.25in}{
\begin{tikzpicture}
 \node[] at (4.5, -1.5) {};
 \node[] at (4.5, 6) {\Large $\tau_R=(\underbrace{-1}_{\text{labor}},\underbrace{1}_{R},\underbrace{0}_{I},\underbrace{0}_{F})$};
 \node[] at (4.5, 3) {\Large $\tau_I=(\underbrace{-7}_{\text{labor}},\underbrace{-1}_{R},\underbrace{1}_{I},\underbrace{0}_{F})$};
 \node[] at (4.5, 0) {\Large $\tau_F=(\underbrace{-1}_{\text{labor}},\underbrace{-1}_{R},\underbrace{-1}_{I},\underbrace{1}_{F})$};	
 \end{tikzpicture}
}	
}
\hfill
\subfloat[\label{subfig:techs} Flows of goods]
{
\centering
\resizebox{3.6in}{2.7in}{						
\begin{tikzpicture}[
		sqRED/.style={rectangle, draw=black!240, fill=red!15, very thick, minimum size=7mm},
		sqBLUE/.style={rectangle, draw=black!240, fill=blue!15, very thick, minimum size=7mm},
		roundnode/.style={ellipse, draw=black!240, fill=green!15, very thick, dashed, minimum size=7mm},
		sqGREEN/.style={rectangle, draw=black!240, fill=green!15, very thick, minimum size=7mm},
		roundRED/.style={ellipse, draw=black!240, fill=red!15, very thick, dashed, minimum size=7mm},
		roundREDb/.style={ellipse, draw=black!240, fill=red!15, very thick, minimum size=7mm},
		roundYELL/.style={ellipse, draw=black!240, fill=yellow!20, very thick, dashed, minimum size=7mm},
		sqYELL/.style={rectangle, draw=black!240, fill=yellow!20, very thick, minimum size=7mm},
		roundORA/.style={ellipse, draw=black!240, fill=orange!15, very thick, dashed, minimum size=7mm},		
		sqORA/.style={rectangle, draw=black!240, fill=orange!15, very thick, minimum size=7mm},
		squaredBLACK/.style={rectangle, draw=black!240, fill=white!15, very thick, minimum size=7mm},	
		roundBROWN/.style={ellipse, draw=black!240, fill=brown!15, very thick, dashed, minimum size=7mm},
		sqBROWN/.style={rectangle, draw=black!240, fill=brown!15, very thick, minimum size=7mm},		
        sqPURPLE/.style={rectangle, draw=black!240, fill=purple!15, very thick, minimum size=7mm},				
		]

		\node[squaredBLACK](R1) at (-3, 5.5) {\Large Intermediate R, output 2};
        \node[] at (4.5, 6.9) {\Large $\bm{2}$};
        \node[] at (4.5, 5.25) {\Large $\bm{7}$};
        \node[] at (4.5, 2.25) {\Large $\bm{1}$};

		\node[sqORA, align=left](I1) at (2, 3.5) {\Large Intermediate I, output 1};
        \node[] at (-2.0, 4.65) {\Large $\bm{1}$};

        \node[] at (-1.2, 2.25) {\Large $\bm{1}$};

        \node[] at (1.6, 2.25) {\Large $\bm{1}$};

		\node[sqGREEN, align=left](F1) at (2, -1) {\Large Final good F, output 1};

		\node[squaredBLACK, align=left](L) at (10.5, 8) {\Large Labor, endow 10};

		\tikzset{thick edge/.style={-, black, fill=none, thick, text=black}}
		\tikzset{thick arc/.style={->, black, fill=black, thick, >=stealth, text=black}}

		\draw[line width=0.4mm, black,-> ] (R1) to[out=-90,in=90] (I1);
		
        \draw[line width=0.4mm, black,-> ] (L) to[out=-90,in=90] (R1);
        \draw[line width=0.4mm, black,-> ] (L) to[out=-90,in=90] (I1);
        \draw[line width=0.4mm, black,-> ] (L) to[out=-90,in=90] (F1);		
		
        \draw[line width=0.4mm, black,-> ] (R1) to[out=-90,in=90] (F1);

		\draw[line width=0.4mm, black,-> ] (I1) to[out=-90,in=90] (F1);

		\end{tikzpicture}
}
}
\caption{\footnotesize{\label{fig:flow_network}An example of an economy with three technologies and 10 units of labor, outputting 1 unit of a final good. }}
\end{figure}

Let $T_n$ be the set of technologies assigned to country $n$, and let $T=\cup_n T_n$ be the set of technologies.
We take $T$ to be finite.

This is for several reasons.

One is that any given Arrow-Debreu equilibrium in our setting can be rationalized with a finite set of production plans (our technologies) and this is without loss of generality for our results.
For example, a Cobb Douglas production function of the form $y=\ell^\alpha m^{1-\alpha}$ is represented by following the set of triples as $\ell$ is varied: $(\tau_\ell)_\ell=(-\ell,-\ell^{\frac{-\alpha}{1-\alpha}},1)_\ell$ where the first entry represents the quantity of labor input, the second entry represents the quantity of the intermediate good $m$ input and a single unit of the final good is produced.
This is pictured in Figure \ref{fig:AD}.
Moreover, other productions (CES or some much more general function) can be approximated by our sets, and so by using general $T_n$ we can avoid imposing restrictions on the nature of those functions and still derive our results.

\begin{figure}[ht!]
{
\centering
\includegraphics[height=2.2in]{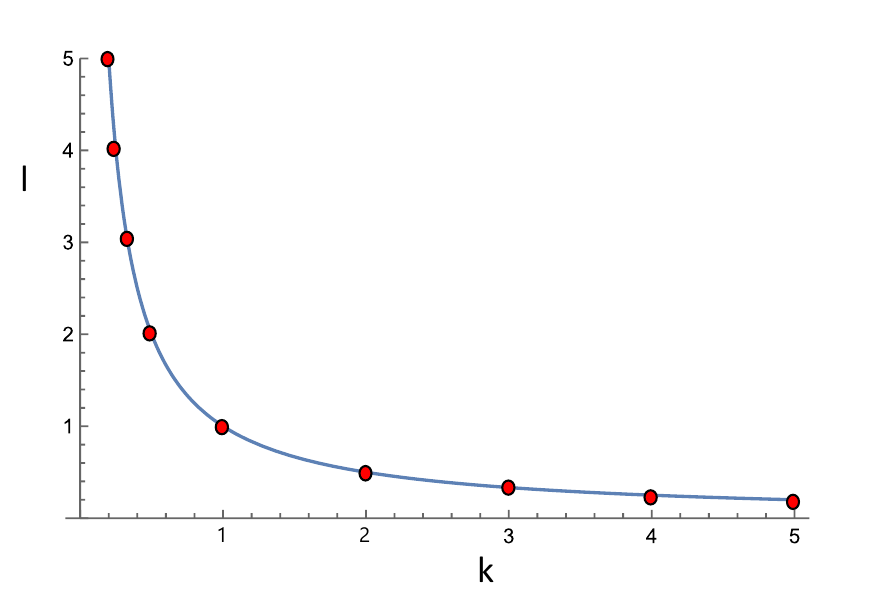}
\caption{\footnotesize{Long-run production of good 2 according to a Cobb-Douglas function of labor and good 1: $y_2=L^{\alpha}x_1^{1-\alpha}$, can be approximated by setting $T_n$ to be the intersection of $\{(-l,-x_1,1):l^{\alpha}x_1^{1-\alpha}=1\} $
with some grid.
\label{fig:AD}
}}}
\end{figure}

Second, and importantly, is that this realistically matches short-run implications, and more realistically than using some production function in which inputs can be readjusted.
For instance, if a business loses electricity, or a car manufacturer loses steering wheels, or a phone producer loses chips,  they cannot overcome this simply by shifting to use more labor.  In the longer run, they can reoptimize their production recipe to adjust inputs and re-source or adopt different technologies, but in the short run such substitution is impossible.
Thus, working with fixed sourcing and vectors (technologies) in the short run and then allowing for more potential variations in the longer run provides a natural contrast.

Third is that a finite set of available technologies that describe recipes for production is realistic when it comes to addressing things like innovations to technologies, which is something that we wish our model to address.

\noindent \textbf{Shipped units:}
The matrix $\mathbf{x}\in\Re_+^{N+T}\times \Re_+^{T}$ denotes the number of units shipped.  $x_{\tau'\tau}$ denotes the amount of the output, $O(\tau')$, produced by technology $\tau'$ that is shipped for use as an input by technology $\tau$.
We let $x_{n\tau}$ denotes the amount of labor endowed to country $n$ that is used by technology $\tau$, possibly in another country.

\noindent \textbf{Transportation costs:}
Transportation costs are captured via the matrix $\mathbf{\theta}\in \Re_+^{N+T}\times \Re_+^{T}$.
Entry $\theta_{n \tau}\geq 1$ denotes the amount of labor that must be supplied from country $n$ to technology $\tau$ in order for technology $\tau$ to get one unit of labor input. This could reflect a cost of remote working, or a cost of migration, among other things.
Entry $\theta_{\tau' \tau}\geq 1$ for $\tau',\tau\in T$ denotes the number of units of good $O(\tau')$ that must be shipped for technology $\tau$ to receive one unit of this good to use as an input.
These costs can reflect many things, for instance international shipping costs or tariffs,\footnote{These capture tariffs in terms of the calculation of equilibrium, but ignore the revenue that they generate which could impact welfare in ways ignored here.} among other things, but can also represent shipping costs internal to a country (for instance, shipping from the place where a raw material is produced to where a manufacturing plant is located). Thus, implicitly, technologies when coupled with this matrix encode transportation costs and locations.

Note also that differences in the effectiveness of labor and any intermediate goods across countries can be captured via transportation costs and available technologies.

Having final goods be costless to transport simplifies the consumption problem and enables us to concentrate on the production process. It is not necessary for our main results, but makes comparisons to key existing results (i.e., Hulten's Theorem) possible and so we maintain the assumption.

\noindent \textbf{Prices:}
Prices for each good and labor are described by a vector $p \in \Re_+^{N+T}$, providing the cost of hiring labor in each country and technology-specific prices for all goods. These prices are prices at the point of sale per unit.  Adjusting for transportation costs yields a price for use in any given technology.

As final goods are costless to ship, there is a world price for each final good. Abusing notation we let $p_f$ denote the price of final good $f$. So, in equilibrium $p_{\tau}=p_{\tau'}=p_f$ for all $(\tau,\tau')$ such that $O(\tau)=O(\tau')=f$.

\noindent \textbf{Preferences:}
Laborers are the consumers of the final goods. Consumers have preferences represented by
$$U(c_1, \ldots, c_F)$$ defined on $\Re_+^F$ that is increasing ($c\geq c', c\neq c'$ implies $U(c)>U(c')$)
and strictly quasi-concave and homogeneous of degree $1$, and continuously differentiable on the domain.

As there are no transportation costs on final goods, equilibrium prices for final goods are the same across different countries. Further, as preferences are not country specific, and represented by a utility function that is homogeneous of degree $1$, all agents consume scalings of the same bundle of final goods that are proportional to their wages. Thus, as we show in Lemma \ref{lemma:representative_consumer} in Appendix \ref{sec:proofs}, our formulation admits a representative consumer with preferences represented by $U(\cdot)$.

\noindent \textbf{Time horizons}\label{sec:time_horizons}
We consider the response of the economy to a shock over four different time horizons, which we refer to as the "Short Run", "Medium Run", "Long Run", and "Very Long Run". 
In the very long run, the economy fully adjusts and technologies can be changed. In the long run, technologies are held fixed (so it is as if each economy is only endowed with the technologies that were active when the shock hit), but everything else adjusts. 

If, for example, each country is endowed with a set of technologies for producing a given good that corresponded to a constant elasticity of substitution production function, Then the difference between the long run and very long run would be captured by having the elasticity of substitution be zero in our long run (so the production function becomes Leontief). Adjusting these elasticities to capture the difference between time horizons is an approach taken in some of the literature  (e.g. \cite{bachmann2024if}). There is evidence consistent with the elasticity of substitution being approximately zero for quite considerable lengths of time.\footnote{For example, \cite{BarrotSuppliers2016}, \cite{atalay2017important} and \cite{boehm2019input}.} 

However, this differs completely from our analysis of the short run:  the adjustments are not equilibrium ones in which prices adjust, but shortages on existing contracts. In the short run, prices are unable to adjust and shortages are rationed proportionally to the initial flows.  
For example, an automobile manufacturer does not receive the steering wheels it has contracted for, and cannot simply buy alternatives.  Finding a new supplier that could deliver the custom-designed steering wheels could take months if not years.  The manufacturer can only produce as many cars as it has steering wheels.  
This short run perspective is relevant for most manufacturing, where plants are designed and optimized relative to particular recipes for producing parts and assembling them.  It can take months to make small changes, and years for larger changes.  
Even in the service industry specific recipes are followed.  Fast food franchises have very specific and detailed tasks that are allocated to different workers, with extremely explicit workflows and dedicated inputs.  Shortages of inputs can lead some goods to be completely deleted from menus, or become sold out, for months at a time.

The medium run is a sort of hybrid: labor allocations to technologies are held fixed (so it takes time for workers, if made idle, to find alternative employment, and it takes firms time to hire and train new workers), but prices and flows of goods can adjust. It is equivalent to a constrained equilibrium in which labor allocations and technologies are fixed.

Table \ref{TA:Time_Horizon} summarizes these differences. 

\begin{center}
\begin{tabular}{r|c|c|c|c}
  & Short Run/Temporary & Medium Run & Long Run & Very Long Run \\
  \hline
  Suppliers/Prices & Fixed & Flexible & Flexible & Flexible \\
  Labor & Fixed & Fixed & Flexible & Flexible \\
  Technology & Fixed & Fixed & Fixed & Flexible \\
\label{TA:Time_Horizon}
\end{tabular}
\end{center}
\vspace{0.1in}

While the outcomes obtained in all the other time horizons are either an equilibrium of the economy (very long run) or an appropriately constrained equilibrium for the economy (long run and medium run), in the short run the economy is out of equilibrium, and thus it differs fundamentally from the other time horizons.

For simplicity, and in order to fully consider all the ways in which the economy adjusts, we consider shocks that are permanent (or at least are not anticipated to revert).\footnote{ See \cite{liu2024supply} for adjustments in response to temporary shocks.}

\noindent \textbf{Equilibrium}
We assume the economy is in equilibrium when a shock hits. In this subsection we formally define an equilibrium (which corresponds to the very long run outcome) and also introduce constrained equilibria to capture the economy in the medium and long run. We  define short run disruptions in Section \ref{sec:shortrun}. 

Competitive equilibrium and constant returns to scale imply that there are zero profits, and thus we ignore firm ownership and profits for the sake of eliminating unnecessary notation and definitions.

An \emph{economy} is therefore a list specifying the set of countries, goods, technologies, labor endowments and transportation costs: $(N, M, F, \{T_n\}_n,  {\{L_{n}\}_n}, \theta)$.

An \emph{equilibrium} is defined in the usual way (following \cite{arrow1954existence}) and is a list of prices of each good at each point of sale,  productions from each technology, the inputs (including labor) that it uses, and final consumptions of the consumers/laborers, such that
 (a) laborers supply their endowment of labor inelastically and choose final goods to consume to maximize their preferences,
(b) the output produced by each technology is feasible given the inputs it uses and it maximizes profits over all potential choices of outputs calculated as the revenue from the output minus the costs of corresponding inputs (including transportation costs), and
(c) markets for all goods clear.

Given that details of the equilibrium and existence are standard, we present them in Supplementary Appendix \ref{sec:equilibrium} where we:
(i) offer a fully formal definition of a general equilibrium of an economy,
(ii) show that an equilibrium exists,
(iii) show that in all equilibria the same amount of each final good is produced, and
(iv) show that an equilibrium is fully specified by the flow of country-specific labor to technologies, the flow of goods between all technology pairs and (local) prices for the output of all technologies.

In the long run, a constrained equilibrium is equivalent to an equilibrium defined above but with countries' technology endowments restricted to only include those technologies in use at the time of the shock.  In the medium run, labor allocations across technologies are also fixed.\footnote{More formally, this is equivalent to looking for an equilibrium of the following adjusted economy: (i) Set the number of countries equal to the number of technologies currently in use; (ii) Assign each technology currently in use to a different country; (iii) Endow each country with the labor used by their technology prior to the shock; and (iv) set the iceberg cost to infinity for all flows of labor.}

An equilibrium (or constrained equilibrium) specifies the flow of goods between technologies. It is helpful to represent these flows as a directed, weighted network like that shown in Figure \ref{fig:flow_network}.

\section{Contrasting Short- and Long-Run Impacts of Supply Chain Shocks}

So far, we have normalized the output of a technology $\tau$ producing good $k=O(\tau)$ to be $\tau_k=1$.
In what follows it is convenient to let $\tau_k$ vary at the margin to represent changes in productivity of that  technology.
To identify the impact of a shock, we consider a shock that changes the output of some technology $\tau$, given by $\tau_k$, from its initial value of $1$.

We begin with the {\sl long-run} impact of a shock, showing that Hulten's \citeyearpar{hulten1978growth} Theorem holds in our setting.

\subsection{Hulten's Theorem: The Long-Run Impact of a Change in the Supply Chain}

Let World $GDP$ ($GDP$ for short, henceforth) denote the total expenditures on final goods:\footnote{This is nominal GDP, but prices can be normalized so that the total price of the consumed bundle is 1, and then nominal and real GDP are equivalent. Rescaling all prices in our model (including labor) by some common factor has no effect on the equilibrium.}
$$GDP=\sum_n\sum_f p_f c_{fn}.$$
Because the consumers in different countries have the same homothetic preferences, they demand (potentially) different quantities of the same bundle of goods.
Thus, final total consumer demand equals the demand induced by a representative consumer with the same preferences and wealth equal to total labor income (Lemma \ref{lemma:representative_consumer}, Appendix \ref{sec:proofs}). Therefore, the utility of the representative consumer, denoted by $U$, is a measure of overall welfare and, given the homogeneity of preferences, it is proportional to GDP.

\begin{proposition}[Hulten's Theorem]\label{prop:hulten}
Consider an equilibrium of an economy,
and a technology $\tau$ used in positive amounts in equilibrium to produce good $k=O(\tau)$ such that there is no other technology that can produce the good at the same (transportation-adjusted) price.
The marginal impact on aggregate utility, and on GDP, of a change in the total factor productivity of $\tau$ is equal to the total expenditures on good $k$ produced using technology $\tau$, relative to overall GDP.
That is,
$$
\frac{\partial \log(U)}{\partial \log(\tau_k)} =\frac{\partial  \log(GDP)}{\partial \log(\tau_k)} = \frac{p_{\tau} y_{\tau}}{GDP}.
$$
\end{proposition}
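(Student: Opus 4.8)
The plan is to reduce this to a textbook envelope-theorem argument by first expressing the equilibrium as the solution of a social planner's problem. Since each technology spans a ray in $\Re^{1+M+F}$ and the transportation costs $\theta$ enter only linearly through the shipping matrix $\mathbf{x}$, all feasibility relations are linear in the activity levels and flows, so the aggregate production set is a convex (polyhedral) cone; combined with the monotone, strictly quasi-concave, homogeneous-of-degree-one preferences and the representative consumer of Lemma~\ref{lemma:representative_consumer}, the First Welfare Theorem applies, and since all equilibria produce the same bundle of final goods, equilibrium welfare equals the value $V(\tau_k)$ of
\[
\max_{c,\,\{a_{\tau'}\}_{\tau'\in T},\,\mathbf{x}}\; U(c)
\]
subject to: (i) for each country $n$, total labor shipped out (inflated by the $\theta_{n\tau'}$) is at most $L_n$; (ii) for each technology $\tau'$, its output $a_{\tau'}\tau'_{O(\tau')}$ covers shipments out for use as inputs elsewhere (inflated by the $\theta_{\tau'\tau''}$) plus final consumption if $O(\tau')\in F$; and (iii) each technology $\tau'$ receives its required inputs $-a_{\tau'}\tau'_j$. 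Here $a_{\tau'}\ge 0$ is the scale at which $\tau'$ is run and $y_{\tau'}=a_{\tau'}\tau'_{O(\tau')}$ its equilibrium output.

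\textbf{Envelope step.} The crucial observation is that the parameter $\tau_k$ enters exactly one constraint — the market-clearing constraint for good $k=O(\tau)$ at technology $\tau$ — via the term $a_\tau\tau_k$. Writing the Lagrangian and letting $\mu$ be the multiplier on that constraint, the envelope theorem gives $\partial V/\partial\tau_k = \mu\, a_\tau$, all other terms vanishing at the optimum. I would then invoke the standard duality between this convex program and the competitive equilibrium — the market-clearing multipliers are the point-of-sale prices $p$, and the nonnegativity conditions on the $a_{\tau'}$ reproduce the zero-profit conditions — normalizing units so the representative consumer's marginal utility of the numeraire is one (equivalently, the consumed bundle has price one, so $V=U=GDP$ and nominal GDP equals real GDP). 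Under this normalization $\mu=p_\tau$, and since $a_\tau\tau_k=y_\tau$ we get $\partial V/\partial\tau_k = p_\tau y_\tau/\tau_k$, hence
\[
\frac{\partial\log V}{\partial\log\tau_k}=\frac{\tau_k}{V}\,\frac{\partial V}{\partial\tau_k}=\frac{p_\tau y_\tau}{GDP}.
\]
Because $U$ is homogeneous of degree one it is proportional to $GDP$, so $\partial\log(U)/\partial\log(\tau_k)$ and $\partial\log(GDP)/\partial\log(\tau_k)$ coincide, which yields both equalities in the statement (and evaluating at the initial point $\tau_k=1$ gives the un-logged form $\partial V/\partial\tau_k=p_\tau y_\tau$).

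\textbf{Main obstacle.} The step I expect to require the most care is justifying that $V$ is genuinely differentiable at $\tau_k=1$ and that the envelope theorem applies with no hidden terms — and this is precisely where the hypothesis that \emph{no other technology can produce good $k$ at the same transport-adjusted price} is needed. That strict price gap makes $\tau$ the unique active producer of good $k$, and this persists under small perturbations of $\tau_k$, so the shock does not induce a discrete substitution of producers for good $k$, the optimization stays locally non-degenerate in the $\tau_k$-direction, the relevant multiplier $\mu=p_\tau$ is uniquely pinned down, and $V$ has no kink there. (Differentiability and strict quasi-concavity of $U$ handle the consumer side; monotonicity of $U$ forces all market-clearing constraints to bind, so the shock genuinely bites through good $k$; and $y_\tau>0$ from the hypothesis makes the constraint active.) Without this condition, two tied producers could absorb the shock and $V$ would only admit a one-sided derivative. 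With it, all the equilibrium re-sourcing, re-pricing, and input substitution triggered by the shock is second order, and the first-order welfare effect is exactly the direct revenue term $p_\tau y_\tau$, which is the content of Hulten's Theorem.
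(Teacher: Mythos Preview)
Your proposal is correct, and the core idea — an envelope argument — is the same as the paper's, but your execution takes a different and somewhat cleaner route.

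The paper does \emph{not} reformulate as a planner's problem. Instead it works directly with equilibrium relations: it uses Euler's theorem and the consumer's first-order conditions to write $U=\lambda\sum_f p_f c_f=\lambda\cdot GDP$, then proceeds in two separate cases. For final goods it substitutes market clearing $c_f=\sum_{\tau:O(\tau)=f}\tau_f y_\tau$ and applies the envelope theorem directly. For intermediate goods it does something more laborious: it iteratively expands the zero-profit identities along the supply chain to telescope each final-good revenue into a term $p_{\widehat\tau}\,\widehat x_{\widehat\tau\tau}$ attributable to the shocked technology $\widehat\tau$ plus residual labor costs, substitutes that decomposition into $U$, and only then differentiates. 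Summing the attributed pieces over final goods recovers $p_{\widehat\tau}y_{\widehat\tau}$.

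Your Lagrangian/duality route avoids this case split entirely: because $\tau_k$ enters only the output-balance constraint for technology $\tau$, one application of the envelope theorem plus the identification $\mu=p_\tau$ handles both final and intermediate goods uniformly. This is the more standard modern presentation of Hulten's theorem and is arguably more transparent about why the supply-chain structure washes out to first order. The paper's telescoping argument, by contrast, makes the supply-chain accounting explicit, which fits its later emphasis on contrasting the long run with the short run where that structure \emph{does} matter. Your discussion of differentiability and the role of the no-tie hypothesis is also sharper than the paper's, which invokes the envelope theorem without dwelling on the regularity needed for it to apply.
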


The condition that there is no alternative technology that can produce the good at the same (transportation-adjusted) price 
serves the same role as the condition that there are unique producers for each good (or that all producers of the same good are shocked and use the same equilibrium mix of inputs) that is usually implicitly or explicitly assumed in statements of Hulten's Theorem.
If we relax this condition and allow for duplicate production of the same good at the same effective price and shock only one of the technologies/producers, then this is an upper bound on the impact, and so the equality becomes less than or equal to in the above equation.\footnote{As an example, given costs of transportation, it can be optimal to produce the exactly same good using exactly the same technology separately in different regions to be closer to inputs.  For instance, there are lumber mills using the same technology in many locations around the world to be close to the forests, and much of that lumber is later shipped as it is cheaper to ship the lumber than the trees.  As such, having multiple suppliers of the same input can be a feature of some equilibria.  Nonetheless, those could end up with different transportation-adjusted costs and hence not be perfect substitutes for each other.}
This condition rules out that there are exact duplicates for the shocked technologies and so complete substitution away from the shocks is  costly.\footnote{ \cite{bahal2023beyond} consider larger technology shocks that change a sector's input mix, and aggregate the impact of these shocks into an alternative measure of a sector's importance.}

Hulten's Theorem identifies the long-run marginal effect of a change in the productivity of a technology: it measures the full equilibrium adjustment of the economy to a new equilibrium. It shows that a sufficient statistic for long run marginal impact of a shock to an industry on GDP is the equilibrium value of the shocked industry. A simple intuition for the result is that, at the margin, the reduced productivity is compensated for by sourcing more inputs at their current prices.

We note that there is no adjustment in the statement of Hulten's Theorem for transportation costs.  This is because $y_\tau$ already includes the extra amount produced to account for shipping costs.   That is,  for any intermediate good, $y_\tau = \sum_{\tau''} x_{\tau \tau''}$ and each $x_{\tau \tau''}$ is $\theta_{\tau \tau''}$ times the amount that technology $\tau''$ finally receives as an input.  Technology $\tau$'s price also includes the  transportation costs that accumulated along its supply chain.   

Hulten's Theorem is illustrated in Figure \ref{subfig:LRdisrupt}. 
Before a ten percent decrease in the productivity of Intermediate $R$, the prices of the goods are $(0.1,0.1,0.8,1)$ for Labor, Intermediate $R$, Intermediate $I$ and Final good $F$, respectively. Applying Hulten's Theorem, the long run impact of the shock is approximately $1/50$th of GDP. Accounting for the full equilibrium adjustment in this case results in the updated flows shown in Figure \ref{subfig:LRdisrupt} (to two decimal places), showing that the approximation does a good job in this case.

\begin{figure}
  \subfloat[Long Run Disruption\label{subfig:LRdisrupt}]
{
\centering
\resizebox{3.25in}{2.7in}{	
\begin{tikzpicture}[
		sqRED/.style={rectangle, draw=black!240, fill=red!15, very thick, minimum size=7mm},
		sqBLUE/.style={rectangle, draw=black!240, fill=blue!15, very thick, minimum size=7mm},
		roundnode/.style={ellipse, draw=black!240, fill=green!15, very thick, dashed, minimum size=7mm},
		sqGREEN/.style={rectangle, draw=black!240, fill=green!15, very thick, minimum size=7mm},
		roundRED/.style={ellipse, draw=black!240, fill=red!15, very thick, dashed, minimum size=7mm},
		roundREDb/.style={ellipse, draw=black!240, fill=red!15, very thick, minimum size=7mm},
		roundYELL/.style={ellipse, draw=black!240, fill=yellow!20, very thick, dashed, minimum size=7mm},
		sqYELL/.style={rectangle, draw=black!240, fill=yellow!20, very thick, minimum size=7mm},
		roundORA/.style={ellipse, draw=black!240, fill=orange!15, very thick, dashed, minimum size=7mm},		
		sqORA/.style={rectangle, draw=black!240, fill=orange!15, very thick, minimum size=7mm},
		squaredBLACK/.style={rectangle, draw=black!240, fill=white!15, very thick, minimum size=7mm},
squaredBLACK2/.style={rectangle, draw=red!240, fill=white!15, very thick, dashed, minimum size=7mm},		
		roundBROWN/.style={ellipse, draw=black!240, fill=brown!15, very thick, dashed, minimum size=7mm},
		sqBROWN/.style={rectangle, draw=black!240, fill=brown!15, very thick, minimum size=7mm},		
        sqPURPLE/.style={rectangle, draw=black!240, fill=purple!15, very thick, minimum size=7mm},				
		]
        \node[](S4) at (0, -3.5) {};

		\node[squaredBLACK2](R1) at (-3, 3.6) {\Large Intermediate R, $\cancel{2}$ \textcolor{red}{$1.96$}};
        \node[] at (4.65, 5.35) {\Large $\cancel{2}$ \textcolor{red}{$2.17$}};
        \node[] at (4.5, 3.75) {\Large $\cancel{7}$ \textcolor{red}{$6.85$}};
        \node[] at (4.7, 0.5) {\Large $\cancel{1}$ \textcolor{red}{$.98$}};

		\node[sqORA, align=left](I1) at (2, 1.9) {\Large Intermediate I, $\cancel{1}$ \textcolor{red}{$.98$}};
        \node[] at (-1.8, 2.85) {\Large $\cancel{1}$ \textcolor{red}{$.98$}};

        \node[] at (-1.4, 0.5) {\Large $\cancel{1}$ \textcolor{red}{$.98$}};

        \node[] at (1.35, 0.5) {\Large $\cancel{1}$ \textcolor{red}{$.98$}};

		\node[sqGREEN, align=left](F1) at (2, -2.5) {\Large Final good F,  $\cancel{1}$ \textcolor{red}{$.98$} };

		\node[squaredBLACK, align=left](L) at (12, 6) {\Large Labor, endow 10};

		\tikzset{thick edge/.style={-, black, fill=none, thick, text=black}}
		\tikzset{thick arc/.style={->, black, fill=black, thick, >=stealth, text=black}}

		\draw[line width=0.4mm, black,-> ] (R1) to[out=-90,in=90] (I1);
		
        \draw[line width=0.4mm, black,-> ] (L) to[out=-90,in=90] (R1);
        \draw[line width=0.4mm, black,-> ] (L) to[out=-90,in=90] (I1);
        \draw[line width=0.4mm, black,-> ] (L) to[out=-90,in=90] (F1);		
		
        \draw[line width=0.4mm, black,-> ] (R1) to[out=-90,in=90] (F1);

		\draw[line width=0.4mm, black,-> ] (I1) to[out=-90,in=90] (F1);

		\end{tikzpicture}
}
}
\hfill
  \subfloat[Short Run Disruption\label{subfig:SRdisrupt}]
{
\centering			
\resizebox{3.25in}{2.7in}{	
\begin{tikzpicture}[
		sqRED/.style={rectangle, draw=black!240, fill=red!15, very thick, minimum size=7mm},
		sqBLUE/.style={rectangle, draw=black!240, fill=blue!15, very thick, minimum size=7mm},
		roundnode/.style={ellipse, draw=black!240, fill=green!15, very thick, dashed, minimum size=7mm},
		sqGREEN/.style={rectangle, draw=black!240, fill=green!15, very thick, minimum size=7mm},
		roundRED/.style={ellipse, draw=black!240, fill=red!15, very thick, dashed, minimum size=7mm},
		roundREDb/.style={ellipse, draw=black!240, fill=red!15, very thick, minimum size=7mm},
		roundYELL/.style={ellipse, draw=black!240, fill=yellow!20, very thick, dashed, minimum size=7mm},
		sqYELL/.style={rectangle, draw=black!240, fill=yellow!20, very thick, minimum size=7mm},
		roundORA/.style={ellipse, draw=black!240, fill=orange!15, very thick, dashed, minimum size=7mm},		
		sqORA/.style={rectangle, draw=black!240, fill=orange!15, very thick, minimum size=7mm},
		squaredBLACK/.style={rectangle, draw=black!240, fill=white!15, very thick, minimum size=7mm},
squaredBLACK2/.style={rectangle, draw=red!240, fill=white!15, very thick, dashed, minimum size=7mm},		
		roundBROWN/.style={ellipse, draw=black!240, fill=brown!15, very thick, dashed, minimum size=7mm},
		sqBROWN/.style={rectangle, draw=black!240, fill=brown!15, very thick, minimum size=7mm},		
        sqPURPLE/.style={rectangle, draw=black!240, fill=purple!15, very thick, minimum size=7mm},				
		]
        \node[](S4) at (0, -3.5) {};

		\node[squaredBLACK2](R1) at (-3, 3.6) {\Large Intermediate R,  $\cancel{2}$ \textcolor{red}{$1.8$}};
        \node[] at (4.65, 5.35) {\Large $\bm{2}$};
        \node[] at (4.5, 3.75) {\Large $\bm{7}$};
        \node[] at (4.7, 0.5) {\Large $\bm{1}$ };

		\node[sqORA, align=left](I1) at (2, 1.9) {\Large Intermediate I, $\cancel{1}$ \textcolor{red}{$.9$}};
        \node[] at (-1.8, 2.85) {\Large $\cancel{1}$ \textcolor{red}{$.9$} };

        \node[] at (-1.4, 0.5) {\Large $\cancel{1}$ \textcolor{red}{$.9$}};

        \node[] at (1.35, 0.5) {\Large $\cancel{1}$ \textcolor{red}{$.9$}};

		\node[sqGREEN, align=left](F1) at (2, -2.5) {\Large Final good F, $\cancel{1}$ \textcolor{red}{$.9$}};
			
		\node[squaredBLACK, align=left](L) at (12, 6) {\Large Labor, endow 10};

		\tikzset{thick edge/.style={-, black, fill=none, thick, text=black}}
		\tikzset{thick arc/.style={->, black, fill=black, thick, >=stealth, text=black}}
			
		\draw[line width=0.4mm, black,-> ] (R1) to[out=-90,in=90] (I1);
		
        \draw[line width=0.4mm, black,-> ] (L) to[out=-90,in=90] (R1);
        \draw[line width=0.4mm, black,-> ] (L) to[out=-90,in=90] (I1);
        \draw[line width=0.4mm, black,-> ] (L) to[out=-90,in=90] (F1);		
		
        \draw[line width=0.4mm, black,-> ] (R1) to[out=-90,in=90] (F1);			
		\draw[line width=0.4mm, black,-> ] (I1) to[out=-90,in=90] (F1);

		\end{tikzpicture}
}	
}

\caption{\label{fig:shortcontrast} \footnotesize{An example of the short-run impact of the shock to a technology, and the contrast to the long run.   A 10 percent disruption of the production propagates through the network to the final good.  Even though labor is not
disrupted, it cannot produce the outputs without the corresponding inputs and so final good disruption is disrupted to the full extent of the input disruption.
The disruption is 5 times larger than the corresponding long-run impact.
In the long run, labor reallocates to even out the production needed as inputs downstream. }}
\end{figure}
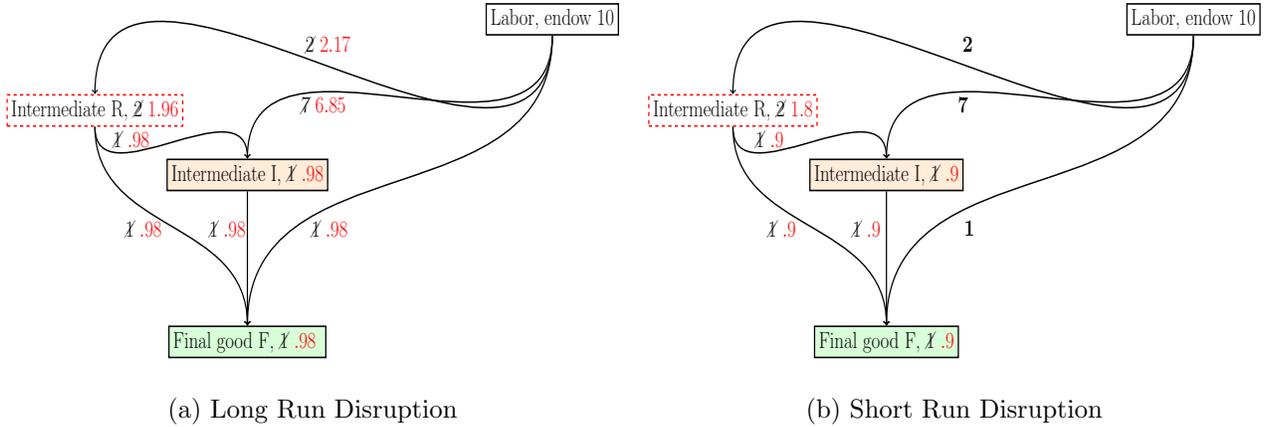

\subsection{Bottlenecks and Disruptions:  Short-Run Impacts of Shocks}
\label{sec:shortrun}

Hulten's Theorem applies in the long run as an approximation for sufficiently small shocks, and also in the very long run (for small shocks\footnote{And also for larger shocks if the set of technologies available to countries to produce a given output corresponds to a Cobb-Douglas production function, in which case the Theorem is exact for non-marginal shocks.}) under additional conditions that limit how substitution works across technologies.  Once one
examines large changes, or the very long run when certain nonlinearities as technology substitution occurs, then adjustments are needed.  
That is well-studied (e.g., see \cite{baqaee2019macroeconomic,baqaee2020productivity}), and thus
we next examine the impact of the technologies when some aspects of production are fixed in the short and medium run.  

In the short run, the impact of changes in productivity
can be much more dramatic and depend on the structure of the supply network.
This answers the point made by Larry Summers in the 2013 speech that was quoted in the introduction, highlighting the difference between the long and short run.
Short-run supply disruptions can be substantial even for items whose costs are a small fraction of GDP,
as we show in this section.

We now examine the short-run such that if there is a shortage of an input, it is rationed so that each customer suffers the same percentage shortfall in supply, and there are no other (compensating) adjustments to the inputs. (In Section \ref{sec:medium}, we return to the medium run in which there are shortages but price adjustments and resourcing.)

To measure the impact of a disruption of some technology, we examine the most that each technology can still produce given the shortage of some input(s) that it faces.  Each technology that sources an input produced by a shocked technology is thus shocked, and so we also trace those impacts as they propagate downstream.
Suppose, for example, that a single technology is used to produce a given final good and the supply chain for this final good technology involves a string of single-sourced inputs without any cycles. Then, the impact of an $X$ percent drop in the output of an upstream good is to reduce the production of the good downstream of it in the supply chain by $X$ percent, and this propagates directly down the chain and disrupts the output of the final good by $X$ percent.

Although the short-run impact in Figure \ref{fig:shortcontrast} is clear in how it works as the shock propagates directly downstream,
there can be three complications in more intricate settings.
First, some technologies downstream of the shock might source the same input from multiple suppliers, and if not all of those sources are shocked, then the reduction can be less than X percent. Second, the same good may be produced using different technologies that require different inputs, only some of which are affected. Third, if there are cycles in the supply network, then these can feed back leading to repeated reductions that can amplify the effect of the shock in a way that helps negate the impact of sourcing and technological diversity.

The impact of a short run shock is found by the following (intuitive) {\sl Shock Propagation Algorithm}:  
\begin{itemize}
\item Begin with the equilibrium flows and final good outputs $(x_{n\tau'})_{n\tau'},(x_{\tau\tau'})_{\tau\tau'}, (y_{\tau})_{O(\tau)\in F} $
\item Reduce the outputs of the shocked technologies $\tau\in T^{Shocked}$ by $(1-\lambda)$, so $y^1_\tau = \lambda y_{\tau}$ and $x^1_{\tau\tau'}=\lambda x_{\tau\tau'}$, and leave
other flows and outputs are unchanged (so,  $y^1_\tau =  y_{\tau}$ and $x^1_{\tau\tau'}= x_{\tau\tau'}$ for all $\tau\notin T^{Shocked}$ and $\tau'$).
\item Iteratively in $k$ (starting from $k=2$):
\begin{itemize}
\item   
let $y^k_{\tau'} = y_{\tau'}\left(\min_{\tau:x_{\tau\tau'}>0  }  \frac{x^{k-1}_{\tau \tau'}}{x_{\tau\tau'}} \right)$ and
$x^k_{\tau' \tau''} = \frac{y^k_{\tau'}}{y_{\tau'}} x_{\tau'\tau''}$ for all $\tau'$ and $\tau''$.
\end{itemize}
\item Iterate until $y^{k-1}=y^k$ (or  $\max_\tau (y^k_\tau-y^{k-1}_\tau) \leq\Delta$  for some threshold $\Delta\geq 0$).
\end{itemize}
At each stage, we examine nodes downstream from any shocked node and calculate what it can produce given the available inputs.
Any nodes that have outputs change are then shocked, and we repeat the calculation.
In particular, we calculate, for each affected input, the proportion of the original supplied level of that input type which can still be successfully sourced from the shocked flow network. From this, we calculate how much the output of each node declines, and continue to trace affected technologies.   The overall impact on the economy is then given by the value of lost final good production resulting from the shock(s). Note that this does not allow for the redistribution of labor and so some labor becomes idle. This reduces the labor income of the corresponding representative consumer.

The Shock Propagation Algorithm is illustrated in Figure \ref{shockalg} in Supplementary Appendix \ref{sec:extended_eg}. If there are no cycles, the shock propagation algorithm terminates in a finite number of steps at a fixed point which is the unique solution to a minimum disruption problem. If there are cycles, the algorithm may not terminate in finite time. However, in that case, it still converges to the fixed point which is a solution to a minimum disruption problem. This is formalized in Proposition \ref{prop:shock_prop_alg}.

Although the short-run outcome is out of equilibrium, a variation on the First Welfare Theorem nevertheless applies to it: the flows that the shock propagation algorithm converge to solve a suitably constrained value maximization problem. We call this problem the minimum disruption problem. 

Starting from an equilibrium outputs and flows $(y_{\tau})_{\tau}$ and $(x_{\tau\tau'})_{\tau\tau'}$, respectively, let $T^{Active}:=\{\tau\in T :y_{\tau}>0\}$ be the set of active technologies.
Consider a shock that reduces the outputs of technologies $T^{Shocked}\subseteq T^{Active}$ to $\lambda<1$ of their initial level.
Consider outputs and flows  $(\widehat{y}_{\tau})_{\tau}$ and $(\widehat{x}_{\tau\tau'})_{\tau\tau'}$
that are the solution to the following minimum disruption problem:

$$\max_{(\widehat{x}_{\tau\tau'})_{\tau\tau'}} \sum_{\tau:O(\tau)\in F} p_{\tau}\widehat{y}_{\tau}$$ subject to
\begin{align*}
\widehat{y}_{\tau}&\leq \lambda y_{\tau} \quad  \text{ for all }\tau \in T^{Shocked},     & \text{(shock constraints)}\\
\widehat{y}_{\tau}&\leq \left(\min_{k:\tau_k<0} \frac{\sum_{\tau':O(\tau')=k}\widehat{x}_{\tau'\tau}}{\sum_{\tau':O(\tau')=k} x_{\tau'\tau}} \right)y_{\tau}  \quad  \text{ for all }\tau\in T^{Active},     &  \quad \text{(technology constraints)}\\
\widehat{y}_{\tau}&= y_{\tau}=0  \quad  \text{ for all }\tau\not\in T^{Active},     &  \quad \text{(technology switching constraints)}\\
\widehat{x}_{\tau'\tau}&=x_{\tau'\tau} \left(\frac{\widehat{y}_{\tau'}}{y_{\tau'}}\right)   \quad \text{ for all }\tau',\tau \in T^{Active}.   &    \text{(proportional rationing)}
\end{align*}

The minimum disruption problem defines the maximum final production that can still be produced, subject to the reduced output of directly-shocked technologies, as well as technology constraints
that do not allow new sources of inputs and are based on proportional rationing constraints that equally spread the impact of each technology's reduced production among its customers.

Proportional rationing applies in the short run in which firms are not able to seek out new suppliers, renegotiate contracts, or switch technologies to substitute disrupted inputs for non-disrupted ones.

\begin{proposition}\label{prop:shock_prop_alg}
The Shock Propagation Algorithm (with $\Delta=0$, which may continue ad infinitum) converges to a flow of goods that is weakly lower for all links, and strictly lower for all links on a directed path from any shocked node.
The limit output vector solves the minimum disruption problem.
\end{proposition}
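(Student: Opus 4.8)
The plan is to recast the whole algorithm as a monotone fixed‑point iteration on the vector of \emph{utilization ratios} $r_\tau := \widehat y_\tau/y_\tau\in[0,1]$ over $\tau\in T^{Active}$ (with $\widehat y_\tau=0$ for $\tau\notin T^{Active}$), and then read off both claims from the structure of that iteration. First I would rewrite a single step of the algorithm in these coordinates: by proportional rationing $x^{k-1}_{\tau\tau'}=r^{k-1}_\tau x_{\tau\tau'}$, so each quantity $\big(\sum_{\tau':O(\tau')=g}x^{k-1}_{\tau'\tau}\big)\big/\big(\sum_{\tau':O(\tau')=g}x_{\tau'\tau}\big)$ is a fixed convex combination of the ratios $r^{k-1}_{\tau'}$ over the suppliers of good $g$, and hence $r^{k}=\Psi(r^{k-1})$ where
\[
\Psi(r)_\tau \;=\; \min\Bigl\{\, s_\tau,\ \min_{g\in I(\tau)}\ \sum_{\tau':\,O(\tau')=g}\frac{x_{\tau'\tau}}{\sum_{\tau'':\,O(\tau'')=g}x_{\tau''\tau}}\,r_{\tau'}\,\Bigr\},
\]
with $s_\tau=\lambda$ for $\tau\in T^{Shocked}$ and $s_\tau=1$ otherwise — the term $s_\tau$ encoding the shock constraint, which must be carried along at every iteration. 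Since $\Psi$ is a composition of coordinatewise minima of convex combinations with constants, it is order‑preserving and continuous on the cube $[0,1]^{T^{Active}}$, and $\Psi(\mathbf 1)=s\le\mathbf 1$; writing $r^0:=\mathbf 1$ one checks $r^1=s=\Psi(r^0)$, so $r^k=\Psi^k(\mathbf 1)$ is a decreasing sequence bounded below by $\mathbf 0$, hence convergent to some $r^\infty$, and by continuity $r^\infty=\Psi(r^\infty)$. This delivers convergence even when cycles make the algorithm run forever, and translating back via $\widehat y^\infty_\tau=r^\infty_\tau y_\tau$, $\widehat x^\infty_{\tau'\tau}=r^\infty_{\tau'}x_{\tau'\tau}$ shows all goods flows lie weakly below their equilibrium values (labor flows are held fixed, so trivially weakly below).

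For the strict‑decrease claim I would argue along directed paths. Each $\tau_0\in T^{Shocked}$ satisfies $r^\infty_{\tau_0}\le r^1_{\tau_0}=\lambda<1$. If $\tau\to\tau'$ is a link with $x_{\tau\tau'}>0$ (so $g:=O(\tau)\in I(\tau')$) and $r^\infty_\tau<1$, then evaluating the fixed‑point identity at $\tau'$ and keeping only the term for good $g$ bounds $r^\infty_{\tau'}$ above by a convex combination of utilization ratios that are all $\le 1$ and that includes $r^\infty_\tau<1$ with strictly positive weight; hence $r^\infty_{\tau'}<1$. Induction along any directed path out of a shocked node gives $r^\infty_{\tau'}<1$, i.e. $\widehat x^\infty_{\tau'\tau}<x_{\tau'\tau}$, for every link on such a path.

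For the optimization claim, feasibility of $(\widehat x^\infty,\widehat y^\infty)$ is immediate: $r^\infty=\Psi(r^\infty)$ yields $r^\infty_\tau\le s_\tau$ (shock constraints) and $r^\infty_\tau\le\min_g(\cdots)$ (technology constraints), proportional rationing holds by construction, and inactive technologies never enter (technology‑switching constraints). For optimality I would show $r^\infty$ dominates \emph{every} feasible point: for a feasible $(\widehat x,\widehat y)$ with ratios $\rho_\tau=\widehat y_\tau/y_\tau$, substituting proportional rationing into the shock and technology constraints gives exactly $\rho\le\Psi(\rho)$ componentwise (a pre‑fixed point), together with $\rho\le\mathbf 1$. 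A one‑line induction — $\rho\le\mathbf 1=r^0$, and $\rho\le\Psi(\rho)\le\Psi(r^k)=r^{k+1}$ by monotonicity — gives $\rho\le r^\infty$, so $\widehat y\le\widehat y^\infty$ coordinatewise and in particular $\sum_{\tau:O(\tau)\in F}p_\tau\widehat y_\tau\le\sum_{\tau:O(\tau)\in F}p_\tau\widehat y^\infty_\tau$ since $p_\tau\ge0$; as $\widehat y^\infty$ is itself feasible it attains the maximum, and since $p_\tau>0$ on final goods every optimizer agrees with it on final‑good outputs.

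The main obstacle — the part needing care rather than routine checking — is the interaction of cycles with the fixed‑point structure: one must ensure the Kleene‑type iteration from the top converges to a genuine fixed point (continuity of $\Psi$) and, for the optimality half, recognize that feasibility for the minimum disruption problem is \emph{precisely} the pre‑fixed‑point inequality $\rho\le\Psi(\rho)$, which is what lets the top‑down iterates dominate every feasible solution — and this in turn rests essentially on proportional rationing being imposed as an equality. A secondary point to pin down is that the bare algorithm statement should be read with the shock cap $s_\tau$ re‑imposed at each iteration (equivalently, treat each shocked technology as having a dummy input available only in fraction $\lambda$ of its requirement), since otherwise a shocked node lying on no cycle through itself would spuriously recover to full output.
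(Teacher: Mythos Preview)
Your proposal is correct and follows the same monotone fixed-point strategy as the paper, but with a somewhat cleaner execution. The paper works in the space of link flows, invokes the Knaster--Tarski theorem to obtain a complete lattice of fixed points, and then argues informally that the algorithm converges to the greatest fixed point because ``each reduction imposed by the algorithm is inevitable.'' You instead work in utilization-ratio coordinates and make the optimality step precise by observing that feasibility in the minimum disruption problem is \emph{exactly} the pre-fixed-point condition $\rho\le\Psi(\rho)$, whence the Kleene iterates from $\mathbf{1}$ dominate every feasible point by a one-line induction; this is more elementary (no Tarski needed, just monotone convergence plus continuity) and fully rigorous where the paper is sketchy. Your closing remark about carrying the shock cap $s_\tau$ through every iteration is apt: the paper's algorithm statement, read literally, would let a shocked raw-material node recover at the second step, and the paper's proof silently fixes this by bounding the state space $S$ above by the post-shock flows rather than by folding the cap into the map---your formulation via $\Psi$ handles it more transparently.
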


In the medium run prices, and thus the rationing, adjust. We explore the extent to which this mitigates short run disruptions in Section \ref{sec:medium}. An implication of Proposition \ref{prop:shock_prop_alg} is that allowing for flexible prices will (generically) reduce overall disruption because it relaxes the proportional rationing constraint.

Proposition \ref{prop:shock_prop_alg} shows that the Shock Propagation Algorithm converges to a solution of the minimum disruption problem. The uniqueness of the limit, and that it solves the minimum disruption, follow from several facts. First, the algorithm iterates on the flows between technology pairs, and these flows can be represented by vectors. Bounding each entry of the vector from above by the initial equilibrium flow, and from below by $0$, the vectors can then be partially ordered such that one vector is weakly ordered above another when all flows are weakly higher. Moreover, this partially ordered set is a complete lattice. Second, reductions in inputs in the short run are complementary to each other, so one iteration of the algorithm outputs weakly higher flows when the initial flows are weakly higher. As such, an iteration of the algorithm is an isotone (monotonic) function, and Tarski's fixed point theorem tells us that the fixed points of this mapping form a complete lattice (given the same partial order). Third, the algorithm terminates at/converges to a fixed point in which no more reductions are necessary, and every reduction implemented by the algorithm is necessary. Thus the highest possible production levels that satisfy all the constraints is the limit, and the algorithm finds a solution to the minimum disruption problem. The full proof, including the infinite case appears in the appendix.  It is worth noting that flows that solve the minimum disruption problem solution are not uniquely tied down: there could be several flows that are changing in combination and only some of them end up binding the production, and so nonbinding flows can be lowered and not change overall production.
One can check directly that convex combinations of flows that solve the problem also solve the problem, and hence given that the flows that can solve the problem are bounded above and below, the set of flows that solve the problem form a complete lattice, of which the Shock Propagation Algorithm finds the maximum.

We have specified the minimum disruption problem to minimize the lost value of final goods at the initial equilibrium prices. One might wonder whether the new long-run equilibrium prices should be used instead, or perhaps some other prices.
It turns out that the flows that solve the minimum disruption problem are invariant to such choices.
This is because proportional rationing is imposed in the problem, and this determines the final good levels.
In Section \ref{sec:medium} we show that this can change when that proportionality constraint is removed. 

\begin{observation}\label{cor:minimum_disruption_price_invariance}
Flows that solve the minimum disruption problem remain a solution as prices in the objective function are changed, and the total production of each final good is the same because this solution maximizes output good by good over flows consistent with the constraints.
\end{observation}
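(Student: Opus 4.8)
\emph{Proof plan.} The observation is essentially a corollary of Proposition~\ref{prop:shock_prop_alg}, and the plan is to make three points precise. First, I would note that the proportional rationing constraints make the flows a deterministic, coordinatewise-isotone function of the output vector $(\widehat{y}_\tau)_\tau$ — namely $\widehat{x}_{\tau'\tau}=x_{\tau'\tau}\,\widehat{y}_{\tau'}/y_{\tau'}$ — so after substituting them the feasible region of the minimum disruption problem is parametrized entirely by output vectors; call this set $\mathcal{Y}$ (with $\widehat{y}_\tau=0$ forced off $T^{Active}$). Second — the one place a small argument is needed rather than a pure citation — I would argue that $\mathcal{Y}$ has a coordinatewise-greatest element $\widehat{y}^{*}$, which is precisely the limit output vector of the Shock Propagation Algorithm. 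Substituting proportional rationing into the technology and shock constraints turns them into $\widehat{y}\le\min\{\Phi(\widehat{y}),\bar y\}$, where $\Phi$ is the isotone self-map from the proof of Proposition~\ref{prop:shock_prop_alg} (a coordinatewise minimum of weighted averages of the ratios $\widehat{y}_{\tau'}/y_{\tau'}$, hence nondecreasing) and $\bar y$ is the cap equal to $\lambda y_\tau$ on $T^{Shocked}$ and $y_\tau$ elsewhere. The truncated map $\widehat{y}\mapsto\min\{\Phi(\widehat{y}),\bar y\}$ is still isotone on the complete lattice $\prod_\tau[0,y_\tau]$, so it has a greatest fixed point $\widehat{y}^{*}$; and for any $\widehat{y}\in\mathcal{Y}$, monotone iteration of this map started from $\widehat{y}$ stays weakly above $\widehat{y}$ and converges weakly below $\widehat{y}^{*}$, giving $\widehat{y}\le\widehat{y}^{*}$ coordinatewise. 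Proposition~\ref{prop:shock_prop_alg} identifies $\widehat{y}^{*}$ with the algorithm's limit.

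Given the coordinatewise-greatest feasible point $\widehat{y}^{*}$, the two assertions follow immediately. For price invariance: for any price vector $q$ with $q_\tau\ge 0$ on final-good technologies and any feasible $\widehat{y}$, coordinatewise domination $\widehat{y}\le\widehat{y}^{*}$ and $q\ge 0$ give $\sum_{\tau:O(\tau)\in F}q_\tau\widehat{y}_\tau\le\sum_{\tau:O(\tau)\in F}q_\tau\widehat{y}^{*}_\tau$, so $\widehat{y}^{*}$ together with its rationing-determined flows is optimal for every such $q$; hence the solution produced by the algorithm solves the minimum disruption problem irrespective of how the objective prices are chosen. For the total-production claim: $\sum_{\tau:O(\tau)=f}\widehat{y}^{*}_\tau=\max\{\sum_{\tau:O(\tau)=f}\widehat{y}_\tau:\widehat{y}\in\mathcal{Y}\}$ is a quantity that references no prices, and it is attained at the algorithm's solution for every price vector; moreover, when all final-good prices are strictly positive the displayed inequality is strict unless $\widehat{y}_\tau=\widehat{y}^{*}_\tau$ for every $\tau$ with $O(\tau)\in F$, so in that case every optimal solution produces exactly the same total of each final good. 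Since final goods are never used as inputs, these claims do not interact with the intermediate-flow indeterminacy already noted after Proposition~\ref{prop:shock_prop_alg}.

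The only real content is the second step of the first paragraph — checking that $\mathcal{Y}$ has a coordinatewise-greatest element and that it dominates even the feasible points at which the technology constraints are slack, not merely the fixed points of $\Phi$. The truncation $\min\{\Phi(\cdot),\bar y\}$ and the monotone-iteration bound handle this, and because exactly this lattice machinery is already deployed in the proof of Proposition~\ref{prop:shock_prop_alg}, in the write-up this reduces to a one-line appeal to that proof rather than a fresh argument; everything else is a direct consequence of nonnegativity of prices and componentwise domination.
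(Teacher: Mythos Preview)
Your proposal is correct and follows the same essential line as the paper, which treats the observation as immediate from Proposition~\ref{prop:shock_prop_alg} and the sentence preceding the observation (``proportional rationing is imposed in the problem, and this determines the final good levels''); the paper offers no separate formal proof. Your write-up makes explicit the one step the paper leaves implicit---that the feasible output set has a coordinatewise-greatest element, not merely a greatest fixed point of the propagation map---via the truncated isotone map and upward iteration from an arbitrary feasible point; this is a nice sharpening, and since the lattice machinery is already in the proof of Proposition~\ref{prop:shock_prop_alg}, it indeed collapses to a one-line appeal there, exactly as you say.
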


It is worth reflecting on our model of the short-run and contrasting it with the existing literature.  A typical way to capture the difference between the short-run and long-run is via the elasticity of substitution (e.g., \cite{Moll2022}). By making this more inelastic in the short run, less adjustment in the input mix 
occurs as one changes productivity parameters and compares across equilibria. However, this is a different approach
that answers a fundamentally different question from our analysis that constrains production. Our assumption that firms are unable to change technologies and hence are stuck with a short-run production function that is Leontief captures this lack of substitution, and one could use this to compare outcomes across different (long-run) equilibria as productivity parameters change.
However, that is not what we are doing and is not what is driving the difference between the short- and long-run in our model. To emphasize this, consider an example where a pandemic shuts down production of some input for a quarter of the year.  So, 1/4 of the annual output of this input is lost.  In our model that cascades through the supply chain and up to 1/4 of the output of downstream goods using this as an input (directly or indirectly) are lost.  If one uses does not treat this as a constraint, but recalculates equilibrium with a new productivity parameter of 3/4 of what it used to be, the solution would be to scale up production of this good to compensate for its lower productivity.  That would come at some cost (e.g., labor cost that is pulled from elsewhere in the economy), and would lead to lower production, but a (possibly much) smaller impact than the 1/4 lost production that our model generates.  But in a pandemic that scaling up would be impossible: the production is constrained by the labor restrictions of the pandemic and there is a fundamental cap put on the production function that operates differently from a parameter change and a new equilibrium.  This applies to many short-term shocks such as natural disasters, sanctions, political crises, etc.  It also captures situations where equilibrium adjustments and reallocation of labor, etc., across different parts of production take some time (which is true of many settings), so that the economy is at least temporarily out of equilibrium due to an unanticipated shock.  This is not something that can be captured just by changing elasticities in a production function, but needs to be modeled as short-run constraints.  

To further contrast the approaches, consider a temporary shock. A standard approach here would be to adjust to a new equilibrium in the short run for the duration of the shock, and then to adjust back to the original equilibrium in the long run, perhaps slowly if there are adjustment costs (see, for example, \cite{liu2024supply}). However, our short-run analysis would let the economy be out of equilibrium, and adjustments back to the long-run equilibrium would occur as short-run constraints are relaxed and the economy can adjust on additional dimensions. Including the constraints leads to markedly different short-run outcomes, and identifies different points in the supply network as crucial.

Given the monotonicity of the algorithm, from it we can also deduce an upper bound on the impact of a shock that is easy to calculate, applies in many cases of interest, and contrasts starkly with Hulten's Theorem.

Consider a shock to some subset of the technologies $T^{Shocked}$. Let $ F(T^{Shocked})$ denote the set of final good technologies that lie downstream (on a directed path) from a shocked technology, so either (i) are in $T^{Shocked}$ directly or (ii) use an input good from a technology (directly or indirectly) that is shocked.
These are the final goods that ever change at some stage $k$ of the algorithm.  Thus $ F(T^{Shocked})$ gives the set of all final good technologies that are impacted by the shock.

\begin{proposition}\label{prop:shocked_GDP}
Consider a shock that reduces the output of all technologies $k\in T^{Shocked}$ to $\lambda<1$ of their original levels. Then the proportion of GDP that is lost to this shock is bounded above by $$(1-\lambda) \left(\frac{\sum_{\tau \in F(T^{Shocked})} {p}_{\tau} y_{\tau}}{GDP}\right).$$
\end{proposition}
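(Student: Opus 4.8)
The plan is to read everything off the output vector $(\widehat y_\tau)_\tau$ that the Shock Propagation Algorithm converges to, which by Proposition~\ref{prop:shock_prop_alg} is the solution of the minimum disruption problem. I will establish two facts: (i) $\widehat y_\tau \ge \lambda y_\tau$ for every final-good technology $\tau$; and (ii) $\widehat y_\tau = y_\tau$ for every final-good technology $\tau \notin F(T^{Shocked})$. Given (i) and (ii) the proposition is a one-line computation, so essentially all the content is in these two claims.

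For (i), the plan is to observe that the uniformly-scaled candidate $\widehat y_\tau = \lambda y_\tau$ and $\widehat x_{\tau\tau'} = \lambda x_{\tau\tau'}$ is feasible for the minimum disruption problem: the shock constraints and the technology constraints hold with equality (every input to an active $\tau$ is scaled by the same factor $\lambda$, so each input ratio equals $\lambda$ and the min over inputs is $\lambda$), the technology-switching constraints hold trivially since $y_\tau=0$ forces $\widehat y_\tau=0$, and proportional rationing holds with equality. By Observation~\ref{cor:minimum_disruption_price_invariance} the actual solution maximizes the output of each final good separately over all feasible flows, so $\widehat y_\tau \ge \lambda y_\tau$ for each $\tau$ with $O(\tau)\in F$. (Equivalently, this can be proved by induction on the algorithm's iteration count: if $x^{k-1}_{\tau\tau'}\ge \lambda x_{\tau\tau'}$ for all links, then every minimand $x^{k-1}_{\tau\tau'}/x_{\tau\tau'}$ in the update is $\ge\lambda$, hence $y^k_{\tau'}\ge \lambda y_{\tau'}$; since the inequality is closed it passes to the limit even in the presence of cycles.)

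For (ii), the plan is to show the algorithm reduces the output of a technology only if that technology is reachable along a directed path of positive equilibrium flows from some node in $T^{Shocked}$ — so if $\tau$ produces a final good and $\tau\notin F(T^{Shocked})$, its output is never touched. This is an induction on $k$: at $k=1$ only the shocked technologies' outputs change; at step $k$, $y^k_{\tau'}<y_{\tau'}$ requires some incoming link with $x^{k-1}_{\tau\tau'}<x_{\tau\tau'}$, which by the inductive hypothesis forces $\tau$ to be reachable from $T^{Shocked}$, hence so is $\tau'$. Passing to the limit (which exists by Proposition~\ref{prop:shock_prop_alg}) gives $\widehat y_\tau = y_\tau$ for every final-good technology $\tau\notin F(T^{Shocked})$. (Alternatively, exhibit a feasible point of the minimum disruption problem that holds all flows within the ancestor cone of $\tau$ at their equilibrium values while scaling everything else by $\lambda$ — feasibility is routine, using that $\tau$'s ancestors are shock-free — and again invoke Observation~\ref{cor:minimum_disruption_price_invariance}.)

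Finally I would combine. Market clearing together with the single world price for each final good gives $GDP = \sum_{\tau:\,O(\tau)\in F} p_\tau y_\tau$, and the value of lost final production at the initial prices (which by Observation~\ref{cor:minimum_disruption_price_invariance} is the relevant quantity) is $\sum_{\tau:\,O(\tau)\in F} p_\tau(y_\tau - \widehat y_\tau)$. By (ii) only terms with $\tau\in F(T^{Shocked})$ survive, and by (i) each such term is at most $(1-\lambda)p_\tau y_\tau$, so $\sum_{\tau\in F(T^{Shocked})} p_\tau(y_\tau-\widehat y_\tau) \le (1-\lambda)\sum_{\tau\in F(T^{Shocked})} p_\tau y_\tau$; dividing by $GDP$ gives the bound. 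I expect the only mild obstacle to be the cyclic/infinite-horizon case, i.e.\ ensuring both monotonicity facts survive the limit — but since each is a closed inequality and Proposition~\ref{prop:shock_prop_alg} already supplies convergence, this is routine; the genuinely load-bearing input is the good-by-good maximality of the minimum disruption solution from Observation~\ref{cor:minimum_disruption_price_invariance}.
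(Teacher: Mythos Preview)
Your proof is correct. The paper's own argument is the brief inductive one you mention as an alternative for (i): if no supplier's output is below $\lambda$ of its initial level, then neither is the firm's, so by induction on the algorithm every output stays at or above $\lambda y_\tau$. Your primary route for (i) --- exhibiting the uniformly-scaled flows as feasible for the minimum disruption problem and then invoking the good-by-good maximality in Observation~\ref{cor:minimum_disruption_price_invariance} --- is a slightly more structural alternative that the paper does not use; it has the advantage of making the role of the optimization problem explicit. As for (ii), the paper handles it by fiat in the sentence just before the proposition (``These are the final goods that ever change at some stage $k$ of the algorithm''), whereas you supply the inductive justification; your version is more complete on this point.
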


Proposition \ref{prop:shocked_GDP} provides an upper bound on the impact of a shock that is tied to the value of all final goods related to the shocked technologies, rather than the cost of production of the shocked technologies, which is what matters in the long-run by Hulten's Theorem. We turn now to identifying sufficient conditions under which the upper bound is achieved.

Consider the sub-network which describes the supply chains of all final goods that are impacted by the shock. Specifically, the \emph{disrupted industries sub-network}, ${{\mathcal{G}}}(T^{Shocked})$, is the sub-network induced on ${\mathcal{G}}$ by all technologies that are on a directed path that terminates at a final good technology in $F(T^{Shocked})$. Figure \ref{fig:disruption} in Supplementary Appendix \ref{sec:extended_eg} illustrates a disrupted industries sub-network. 

When a \emph{disrupted industries sub-network}, ${{\mathcal{G}}}(T^{Shocked})$, is acyclic, there exist a set of nodes (technologies) that have no in-links. We denote these technologies $R(T^{Shocked})$. In this case, it is helpful to introduce the notion of a \emph{(s,t)-cut set}.
For a directed network with disjoint sets of nodes $s$ and $t$, an $(s,t)$-cut set is a set of edges that when removed from the network there are no remaining paths between $s$ and $t$.

\begin{proposition}\label{prop:cut}
If the disrupted industries sub-network ${{\mathcal{G}}}(T^{Shocked})$ is acyclic and the set of edges adjacent to the set of shocked technologies $T^{Shocked}$ forms an $( R(T^{Shocked}), F(T^{Shocked}))$-cut set of the disrupted industries sub-network ${{\mathcal{G}}}(T^{Shocked})$,\footnote{Note that this permits the set of shocked technologies $T^{Shocked}$ to intersect both $R(T^{Shocked})$ and $ F(T^{Shocked})$.} then the bound from Proposition \ref{prop:shocked_GDP} binds.
\end{proposition}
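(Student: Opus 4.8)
The plan is to prove the stronger statement that, under the stated hypotheses, the minimum-disruption solution $(\widehat{y},\widehat{x})$ — equivalently the limit of the Shock Propagation Algorithm, which here terminates in finitely many steps because $\mathcal{G}(T^{Shocked})$ is acyclic — reduces \emph{every} downstream final-good technology by exactly the factor $\lambda$, i.e. $\widehat{y}_\tau=\lambda y_\tau$ for all $\tau\in F(T^{Shocked})$. Since final goods outside $F(T^{Shocked})$ are unaffected (nothing downstream of a shock feeds them), the lost GDP equals $\sum_{\tau\in F(T^{Shocked})}p_\tau(y_\tau-\widehat{y}_\tau)$; once this equals $(1-\lambda)\sum_{\tau\in F(T^{Shocked})}p_\tau y_\tau$ we have matched the upper bound of Proposition~\ref{prop:shocked_GDP}, so it binds.

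First I would record two structural facts about $\mathcal{G}(T^{Shocked})$. (i) It is closed under predecessors: if $\tau$ lies on a directed path of $\mathcal{G}$ terminating at some $f\in F(T^{Shocked})$ and $\sigma\to\tau$ is an edge of $\mathcal{G}$, then $\sigma\to\tau\to\cdots\to f$ is such a path, so $\sigma$ and the edge $\sigma\to\tau$ lie in $\mathcal{G}(T^{Shocked})$; in particular every supplier (every $\tau'$ with $x_{\tau'\tau}>0$) of a technology in $\mathcal{G}(T^{Shocked})$ is itself in $\mathcal{G}(T^{Shocked})$. (ii) A node $\tau\in R(T^{Shocked})$ has no in-link in the subnetwork, hence by (i) uses no intermediate input at all (only labor), so if moreover $\tau\notin T^{Shocked}$ it is undisrupted. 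Both are immediate.

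The core of the argument is a lemma proved by induction along a topological order of the acyclic subnetwork: for every $\tau\in\mathcal{G}(T^{Shocked})$, if either $\tau\in T^{Shocked}$ or every directed path in $\mathcal{G}(T^{Shocked})$ from $R(T^{Shocked})$ to $\tau$ uses an edge incident to $T^{Shocked}$, then $\widehat{y}_\tau\le\lambda y_\tau$. If $\tau\in T^{Shocked}$ this is the shock constraint. Otherwise the path hypothesis forces $\tau\notin R(T^{Shocked})$ — the trivial path to $\tau$ would not cross $T^{Shocked}$ — so $\tau$ has at least one intermediate input good $k$, and every supplier of good $k$ to $\tau$ lies in $\mathcal{G}(T^{Shocked})$ by fact (i) and is earlier in the topological order. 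For any supplier $\tau'$ of $\tau$: if some $R$-to-$\tau'$ path $P$ did not cross $T^{Shocked}$, then $P$ extended by the edge $\tau'\to\tau$ would be an $R$-to-$\tau$ path that must cross, forcing $\tau'\to\tau$ to be incident to $T^{Shocked}$ and hence (as $\tau\notin T^{Shocked}$) $\tau'\in T^{Shocked}$; so $\tau'\in T^{Shocked}$ or every $R$-to-$\tau'$ path crosses. Either way the inductive hypothesis or the shock constraint gives $\widehat{y}_{\tau'}\le\lambda y_{\tau'}$, so by proportional rationing $\widehat{x}_{\tau'\tau}=x_{\tau'\tau}\widehat{y}_{\tau'}/y_{\tau'}\le\lambda x_{\tau'\tau}$; summing over the suppliers of good $k$ makes the corresponding ratio in the technology constraint at most $\lambda$, hence $\widehat{y}_\tau\le\lambda y_\tau$.

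Finally I would apply the lemma: by the cut hypothesis every $R(T^{Shocked})$-to-$F(T^{Shocked})$ path in $\mathcal{G}(T^{Shocked})$ uses one of the removed edges, i.e. an edge incident to $T^{Shocked}$, so every $\tau\in F(T^{Shocked})$ satisfies the lemma's hypothesis and thus $\widehat{y}_\tau\le\lambda y_\tau$. Then lost GDP $=\sum_{\tau\in F(T^{Shocked})}p_\tau(y_\tau-\widehat{y}_\tau)\ge(1-\lambda)\sum_{\tau\in F(T^{Shocked})}p_\tau y_\tau$, while Proposition~\ref{prop:shocked_GDP} gives the reverse inequality, so both are equalities. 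The main obstacle is the lemma — specifically, pinning down the interaction between the cut condition and the $\min$ over input goods in the technology constraint, and correctly handling the corner cases (flagged in the footnote) where a shocked technology is itself a source or a final good, which the ``$\tau\in T^{Shocked}$'' branch of the lemma absorbs.
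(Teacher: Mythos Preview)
Your proof is correct and takes a genuinely different route from the paper's. The paper argues by decomposing the pre-shock equilibrium flows into a collection of single-sourced supply chains (paths from $R(T^{Shocked})$ to $F(T^{Shocked})$), observes that the cut condition forces every such chain to pass through a shocked node, and then uses single-sourcing along each chain to conclude that the final-good output on that chain drops by exactly $(1-\lambda)$. You instead work directly with the constraints of the minimum disruption problem and run an induction along a topological order of $\mathcal{G}(T^{Shocked})$, proving node by node that $\widehat{y}_\tau\le\lambda y_\tau$ whenever every $R$-to-$\tau$ path touches $T^{Shocked}$.

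What each buys: the paper's decomposition is conceptually appealing and makes the ``every path is blocked'' intuition vivid, but leaves implicit why the decomposition into single-sourced threads is compatible with the $\min$-over-input-goods and proportional-rationing structure of the algorithm. Your approach avoids that step entirely and is more self-contained: it verifies the technology constraint directly, handles the corner cases (shocked roots, shocked final goods) through the first branch of the lemma, and makes the interaction with the $\min$ in the technology constraint explicit by showing that \emph{every} supplier of \emph{every} input is already throttled. Either works; yours is the more elementary and rigorous of the two.
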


Proposition \ref{prop:cut} shows that the upper bound for the impact of a shock from Proposition \ref{prop:shocked_GDP} is tight when the technology network is acyclic and removing the
edges adjacent to the shocked technologies $T^{Shocked}$ constitutes a cut on the network that restricts attention to related goods (i.e., disconnects raw materials from their related final goods). The example shown in Figure \ref{subfig:disruption-1} in Supplementary Appendix \ref{sec:extended_eg} illustrates a shock which, by Proposition \ref{prop:shocked_GDP}, must obtain the bound.

Although the shocked technologies constituting a cut set is sufficient for the bound to apply, it is not necessary. Figure \ref{shockalg} in Supplementary Appendix \ref{sec:extended_eg} provides an example of a network in which the bound holds when the shocked technologies are not a cut set and the supply network contains cycles. Further, a variation on Figure \ref{fig:shortcontrast}(a) shows that even when the technology network is acyclic, the bound can be obtained when the edges adjacent to the shocked technologies do not constitute a cut set. To see this simply relabel the Labor node to be a technology, and the shocked technologies do not constitute a cut set.
Similarly, one can easily construct examples where the bound holds that have extraneous cycles that are irrelevant to the cut set.  Ruling out cycles is helpful in our partially constructive proof.

We say there is \emph{no technological diversity} if all technologies for producing any given good use the same set of inputs (albeit possibly in different ratios or with different efficiencies). We say there are \emph{industry-wide shocks} (as opposed to technology-specific shocks) if for any shocked technology $\tau\in T^{Shocked}$ all producers of good $O(\tau)$ are also in $T^{Shocked}$.

\begin{proposition}\label{prop:general_industry_shocks}
If there is no technological diversity and there are industry-wide shocks, then the bound from Proposition \ref{prop:shocked_GDP} binds.
\end{proposition}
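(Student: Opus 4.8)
The plan is to establish the inequality opposite to the one in Proposition~\ref{prop:shocked_GDP}, namely that the GDP loss is at least $(1-\lambda)\sum_{\tau\in F(T^{Shocked})} p_\tau y_\tau$; combined with Proposition~\ref{prop:shocked_GDP} this forces equality, i.e.\ the bound binds. Let $(\widehat y,\widehat x)$ solve the minimum disruption problem (such a solution exists and determines the GDP loss by Proposition~\ref{prop:shock_prop_alg}), so that the lost value of final production is $\sum_{\tau:O(\tau)\in F} p_\tau(y_\tau-\widehat y_\tau)$. Since $\widehat y_\tau\le y_\tau$ always, this is at least $\sum_{\tau\in F(T^{Shocked})} p_\tau(y_\tau-\widehat y_\tau)$, so it suffices to prove $\widehat y_\tau\le\lambda y_\tau$ for every $\tau\in F(T^{Shocked})$. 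I would in fact prove the stronger statement that $\widehat y_\tau\le\lambda y_\tau$ for every active technology $\tau$ whose output lies downstream of the shock, in the stratified sense made precise next, using only feasibility of $(\widehat y,\widehat x)$ (the shock, technology, and proportional-rationing constraints).

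The key step is a stratification of the affected goods. Let $G_0$ be the set of shocked goods; by the industry-wide assumption $T^{Shocked}$ is precisely the set of active technologies producing a good in $G_0$. Inductively set $G_{i+1}=G_i\cup\{k:\text{some active }\tau\text{ with }O(\tau)=k\text{ has an input good in }G_i\}$; since the set of goods is finite this stabilizes at a set $G^{\ast}$, and for $k\in G^{\ast}$ let $\ell(k)$ be the least $i$ with $k\in G_i$. The decisive use of \emph{no technological diversity} is that if $\ell(k)\ge 1$ then \emph{every} active technology producing $k$ uses one common input good $h$ with $\ell(h)<\ell(k)$: $k$ entered $G^{\ast}$ because some producer of $k$ uses an input $h\in G_{\ell(k)-1}$, and no technological diversity forces all producers of $k$ to share that input set, hence all to use $h$. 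I would then show, by induction on $\ell(k)$, that for every $k\in G^{\ast}$ and every active $\tau$ with $O(\tau)=k$ one has $\widehat y_\tau\le\lambda y_\tau$, and consequently (via the proportional-rationing equalities $\widehat x_{\tau\tau''}=(\widehat y_\tau/y_\tau)\,x_{\tau\tau''}$) the total flow of good $k$ into any technology is at most $\lambda$ times its original level. The base case $\ell(k)=0$ is the shock constraint applied to each (shocked) producer of $k$. The inductive step uses the technology constraint for $\tau$, keeping only the term corresponding to the common input $h$ above: by the inductive hypothesis the $h$-flows into $\tau$ are already rationed to at most $\lambda$, so $\widehat y_\tau\le\lambda y_\tau$.

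Finally I would verify $F(T^{Shocked})\subseteq\{\tau:O(\tau)\in F,\ O(\tau)\in G^{\ast}\}$: if $\tau$ sits on a directed flow path $\tau_0\to\tau_1\to\cdots\to\tau$ out of a shocked $\tau_0$, then $O(\tau_0)\in G_0$ and, walking the path, $O(\tau_j)\in G_j$, so $O(\tau)\in G^{\ast}$. Combining this with the induction gives $\widehat y_\tau\le\lambda y_\tau$ for all $\tau\in F(T^{Shocked})$, hence the GDP loss is at least $(1-\lambda)\sum_{\tau\in F(T^{Shocked})} p_\tau y_\tau$, which together with Proposition~\ref{prop:shocked_GDP} completes the proof.

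The main obstacle I anticipate is making the stratification robust to cycles in the production network. One must be sure the induction variable $\ell(\cdot)$ is genuinely well-founded: it is, because the $G_i$ form a monotone chain of subsets of a finite set and hence stabilize after finitely many steps regardless of any cycles in the flow network, and the inductive step for a good at level $i$ only invokes goods at levels $<i$. A related point to handle carefully is that ``no technological diversity'' pins down the set of input \emph{goods} of each producer but not the particular suppliers it buys from; the argument survives this precisely because the \emph{industry-wide} assumption guarantees that the \emph{aggregate} supply of each good in $G^{\ast}$ is uniformly rationed down by the factor $\lambda$, so it is irrelevant which producer of an upstream good a downstream technology happens to source from. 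Finally, I would be careful to derive $\widehat y_\tau\le\lambda y_\tau$ on $F(T^{Shocked})$ from feasibility alone, so that the conclusion does not depend on selecting a particular optimal solution of the minimum disruption problem.
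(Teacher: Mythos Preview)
Your proposal is correct and is essentially the same argument as the paper's sketch: both show, by induction propagating from the shocked goods downstream, that \emph{all} producers of each affected good have output at most $\lambda$ of its original level, using industry-wide shocks for the base case and no technological diversity to guarantee that every producer of a downstream good shares the critical upstream input. Your stratification by levels $\ell(k)$ is a cleaner formalization of the paper's path-walking sketch and makes the handling of cycles explicit, but the content is identical.
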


Proposition \ref{prop:general_industry_shocks} does not follow from Proposition \ref{prop:cut} as it allows for cyclic production and also for cases in which the shocked technologies do not constitute a cut, as there could be other technologies that are not affected that are on separate paths to the affected
final goods.

The proof is straightforward, and so we just sketch it.
Consider a final good technology $\tau\in F(T^{Shocked})$. By definition of $F(T^{Shocked})$, there is a path from a shocked technology $\tau' \in T^{Shocked}$ to $\tau$. Consider that path, starting with a shocked technology $\tau'$ that is at maximal distance from a final good. The output of $\tau'$ is reduced to $\lambda$ of its initial level as it is shocked. Consider the next technology $\tau''$ on the path, which sources good $O(\tau')$ from $\tau'$. As shocks are industry wide, all producers of good $O(\tau')$ are shocked, and so $\tau''$ is only able to source $\lambda$ of the initial amount of good $O(\tau')$ that it sourced, and thus its output will be reduced to $\lambda$ its initial level.  As all shocked inputs are shocked to $\lambda$ of their initial levels, it does not matter if just one input or multiple inputs are shocked.  Moreover, as there is no technological diversity, all producers of good $O(\tau'')$ also use good $O(\tau')$ as an input and are only able to source $\lambda$ units per unit they initially source of this input. Thus all producers of good $O(\tau'')$ have their output reduced to $\lambda$ their initial levels. This implies that the next technology on path, $\tau'''$, also has its output reduced to $\lambda$ of its initial level and so on. Thus the shock propagates down the path, reducing output to $\lambda$ of its initial level at each step, until the output of final good $\tau\in F(T^{Shocked})$ has its output reduced to $\lambda$ of its initial level. This implies that all final goods in $F(T^{Shocked})$ have their output reduced to $\lambda$ of their initial level, and so bound from Proposition \ref{prop:shocked_GDP} is obtained.

Proposition \ref{prop:general_industry_shocks} highlights the value of diversity in production technologies and the location of industries. For example, if all production of a given good is located in the same country, then country-specific shocks become industry shocks, and by Proposition \ref{prop:general_industry_shocks} the upper bound is obtained.

\subsection{Contrasting the Short- and Long-Run Impacts of Shocks}

Comparing Figures \ref{subfig:SRdisrupt} and \ref{subfig:LRdisrupt} gave us some idea of the differences that can occur between the long and the short run. This comparison is to a version of the long run in which we endow each firm with a unique (Leontief) technology and do not allow for  substitutability in the input mix. 
Substitutability in the input mix would dampen the long-run impact of the shock further, making the contrast with the short-run even starker.  Our approach highlights the difference between our short-run out-of-equilibrium analysis and models that compare the long-run and short-run to each other by reducing the elasticities of substitution between inputs. 

The contrast between the long- and short-run is further reinforced when the bound from Proposition \ref{prop:shocked_GDP} is tight. For example, for an industry-wide shock, the long-run (marginal) impact on GDP is, by Hulten's theorem, proportional to the value of the output of the affected industry. By contrast, if there is no technological diversity, the short-run impact on GDP is proportional to the value of the output of all final goods industries that use the output of the affected industry directly or indirectly by Proposition \ref{prop:general_industry_shocks}. If, for example, production of several final good depends on a basic type of computer chip, and there is a 20 percent disruption in the supply of these computer chips, then 20 percent of final good production would be lost for all affected final goods. This can constitute a substantial short run impact, while in the long run the impact on GDP would only be 20 percent of the amount spent on these basic computer chips, which could be tiny in comparison.

An important point to make is that the long-run impact of a shock is dependent only on the expenditures on the shocked technology, and not on the details of the network beyond that.\footnote{Again, those expenditures are dependent on the network, so this is not to say that the network is irrelevant.  It simply says that the information needed to determine the impact is captured by a very simple sufficient statistic.}
To see this, consider an economy with two variations of the sets of technologies from the example in Figures \ref{fig:flow_network} and \ref{fig:shortcontrast}, but, for simplicity, let
us omit the labor inputs. As we see in Figure \ref{fig:ContLR},  the long-run impact of the 10 percent shock to technology $\tau1$ is the same regardless of whether which combination of technologies is used to produce which final good. The new long-run equilibrium differs across the two networks, but the ultimate impact of the shock on GDP does not.

\begin{figure}[!ht]
    \subfloat[Impact is 1/100th of GDP\label{subfig:LR1}]
    {
\centering	
\resizebox{2.75in}{1.65in}{		
\begin{tikzpicture}[
		sqRED/.style={rectangle, draw=black!240, fill=red!15, very thick, minimum size=7mm},
		sqBLUE/.style={rectangle, draw=black!240, fill=blue!15, very thick, minimum size=7mm},
		roundnode/.style={ellipse, draw=black!240, fill=green!15, very thick, dashed, minimum size=7mm},
		sqGREEN/.style={rectangle, draw=black!240, fill=green!15, very thick, minimum size=7mm},
		roundRED/.style={ellipse, draw=black!240, fill=red!15, very thick, dashed, minimum size=7mm},
		roundREDb/.style={ellipse, draw=black!240, fill=red!15, very thick, minimum size=7mm},
		roundYELL/.style={ellipse, draw=black!240, fill=yellow!20, very thick, dashed, minimum size=7mm},
		sqYELL/.style={rectangle, draw=black!240, fill=yellow!20, very thick, minimum size=7mm},
		roundORA/.style={ellipse, draw=black!240, fill=orange!15, very thick, dashed, minimum size=7mm},		
		sqORA/.style={rectangle, draw=black!240, fill=orange!15, very thick, minimum size=7mm},
		squaredBLACK/.style={rectangle, draw=black!240, fill=white!15, very thick, minimum size=7mm},	
		squaredBLACK2/.style={rectangle, draw=red!240, fill=white!15, very thick, dashed, minimum size=7mm},	
		roundBROWN/.style={ellipse, draw=black!240, fill=brown!15, very thick, dashed, minimum size=7mm},
		sqBROWN/.style={rectangle, draw=black!240, fill=brown!15, very thick, minimum size=7mm},		
        sqPURPLE/.style={rectangle, draw=black!240, fill=purple!15, very thick, minimum size=7mm},				
		]

		\node[squaredBLACK2](R1) at (-2, 2.5) {$\tau1$, $\cancel{2}$ \textcolor{red}{$1.96$}};
		\node[squaredBLACK](R2) at (2, 2.5) {$\tau2$, $2$};

        \node[] at (1.25, 1.25) { $1$};
        \node[] at (-1.25, 1.25) { $\cancel{1}$ \textcolor{red}{$.98$}};

		\node[sqORA](I1) at (-4, 0) { $\tau3$, $\cancel{1}$ \textcolor{red}{$.98$}};
        \node[] at (-3.9, -1.25) { $\cancel{1}$ \textcolor{red}{$.98$}};
		\node[] at (-3.9, 1.25) { $\cancel{1}$ \textcolor{red}{$.98$}};

        \node[sqORA](I2) at (4, 0) { $\tau4$, $1$};
        \node[] at (3.65, -1.25) { $1$};
		\node[] at (3.65, 1.25) { $1$};

		\node[sqGREEN](F1) at (-2, -2.5) { $\tau5$, $\cancel{1}$ \textcolor{red}{$.98$}};
		\node[sqGREEN](F2) at (2, -2.5) { $\tau6$, $1$};
		
		\tikzset{thick edge/.style={-, black, fill=none, thick, text=black}}
		\tikzset{thick arc/.style={->, black, fill=black, thick, >=stealth, text=black}}

		\draw[line width=0.4mm, black,-> ] (R1) to[out=-90,in=90] (I1);
		\draw[line width=0.4mm, black,-> ] (R2) to[out=-90,in=90] (I2);

		\draw[line width=0.4mm, black,-> ] (I1) to[out=-90,in=90] (F1);
		\draw[line width=0.4mm, black,-> ] (I2) to[out=-90,in=90] (F2);

		\draw[line width=0.4mm, black,-> ] (R1) to[out=-90,in=90] (F1);
		\draw[line width=0.4mm, black,-> ] (R2) to[out=-90,in=90] (F2);

		\end{tikzpicture}	
		}
	}
    \hfill
    \subfloat[Impact is 1/100th of GDP\label{subfig:LR2}]
    {
     \centering
\resizebox{2.75in}{1.65in}{		
\begin{tikzpicture}[
		sqRED/.style={rectangle, draw=black!240, fill=red!15, very thick, minimum size=7mm},
		sqBLUE/.style={rectangle, draw=black!240, fill=blue!15, very thick, minimum size=7mm},
		roundnode/.style={ellipse, draw=black!240, fill=green!15, very thick, dashed, minimum size=7mm},
		sqGREEN/.style={rectangle, draw=black!240, fill=green!15, very thick, minimum size=7mm},
		roundRED/.style={ellipse, draw=black!240, fill=red!15, very thick, dashed, minimum size=7mm},
		roundREDb/.style={ellipse, draw=black!240, fill=red!15, very thick, minimum size=7mm},
		roundYELL/.style={ellipse, draw=black!240, fill=yellow!20, very thick, dashed, minimum size=7mm},
		sqYELL/.style={rectangle, draw=black!240, fill=yellow!20, very thick, minimum size=7mm},
		roundORA/.style={ellipse, draw=black!240, fill=orange!15, very thick, dashed, minimum size=7mm},		
		sqORA/.style={rectangle, draw=black!240, fill=orange!15, very thick, minimum size=7mm},
		squaredBLACK/.style={rectangle, draw=black!240, fill=white!15, very thick, minimum size=7mm},	
		squaredBLACK2/.style={rectangle, draw=red!240, fill=white!15, very thick, dashed, minimum size=7mm},	
		roundBROWN/.style={ellipse, draw=black!240, fill=brown!15, very thick, dashed, minimum size=7mm},
		sqBROWN/.style={rectangle, draw=black!240, fill=brown!15, very thick, minimum size=7mm},		
        sqPURPLE/.style={rectangle, draw=black!240, fill=purple!15, very thick, minimum size=7mm},				
		]

		\node[squaredBLACK2](R1) at (-2, 2.5) {$\tau1$, $\cancel{2}$ \textcolor{red}{$1.98$}};
		\node[squaredBLACK](R2) at (2, 2.5) {$\tau2$,$\cancel{2}$ \textcolor{red}{$1.98$}};

        \node[] at (1.05, 1.25) { $\cancel{1}$ \textcolor{red}{$.99$}};
        \node[] at (-1.05, 1.25) { $\cancel{1}$ \textcolor{red}{$.99$}};

		\node[sqORA](I1) at (-4, 0) { $\tau3$, $\cancel{1}$ \textcolor{red}{$.99$}};
        \node[] at (-3.9, -1.25) { $\cancel{1}$ \textcolor{red}{$.99$}};
		\node[] at (-3.9, 1.25) { $\cancel{1}$ \textcolor{red}{$.99$}};

        \node[sqORA](I2) at (4, 0) { $\tau4$, $\cancel{1}$ \textcolor{red}{$.99$}};
        \node[] at (3.9, -1.25) { $\cancel{1}$ \textcolor{red}{$.99$}};
		\node[] at (3.9, 1.25) { $\cancel{1}$ \textcolor{red}{$.99$}};

		\node[sqGREEN](F1) at (-2, -2.5) { $\tau5$, $\cancel{1}$ \textcolor{red}{$.99$}};
		\node[sqGREEN](F2) at (2, -2.5) { $\tau6$,$\cancel{1}$ \textcolor{red}{$.99$}};

		\tikzset{thick edge/.style={-, black, fill=none, thick, text=black}}
		\tikzset{thick arc/.style={->, black, fill=black, thick, >=stealth, text=black}}

		\draw[line width=0.4mm, black,-> ] (R1) to[out=-90,in=90] (I1);
		\draw[line width=0.4mm, black,-> ] (R2) to[out=-90,in=90] (I2);
			
		\draw[line width=0.4mm, black,-> ] (I1) to[out=-90,in=90] (F1);
		\draw[line width=0.4mm, black,-> ] (I2) to[out=-90,in=90] (F2);

		\draw[line width=0.4mm, black,-> ] (R1) to[out=-90,in=90] (F2);
		\draw[line width=0.4mm, black,-> ] (R2) to[out=-90,in=90] (F1);

		\end{tikzpicture}
  }	
		}
    \caption{\footnotesize{ \label{fig:ContLR} Long run:  In both cases have supply networks that have two copies of technologies similar to those in the example from Figure \ref{fig:flow_network}.
    Each final good needs one resource and one intermediate good, but which combination of inputs are needed downstream differs between the networks.
    In the long run the details of the network structure do not matter if the amount spent on the shocked technology is the same.
    In both cases the labor endowment is 20, the initial prices are $p=\left(\frac{1}{10},\frac{1}{10},\frac{4}{5},1\right)$ and
$GDP=\sum_f p_f c_f=2$.
Thus, from Hulten's Theorem, the marginal impact is
$\frac{p_{R1} y_{R1}}{\text{GDP}}= \frac{1}{10}$
and then extrapolating for a $10\%$ shock,
the {\sl long-run} impact is $1/100$th of GDP.  We do see, however, that the new long-run equilibrium flows differ across the two variations, but the GDP impact is similar.  }}
  \end{figure}
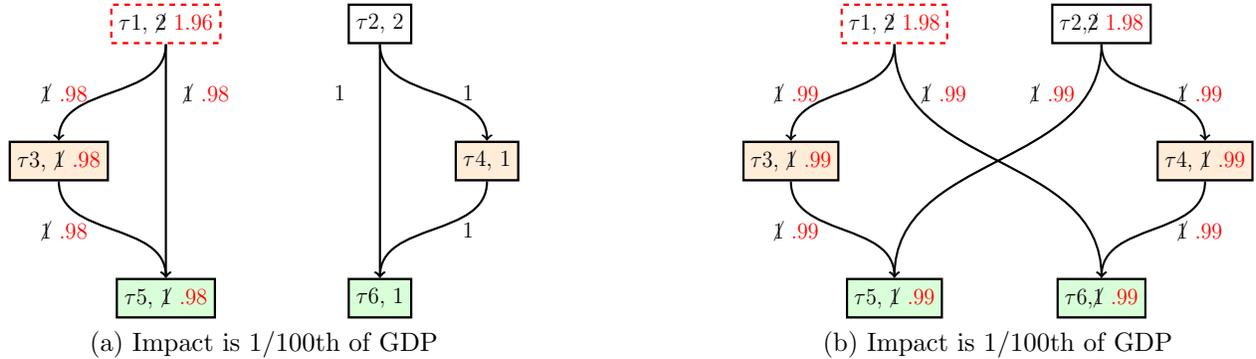

In contrast, the short-run impact in the same economies differs substantially depending on the details of the network as we see in Figure \ref{fig:ContSR}.

\begin{figure}[!ht]
    \subfloat[Impact is 1/20th of GDP\label{subfig:SR1}]
    {
\centering	
\resizebox{2.75in}{1.65in}{	
\begin{tikzpicture}[
		sqRED/.style={rectangle, draw=black!240, fill=red!15, very thick, minimum size=7mm},
		sqBLUE/.style={rectangle, draw=black!240, fill=blue!15, very thick, minimum size=7mm},
		roundnode/.style={ellipse, draw=black!240, fill=green!15, very thick, dashed, minimum size=7mm},
		sqGREEN/.style={rectangle, draw=black!240, fill=green!15, very thick, minimum size=7mm},
		roundRED/.style={ellipse, draw=black!240, fill=red!15, very thick, dashed, minimum size=7mm},
		roundREDb/.style={ellipse, draw=black!240, fill=red!15, very thick, minimum size=7mm},
		roundYELL/.style={ellipse, draw=black!240, fill=yellow!20, very thick, dashed, minimum size=7mm},
		sqYELL/.style={rectangle, draw=black!240, fill=yellow!20, very thick, minimum size=7mm},
		roundORA/.style={ellipse, draw=black!240, fill=orange!15, very thick, dashed, minimum size=7mm},		
		sqORA/.style={rectangle, draw=black!240, fill=orange!15, very thick, minimum size=7mm},
		squaredBLACK/.style={rectangle, draw=black!240, fill=white!15, very thick, minimum size=7mm},	
		squaredBLACK2/.style={rectangle, draw=red!240, fill=white!15, very thick, dashed, minimum size=7mm},	
		roundBROWN/.style={ellipse, draw=black!240, fill=brown!15, very thick, dashed, minimum size=7mm},
		sqBROWN/.style={rectangle, draw=black!240, fill=brown!15, very thick, minimum size=7mm},		
        sqPURPLE/.style={rectangle, draw=black!240, fill=purple!15, very thick, minimum size=7mm},				
		]

		\node[squaredBLACK2](R1) at (-2, 2.5) {$\tau1$, $\cancel{2}$ \textcolor{red}{$1.8$}};
		\node[squaredBLACK](R2) at (2, 2.5) {$\tau2$, $2$};

        \node[] at (1.5, 1.25) { $1$};
        \node[] at (-1.5, 1.25) { $\cancel{1}$ \textcolor{red}{$.9$}};

		\node[sqORA](I1) at (-4, 0) { $\tau3$, $\cancel{1}$ \textcolor{red}{$.9$}};
        \node[] at (-3.9, -1.25) { $\cancel{1}$ \textcolor{red}{$.9$}};
		\node[] at (-3.9, 1.25) { $\cancel{1}$ \textcolor{red}{$.9$}};

        \node[sqORA](I2) at (4, 0) { $\tau4$, $1$};
        \node[] at (3.65, -1.25) { $1$};
		\node[] at (3.65, 1.25) { $1$};

		\node[sqGREEN](F1) at (-2, -2.5) { $\tau5$, $\cancel{1}$ \textcolor{red}{$.9$}};
		\node[sqGREEN](F2) at (2, -2.5) { $\tau6$, ${1}$ };

		\tikzset{thick edge/.style={-, black, fill=none, thick, text=black}}
		\tikzset{thick arc/.style={->, black, fill=black, thick, >=stealth, text=black}}

		\draw[line width=0.4mm, black,-> ] (R1) to[out=-90,in=90] (I1);
		\draw[line width=0.4mm, black,-> ] (R2) to[out=-90,in=90] (I2);

		\draw[line width=0.4mm, black,-> ] (I1) to[out=-90,in=90] (F1);
		\draw[line width=0.4mm, black,-> ] (I2) to[out=-90,in=90] (F2);

		\draw[line width=0.4mm, black,-> ] (R1) to[out=-90,in=90] (F1);
		\draw[line width=0.4mm, black,-> ] (R2) to[out=-90,in=90] (F2);
		
		\end{tikzpicture}		
		}
	}
    \hfill
    \subfloat[Impact is 1/10th of GDP\label{subfig:SR2}]
    {
     \centering
\resizebox{2.75in}{1.65in}{		
\begin{tikzpicture}[
		sqRED/.style={rectangle, draw=black!240, fill=red!15, very thick, minimum size=7mm},
		sqBLUE/.style={rectangle, draw=black!240, fill=blue!15, very thick, minimum size=7mm},
		roundnode/.style={ellipse, draw=black!240, fill=green!15, very thick, dashed, minimum size=7mm},
		sqGREEN/.style={rectangle, draw=black!240, fill=green!15, very thick, minimum size=7mm},
		roundRED/.style={ellipse, draw=black!240, fill=red!15, very thick, dashed, minimum size=7mm},
		roundREDb/.style={ellipse, draw=black!240, fill=red!15, very thick, minimum size=7mm},
		roundYELL/.style={ellipse, draw=black!240, fill=yellow!20, very thick, dashed, minimum size=7mm},
		sqYELL/.style={rectangle, draw=black!240, fill=yellow!20, very thick, minimum size=7mm},
		roundORA/.style={ellipse, draw=black!240, fill=orange!15, very thick, dashed, minimum size=7mm},		
		sqORA/.style={rectangle, draw=black!240, fill=orange!15, very thick, minimum size=7mm},
		squaredBLACK/.style={rectangle, draw=black!240, fill=white!15, very thick, minimum size=7mm},	
		squaredBLACK2/.style={rectangle, draw=red!240, fill=white!15, very thick, dashed, minimum size=7mm},	
		roundBROWN/.style={ellipse, draw=black!240, fill=brown!15, very thick, dashed, minimum size=7mm},
		sqBROWN/.style={rectangle, draw=black!240, fill=brown!15, very thick, minimum size=7mm},		
        sqPURPLE/.style={rectangle, draw=black!240, fill=purple!15, very thick, minimum size=7mm},				
		]

		\node[squaredBLACK2](R1) at (-2, 2.5) {$\tau1$, $\cancel{2}$ \textcolor{red}{$1.8$}};
		\node[squaredBLACK](R2) at (2, 2.5) {$\tau2$, $2$};

        \node[] at (1.15, 1.25) { $1$};
        \node[] at (-1, 1.25) { $\cancel{1}$ \textcolor{red}{$.9$}};

		\node[sqORA](I1) at (-4, 0) { $\tau3$, $\cancel{1}$ \textcolor{red}{$.9$}};
        \node[] at (-3.9, -1.25) { $\cancel{1}$ \textcolor{red}{$.9$}};
		\node[] at (-3.9, 1.25) { $\cancel{1}$ \textcolor{red}{$.9$}};

        \node[sqORA](I2) at (4, 0) { $\tau4$, $1$};
        \node[] at (3.65, -1.25) { $1$};
		\node[] at (3.65, 1.25) { $1$};

		\node[sqGREEN](F1) at (-2, -2.5) {$\tau5$, $\cancel{1}$ \textcolor{red}{$.9$}};
		\node[sqGREEN](F2) at (2, -2.5) {$\tau6$, $\cancel{1}$ \textcolor{red}{$.9$}};
		
		\tikzset{thick edge/.style={-, black, fill=none, thick, text=black}}
		\tikzset{thick arc/.style={->, black, fill=black, thick, >=stealth, text=black}}

		\draw[line width=0.4mm, black,-> ] (R1) to[out=-90,in=90] (I1);
		\draw[line width=0.4mm, black,-> ] (R2) to[out=-90,in=90] (I2);
	
		\draw[line width=0.4mm, black,-> ] (I1) to[out=-90,in=90] (F1);
		\draw[line width=0.4mm, black,-> ] (I2) to[out=-90,in=90] (F2);

		\draw[line width=0.4mm, black,-> ] (R1) to[out=-90,in=90] (F2);
		\draw[line width=0.4mm, black,-> ] (R2) to[out=-90,in=90] (F1);

		\end{tikzpicture}
  }	
		}
    \caption{\footnotesize{Short run: the details of the network structure matter even with identical initial prices and technological structures.
    Here the impact is either 5 or 10 times more than the long-run impact (which was 1/100th), and here it depends on the network structure. }}
    \label{fig:ContSR}
  \end{figure}
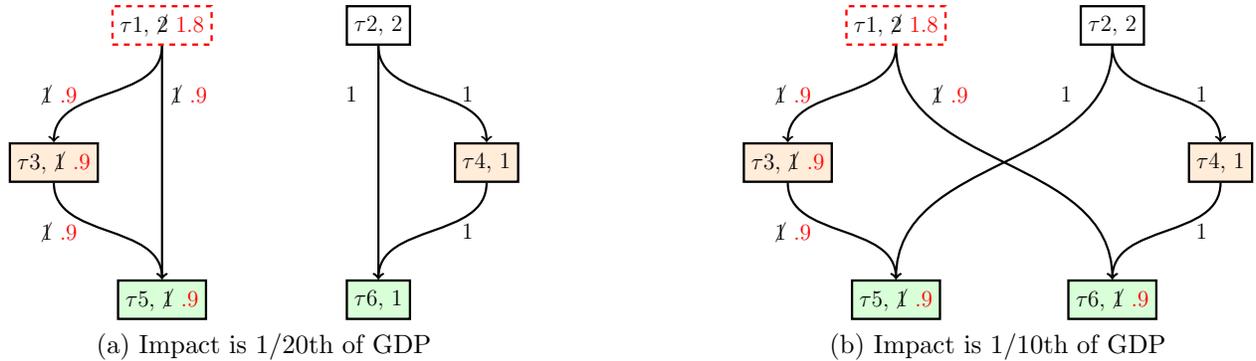

Although our short-run calculation is accurate for non-marginal shocks, unlike Hulten's theorem which only applies at the margin, it is also instructive to compare the short-run marginal impact of a shock to the long-run marginal impact. For example, if the bound from Proposition \ref{prop:shocked_GDP} is obtained, the short-run impact of a disruption to some technology $\tau$ with output good $k$ is
$$
\frac{\partial \log(U)}{\partial \log(\tau_k)} =\frac{\partial  \log(GDP)}{\partial \log(\tau_k)} = \frac{\sum_{f \in  F(\tau)} p_{f} y_{f}}{GDP},
$$ while by contrast the long-run (marginal) impact is, by Hulten's Theorem,
$$
\frac{\partial \log(U)}{\partial \log(\tau_k)} =\frac{\partial  \log(GDP)}{\partial \log(\tau_k)} = \frac{p_{\tau} y_{\tau}}{GDP},
$$
illustrating the large difference that is possible, given the huge potential difference between the value of all affected final goods $\sum_{f \in  F(\tau)} p_{f} y_{f}$ compared to the cost of the affected
input $p_{\tau} y_{\tau}$.

We remark on a couple of implications of these comparisons.  In the short run, the impact of a disruption is not dependent upon how expensive an input is, but instead by the value of all final goods
that lie downstream.  In the long run, it is the reverse.    That does not mean that the long run impact is independent of how upstream or downstream a good is.  Goods that are nearer to final goods and incorporate more inputs from upstream will be more expensive, all else held equal, and so the long run disruption of them is more costly.    So, how upstream or downstream a good is makes a difference in the short run since that might affect how many final goods it reaches, while how upstream or downstream a good is makes a difference in the long run since that affects how expensive it is.  Roughly, shocks to goods that are more upstream are more disruptive in the short run since they affect more final goods, while shocks to goods that are more downstream tend to be more disruptive in the long run since they are more costly.\footnote{If a good is unambiguously more downstream of another in that it uses all of the production of some good upstream (directly or indirectly), then the expenditures on the downstream good are strictly greater.  If one can order goods in levels, so that goods at some level use all the production of goods further upstream, then this also holds across the levels of production.  Thus, there is a well-defined way in which one can think of downstream expenditures being greater than upstream.   Of course, this is having the number of goods in different layers be comparable.  Having one raw material and many intermediate goods, could lead to any one intermediate good having lower expenditure than the raw material. }   The details are given by the formulas, but these rough intuitions are still useful to note.

Although we have so far discussed the implications of negative shocks, the Shock Propagation Algorithm can also be applied to consider positive shocks. Consider a positive shock to technology  $\tau$ that increases output to a proportion $\lambda>1$ of its initial equilibrium level. If $O(\tau)$ is a final good, the impact on final good production is an immediately apparent proportional increase in production of technology $\tau$. On the other hand, if $O(\tau)$ is an intermediate good and all producers that source from $\tau$ use intermediate goods other than just $O(\tau)$, then there is no impact on final good production. This again contrasts with the long-run impact of such changes as captured by Hulten's Theorem, where the (marginal) impact of positive and negative shocks is symmetric.

Finally, consider a producer $i$ downstream of a shocked firm and suppose, for the sake of illustration, that only its supply of one input from one direct supplier is affected. In the short run, $i$'s unaffected suppliers of that same input cannot increase their production or re-route additional goods towards it. Thus, even if $i$ is multi-sourcing the disrupted input, the flows of these alternative supplies remain fixed. For $i$'s suppliers of different inputs, the amount used by $i$ diminishes. Thus, in the short run, inputs are predicted to appear, in data, as complements. However, in the long run, prices adjust, flows reroute, productions change, and affected firms can adjust the technology they are using. All these adjustments reduce the observed complementarity between inputs. This is consistent with previous findings of short-run complementarity and longer-run substitutability between inputs. For example, on a quarter-long horizon \cite{BarrotSuppliers2016} present evidence for inputs being complements, while on a year-long horizon \cite{carvalho2021supply} find inputs are weak substitutes.

\subsection{Flexible Prices: Shock Impacts in the Medium Run}\label{sec:medium}

While in the short run existing contracts prevent prices from adjusting (as embedded within our proportional rationing assumption), once prices can adjust existing production can reroute to higher value uses. This decreases the impact of the shock, yielding an outcome between
the short-run and long-run impacts of the shock.\footnote{E.g., see \cite{iyoha2024exports}.} We now consider this medium-run situation, in which prices can adjust and 
shortages can be rationed efficiently, but in which production technologies are fixed, as is labor, which prevents any firms from increasing their output beyond pre-shock levels.

The medium run outcome can end up anywhere between the short- and long-run outcomes, depending on circumstances. Here we discuss some examples  illustrating the forces, as a full characterization is simply an equilibrium statement that does not provide much further insight. We discuss these ideas more extensively, aided by numerical examples, in Appendix \ref{sec:mediumrun}.

For a real world example where such a reallocation would have been valuable, consider the semiconductor / computer chip crisis. In the short run the production of very valuable downstream goods like cars was disrupted, along with many other consumer goods. With flexible prices, in the medium run, the disruptions should become concentrated on less valuable consumer goods (like, for example, cheap toys). The extent to and speed with which this happened is an empirical question of interest.

Another illustrative example along these lines is the rolling electricity blackouts in California in the summer of 2020. This was applied without discrimination (save some emergency services and legal restrictions) and so fits our proportional rationing assumption. However, in anticipation of further future blackouts, more complicated contracts have emerged that create priorities in rationing and corresponding differences in prices. These ensure that electricity is allocated more efficiently towards higher value uses when there are future shortages. Again, it is an empirical question regarding the extent to which efficient contracts emerge, and which are more challenging to implement in other industries with supplier-specific inputs.

The output loss in the medium run (with flexible prices) is weakly less than the output loss in the short run, because we are relaxing the proportional rationing constraint (and effectively, reoptimizing). However, there are situations in which there is no difference between the medium and short run. For example, if the conditions of Proposition \ref{prop:general_industry_shocks} hold, such that there is no technological diversity and there are industry specific shocks and each shocked technology is used---directly or indirectly---in the production of only one final good, then there is no scope for reducing the impact of the shock in the medium run. There is no scope for correlating the downstream instances of the disruption nor any scope for assigning the disruptions disproportionately to supply chains of lower-value final goods. On the other hand, it is possible to construct examples in which price flexibility makes a big difference. Specifically, the ratio of lost output in the short run to lost output in the medium run with price flexibility is unbounded (as we show in Supplementary Appendix \ref{sec:Potential_price_flexibility}).


\section{Disruption Centrality and Power}

We now return to the short run, and provide further results that leverage the characterization to understand the impact of specific disruptions and how they sit in the supply chain.  This allows us to not only define the ``centrality'' of a given technology based on short-run disruptions,\footnote{The idea of showing that a measure of network centrality determines the impact of shocking different industries dates to \cite{leontief1936quantitative,acemoglu2012network}. Recent work advancing this agenda includes \cite{bahal2023beyond}.} but also to examine how much power a given country might have over others.   

The two approaches in this section take different perspectives.  One is of inadvertent disruption, where disruptions travel as in the short run proportionally along all relevant flows.  The other is of deliberate and targeted disruptions for strategic purposes.

\subsection{Disruption Centrality}

Beyond cases in which the intuitive bound is tight, we can calculate the exact impact of a disruption in a much wider set of cases, even when the bound is not tight.
In this subsection, we use our model to estimate the impact of the disruption of any given technology in the short run where there is proportional rationing on affected flows.  We call the impact ``disruption centrality,'' which we now define.

Let $S(f,\tau')$ denote the percentage of final good $f$ that is produced by technology $\tau'$, and let $S(\tau',\tau'')$ denote the fraction of $O(\tau'')$ that $\tau'$ sources as an input from $\tau''$.
For most pairs this will be 0, but for a technology that sources from another it will be positive.

To keep the definition relatively intuitive, we examine acyclic supply chains.  The definition easily extends, similarly to how we dealt with cycles above, but is no longer defined by a finite algorithm.
Let us say that a set of technologies $T'$ is {\sl well-ordered} if there is an order $\succ$ over the technologies
such that, for any $\tau, \tau' \in T'$, if $\tau \succ \tau'$ then output good $O(\tau')$ is not used as an input by $\tau$. Thus technologies higher in the $\succ$ order are further upstream in a clearly defined sense. This rules out (directed) cycles, but still allows upstream goods to be used in the production of multiple downstream goods and downstream goods to source any of their inputs from multiple upstream sources, and so for rich structures (and for the existence of undirected cycles in the network). In general when a set of technologies is well-ordered, then there will be many such orders, as some goods are never used up or down stream from each other.\footnote{For example,
if the output of technology $\tau$ is good 1, and it is used by technologies $\tau'$ and $\tau''$ to produce goods 2 and 3 respectively, which are then used by technology $\tau'''$ to produce the final good,  then the orderings $\tau \succ \tau'\succ \tau''\succ \tau'''$ and $\tau\succ \tau''\succ \tau'\succ \tau'''$ are both valid.}

When technologies are well-ordered in some equilibrium, then we say that
$\tau$ is downstream from $\tau'$ (and $\tau'$ is upstream from $\tau'$) relative to $\succ$ if there is a sequence of technologies $\tau^1\succ \tau^2 \succ \cdots \succ \tau^L$ with $L\geq 2$, $\tau^1=\tau'$ and  $\tau^L=\tau$.   

Consider a technology $\tau\in T$ that has equilibrium paths to some set of final goods $F(\tau)$, and consider some $f\in F(\tau)$. Let $T_{\tau,f}$ be the set of all technologies that lie on any equilibrium path between $\tau$ and some $\tau^f$ producing $f$ (inclusive).
Suppose that $T_{\tau,f}$ is well-ordered with corresponding order $\succ_{\tau,f}$.
Let $G_{\tau,f}$ be the set of goods that are produced by any $\tau'\in T_{\tau,f}$, and number them 1 to $K_f$, ordered according to $\succ_{\tau,f}$ with good 1 being $O(\tau)$ and good $K_f$ being f.

Let $d_{\tau,f}(\tau) = 1$ and let $d_{\tau,f}(\tau') = 0$ for all $\tau'\not\in G_{\tau,f}$.
Inductively,  for $k\in \{1,\ldots, K_f\}$,
consider $\tau'\in T_{\tau,f}$ producing good $k$.
Let
$d_{\tau,f}(\tau') =  \max_{i \in I(\tau')} \left[ \sum_{\tau'':  O(\tau'')=i}  d_{\tau,f}(\tau'' ) S(\tau',\tau'') \right].$

Then the disruption centrality of $\tau$ is
$$
DC(\tau)\equiv \frac{\sum_{f\in F(\tau)}p_f y_f \left(\sum_{\tau': O(\tau') = f}  d_{\tau,f}(\tau') S(f,\tau')\right)}{GDP}.
$$ Figure \ref{fig:disruption_centrality} in Supplementary Appendix \ref{sec:extended_eg} provides an example of how the $d_{\tau,f}(\tau')$ terms are calculated.


If all technologies are single-sourced (i.e., $S(\cdot,\cdot)$ is always 0 or 1)
then 
$DC(\tau) =\sum_{f\in F(\tau)} p_f y_f,$ 
which corresponds to the upper bound we identified before.
More generally, however, $DC(\tau)$ captures all the downstream disruptions accounting for all the fractions and multiple paths
that lie between some technology and all the final goods that are produced downstream from it.

\subsection{Power}\label{sec:power}

The perspective of ``Disruption Centrality'' is agnostic on why a particular technology might be disrupted and considers a total factor productivity shock (as is standard). 
There are however situations in which disruption, or the threat of disruption, is strategic. This could come either in the form of sanctions, or other deliberate disruptions as part of bargaining or conflict, including various quotas or other trade limitations. To capture how the structure of the supply network feeds into such situations we consider the power that one country wields over another via purposeful disruptions of its own technologies.

Although we use the terminology of ``countries'', it should be clear that this can be any distinction of regions under consideration (e.g., parts of countries or collections of countries, etc.).

Total factor productivity shocks are not the best way to capture deliberate disruptions by an aggressor country. The aggressor can limit its own losses by selectively choosing
which inputs it sources for some goods (e.g., retaining domestic inputs and eschewing inputs from a target country) or similarly from selectively choosing to whom it sells various goods (continuing to sell domestically, but reducing exports to a target country). 
We therefore study the impact of disruptions in which countries  selectively withhold trade along some paths that involve their imports and exports.

To calculate the power that one country has over another (in the short run), it is  no longer sufficient to simply track final goods disruptions, as we now care about the distribution of the impact of the disruption across countries. 
A country's GDP (and welfare) can be measured via its labor income, and following a disruption some labor becomes idle throughout the supply chain. Indeed if two percent of a country's labor becomes idle, then two percent of its GDP is lost.\footnote{We work with homogeneous labor for simplicity, but more generally one can simply wage-weight labor to adjust for differences in productivity by labor in different tasks.} Additional losses in terms of utility could also come via the final bundle that is available to purchase.\footnote{If there is just one final good, then relative labor disruptions and corresponding reductions in relative purchasing power map exactly into GDP and utility losses. That is, with just one final good, the loss in labor wages (both up and down stream) equals the loss in consumption.}

To measure the power that one country has over another we analyze what happens when a country disrupts the production of some of its technologies. We allow the country instigating the disruption to reduce the output of any of its technologies, or combination of its technologies, and to direct these disruptions selectively. Thus a country can choose which customers receive how much less output from one of its disrupted technologies, and, when it sources the same input from multiple suppliers for a disrupted technology, how much less demand it sources from each supplier. 

We start off by considering the short run in which the other countries don't have time to react strategically and the indirect disruptions propagate up and downstream proportionally across flows. In Section \ref{sec:strategic_power} we turn to time horizons over which the other countries have discretion over how to allocate their disrupted production and how to source the inputs they still need, and consider how this mitigates the power countries have over one another.

\subsubsection{A Heuristic Definition of Power}

We define the power of an aggressor country looking to disrupt production in a target country, as the biggest GDP losses that can be induced in the target country at the expense of the least GDP losses to itself.  This is the disruption with the highest ``bang-for-the-buck,'' the disruption which maximizes the percentage GDP loss in the target country per percent of own GDP loss.

We denote the power that Country $i$ has over Country $j$ by $\text{Power}_{ij}$.
For now, we define it heuristically (as we have not fully specified how to calculate total lost GDP as a function of disruptions yet). 
We offer a full definition further below, but this is useful to help fix ideas.
\begin{eqnarray}
\text{Power}_{ij}&=&\max_{\text{disruption of flows in/out of $i$'s technologies}}\frac{\text{\% lost GDP in Country }j}{\text{\% lost GDP in Country }i}\nonumber\\
&=&\max_{\text{disruption of flows in/out of $i$'s technologies}}\left[\frac{\text{lost GDP in Country }j}{\text{lost GDP in Country }i}\right]\bigg/\left[\frac{\text{GDP in Country }j}{\text{GDP in Country }i}\right]\nonumber
\end{eqnarray}

Percentage losses capture the power that larger countries with more diversified economies have over small ones and reflects the losses per capita in the aggressor and target economies. An alternative measure of power, that is appropriate in some circumstances, is the ratio of lost GDP. However, as can be seen in the equation above, the same disruptions maximize the ratio of absolute losses in GDP also maximize the  ratio of percentage losses in GDP. 

If $\text{Power}_{i j}<1$ then Country $i$ has to be willing to forego a larger (percentage loss) in GDP itself than it imposes on Country $j$. More generally, when $\text{Power}_{i j}$ is sufficiently low, no disruption by $i$ may be credible and hence $i$ may not wield any viable power over $j$. 

\subsubsection{An Illustrative Example}

Consider the initial equilibrium shown in Panel (a) of Figure \ref{fig:power2a}. There are two countries, Country $i$ and Country $j$. Country $i$ is on the right in Figure \ref{fig:power2a} and operates technologies $T_i=\{\tau2,\tau5\}$, while Country $j$ operates technologies $T_j=\{\tau1,\tau3,\tau4\}$. 
We normalize the price of the final good to be 1, and so the 
the equilibrium wage is $1/3$ (a total of 120 units of labor consume 40 units of output). GDP in Country $i$ is $70/3$ while it is $50/3$ in Country $j$, which summed give the total consumption. 

If Country $i$ wants to reduce the GDP of Country $j$, it can only do so by limiting the amount of the output of technology $\tau2$ that it exports to Country $j$. Reducing the flow $\tau2,\tau3$ from $10$ to $9$, as shown in Panel (b) of Figure \ref{fig:power2a}, reduces the amount that $\tau3$ can produce, which in turn reduces the amount that $\tau4$ can produce and the amount demanded from $\tau1$. This is shown in Panel (c) of Figure \ref{fig:power2a}. Thus, $1$ unit of labor becomes idle in Country $j$ whereas $5$ units of labor become idle in Country $i$. Hence GDP in Country $i$ falls by $1/70$th while GDP in Country $j$ falls by $1/10$th.

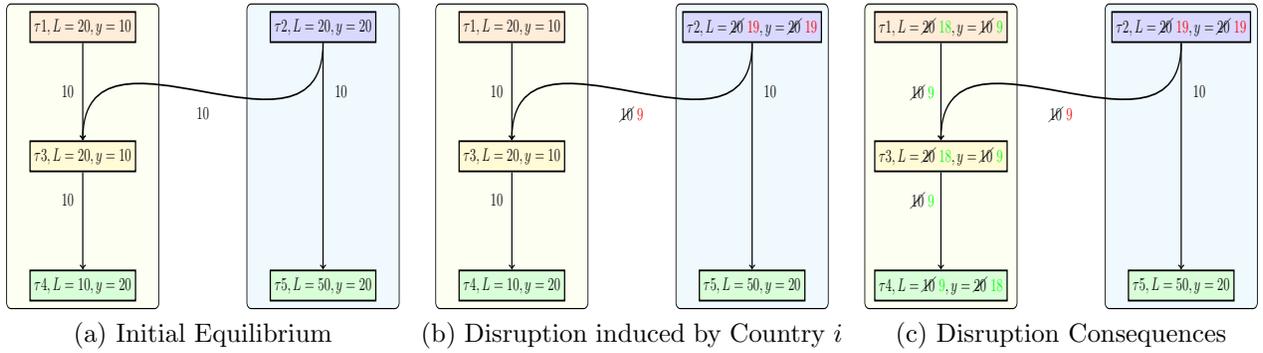
\begin{figure}[!ht]
    \subfloat[Initial Equilibrium]
    {
\centering
\resizebox{2.1in}{1.62in}{		
    \begin{tikzpicture}[
sqRED/.style={rectangle, draw=black!240, fill=red!15, very thick, minimum size=7mm},
		sqREDb/.style={rectangle, draw=black!240, fill=red!15, very thick, dashed, minimum size=7mm},
		sqBLUE/.style={rectangle, draw=black!240, fill=blue!15, very thick, minimum size=7mm},
		sqBLUEb/.style={rectangle, draw=black!240, fill=blue!15, very thick, dashed, minimum size=7mm},
		roundnode/.style={ellipse, draw=black!240, fill=green!15, very thick, dashed, minimum size=7mm},
		sqGREEN/.style={rectangle, draw=black!240, fill=green!15, very thick, minimum size=7mm},
		sqGREENb/.style={rectangle, draw=black!240, fill=green!15, very thick, dashed, minimum size=7mm},
		roundRED/.style={ellipse, draw=black!240, fill=red!15, very thick, dashed, minimum size=7mm},
		roundREDb/.style={ellipse, draw=black!240, fill=red!15, very thick, minimum size=7mm},
		roundYELL/.style={ellipse, draw=black!240, fill=yellow!20, very thick, dashed, minimum size=7mm},
		sqYELL/.style={rectangle, draw=black!240, fill=yellow!20, very thick, minimum size=7mm},
		roundORA/.style={ellipse, draw=black!240, fill=orange!15, very thick, dashed, minimum size=7mm},		
		sqORA/.style={rectangle, draw=black!240, fill=orange!15, very thick, minimum size=7mm},
		sqORAb/.style={rectangle, draw=black!240, fill=orange!15, very thick, dashed, minimum size=7mm},
		sqMAG/.style={rectangle, draw=black!240, fill=magenta!5, very thick, minimum size=7mm},
		squaredBLACK/.style={rectangle, draw=black!240, fill=white!15, very thick, minimum size=7mm},	
		roundBROWN/.style={ellipse, draw=black!240, fill=brown!15, very thick, dashed, minimum size=7mm},
		sqBROWN/.style={rectangle, draw=black!240, fill=brown!15, very thick, minimum size=7mm},		
        ]

\node[](HG1A) at (-1.75, 3.25) {};
\node[](HG1B) at (-6.25, -3.25) {};
\node [he, draw=black!240, fill=lime!5, fit=(HG1A) (HG1B)] {};

\node[](HG2A) at (1.75, 3.25) {};
\node[](HG2B) at (6.25, -3.25) {};
\node [he, draw=black!240, fill=cyan!5, fit=(HG2A) (HG2B)] {};

		\node[sqORA](A1) at (-4, 3) {$\tau1, L=20,  y=10$};
        \node[] at (-4.5, 1.5) { $10$};
		
        \node[sqBLUE](A2) at (4, 3) {$\tau2,  L=20,  y=20$};
        \node[] at (0, 1) { $10$};
		\node[] at (4.6, 1.5) { $10$};
			
		\node[sqYELL](C1) at (-4, 0) {$\tau3, L=20, y =10 $};
       \node[] at (-4.5, -1) { $10$};

		
		\node[sqGREEN](F1) at (-4, -3) {$\tau4,L=10, y=20$};
	
		\node[sqGREEN](F2) at (4, -3) {$\tau5, L = 50, y=20$};	
		
		\tikzset{thick edge/.style={-, black, fill=none, thick, text=black}}
		\tikzset{thick arc/.style={->, black, fill=black, thick, >=stealth, text=black}}
		
		\draw[line width=0.4mm, black,-> ] (A1) to[out=-90,in=90] (C1);
		\draw[line width=0.4mm, black,-> ] (A2) to[out=-90,in=90] (C1);
		
		\draw[line width=0.4mm, black,-> ] (C1) to[out=-90,in=90] (F1);
		\draw[line width=0.4mm, black,-> ] (A2) to[out=-90,in=90] (F2);

		
       
		\end{tikzpicture}
		}
       }
    \hfill
    \subfloat[Disruption induced by Country $i$ ]
    {
\centering
\resizebox{2.1in}{1.62in}{		
    \begin{tikzpicture}[
sqRED/.style={rectangle, draw=black!240, fill=red!15, very thick, minimum size=7mm},
		sqREDb/.style={rectangle, draw=black!240, fill=red!15, very thick, dashed, minimum size=7mm},
		sqBLUE/.style={rectangle, draw=black!240, fill=blue!15, very thick, minimum size=7mm},
		sqBLUEb/.style={rectangle, draw=black!240, fill=blue!15, very thick, dashed, minimum size=7mm},
		roundnode/.style={ellipse, draw=black!240, fill=green!15, very thick, dashed, minimum size=7mm},
		sqGREEN/.style={rectangle, draw=black!240, fill=green!15, very thick, minimum size=7mm},
		sqGREENb/.style={rectangle, draw=black!240, fill=green!15, very thick, dashed, minimum size=7mm},
		roundRED/.style={ellipse, draw=black!240, fill=red!15, very thick, dashed, minimum size=7mm},
		roundREDb/.style={ellipse, draw=black!240, fill=red!15, very thick, minimum size=7mm},
		roundYELL/.style={ellipse, draw=black!240, fill=yellow!20, very thick, dashed, minimum size=7mm},
		sqYELL/.style={rectangle, draw=black!240, fill=yellow!20, very thick, minimum size=7mm},
		roundORA/.style={ellipse, draw=black!240, fill=orange!15, very thick, dashed, minimum size=7mm},		
		sqORA/.style={rectangle, draw=black!240, fill=orange!15, very thick, minimum size=7mm},
		sqORAb/.style={rectangle, draw=black!240, fill=orange!15, very thick, dashed, minimum size=7mm},
		sqMAG/.style={rectangle, draw=black!240, fill=magenta!5, very thick, minimum size=7mm},
		squaredBLACK/.style={rectangle, draw=black!240, fill=white!15, very thick, minimum size=7mm},	
		roundBROWN/.style={ellipse, draw=black!240, fill=brown!15, very thick, dashed, minimum size=7mm},
		sqBROWN/.style={rectangle, draw=black!240, fill=brown!15, very thick, minimum size=7mm},		
        ]

\node[](HG1A) at (-1.75, 3.25) {};
\node[](HG1B) at (-6.25, -3.25) {};
\node [he, draw=black!240, fill=lime!5, fit=(HG1A) (HG1B)] {};

\node[](HG2A) at (1.75, 3.25) {};
\node[](HG2B) at (6.25, -3.25) {};
\node [he, draw=black!240, fill=cyan!5, fit=(HG2A) (HG2B)] {};

		\node[sqORA](A1) at (-4, 3) {$\tau1, L=20,  y=10$};
        \node[] at (-4.5, 1.5) { $10$};
		
        \node[sqBLUE](A2) at (4, 3) {$\tau2,  L=\cancel{20} \ \textcolor{red}{19},  y=\cancel{20} \ \textcolor{red}{19}$};
        \node[] at (0, 1) { $\cancel{10} \ \textcolor{red}{9}$};
		\node[] at (4.6, 1.5) { $10$};
			
		\node[sqYELL](C1) at (-4, 0) {$\tau3, L=20, y =10 $};
       \node[] at (-4.5, -1) { $10$};

		
		\node[sqGREEN](F1) at (-4, -3) {$\tau4,L=10, y=20$};
	
		\node[sqGREEN](F2) at (4, -3) {$\tau5, L = 50, y=20$};	
		
		\tikzset{thick edge/.style={-, black, fill=none, thick, text=black}}
		\tikzset{thick arc/.style={->, black, fill=black, thick, >=stealth, text=black}}
		
		\draw[line width=0.4mm, black,-> ] (A1) to[out=-90,in=90] (C1);
		\draw[line width=0.4mm, black,-> ] (A2) to[out=-90,in=90] (C1);
		
		\draw[line width=0.4mm, black,-> ] (C1) to[out=-90,in=90] (F1);
		\draw[line width=0.4mm, black,-> ] (A2) to[out=-90,in=90] (F2);

		
       
		\end{tikzpicture}
		}
        }
           \hfill
    \subfloat[Disruption Consequences]
    {
\centering
\resizebox{2.1in}{1.62in}{		
    \begin{tikzpicture}[
sqRED/.style={rectangle, draw=black!240, fill=red!15, very thick, minimum size=7mm},
		sqREDb/.style={rectangle, draw=black!240, fill=red!15, very thick, dashed, minimum size=7mm},
		sqBLUE/.style={rectangle, draw=black!240, fill=blue!15, very thick, minimum size=7mm},
		sqBLUEb/.style={rectangle, draw=black!240, fill=blue!15, very thick, dashed, minimum size=7mm},
		roundnode/.style={ellipse, draw=black!240, fill=green!15, very thick, dashed, minimum size=7mm},
		sqGREEN/.style={rectangle, draw=black!240, fill=green!15, very thick, minimum size=7mm},
		sqGREENb/.style={rectangle, draw=black!240, fill=green!15, very thick, dashed, minimum size=7mm},
		roundRED/.style={ellipse, draw=black!240, fill=red!15, very thick, dashed, minimum size=7mm},
		roundREDb/.style={ellipse, draw=black!240, fill=red!15, very thick, minimum size=7mm},
		roundYELL/.style={ellipse, draw=black!240, fill=yellow!20, very thick, dashed, minimum size=7mm},
		sqYELL/.style={rectangle, draw=black!240, fill=yellow!20, very thick, minimum size=7mm},
		roundORA/.style={ellipse, draw=black!240, fill=orange!15, very thick, dashed, minimum size=7mm},		
		sqORA/.style={rectangle, draw=black!240, fill=orange!15, very thick, minimum size=7mm},
		sqORAb/.style={rectangle, draw=black!240, fill=orange!15, very thick, dashed, minimum size=7mm},
		sqMAG/.style={rectangle, draw=black!240, fill=magenta!5, very thick, minimum size=7mm},
		squaredBLACK/.style={rectangle, draw=black!240, fill=white!15, very thick, minimum size=7mm},	
		roundBROWN/.style={ellipse, draw=black!240, fill=brown!15, very thick, dashed, minimum size=7mm},
		sqBROWN/.style={rectangle, draw=black!240, fill=brown!15, very thick, minimum size=7mm},		
        ]

\node[](HG1A) at (-1.75, 3.25) {};
\node[](HG1B) at (-6.25, -3.25) {};
\node [he, draw=black!240, fill=lime!5, fit=(HG1A) (HG1B)] {};

\node[](HG2A) at (1.75, 3.25) {};
\node[](HG2B) at (6.25, -3.25) {};
\node [he, draw=black!240, fill=cyan!5, fit=(HG2A) (HG2B)] {};

		\node[sqORA](A1) at (-4, 3) {$\tau1, L=\cancel{20} \ \textcolor{green}{18},  y=\cancel{10} \ \textcolor{green}{9}$};
        \node[] at (-4.6, 1.5) { $\cancel{10} \ \textcolor{green}{9}$};
		
        \node[sqBLUE](A2) at (4, 3) {$\tau2,  L=\cancel{20} \ \textcolor{red}{19},  y=\cancel{20} \ \textcolor{red}{19}$};
        \node[] at (0, 1) { $\cancel{10} \ \textcolor{red}{9}$};
		\node[] at (4.6, 1.5) { $10$};
			
		\node[sqYELL](C1) at (-4, 0) {$\tau3, L=\cancel{20} \ \textcolor{green}{18}, y =\cancel{10} \ \textcolor{green}{9} $};
       \node[] at (-4.6, -1) { $\cancel{10} \ \textcolor{green}{9}$};

		
		\node[sqGREEN](F1) at (-4, -3) {$\tau4,L=\cancel{10} \ \textcolor{green}{9}, y=\cancel{20} \ \textcolor{green}{18}$};
	
		\node[sqGREEN](F2) at (4, -3) {$\tau5, L = 50, y=20$};	
		
		\tikzset{thick edge/.style={-, black, fill=none, thick, text=black}}
		\tikzset{thick arc/.style={->, black, fill=black, thick, >=stealth, text=black}}
		
		\draw[line width=0.4mm, black,-> ] (A1) to[out=-90,in=90] (C1);
		\draw[line width=0.4mm, black,-> ] (A2) to[out=-90,in=90] (C1);
		
		\draw[line width=0.4mm, black,-> ] (C1) to[out=-90,in=90] (F1);
		\draw[line width=0.4mm, black,-> ] (A2) to[out=-90,in=90] (F2);

		
       
		\end{tikzpicture}
		}
        }
    \caption{\footnotesize{A targeted withholding of the output of technology $\tau2 $ by Country $i$ (on the right) leads to one unit of lost labor to Country $i$ and imposes a loss of 5 units of labor on Country $j$ (on the left). Country $i$ has a power of 5 over Country $j$ if we measure it in absolute terms, and $7=(5/50)/(1/70)$ if we measure it in percentage of GDP terms.}
    \label{fig:power2a}}   
  \end{figure}

This example illustrates several points. 

First, it is important to consider upstream disruptions, here to technology $\tau1$, in order to account for how the idle labor is distributed across countries.   This contrasts with our earlier analysis in which we were measuring overall (world) GDP, which can be captured via final goods reduction.  Here, tracking changes in labor at each technology captures how the overall loss of 2/40 units of final goods are distributed.  Country $i$ loses 1/120 of the total labor in the economy while Country $j$ loses 5/120 of the total labor.  This leads to the ratio of 5,  and correspondingly Country $i$ loses 1/3 of a unit of consumption, while Country $j$ loses 5/3 units of consumption in total.   

Second, Country $i$ exerts the biggest loss possible on Country $j$ per unit of own loss by directing its disruption of $\tau2$ towards technology $\tau3$ and keeping the amount shipped to $\tau5$ intact.  There is no other disruption available to Country $i$ that would do as well as the one we have considered in terms of power.  It could be scaled up, but would have the same ratio of losses, until it hits 10 units being withheld, and then further withholding would only hurt Country $i$.   As such the relative power index, $\text{Power}_{ij}=7$, and Country $i$ has a relatively high amount of leverage over Country $j$. 

Third, as we mentioned above, an alternative measure of power that it can be appropriate to use in some circumstances is the ratio of absolute GDP losses, and in this case the power of Country $i$ over Country $j$ would be $5$. 

Fourth, in this example, there is only one flow between the two countries and so if we want to examine the power of Country $j$ over Country $i$,  it could only disrupt things via the same flow.   In this particular example, the power of $j$ over $i$ is just the reciprocal 1/7 (or 1/5 in absolute terms). In richer examples, with multiple flows across borders (see below), the best disruption for one country will generally not be the same as the best disruption for the other.  

Fifth, in this example $\tau4$ and $\tau5$ produce the same final good. 
If instead they produced different goods, then the consumers will also be forced to purchase a worse combination of goods after the disruption (as production by $\tau4$ is reduced but production by $\tau5$ is not).\footnote{ Proportional rationing of the over-demanded final good will result in Country $i$'s consumers consuming $18(69/114)$ and $20(69/114)$ units of the two final goods, while Country $j$'s consumers will consume the remaining amounts. Thus the consumers in both countries will consume different amounts of the same bundle of final goods. As by assumptions preferences are homothetic, this is also what would also happen with flexible final good prices.} Finally, Country $i$ is relatively powerful in this example. The only way Country $j$ can disrupt production in Country $i$ is to reduce its demand for the output of technology $\tau2$. However, the impact of doing so is the same as shown in Panel (c) of Figure \ref{fig:power2} and so it only has power $1/5$ over Country $i$. So, to hurt Country $i$, Country $j$ has to hurt itself (considerably) more. The substantial power that Country $i$ has over Country $j$ is in part derived from its ability to insulate its own final good production from the shock it induces. Its disruption does not induce a further disrupted flow of goods back across borders affecting its own technologies.  

\subsection{A Formal Definition of Power}\label{sec:marginalpower}

We now provide a formal definition of the power Country $i$ has to disrupt Country $j$. To do this we must first specify how the full economy responds to a set of flow reductions by Country $i$. 

This involves three complications compared to the downstream propagation algorithm that we defined previously:
\begin{itemize}
\item  To account for all labor reductions, we also need to track how every directly or indirectly affected technology's disruption propagates upstream.
\item  A disruption can involve multiple levels of discretion by Country $i$ that have to be consistent with each other. 
\item  Even without cycles in the network, there can be (finite) cyclical feedback as shocks propagate both upstream and downstream.\footnote{For instance, if a good 0 used in the production of good 1 is held back, and reduces production of  good 2, but good 2 also uses the good 0 as an input, then reduced demand from good 2 further reduces the production of good 0.}
\end{itemize}

We attack these complications by defining a notion of a set of ``consistent disruptions,'' and showing it is without loss of generality to restrict attention to these. 
While this simplifies matters, the set of available sets of consistent disruptions can be large and complex. We show that under some basic conditions on the supply network, when looking for the optimal disruption that determines the power one country has over another, attention can be restricted without loss of generality to sets of minimal consistent disruptions induced by a disruption to a {\sl single} technology.

We consider supply networks with no (directed) cycles.   
Thus, as above there exists a well defined order $\succ$ and sequence of all technologies $\tau^1, \ldots, \tau^L$ 
such that $\tau^\ell \succ \tau^{\ell'} $ implies $\tau^\ell  $ is not downstream from $ \tau^{\ell'}$. We label the technologies from upstream to downstream so that $\tau^\ell \succ \tau^{\ell'} $ when $\ell'>\ell$.

Consider some equilibrium and let $x^*_{\tau\tau'}$ denote the equilibrium flow of output from technology $\tau$ to technology $\tau'$ and $y^*_\tau, L^*_\tau$ be the equilibrium output and labor usage associated with $\tau$. 

A {\sl consistent disruption} of an equilibrium $(x^*_{\tau\tau'}, y^*_\tau, L^*_\tau)_{\tau,\tau'}$ 
by Country $i$ via technology $\tau^i\in T_i$ 
that is reduced to $\lambda<1$ of its initial level\footnote{Note that any disruption of $\tau^i $ along any flows up or downstream reduces its output, and so this nests both cases.} is a specification of inputs, outputs, and labor for each technology, 
$x_{\tau\tau'}, y_\tau, L_\tau$ for every $\tau,\tau'$,
as follows.  
Let $\tau^i=\tau^\ell$ in our ordered sequence for some $\ell\in \{1, \ldots, L\}$.  
\begin{itemize}
\item First, specify new downstream flows $x^d_{\tau^\ell\tau'}$ such that
$x^d_{\tau^\ell\tau'}\leq x^*_{\tau^\ell\tau'}$ for all $\tau',$ and 
$\sum_{\tau'} x^d_{\tau^\ell\tau'} = \lambda y^*_{\tau^\ell}$. 
This specifies how $i$ allocates the disrupted production of technology $\tau^{\ell}$.
  
\item We then follow this iteratively for downstream technologies $\ell' $ (such that $\ell\succ \ell'$) starting with technology $\ell +1$. It may be that $\ell'$ has had multiple cuts on its inputs from technologies further upstream when it is considered, and so the amount that it needs to ration is the max shortfall across its inputs.   
That is,  
the current (temporary) $y^d_{\tau^{\ell'}} = \min_{j\in I(\tau^{\ell'})} \frac{\sum_{\tau':O(\tau')=j} x^d_{\tau'\tau^{\ell'}}}{\sum_{\tau':O(\tau')=j} x^{*}_{\tau'\tau^{\ell'}}}$
\begin{itemize}
\item and if $\tau^{\ell'}\in T_i$, then select any set $x^d_{\tau^{\ell'}\tau'}\leq x^*_{\tau^{\ell'}\tau'}$ for all $\tau',$ and such that 
$\sum_{\tau'} x^d_{\tau^{\ell'}\tau'} =  y^d_{\tau^{\ell'}}$. So $i$ controls the disrupted flows through its own technologies.
\item and if $\tau^{\ell'}\in T_j$ for $j\ne i$, then set  $x^d_{\tau^{\ell'}\tau'}\leq x^*_{\tau^{\ell'}\tau'}$ for all $\tau',$  such that 
$\frac{ x^d_{\tau^{\ell'}\tau'}}{x^*_{\tau^{\ell'}\tau'}} =  \frac{y^d_{\tau^{\ell'}}}{y^*_{\tau^{\ell'}}}$. Thus there is proportional rationing of other countries' outputs.
\end{itemize}
\item We do this until we hit $\ell'=L$, which is necessarily a final good.  
Note that this is well-ordered due to the lack of cycles.
Set $y^u_{\tau^{L}}=y^d_{\tau^{L}}$, to start the upstream process.
\item Next, we follow the consequences back upstream starting at technology $\tau^L$ and iterating until we get to technology $\tau^1$.\footnote{Note that upstream propagation has no further downstream consequences.}  The iterative step is as follows. 
\item We ration $\tau^{\ell'}$'s inputs to levels $x^u_{\tau^{\ell'}\tau'}$ to match the production  $y^u_{\tau^{\ell'}}$  
from the previous steps, as follows:
\begin{itemize}
\item if $\tau^{\ell'}\in T_i$, then select upstream choices  $x^u_{\tau'\tau^{\ell'}}$ such that $x^u_{\tau'\tau^{\ell'}}\leq x^d_{\tau'\tau^{\ell'}}$ for all $\tau'$ and
$\frac{\sum_{\tau':O(\tau')=j} x^u_{\tau'\tau^{\ell'}}}{\sum_{\tau':O(\tau')=j} x^*_{\tau'\tau^{\ell'}}} =  \frac{y^u_{\tau^{\ell'}}}{y^*_{\tau^{\ell'}}}$ for all $j\in I(\tau^{\ell'})$ 
\item and if $\tau^{\ell'}\in T_j$, then set  $x^u_{\tau'\tau^{\ell'}}$ such that $x^u_{\tau'\tau^{\ell'}}\leq x^d_{\tau'\tau^{\ell'}}$ for all $\tau'$ and if $x^d_{\tau'\tau^{\ell'}}>0$ then 
$\frac{ x^u_{\tau'\tau^{\ell'}}}{x^d_{\tau'\tau^{\ell'}}} $ is equal for all such $\tau'$ and 
$\frac{\sum_{\tau':O(\tau')=j} x^u_{\tau'\tau^{\ell'}}}{\sum_{\tau':O(\tau')=j} x^*_{\tau'\tau^{\ell'}}} =  \frac{y^u_{\tau^{\ell'}}}{y^*_{\tau^{\ell'}}}$ for all $j\in I(\tau^{\ell'})$.
\end{itemize}
\item By construction, these upstream disruptions eliminate flows of inputs goods that are unused in production, but cutting an upstream flow has no further downstream consequences (given the absence of directed cycles).   
\end{itemize}
We are left with a well defined (individual technology) 
disruption. 
We can now treat this $(x_{\tau\tau'}, y_\tau, L_\tau)$ specification as if it were a new ``equilibrium'' and then apply a new disruption to some other $\tau^{i'} \in T_i$. 

We let $\mathcal{D}_i$ denote the set of consistent disruptions available  to Country $i$ via sequences of such disruptions, i.e., those disruptions that can be obtained from any finite sequence of individual technology disruptions as described above.\footnote{In applying sequences of disruptions, the marginal proportional rationing of Country $j$ is done using the relative size of the original equilibrium flows.
We show in the appendix that the order of disruptions is not consequential as long as no flows have been cut to zero.}

For any 
$(x_{\tau\tau'}, y_\tau, L_\tau)_{\tau,\tau'}\in \mathcal{D}_i$ denote the corresponding GDP of Country $i$ (and similarly for $j$) by
$$GDP_i((x_{\tau\tau'}, y_\tau, L_\tau)_{\tau,\tau'})= w^*_i\sum_{\tau\in T_i} L_\tau ,$$ 
where $w^*_i$ is the original wage level (as by assumption this is not adjusted in the short run). Letting $GDP^*_i$ denote the initial equilibrium GDP of Country $i$,
\begin{eqnarray}
\text{Power}_{ij}&=&\max_{(x_{\tau\tau'}, y_\tau, L_\tau)_{\tau,\tau'}\in \mathcal{D}_i}\left[\frac{GDP^*_j-GDP_j((x_{\tau\tau'}, y_\tau, L_\tau)_{\tau,\tau'})}{GDP^*_i-GDP_i((x_{\tau\tau'}, y_\tau, L_\tau)_{\tau,\tau'})}\right]\bigg/\left[\frac{GDP^*_j}{GDP^*_i}\right]\nonumber\\
&=&\max_{(x_{\tau\tau'}, y_\tau, L_\tau)_{\tau,\tau'}\in \mathcal{D}_i}\left(\frac{\sum_{\tau \in T_j} L^*_{\tau}-L_{\tau}}{\sum_{\tau \in T_j}L^*_{\tau}}\right)\bigg/\left(\frac{\sum_{\tau\in T_i}L^*_{\tau}-L_{\tau}}{\sum_{\tau\in T_i}L^*_{\tau}}\right)\nonumber
\end{eqnarray}

The space of consistent disruptions $\mathcal{D}_i$ can be large. 
We first consider the case in which disruptions are such that no technology is completely shut down.  Once the new supply chains hit corners where some technology flows are completely disrupted, power calculations become more involved as we detail below.  In most applications, countries are not completely shutting down cross border flows, and then things are much easier to analyze. 
In particular, if none of $i$'s technological flows are completely disrupted then,
as we show, maximum power is achieved via a consistent individual technology disruption.
We call these partial disruptions.

\begin{proposition}\label{prop:disruption_frontier}
If the initial equilibrium supply network has no (directed) cycles and only partial disruptions are considered, then 
then there exists a technology $\tau^i\in T_i$ such that the the power of Country $i$ over Country $j$ is maximized by a consistent individual technology disruption $\tau^i$. 
Moreover, the disruption can be rescaled arbitrarily (subject to staying partial) while maintaining the same power.
\end{proposition}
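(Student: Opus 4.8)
The plan is to rewrite $\text{Power}_{ij}$ --- a supremum, over the combinatorially rich family $\mathcal{D}_i$, of a ratio of two nonnegative linear functionals of the induced idle-labor profile --- as a maximum of the same ratio over a short list of ``generating'' disruptions, each of which is a consistent disruption of a single technology; the linearity of proportional rationing together with acyclicity is what makes this possible.

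\textbf{The rescaling claim.} Fix $\tau^i\in T_i$ and a way of directing its disrupted output across customers (and, on the upstream return pass, of directing the reduced demand of $i$'s own technologies across their suppliers). In the partial regime no flow is ever driven to zero, so each ``$\min$'' in the downstream pass and each proportional-rationing step on the upstream pass is resolved by a fixed selection, and the whole downstream-then-upstream propagation --- including the finite up--down feedback that can occur even in acyclic networks --- is a map that is positively homogeneous of degree one in the single scalar $(1-\lambda)$. Hence $L^*_\tau-L_\tau=(1-\lambda)\,c_\tau$ for constants $c_\tau\ge 0$ depending only on $\tau^i$ and the direction choice, so
\[
\left(\frac{\sum_{\tau\in T_j}(L^*_\tau-L_\tau)}{\sum_{\tau\in T_j}L^*_\tau}\right)\bigg/\left(\frac{\sum_{\tau\in T_i}(L^*_\tau-L_\tau)}{\sum_{\tau\in T_i}L^*_\tau}\right)=\frac{GDP^*_i}{GDP^*_j}\cdot\frac{\sum_{\tau\in T_j}c_\tau}{\sum_{\tau\in T_i}c_\tau}
\]
does not depend on $\lambda$; it is well defined because disrupting $\tau^i\in T_i$ idles at least the labor at $\tau^i$, so the denominator is positive. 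Since the right-hand side is continuous in the (compact, finite-dimensional) polytope of direction choices, the power achievable by disrupting $\tau^i$ alone is attained at some direction choice, and the attained values over the finitely many $\tau^i\in T_i$ form a finite set.

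\textbf{Reduction to a single technology.} Restricting to partial disruptions, I would show that the set $\mathcal{L}_i=\{(L^*_\tau-L_\tau)_\tau:(x,y,L)\in\mathcal{D}_i\}\subseteq\Re_+^{T}$ lies in a closed convex cone all of whose extreme rays are rays traced out by consistent single-technology disruptions. The ingredients are: (i) on any region where the binding input at each affected technology and the set of zeroed flows are fixed, the propagation is a fixed nonnegative linear operator applied to the vector of cuts Country $i$ imposes on the edges incident to $T_i$, so there the achievable idle-labor vectors form a polyhedral cone whose generators are the unit-cut effects --- and a unit cut is a cut on an edge out of (or a demand reduction at) a single technology in $T_i$; (ii) by the order-independence property noted in the text, composing a finite sequence of single-technology disruptions yields a profile in the cone generated by those of its constituents, so passing from single disruptions to sequences does not enlarge the generated cone; (iii) a normalization lets us assume each affected technology has a single, non-switching binding input, so finitely many linear regions (hence finitely many single-technology generators) suffice. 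Granting this, $\text{Power}_{ij}$ equals $GDP^*_i/GDP^*_j$ times the supremum over this cone of $v\mapsto(\sum_{\tau\in T_j}v_\tau)/(\sum_{\tau\in T_i}v_\tau)$; because the super-level sets of a ratio of two linear functionals are half-spaces through the origin, the supremum over the cone equals the maximum over its extreme rays, i.e.\ over consistent single-technology disruptions, and it is attained by the first part. This is exactly the claimed maximizer, and since it is an extreme ray it can be rescaled (subject to remaining partial) with the power unchanged.

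\textbf{Main obstacle.} The crux is making (i)--(iii) precise: that, within the partial regime, the consistent-disruption propagation is genuinely piecewise-linear with the single-technology disruptions as the extreme rays of (the closed convex hull of) $\mathcal{L}_i$. Two points need care. First, the $\min$ operations at multi-input technologies: one must verify that the normalization to a non-switching binding input is itself realized by a consistent disruption, does not leave the partial regime, and does not reduce the power --- equivalently, that Country $i$ never strictly benefits from splitting its cut so that two inputs of a technology bind simultaneously at unequal rates. Second, the finite up--down feedback flagged before the proposition must be shown to have a unique fixed point depending linearly on the imposed cuts within each linear region. Once these are established, the homogeneity computation of the first part and the ratio-over-a-cone argument of the second are routine.
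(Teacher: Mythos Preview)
Your approach is correct in its core structure and shares the paper's essential idea: any disruption in $\mathcal{D}_i$ decomposes as a nonnegative combination of single-technology disruptions, and a ratio of two nonnegative linear forms over such a set is maximized at one of the generators (your ``extreme ray'' argument is precisely the mediant inequality $\sum_k a_k/\sum_k b_k\le\max_k a_k/b_k$ the paper uses). The rescaling claim is also handled the same way by both.

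Where you diverge is in the machinery you deploy to obtain the decomposition. You reason about the propagation dynamics --- tracking which input binds at each multi-input technology during the downstream pass and whether that selection switches as the cut vector moves across linear regions --- and this is exactly the ``Main obstacle'' you flag. The paper sidesteps it entirely with one observation: by definition, a \emph{consistent} disruption has already completed the upstream pass, which rations \emph{every} input of every technology down to match its output exactly, so the relation $y_\tau/y^*_\tau=\sum_{\tau':O(\tau')=g}x_{\tau'\tau}/\sum_{\tau':O(\tau')=g}x^*_{\tau'\tau}$ holds for \emph{all} $g\in I(\tau)$, not merely the argmin. That turns the map from imposed cuts to $(y,x,L)$-reductions into a genuinely linear one, with no $\min$ and no region analysis. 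The paper then simply subtracts the consistency equations at step $\tilde K-1$ from those at step $\tilde K$ to see that each increment $(d^k_\tau,d^k_{\tau\tau'})$ is itself a consistent single-technology disruption of the original equilibrium (rationing is always relative to $x^*$, and partiality ensures no flow has vanished). Additivity of the increments gives the conic decomposition directly.

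So your obstacles --- min-switching and the finite up--down feedback --- are self-inflicted: they arise because you analyze the propagation before the upstream pass has equalized inputs. Working instead with the post-propagation consistent state makes the decomposition a two-line subtraction, and your steps (i) and (iii) become unnecessary. Note also that in your step (ii), order-independence alone is weaker than what you need; you need additivity of the increments, which is what the consistency-subtraction delivers.
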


Proposition \ref{prop:disruption_frontier} shows that when the supply network is acyclic
there is a disruption available to $i$ that hurts country $j$ the most per amount of disruption in country $i$, and it involves country $i$ initially disrupting a single technology in $T_i$. So, the problem of finding the set of disruptions available to $i$ that maximizes $i$'s power over $j$, as defined above, simplifies to searching over disruptions to a single technology in $T_i$.
This makes such disruptions relatively easy to find and implement.

The proof of Proposition \ref{prop:disruption_frontier} is in Appendix \ref{sec:proofs}. It works by showing that complicated disruptions by $i$ can be broken down into separate independent disruptions, such that each independent disruption is induced by an initial disruption to a single technology in $T_i$. 
The power associated with the complicated disruption is then a weighted average of the power associated with these separate disruptions, and so one of the separate disruptions must be weakly more powerful than the combination.

We remark that an algorithm looking for the technologies to disrupt can actually further restrict attention to the technologies 
that are closest to the border in a well-defined sense.  That is, one can concentrate on disruptions of  $\tau\in T_i$ such that 
either there is at least one downstream path from $\tau$ to a final good technology that involves no technologies in $T_i$ anywhere on the path, or 
 there is at least one upstream path from $\tau$ to an intermediate good technology that uses only labor in its production that involves no technologies in $T_i$ anywhere on the path.

When supply networks become more involved, the importance of
restricting attention to a smaller set of potential technologies to disrupt becomes more combinatorially important.  
We illustrate that with a more involved example in Appendix  \ref{sec:involvedexample}.

That more involved example also shows how one country can have much more power over another than vice versa.  
In doing so it helps identify the characteristics of the supply network that grant Country $i$ power over Country $j$. Country $i$ will have a substantial amount of power over Country $j$ when: (i) Country $j$'s output is relatively highly concentrated in one supply chain (measured by its proportion of wage income associated with that supply chain); (ii) Country $i$ has the ability to disrupt this supply chain; and (iii) Country $i$ has relatively diversified production (measured by its proportion of wage income associated with other supply chains). 

An example of a powerful disruption is one to a key good that is relatively monopolized by the disrupting country and not costly in labor itself, but is present in high-labor-usage technologies (up and/or downstream) that lie significantly within the target country.   

Note that it is possible for Countries $i$ and $j$ to both have substantial power over each other (potentially via completely separate parts of the supply network), and for the power relationship to be very asymmetric---so that one country has substantial power over the other but not vice versa.

\subsection{The Disruption Possibility Frontier}\label{sec:nonmarginal}

If countries consider large disruptions that completely cut ties across borders, then that can alter the marginal impact of other technologies as it changes the ways in which subsequent flows are rationed. 
To capture such extreme interventions, one can no longer limit attention 
to just individual technology disruptions and combinations and orderings of disruptions become important.  
We can still define power, and it has a single well-defined value, but the algorithm for finding it when allowing for complete disruptions is more involved.

We ask, what is the maximum percentage reduction in GDP in Country $j$ that Country $i$ can impose for an $x$\% loss in own GDP, varying from 0 to 100 percent own loss. 
This is defined by solving the following problem for all $\lambda\in[0,1]$. 
$\min_{\{(x_{\tau\tau'}, y_\tau, L_\tau)_{\tau,\tau'}\in \mathcal{D}_i\}} \frac{GDP_j\big((x_{\tau\tau'}, y_\tau, L_\tau)_{\tau,\tau'}\big)}{GDP^*_j}$ subject to $GDP_i\big((x_{\tau\tau'}, y_\tau, L_\tau)_{\tau,\tau'}\big)\geq (1-\lambda)GDP^*_i$.

Solving this problem for different values of $\lambda$ identifies the trade-off Country $i$ faces between disrupting their own output and disrupting the output of Country $j$. As long as Country $i$ prefers larger disruptions to Country $j$ and smaller disruptions to itself, it will choose disruptions that implement some point on this disruption possibility frontier. 

Figure \ref{fig:disruption_possibility_frontier} illustrates the disruption possibility frontier 
for the example from Figure \ref{fig:power_calculation}. The disruption possibility frontier is piecewise linear. $\text{Power}_{i j}$ is given by the slope of the disruption possibility frontier at $\lambda=0$; which corresponds to the optimal disruptions that fall short of completely disrupting any of $i$'s flows in the supply network.  

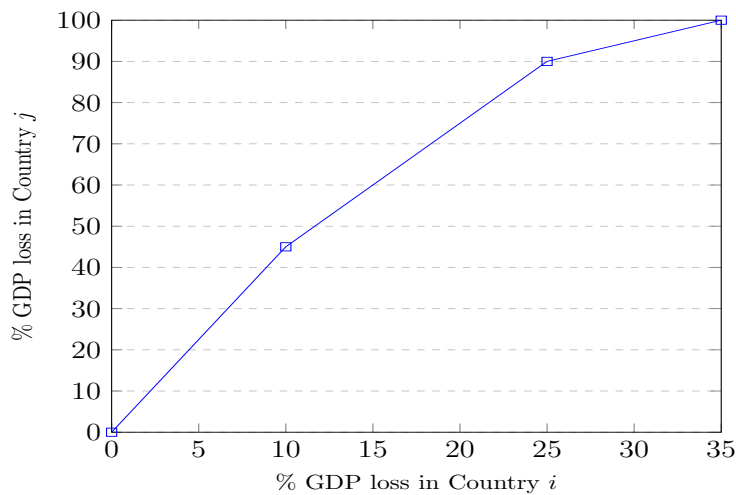
\begin{figure}[!ht]

\begin{center}
\resizebox{4in}{2.6in}{		
\begin{tikzpicture}
\begin{axis}[
    xlabel={\footnotesize{\% GDP loss in Country $i$}},
    ylabel={\footnotesize{\% GDP loss in Country $j$}},
    xmin=0, xmax=35,
    ymin=0, ymax=100,
    xtick={0,5,10,15,20,25,30,35},
    ytick={0,10,20,30,40,50,60,70,80,90,100},
    legend pos=north west,
    ymajorgrids=true,
    grid style=dashed,
]

\addplot[
    color=blue,
    mark=square,
    ]
    coordinates {
    (0,0)(10,45)(25,90)(35,100)
    };
    
\end{axis}
\end{tikzpicture}
}
    \caption{Disruption Possibility Frontier for Country $i$ disrupting Country $j$
    for the example from Figure \ref{fig:power_calculation}.}
\label{fig:disruption_possibility_frontier}
\end{center}
\end{figure}

While our power measure is a useful single statistic and an accurate measure of maximum power for disruptions that do not zero out any trade flows, for large disruptions that do zero out some trade flows, complex disruptions (that sequentially disrupt multiple technologies) need to be considered and the full disruption possibility frontier needs to be mapped.

 In Figure \ref{fig:power2}, we illustrate how the marginal impact of a disruption of one flow can actually increase as
another flow is zeroed out. This illustrates how the disruption possibility frontier need not be concave, and why when considering complete disruptions, the power analysis becomes combinatorially complex.

\begin{figure}[!ht]
    \subfloat[Initial Equilibrium]
    {
\centering
\resizebox{3.15in}{2.43in}{		
    \begin{tikzpicture}[
sqRED/.style={rectangle, draw=black!240, fill=red!15, very thick, minimum size=7mm},
		sqREDb/.style={rectangle, draw=black!240, fill=red!15, very thick, dashed, minimum size=7mm},
		sqBLUE/.style={rectangle, draw=black!240, fill=blue!15, very thick, minimum size=7mm},
		sqBLUEb/.style={rectangle, draw=black!240, fill=blue!15, very thick, dashed, minimum size=7mm},
		roundnode/.style={ellipse, draw=black!240, fill=green!15, very thick, dashed, minimum size=7mm},
		sqGREEN/.style={rectangle, draw=black!240, fill=green!15, very thick, minimum size=7mm},
		sqGREENb/.style={rectangle, draw=black!240, fill=green!15, very thick, dashed, minimum size=7mm},
		roundRED/.style={ellipse, draw=black!240, fill=red!15, very thick, dashed, minimum size=7mm},
		roundREDb/.style={ellipse, draw=black!240, fill=red!15, very thick, minimum size=7mm},
		roundYELL/.style={ellipse, draw=black!240, fill=yellow!20, very thick, dashed, minimum size=7mm},
		sqYELL/.style={rectangle, draw=black!240, fill=yellow!20, very thick, minimum size=7mm},
		roundORA/.style={ellipse, draw=black!240, fill=orange!15, very thick, dashed, minimum size=7mm},		
		sqORA/.style={rectangle, draw=black!240, fill=orange!15, very thick, minimum size=7mm},
		sqORAb/.style={rectangle, draw=black!240, fill=orange!15, very thick, dashed, minimum size=7mm},
		sqMAG/.style={rectangle, draw=black!240, fill=magenta!5, very thick, minimum size=7mm},
		squaredBLACK/.style={rectangle, draw=black!240, fill=white!15, very thick, minimum size=7mm},	
		roundBROWN/.style={ellipse, draw=black!240, fill=brown!15, very thick, dashed, minimum size=7mm},
		sqBROWN/.style={rectangle, draw=black!240, fill=brown!15, very thick, minimum size=7mm},		
        ]

\node[](HG1A) at (1.5, 3.25) {};
\node[](HG1B) at (-6, -3.25) {};
\node [he, draw=black!240, fill=lime!5, fit=(HG1A) (HG1B)] {};

\node[](HG2A) at (2.5, 3.25) {};
\node[](HG2B) at (6, -3.25) {};
\node [he, draw=black!240, fill=cyan!5, fit=(HG2A) (HG2B)] {};

		\node[sqORA](A1) at (-4, 3) {$\tau1, L=10$};
        \node[] at (-3.5, 1.5) { $10$};

        \node[sqBLUE](A2) at (0, 3) {$\tau2,  L=10$};
        \node[] at (0.5, 2) { $5$};
			
        \node[sqBLUE](A3) at (4, 3) {$\tau3,  L=10$};
        \node[] at (3.5, 1.5) { $5$};
			
		\node[sqYELL](C1) at (0, 0) {$\tau3, L=20$};
       \node[] at (-3.5, -1.5) { $10$};
       \node[] at (3.5, -1.5) { $10$};
	    \node[sqGREEN](F1) at (-4, -3) {$\tau4,L=10$};
		\node[sqGREEN](F2) at (4, -3) {$\tau5, L = 10$};	
		
		\tikzset{thick edge/.style={-, black, fill=none, thick, text=black}}
		\tikzset{thick arc/.style={->, black, fill=black, thick, >=stealth, text=black}}
		
		\draw[line width=0.4mm, black,-> ] (A1) to[out=-90,in=90] (C1);
		\draw[line width=0.4mm, black,-> ] (A2) to[out=-90,in=90] (C1);
       	\draw[line width=0.4mm, black,-> ] (A3) to[out=-90,in=90] (C1);
        \draw[line width=0.4mm, black,-> ] (C1) to[out=-90,in=90] (F1);
		\draw[line width=0.4mm, black,-> ] (C1) to[out=-90,in=90] (F2);
		\end{tikzpicture}
		}
        }
    \hfill
    \subfloat[Marginal Disruption 1]
    {
\centering
\resizebox{3.15in}{2.43in}{		
        \begin{tikzpicture}[
sqRED/.style={rectangle, draw=black!240, fill=red!15, very thick, minimum size=7mm},
		sqREDb/.style={rectangle, draw=black!240, fill=red!15, very thick, dashed, minimum size=7mm},
		sqBLUE/.style={rectangle, draw=black!240, fill=blue!15, very thick, minimum size=7mm},
		sqBLUEb/.style={rectangle, draw=black!240, fill=blue!15, very thick, dashed, minimum size=7mm},
		roundnode/.style={ellipse, draw=black!240, fill=green!15, very thick, dashed, minimum size=7mm},
		sqGREEN/.style={rectangle, draw=black!240, fill=green!15, very thick, minimum size=7mm},
		sqGREENb/.style={rectangle, draw=black!240, fill=green!15, very thick, dashed, minimum size=7mm},
		roundRED/.style={ellipse, draw=black!240, fill=red!15, very thick, dashed, minimum size=7mm},
		roundREDb/.style={ellipse, draw=black!240, fill=red!15, very thick, minimum size=7mm},
		roundYELL/.style={ellipse, draw=black!240, fill=yellow!20, very thick, dashed, minimum size=7mm},
		sqYELL/.style={rectangle, draw=black!240, fill=yellow!20, very thick, minimum size=7mm},
		roundORA/.style={ellipse, draw=black!240, fill=orange!15, very thick, dashed, minimum size=7mm},		
		sqORA/.style={rectangle, draw=black!240, fill=orange!15, very thick, minimum size=7mm},
		sqORAb/.style={rectangle, draw=black!240, fill=orange!15, very thick, dashed, minimum size=7mm},
		sqMAG/.style={rectangle, draw=black!240, fill=magenta!5, very thick, minimum size=7mm},
		squaredBLACK/.style={rectangle, draw=black!240, fill=white!15, very thick, minimum size=7mm},	
		roundBROWN/.style={ellipse, draw=black!240, fill=brown!15, very thick, dashed, minimum size=7mm},
		sqBROWN/.style={rectangle, draw=black!240, fill=brown!15, very thick, minimum size=7mm},		
        ]

\node[](HG1A) at (1.5, 3.25) {};
\node[](HG1B) at (-6, -3.25) {};
\node [he, draw=black!240, fill=lime!5, fit=(HG1A) (HG1B)] {};

\node[](HG2A) at (2.5, 3.25) {};
\node[](HG2B) at (6, -3.25) {};
\node [he, draw=black!240, fill=cyan!5, fit=(HG2A) (HG2B)] {};

		\node[sqORA](A1) at (-4, 3) {$\tau1, L=\cancel{10}$ \textcolor{red}{$5$}};
        
        \node[] at (-3.5, 1.5) { $\cancel{10}$ \textcolor{red}{$5$}};

        \node[sqBLUE](A2) at (0, 3) {$\tau2,  L=10$};
        \node[] at (0.5, 2) { $5$};
			
        \node[sqBLUE](A3) at (4, 3) {$\tau3,  L=\cancel{10}$ \textcolor{red}{$0$}};
        \node[] at (3.5, 1.5) {  $\cancel{5}$ \textcolor{red}{$0$}};
			
		\node[sqYELL](C1) at (0, 0) {$\tau3, L=\cancel{20}$ \textcolor{red}{$10$}};
       \node[] at (-3.5, -1.5) { $\cancel{10}$ \textcolor{red}{$5$}};
       \node[] at (3.5, -1.5) { $\cancel{10}$ \textcolor{red}{$5$}};
	    \node[sqGREEN](F1) at (-4, -3) {$\tau4,L=\cancel{10}$ \textcolor{red}{$5$}};
		\node[sqGREEN](F2) at (4, -3) {$\tau5, L=\cancel{10}$ \textcolor{red}{$5$}};	
		
		\tikzset{thick edge/.style={-, black, fill=none, thick, text=black}}
		\tikzset{thick arc/.style={->, black, fill=black, thick, >=stealth, text=black}}
		
		\draw[line width=0.4mm, black,-> ] (A1) to[out=-90,in=90] (C1);
		\draw[line width=0.4mm, black,-> ] (A2) to[out=-90,in=90] (C1);
       	\draw[line width=0.4mm, black,-> ] (A3) to[out=-90,in=90] (C1);
        \draw[line width=0.4mm, black,-> ] (C1) to[out=-90,in=90] (F1);
		\draw[line width=0.4mm, black,-> ] (C1) to[out=-90,in=90] (F2);
		\end{tikzpicture}
		}
        }
           \hfill
    \subfloat[Marginal Disruption 2]
    {
\centering
\resizebox{3.15in}{2.43in}{		
    \begin{tikzpicture}[
sqRED/.style={rectangle, draw=black!240, fill=red!15, very thick, minimum size=7mm},
		sqREDb/.style={rectangle, draw=black!240, fill=red!15, very thick, dashed, minimum size=7mm},
		sqBLUE/.style={rectangle, draw=black!240, fill=blue!15, very thick, minimum size=7mm},
		sqBLUEb/.style={rectangle, draw=black!240, fill=blue!15, very thick, dashed, minimum size=7mm},
		roundnode/.style={ellipse, draw=black!240, fill=green!15, very thick, dashed, minimum size=7mm},
		sqGREEN/.style={rectangle, draw=black!240, fill=green!15, very thick, minimum size=7mm},
		sqGREENb/.style={rectangle, draw=black!240, fill=green!15, very thick, dashed, minimum size=7mm},
		roundRED/.style={ellipse, draw=black!240, fill=red!15, very thick, dashed, minimum size=7mm},
		roundREDb/.style={ellipse, draw=black!240, fill=red!15, very thick, minimum size=7mm},
		roundYELL/.style={ellipse, draw=black!240, fill=yellow!20, very thick, dashed, minimum size=7mm},
		sqYELL/.style={rectangle, draw=black!240, fill=yellow!20, very thick, minimum size=7mm},
		roundORA/.style={ellipse, draw=black!240, fill=orange!15, very thick, dashed, minimum size=7mm},		
		sqORA/.style={rectangle, draw=black!240, fill=orange!15, very thick, minimum size=7mm},
		sqORAb/.style={rectangle, draw=black!240, fill=orange!15, very thick, dashed, minimum size=7mm},
		sqMAG/.style={rectangle, draw=black!240, fill=magenta!5, very thick, minimum size=7mm},
		squaredBLACK/.style={rectangle, draw=black!240, fill=white!15, very thick, minimum size=7mm},	
		roundBROWN/.style={ellipse, draw=black!240, fill=brown!15, very thick, dashed, minimum size=7mm},
		sqBROWN/.style={rectangle, draw=black!240, fill=brown!15, very thick, minimum size=7mm},		
        ]

\node[](HG1A) at (1.5, 3.25) {};
\node[](HG1B) at (-6, -3.25) {};
\node [he, draw=black!240, fill=lime!5, fit=(HG1A) (HG1B)] {};

\node[](HG2A) at (2.5, 3.25) {};
\node[](HG2B) at (6, -3.25) {};
\node [he, draw=black!240, fill=cyan!5, fit=(HG2A) (HG2B)] {};

		\node[sqORA](A1) at (-4, 3) {$\tau1, L=\cancel{10}$ \textcolor{red}{$5$}};
        \node[] at (-3.5, 1.5) { $\cancel{10}$ \textcolor{red}{$5$}};

        \node[sqBLUE](A2) at (0, 3) {$\tau2,  L=\cancel{10}$ \textcolor{red}{$5$}};
        \node[] at (0.5, 2) { $\cancel{5}$ \textcolor{red}{$2.5$}};
			
        \node[sqBLUE](A3) at (4, 3) {$\tau3,  L=\cancel{10}$ \textcolor{red}{$5$}};
        \node[] at (3.5, 1.5) { $\cancel{5}$ \textcolor{red}{$2.5$}};
			
		\node[sqYELL](C1) at (0, 0) {$\tau3, L=\cancel{20}$ \textcolor{red}{$10$}};
       \node[] at (-3.5, -1.5) { $10$};
       \node[] at (3.5, -1.5) { $\cancel{10}$ \textcolor{red}{$0$}};
	    \node[sqGREEN](F1) at (-4, -3) {$\tau4,L=10$};
		\node[sqGREEN](F2) at (4, -3) {$\tau5, L=\cancel{10}$ \textcolor{red}{$0$}};	
		
		\tikzset{thick edge/.style={-, black, fill=none, thick, text=black}}
		\tikzset{thick arc/.style={->, black, fill=black, thick, >=stealth, text=black}}
		
		\draw[line width=0.4mm, black,-> ] (A1) to[out=-90,in=90] (C1);
		\draw[line width=0.4mm, black,-> ] (A2) to[out=-90,in=90] (C1);
       	\draw[line width=0.4mm, black,-> ] (A3) to[out=-90,in=90] (C1);
        \draw[line width=0.4mm, black,-> ] (C1) to[out=-90,in=90] (F1);
		\draw[line width=0.4mm, black,-> ] (C1) to[out=-90,in=90] (F2);
		\end{tikzpicture}
        		}
        }
           \hfill
    \subfloat[Combined Disruption]
    {
\centering
\resizebox{3.15in}{2.43in}{		
    \begin{tikzpicture}[
sqRED/.style={rectangle, draw=black!240, fill=red!15, very thick, minimum size=7mm},
		sqREDb/.style={rectangle, draw=black!240, fill=red!15, very thick, dashed, minimum size=7mm},
		sqBLUE/.style={rectangle, draw=black!240, fill=blue!15, very thick, minimum size=7mm},
		sqBLUEb/.style={rectangle, draw=black!240, fill=blue!15, very thick, dashed, minimum size=7mm},
		roundnode/.style={ellipse, draw=black!240, fill=green!15, very thick, dashed, minimum size=7mm},
		sqGREEN/.style={rectangle, draw=black!240, fill=green!15, very thick, minimum size=7mm},
		sqGREENb/.style={rectangle, draw=black!240, fill=green!15, very thick, dashed, minimum size=7mm},
		roundRED/.style={ellipse, draw=black!240, fill=red!15, very thick, dashed, minimum size=7mm},
		roundREDb/.style={ellipse, draw=black!240, fill=red!15, very thick, minimum size=7mm},
		roundYELL/.style={ellipse, draw=black!240, fill=yellow!20, very thick, dashed, minimum size=7mm},
		sqYELL/.style={rectangle, draw=black!240, fill=yellow!20, very thick, minimum size=7mm},
		roundORA/.style={ellipse, draw=black!240, fill=orange!15, very thick, dashed, minimum size=7mm},		
		sqORA/.style={rectangle, draw=black!240, fill=orange!15, very thick, minimum size=7mm},
		sqORAb/.style={rectangle, draw=black!240, fill=orange!15, very thick, dashed, minimum size=7mm},
		sqMAG/.style={rectangle, draw=black!240, fill=magenta!5, very thick, minimum size=7mm},
		squaredBLACK/.style={rectangle, draw=black!240, fill=white!15, very thick, minimum size=7mm},	
		roundBROWN/.style={ellipse, draw=black!240, fill=brown!15, very thick, dashed, minimum size=7mm},
		sqBROWN/.style={rectangle, draw=black!240, fill=brown!15, very thick, minimum size=7mm},		
        ]

\node[](HG1A) at (1.5, 3.25) {};
\node[](HG1B) at (-6, -3.25) {};
\node [he, draw=black!240, fill=lime!5, fit=(HG1A) (HG1B)] {};

\node[](HG2A) at (2.5, 3.25) {};
\node[](HG2B) at (6, -3.25) {};
\node [he, draw=black!240, fill=cyan!5, fit=(HG2A) (HG2B)] {};

		\node[sqORA](A1) at (-4, 3) {$\tau1, L=\cancel{10}$ \textcolor{red}{$2.5$}};
        \node[] at (-3.5, 1.5) { $\cancel{10}$ \textcolor{red}{$2.5$}};

        \node[sqBLUE](A2) at (0, 3) {$\tau2,  L=\cancel{10}$ \textcolor{red}{$5$}};
        \node[] at (0.5, 2) { $\cancel{5}$ \textcolor{red}{$2.5$}};
			
        \node[sqBLUE](A3) at (4, 3) {$\tau3,  L=\cancel{10}$ \textcolor{red}{$0$}};
        \node[] at (3.5, 1.5) { $\cancel{5}$ \textcolor{red}{$0$}};
			
		\node[sqYELL](C1) at (0, 0) {$\tau3, L=\cancel{20}$ \textcolor{red}{$5$}};
       \node[] at (-3.5, -1.5) { $\cancel{10}$ \textcolor{red}{$5$}};
       \node[] at (3.5, -1.5) { $\cancel{10}$ \textcolor{red}{$0$}};
	    \node[sqGREEN](F1) at (-4, -3) {$\tau4,L=\cancel{10}$ \textcolor{red}{$5$}};
		\node[sqGREEN](F2) at (4, -3) {$\tau5, L=\cancel{10}$ \textcolor{red}{$0$}};	
		
		\tikzset{thick edge/.style={-, black, fill=none, thick, text=black}}
		\tikzset{thick arc/.style={->, black, fill=black, thick, >=stealth, text=black}}
		
		\draw[line width=0.4mm, black,-> ] (A1) to[out=-90,in=90] (C1);
		\draw[line width=0.4mm, black,-> ] (A2) to[out=-90,in=90] (C1);
       	\draw[line width=0.4mm, black,-> ] (A3) to[out=-90,in=90] (C1);
        \draw[line width=0.4mm, black,-> ] (C1) to[out=-90,in=90] (F1);
		\draw[line width=0.4mm, black,-> ] (C1) to[out=-90,in=90] (F2);
		\end{tikzpicture}
		}
        }
    \caption{\footnotesize{For the initial equilibirum shown in Panel (a), Panels (b) and (c) show the impact of the different marginal disruptions available to Country $i$ (on the right in each panel). The power of both these disruptions is the same and constant up to the point where flows zero out. Panel (d) shows the impact of both these aforementioned disruptions being sequentially implemented (in either order). The power of this combined disruption is greater than the power of either marginal disruption.}
    \label{fig:power2}}   
  \end{figure}
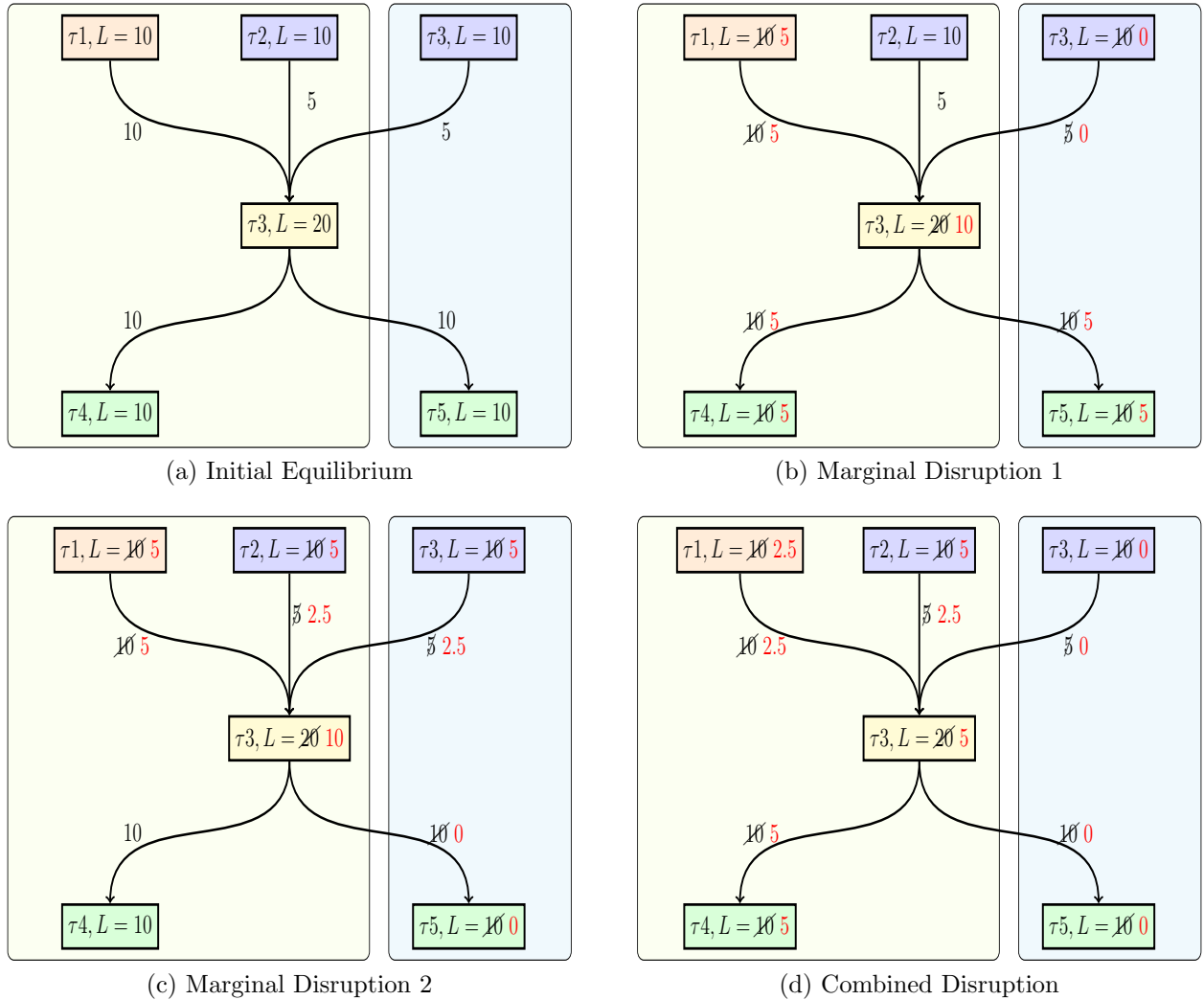

\subsection{Strategic Power When All Countries Optimize Flows}\label{sec:strategic_power}

The definitions above work with non-strategic decisions at all flows except those by the country that is doing the disruption.  This does not provide scope for other countries responding to mitigate the damage or retaliate. While in the short run other countries may not have time to respond, by the medium run a response should be expected. Moreover, if the disruption has been anticipated by the target country then it might be in a position to immediately respond.

It is straightforward to 
develop a definition of power that includes a strategic response. 
It is important to define since, as we show via examples in the appendix, having a country react strategically rather than proportionally can end up rerouting disruptions back to the country that is initiating the disruptions.  This can lead to substantial changes to the power calculation.  
While this is important to account for, it is analytically straightforward and so we provide the definitions and examples in Appendix \ref{sec:strategicpower}.

\section{Complexity, Fragility, and Globalization}

\subsection{Supply Chain Complexity and Increased Fragility}

Next we examine how the impact of a disruption depends on the complexity of a supply chain.
We show that shocks have more impact as production becomes more complex and relies on more intermediate goods, all else held equal.


Consider a setting in which the bound from Proposition \ref{prop:shocked_GDP} is obtained (e.g., the setting of Proposition \ref{prop:general_industry_shocks}), and let us consider a random shock. Let $S$ denote the average number of different (non-final) technologies used directly or indirectly to produce a final good---averaged across final goods. Thus, $S$ is a measure of the complexity of supply chains. Let $\Delta GDP$ denote the initial equilibrium $GDP$ minus the $GDP$ after any shocks. Finally, let $q$ denote the ratio of the average expenditures spent on a randomly picked (non-final) technology to the average expenditures on a randomly picked final good, and $m$ denote the expected number of final goods that lie downstream from a randomly picked (non-final) technology.

\begin{proposition}\label{prop:total_exposure}
\emph{[Supply Chain Disruption as a Function of Complexity]}
Consider an economy in which the bound from Proposition \ref{prop:shocked_GDP} binds, final goods prices are independent of the complexity of their supply chains.  Let disruptions to intermediate good technologies---in which a proportion $(1-\lambda)$ of output is lost---occur with probability $\pi$, independently of other disruptions.  Then for small $\pi$  (so that the probability of two or more shocks on the same chain is vanishingly small relative to the probability of a single shock) and small $1-\lambda$ (so that Hulten's Theorem applies):
\[
\text{Short-Run:  } \quad\mathbb{E}\left[ \frac{\Delta GDP}{GDP}\right] \approx  - (1-\lambda) \pi  S,
 \, \, \,
\text{Long-Run:  } \quad\mathbb{E}\left[ \frac{\Delta GDP}{GDP}\right] \approx  - (1-\lambda) \pi S \frac{q}{m}.
\]
\end{proposition}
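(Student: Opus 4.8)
The plan is to prove the two asymptotic identities separately, sharing a first ``linearization in $\pi$'' step and a final ``translate expenditure sums into the complexity statistics'' step. Let $T^{\mathrm{int}}$ denote the (fixed) number of non-final technologies and, for a lone shock to technology $\tau$, let $\ell(\tau)$ denote the resulting loss in GDP. Since intermediate technologies are shocked independently with probability $\pi$, I would expand $\mathbb{E}[\Delta GDP]$ over realizations of the shocked set: the empty realization contributes $0$; each single-technology realization $\{\tau\}$ occurs with probability $\pi(1-\pi)^{T^{\mathrm{int}}-1}=\pi+O(\pi^2)$ and contributes $-\ell(\tau)$ to the order we track; and the realizations with two or more shocked technologies have total probability $O(\pi^2)$ and loss at most $GDP$, hence contribute $O(\pi^2)$. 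Thus, to leading order, $\mathbb{E}[\Delta GDP]\approx-\pi\sum_{\tau \text{ non-final}}\ell(\tau)$. The non-additivity of losses along a shared supply chain, and, in the long run, the superposition error of two simultaneous small Hulten shocks, are precisely among the $O(\pi^2)$ (respectively higher-order) terms being discarded.

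\emph{Short run.} Because we are in an economy where the bound of Proposition \ref{prop:shocked_GDP} binds, a lone shock to $\tau$ loses exactly the value of all final goods downstream of $\tau$, namely $\ell(\tau)=(1-\lambda)\sum_{f\in F(\tau)}p_f y_f$; this step needs only $\pi$ small, since the per-shock loss is here exact. Swapping the order of summation, $\sum_{\tau \text{ non-final}}\ell(\tau)=(1-\lambda)\sum_f p_f y_f\, n_f$, where $n_f:=|\{\tau \text{ non-final}: f\in F(\tau)\}|$ is the number of distinct non-final technologies in the supply chain of $f$. Dividing by $GDP=\sum_f p_f y_f$ gives $\mathbb{E}[\Delta GDP/GDP]\approx-(1-\lambda)\pi\,\frac{\sum_f p_f y_f\, n_f}{\sum_f p_f y_f}$, i.e.\ minus $(1-\lambda)\pi$ times the \emph{expenditure-weighted} average of $n_f$. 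The hypothesis that final-good values are independent of supply-chain complexity is exactly what equates this weighted average with the plain average $\frac1F\sum_f n_f=S$, yielding $\mathbb{E}[\Delta GDP/GDP]\approx-(1-\lambda)\pi S$.

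\emph{Long run.} Here I would invoke Proposition \ref{prop:hulten}: for $1-\lambda$ small, a lone shock to $\tau$ has $\ell(\tau)=(1-\lambda)p_\tau y_\tau+O((1-\lambda)^2)$, \emph{independent of where $\tau$ sits in the network}. Hence $\mathbb{E}[\Delta GDP/GDP]\approx-(1-\lambda)\pi\,\frac{\sum_{\tau \text{ non-final}}p_\tau y_\tau}{GDP}$. Writing $\sum_{\tau \text{ non-final}}p_\tau y_\tau=T^{\mathrm{int}}\,\overline{E}^{\mathrm{int}}$ and $GDP=\sum_f p_f y_f=F\,\overline{E}^{\mathrm{fin}}$ for the average expenditures on a non-final technology and on a final good respectively, the ratio equals $(T^{\mathrm{int}}/F)\,q$ by the definition $q=\overline{E}^{\mathrm{int}}/\overline{E}^{\mathrm{fin}}$. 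The remaining identity $F\,S=\sum_f n_f=\#\{(\tau,f): f\in F(\tau)\}=\sum_{\tau \text{ non-final}}|F(\tau)|=T^{\mathrm{int}}\,m$ (each such pair counted once from either side) gives $T^{\mathrm{int}}/F=S/m$, and therefore $\mathbb{E}[\Delta GDP/GDP]\approx-(1-\lambda)\pi S\,\frac{q}{m}$.

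\emph{Where the work is.} The argument is not deep, but two points need care. First, the ``$\approx$'' must be made honest: the multiple-shock corrections are $O(\pi^2)$ and the non-marginal Hulten corrections are $O((1-\lambda)^2)$ only because $T^{\mathrm{int}}$, the network, and the price system are held fixed as $\pi\to0$ and $1-\lambda\to0$ — this is the sense in which the statement is about ``small $\pi$'' and ``small $1-\lambda$''. Second, the averaging step in the short run is where the independence hypothesis does genuine work, converting the expenditure-weighted mean of $n_f$ into $S$; absent it, the short-run coefficient would carry a covariance correction. Everything else is the bipartite double-count $F S=T^{\mathrm{int}}m$ and routine bookkeeping, so I expect the sharpest friction to be stating the error terms cleanly rather than any real obstacle.
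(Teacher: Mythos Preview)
Your proposal is correct and follows essentially the same route as the paper's own (very terse) proof sketch: linearize in $\pi$ so that only single-technology shocks matter, evaluate the single-shock loss via the binding bound (short run) or Hulten (long run), and close with the bipartite double-count $FS = T^{\mathrm{int}} m$. You are in fact more careful than the paper on two points the paper glosses over: you make explicit that the ``independence of final-good values from complexity'' hypothesis is precisely what converts the expenditure-weighted average of $n_f$ into the plain average $S$, and you track the $O(\pi^2)$ and $O((1-\lambda)^2)$ error terms rather than leaving them implicit.
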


Some intuition behind Proposition \ref{prop:total_exposure} is as follows. As supply chains become more complex, more intermediate goods are used (directly or indirectly) in the production of final goods, and so the probability of short-run of disruptions increases. In the long-run, complexity still matters because it still increases the probability of disruption.
However, this is now modulated by the size of the impact which is proportional to the average cost of a shocked intermediate good to the value of final goods it affects, as we would expect by Hulten's Theorem.

The proof of Proposition \ref{prop:total_exposure} is straightforward and so we simply outline it here.
First, note that $m=SF/M$.
Next, given that $ M \pi$ is expected number of disruptions, and each hits a fraction of $ m / F$ of the total set of goods,
it follows that $\pi M m/F  = \pi  S $ is the expected fraction of final goods shocked and the expected probability of disrupting a typical final good (appealing to the fact that when $\pi$ is small the probability that more than one technology is shocked at a time is vanishingly small compared to a single technology being shocked, and so we do not worry about multiple shocks to any supply chain).
In the short run, a disrupted chain reduces consumption of the final good by a fraction $(1-\lambda)$, and so disrupting a typical final good by $(1-\lambda)$ with a probability $\pi  S $ gives the short-run result.
For the long-run result, the expected number of disruptions is approximately $M\pi$ and the expenditures on the technology compared to $GDP$ is
$q/F$.  By Hulten's Theorem the expected marginal value is approximately $M\pi q/F = \pi S q/m$.

The contrast of short and long-run disruptions is again stark.
If supply chains are completely horizontal, so that inputs go directly into final goods, then  $q\approx 1/S$ (ignoring labor cost in assembling final goods, which are otherwise added to the denominator) and $m=1$, and so the long-run effect is approximately  $(1-\lambda) \pi$.  This is $1/S$ of the short-run effect.
If supply chains are completely vertical and $m=1$, and inputs use similar amounts of labor, then $q\approx 1/2$, and so the overall effect is approximately $(1-\lambda) \pi S/2$.
If there are parallel supply chains so that each input reaches a different final good, then the impact of the long run is like the horizontal case, while the short run impact is more compartmentalized.

This shows that there are systematic ways in which network shape affects both the short- and long-run impacts, and that they depend on different features of the network.
The long-run impact is more dependent upon depth vs breadth,  while the short-run impact is more dependent upon how many final goods given inputs are upstream from.

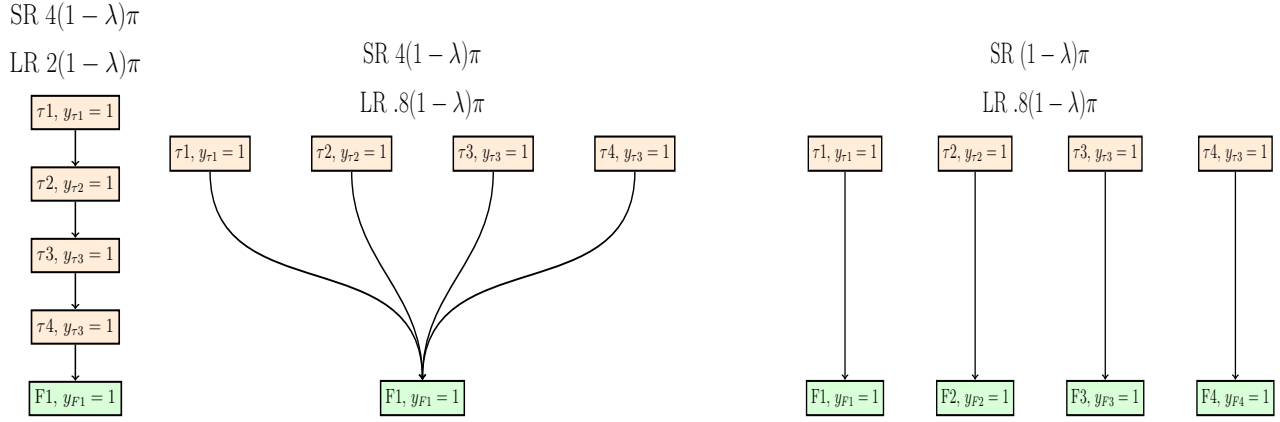
\begin{figure}[t!]
\resizebox{.8in}{2.2in}{		
\begin{tikzpicture}[
		sqRED/.style={rectangle, draw=black!240, fill=red!15, very thick, minimum size=7mm},
		sqBLUE/.style={rectangle, draw=black!240, fill=blue!15, very thick, minimum size=7mm},
		roundnode/.style={ellipse, draw=black!240, fill=green!15, very thick, dashed, minimum size=7mm},
		sqGREEN/.style={rectangle, draw=black!240, fill=green!15, very thick, minimum size=7mm},
		roundRED/.style={ellipse, draw=black!240, fill=red!15, very thick, dashed, minimum size=7mm},
		roundREDb/.style={ellipse, draw=black!240, fill=red!15, very thick, minimum size=7mm},
		roundYELL/.style={ellipse, draw=black!240, fill=yellow!20, very thick, dashed, minimum size=7mm},
		sqYELL/.style={rectangle, draw=black!240, fill=yellow!20, very thick, minimum size=7mm},
		roundORA/.style={ellipse, draw=black!240, fill=orange!15, very thick, dashed, minimum size=7mm},		
		sqORA/.style={rectangle, draw=black!240, fill=orange!15, very thick, minimum size=7mm},
		squaredBLACK/.style={rectangle, draw=black!240, fill=white!15, very thick, minimum size=7mm},	
		squaredBLACK2/.style={rectangle, draw=red!240, fill=white!15, very thick, dashed, minimum size=7mm},	
		roundBROWN/.style={ellipse, draw=black!240, fill=brown!15, very thick, dashed, minimum size=7mm},
		sqBROWN/.style={rectangle, draw=black!240, fill=brown!15, very thick, minimum size=7mm},		
        sqPURPLE/.style={rectangle, draw=black!240, fill=purple!15, very thick, minimum size=7mm},				
		]

		\node[sqORA](I1) at (-2, 3) { $\tau$1, $y_{\tau 1}=1$};
\node[] at (-2, 5) {\Large SR $4(1-\lambda)\pi$};	
\node[] at (-2, 4) {\Large LR $2(1-\lambda)\pi$};		


		\node[sqORA](I2) at (-2, 1.5) { $\tau$2, $y_{\tau 2}=1$};
	
        \node[sqORA](I3) at (-2, 0) { $\tau$3, $y_{\tau 3}=1$};

        \node[sqORA](I4) at (-2, -1.5) { $\tau$4, $y_{\tau 3}=1$};

		\node[sqGREEN](F1) at (-2, -3) { F1, $y_{F1}=1$};

		\tikzset{thick edge/.style={-, black, fill=none, thick, text=black}}
		\tikzset{thick arc/.style={->, black, fill=black, thick, >=stealth, text=black}}
		

		\draw[line width=0.4mm, black,-> ] (I1) to[out=-90,in=90] (I2);
		\draw[line width=0.4mm, black,-> ] (I2) to[out=-90,in=90] (I3);
		\draw[line width=0.4mm, black,-> ] (I3) to[out=-90,in=90] (I4);
		\draw[line width=0.4mm, black,-> ] (I4) to[out=-90,in=90] (F1);

		\end{tikzpicture}
		}
\hfill
\resizebox{2.7in}{2in}{			
\begin{tikzpicture}[
		sqRED/.style={rectangle, draw=black!240, fill=red!15, very thick, minimum size=7mm},
		sqBLUE/.style={rectangle, draw=black!240, fill=blue!15, very thick, minimum size=7mm},
		roundnode/.style={ellipse, draw=black!240, fill=green!15, very thick, dashed, minimum size=7mm},
		sqGREEN/.style={rectangle, draw=black!240, fill=green!15, very thick, minimum size=7mm},
		roundRED/.style={ellipse, draw=black!240, fill=red!15, very thick, dashed, minimum size=7mm},
		roundREDb/.style={ellipse, draw=black!240, fill=red!15, very thick, minimum size=7mm},
		roundYELL/.style={ellipse, draw=black!240, fill=yellow!20, very thick, dashed, minimum size=7mm},
		sqYELL/.style={rectangle, draw=black!240, fill=yellow!20, very thick, minimum size=7mm},
		roundORA/.style={ellipse, draw=black!240, fill=orange!15, very thick, dashed, minimum size=7mm},		
		sqORA/.style={rectangle, draw=black!240, fill=orange!15, very thick, minimum size=7mm},
		squaredBLACK/.style={rectangle, draw=black!240, fill=white!15, very thick, minimum size=7mm},	
		squaredBLACK2/.style={rectangle, draw=red!240, fill=white!15, very thick, dashed, minimum size=7mm},	
		roundBROWN/.style={ellipse, draw=black!240, fill=brown!15, very thick, dashed, minimum size=7mm},
		sqBROWN/.style={rectangle, draw=black!240, fill=brown!15, very thick, minimum size=7mm},		
        sqPURPLE/.style={rectangle, draw=black!240, fill=purple!15, very thick, minimum size=7mm},				
		]

		\node[sqORA](I1) at (-6, 0) { $\tau$1, $y_{\tau 1}=1$};
\node[] at (0, 2) {\Large SR $4(1-\lambda)\pi$};	
\node[] at (0, 1) {\Large LR $.8(1-\lambda)\pi$};			


		\node[sqORA](I2) at (-2, 0) { $\tau$2, $y_{\tau 2}=1$};
	
        \node[sqORA](I3) at (2, 0) { $\tau$3, $y_{\tau 3}=1$};

        \node[sqORA](I4) at (6, 0) { $\tau$4, $y_{\tau 3}=1$};

		\node[sqGREEN](F1) at (0, -5) { F1, $y_{F1}=1$};

		\tikzset{thick edge/.style={-, black, fill=none, thick, text=black}}
		\tikzset{thick arc/.style={->, black, fill=black, thick, >=stealth, text=black}}
		

		\draw[line width=0.4mm, black,-> ] (I1) to[out=-90,in=90] (F1);
		\draw[line width=0.4mm, black,-> ] (I2) to[out=-90,in=90] (F1);
		\draw[line width=0.4mm, black,-> ] (I3) to[out=-90,in=90] (F1);
		\draw[line width=0.4mm, black,-> ] (I4) to[out=-90,in=90] (F1);

		\end{tikzpicture}
		}	
\hfill
\resizebox{.5in}{2in}{			
\begin{tikzpicture}[
		sqRED/.style={rectangle, draw=black!240, fill=red!15, very thick, minimum size=7mm},
		sqBLUE/.style={rectangle, draw=black!240, fill=blue!15, very thick, minimum size=7mm},
		roundnode/.style={ellipse, draw=black!240, fill=green!15, very thick, dashed, minimum size=7mm},
		sqGREEN/.style={rectangle, draw=black!240, fill=green!15, very thick, minimum size=7mm},
		roundRED/.style={ellipse, draw=black!240, fill=red!15, very thick, dashed, minimum size=7mm},
		roundREDb/.style={ellipse, draw=black!240, fill=red!15, very thick, minimum size=7mm},
		roundYELL/.style={ellipse, draw=black!240, fill=yellow!20, very thick, dashed, minimum size=7mm},
		sqYELL/.style={rectangle, draw=black!240, fill=yellow!20, very thick, minimum size=7mm},
		roundORA/.style={ellipse, draw=black!240, fill=orange!15, very thick, dashed, minimum size=7mm},		
		sqORA/.style={rectangle, draw=black!240, fill=orange!15, very thick, minimum size=7mm},
		squaredBLACK/.style={rectangle, draw=black!240, fill=white!15, very thick, minimum size=7mm},	
		squaredBLACK2/.style={rectangle, draw=red!240, fill=white!15, very thick, dashed, minimum size=7mm},	
		roundBROWN/.style={ellipse, draw=black!240, fill=brown!15, very thick, dashed, minimum size=7mm},
		sqBROWN/.style={rectangle, draw=black!240, fill=brown!15, very thick, minimum size=7mm},		
        sqPURPLE/.style={rectangle, draw=black!240, fill=purple!15, very thick, minimum size=7mm},				
		]
\node[] at (0, 2) {};	
		\end{tikzpicture}
		}	
\hfill
\resizebox{2.5in}{2in}{			
\begin{tikzpicture}[
		sqRED/.style={rectangle, draw=black!240, fill=red!15, very thick, minimum size=7mm},
		sqBLUE/.style={rectangle, draw=black!240, fill=blue!15, very thick, minimum size=7mm},
		roundnode/.style={ellipse, draw=black!240, fill=green!15, very thick, dashed, minimum size=7mm},
		sqGREEN/.style={rectangle, draw=black!240, fill=green!15, very thick, minimum size=7mm},
		roundRED/.style={ellipse, draw=black!240, fill=red!15, very thick, dashed, minimum size=7mm},
		roundREDb/.style={ellipse, draw=black!240, fill=red!15, very thick, minimum size=7mm},
		roundYELL/.style={ellipse, draw=black!240, fill=yellow!20, very thick, dashed, minimum size=7mm},
		sqYELL/.style={rectangle, draw=black!240, fill=yellow!20, very thick, minimum size=7mm},
		roundORA/.style={ellipse, draw=black!240, fill=orange!15, very thick, dashed, minimum size=7mm},		
		sqORA/.style={rectangle, draw=black!240, fill=orange!15, very thick, minimum size=7mm},
		squaredBLACK/.style={rectangle, draw=black!240, fill=white!15, very thick, minimum size=7mm},	
		squaredBLACK2/.style={rectangle, draw=red!240, fill=white!15, very thick, dashed, minimum size=7mm},	
		roundBROWN/.style={ellipse, draw=black!240, fill=brown!15, very thick, dashed, minimum size=7mm},
		sqBROWN/.style={rectangle, draw=black!240, fill=brown!15, very thick, minimum size=7mm},		
        sqPURPLE/.style={rectangle, draw=black!240, fill=purple!15, very thick, minimum size=7mm},				
		]

		\node[sqORA](I1) at (-6, 0) { $\tau$1, $y_{\tau 1}=1$};
		


		\node[sqORA](I2) at (-2, 0) { $\tau$2, $y_{\tau 2}=1$};
	
        \node[sqORA](I3) at (2, 0) { $\tau$3, $y_{\tau 3}=1$};

        \node[sqORA](I4) at (6, 0) { $\tau$4, $y_{\tau 3}=1$};
\node[] at (0, 2) {\Large SR $(1-\lambda)\pi$};	
\node[] at (0, 1) {\Large LR $.8(1-\lambda)\pi$};		
		\node[sqGREEN](F1) at (-6, -5) { F1, $y_{F1}=1$};
		\node[sqGREEN](F2) at (-2, -5) { F2, $y_{F2}=1$};
		\node[sqGREEN](F3) at (2, -5) { F3, $y_{F3}=1$};
		\node[sqGREEN](F4) at (6, -5) { F4, $y_{F4}=1$};

		\tikzset{thick edge/.style={-, black, fill=none, thick, text=black}}
		\tikzset{thick arc/.style={->, black, fill=black, thick, >=stealth, text=black}}
		

		\draw[line width=0.4mm, black,-> ] (I1) to[out=-90,in=90] (F1);
		\draw[line width=0.4mm, black,-> ] (I2) to[out=-90,in=90] (F2);
		\draw[line width=0.4mm, black,-> ] (I3) to[out=-90,in=90] (F3);
		\draw[line width=0.4mm, black,-> ] (I4) to[out=-90,in=90] (F4);
	

		\end{tikzpicture}
		}
		  \caption{\footnotesize{Three different supply chain configurations and the corresponding impacts.  In each case the endowment of labor is 5 units and each input technology uses one unit of labor, with the remaining unit of labor used in the production of the final good(s).  In the vertical and horizontal supply chains the complexity is $S=4$ while it is
		  $S=1$ in the parallel case.  The number of final goods downstream of any input is $m=1$ in each case, and the corresponding $q$s are $.5, .2, .8$.   The corresponding price vectors are
		   $p_{vertical}=\left(\frac{1}{5},\frac{1}{5},\frac{2}{5},\frac{3}{5},\frac{4}{5},{1}\right)$,
		    $p_{horizontal}=\left(\frac{1}{5},\frac{1}{5},\frac{1}{5},\frac{1}{5},\frac{1}{5},{1}\right)$,
		    $p_{parallel}=\left(\frac{1}{5},\frac{1}{5},\frac{1}{5},\frac{1}{5},\frac{1}{5},\frac{1}{4},\frac{1}{4},\frac{1}{4},\frac{1}{4}\right)$.
		    }}
    \label{fig:horivert}
\end{figure}

The contrast as a function of the supply network shape is illustrated in Figure \ref{fig:horivert}.  We see that the short run is the same regardless of whether the supply chain is horizontal or vertical, since each input still disrupts the final good.  By contrast, when the supply chains are parallel with the same number of inputs but used for four different final goods, then the short-run disruption is lowered.
The long-run disruption shows very different patterns.  Not only are those disruptions much smaller, but here they differ between the horizontal and vertical,  but not between the horizontal and parallel, as vertical supply chains build up costs, while the input costs are the same across the horizontal and parallel.

\subsection{Globalization}

We next discuss what happens when globalization, modeled as reduced transportation costs,  leads to
increased specialization and the conditions hypothesized in Proposition \ref{prop:general_industry_shocks} apply.
The idea is simple: As transportation costs become sufficiently low, technological diversity decreases and each good becomes single-sourced by the least costly technology, and technology shocks become industry shocks.

In particular, suppose that the set of world technologies is such that if there were no transportation costs, then no two distinct technologies that produce the same good use exactly the same
amount of labor (in total, directly plus indirectly in terms of their most efficient equilibrium input mix).   This is effectively a generic property in terms of sets of technologies $T$'s, as it is a knife-edge case that two different technologies end up using exactly the same amount of labor.

In this case, once transportation costs drop sufficiently, any good that is produced in equilibrium ends up being sourced from the same technology.

The fact that this implies that the bound from Proposition \ref{prop:shocked_GDP} binds can be seen as follows. Every affected final good technology has a path from some shocked technology to it. Since goods are all single-sourced, each technology on the path between an affected final good technology and a shocked technology is fully reduced by $(1-\lambda)$. Intuitively, as all transportation costs decrease sufficiently we get specialization, which eliminates technological or sourcing diversity and turns technology-specific shocks into industry-wide shocks. Thus, by Proposition \ref{prop:general_industry_shocks}, the bound from Proposition \ref{prop:shocked_GDP} binds.

Another implication 
is that moving to a frictionless economy changes fragility.
More generally  consolidation of supply chains changes fragility. Consolidation of final good production tends to make shocks more severe when they hit, while consolidation of supply chains further upstream of a final goods producer tends to make disruptions to that final goods producer less likely.

\begin{proposition}\label{prop:consolidation}
Consider two economies, indexed $1,2$, that share the same technology sets $T_1,\ldots,T_n$ and have equilibria that satisfy the bound from Proposition \ref{prop:shocked_GDP} from the shock to any active technology. Consider some final good $f$ that is produced in equilibrium by both economies by some technology $\tau^f$, and suppose that
the equilibrium output of technology $\tau^f$ is higher in economy 2 than 1. If the set of technologies that lie on a directed path to $\tau^f$ is smaller in economy 2 (i.e.,  $\mathcal{G}^2(\tau^f)\subsetneq \mathcal{G}^1(\tau^f)$), and shocks are independent across technologies with the same proportional disruption, then the probability of a disruption to $\tau^f$ is lower in economy 2 than economy 1
\emph{in both the short and long run}, but the expected short-run size of that disruption conditional upon occurrence is higher in economy 2 than 1.\footnote{The expected size of the long-run disruption impact is ambiguous as it depends on the expected cost of the disrupted technology.}
\end{proposition}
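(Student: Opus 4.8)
The plan is to reduce the whole statement to a single structural characterization of when, and by how much, $\tau^f$ is disrupted in the short run, and then read off both conclusions by comparing the two networks. The characterization I would aim for is: in either economy $e\in\{1,2\}$, the final good $f$ is disrupted in the short run if and only if the (uniform, level-$\lambda$) shock hits at least one technology in $\mathcal{G}^e(\tau^f)$ --- the technologies on a directed path to $\tau^f$, including $\tau^f$ itself --- and, conditional on this, the output of $\tau^f$ is \emph{exactly} $\lambda\, y^e_{\tau^f}$. Given this, ``a disruption to $\tau^f$'' is simply the event ``a shock lands in $\mathcal{G}^e(\tau^f)$'', whose probability shrinks when the chain shrinks, while the magnitude of the disruption scales with $y^e_{\tau^f}$, which grows.

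For the ``which shocks disrupt $\tau^f$'' part I would invoke Proposition~\ref{prop:shock_prop_alg} directly: the Shock Propagation Algorithm lowers a link strictly exactly when that link lies on a directed path out of a shocked node, and leaves all other links unchanged, so $\widehat y_{\tau^f}<y^e_{\tau^f}$ is equivalent to $\tau^f$ being downstream of (or equal to) some shocked node, i.e. to a shocked node lying in $\mathcal{G}^e(\tau^f)$; this uses only the algorithm's structure and survives having several technologies shocked at once. For the \emph{exact size}, fix a shocked $\tau'\in\mathcal{G}^e(\tau^f)$; since $\tau'$ is active, the hypothesis that the bound of Proposition~\ref{prop:shocked_GDP} binds for the shock to $\tau'$ says the short-run GDP loss equals $(1-\lambda)\sum_{\tau''\in F(\{\tau'\})}p_{\tau''}y_{\tau''}$. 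I would then use the sub-lemma that, because the only primitive reduction is the common factor $\lambda$ and rationing is proportional, every technology's output in the short-run outcome stays weakly above $\lambda$ of its pre-shock level (an immediate induction on the algorithm: an output equals the minimum over suppliers of their relative reductions, a minimum of quantities each $\ge\lambda$). Hence the loss of each $\tau''\in F(\{\tau'\})$ is at most $(1-\lambda)p_{\tau''}y_{\tau''}$, and since only these final goods change, equality of the sum forces equality term by term, giving $\widehat y_{\tau^f}=\lambda y^e_{\tau^f}$. When several technologies are shocked, pick one of them in $\mathcal{G}^e(\tau^f)$: monotonicity of the short-run outcome in the shocked set (the isotonicity behind Proposition~\ref{prop:shock_prop_alg}) pins $\widehat y_{\tau^f}\le\lambda y^e_{\tau^f}$ from above, and the sub-lemma pins it from below, so again $\widehat y_{\tau^f}=\lambda y^e_{\tau^f}$ exactly.

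Both conclusions are then short. Placing independent technology shocks on a common probability space, with each technology carrying a positive shock probability, ``$f$ is disrupted in economy $e$'' is the event $\bigcup_{\tau'\in\mathcal{G}^e(\tau^f)}\{\tau'\text{ shocked}\}$; this event --- being a statement about which shocks reach $\tau^f$'s supply chain --- is the same object in the short and the long run, the horizon affecting only the \emph{consequences}. Since $\mathcal{G}^2(\tau^f)\subsetneq\mathcal{G}^1(\tau^f)$ and each of the extra technologies is shocked with positive probability independently, this union has strictly smaller probability in economy $2$, which gives the first conclusion for both horizons. For the short-run conditional size, the characterization says the lost output of $f$ conditional on a disruption is deterministically $(1-\lambda)y^e_{\tau^f}$ units of good $f$ (equivalently, in value, $(1-\lambda)p_f y^e_{\tau^f}$ once the price of $f$ is normalized to $1$ in each economy); since $y^2_{\tau^f}>y^1_{\tau^f}$, this conditional expectation is strictly larger in economy $2$. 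Finally, the footnote's ambiguity of the \emph{long-run} size follows because, by Hulten's Theorem, the long-run size conditional on a disruption is the expenditure $p_{\tau'}y^e_{\tau'}$ on whichever upstream technology $\tau'$ was actually shocked, and this can be larger or smaller in economy $2$.

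I expect the main obstacle to be the exact-size step --- turning ``the aggregate bound on the downstream final goods binds'' into ``this particular final good $\tau^f$ loses exactly the factor $(1-\lambda)$''. The work there is the sub-lemma that proportional rationing from a uniform shock level keeps every output weakly above $\lambda$ of its pre-shock value (so that equality in the sum propagates to each summand), together with the monotonicity-in-the-shocked-set argument needed to cover realizations with more than one simultaneous shock. A secondary, essentially bookkeeping, point is the price comparison underlying the value version of the size claim: one either states the size in physical units of $f$, or fixes a common normalization of the final-good price across the two economies, which is harmless since the two economies differ only in which technologies are active, not in the set of goods that exist.
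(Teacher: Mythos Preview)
Your proposal is correct and matches the paper's approach; the paper in fact omits the proof, remarking only that it is ``straightforward'' and flagging the single nuance that multiple simultaneous upstream shocks have the same consequence as a single one when the bound holds. Your monotonicity-plus-sub-lemma argument (the sub-lemma being exactly the content of the paper's proof of Proposition~\ref{prop:shocked_GDP}) is precisely how one makes that nuance rigorous.
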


Proposition \ref{prop:consolidation} is straightforward to prove, and so we omit its proof.\footnote{The main nuance is that there is a higher chance in economy 1 that multiple upstream technologies are shocked simultaneously, but given that the bound holds, then whether one or more technologies are shocked by the same amount, the consequences are identical.}
Nonetheless, Proposition \ref{prop:consolidation} is important to state.  It has as a corollary that if a lowering of trade costs leads to the consolidation of supply chains and---via improved efficiency---larger outputs of final goods from some single technology, then there are fewer technologies upstream from the final good that might fail, but greater consequences when any of them are shocked.

The trade off between the probability of a disruption and the size of the disruption is a fundamental point that applies to increasingly specialized supply chains. Of course, this holds all else constant.  As new technologies emerge due to globalization the production network evolves and supply chains can become more complex and cross more borders. This can expose them more to shocks such as political disruptions and issues that disrupt shipping. Moreover, the overall distribution of shocks can co-evolve with the supply network. This opens the possibility that one could see both an increase in the probability of a disruption {\sl and} the size of the disruption.

\section{Concluding Remarks}

We comment on the flexibility of the model and identify further explorations for which the model can serve as a foundation.

\noindent \textbf{Sanctions}
One potential further application of the model is to estimate the impact of sanctions.
Our approach would provide an estimate of the impact of targeted sanctions, and can be used as a foundation for selectively targeting goods and services that would most impact parts of the world economy and not others.
As is clear from our analysis, the short-run and long-run impacts of sanctions can differ dramatically, depending on the ability of target countries or industries to reallocate over time.\footnote{See \cite{Moll2022} for an analysis of the impact of Russian energy sanctions for the German economy.}

\noindent \textbf{Edge shocks}
We specified shocks as being incident on nodes (country-specific industries or technologies), but
our analysis extends directly to analyze edge shocks. Suppose for example that a good is produced
and output is shipped through the Red Sea and other output is shipped elsewhere, and we want to consider the possibility of a shock that disrupts shipping through the Red Sea. Then we can simply divide the original technology into two different nodes, one that ships via the Red Sea and another that ships elsewhere.   Shocking the first node is equivalent to shocking the Red Sea edge of the original technology, and we have seen that such shocks can cause relatively severe short term disruptions.\footnote{ For example, in an article title ``Tesla to Halt Production in Germany as Red Sea Conflict Hits Supply Chains,'' The Wall Street Journal reports that Tesla had to shut down its main European factory for two weeks following the Houthi attacks in the Red Sea (January 12, 2024).}

\noindent \textbf{Inventories and Endogenous Robustness}
In our benchmark of a short-run disruption, disrupted goods are fully missing from production.
Firms can maintain (costly) inventories of inputs to avoid issues with supply chain disruptions.   An existing inventory can buffer some of the impact depending on a disruption's magnitude and how long it lasts.
Firms might even maintain alternative production technologies, some which are inefficient, but can serve as backups in times of disruption.
This differs across industries and the perceived dangers faced from disruptions.

Although producers prefer to avoid disruptions, excess inventory costs (beyond minimum ones to run production processes) may not be compensated in the face of competitive pressures, and thus excess inventories tend to be low.  To the extent that fully contingent contracts are not written in advance for final consumption goods (and, indeed, most goods are sold on spot markets), market incompleteness favors firms with lower costs and thus pressure them to avoid excess inventory costs.  Firms that are more robust, gain by sales when others are disrupted, but to the extent that they cannot capitalize on those profits ex ante (due to incomplete contracts), the market may be inefficient.
This is an interesting topic for further consideration and, again, our results provide a foundation on which to build.

\noindent \textbf{Services}
While it might be natural to think of the model in terms of physical goods rather than services, the model includes services (as both intermediate and final `goods'). Inputs, including labor, are used to create outputs in the same way, and disruptions propagate through the supply network equivalently.  Given that labor is often less mobile than other (intermediate) goods, it might be that structure of networks around services vary depending on what types of service it is: e.g., coding which can be done remotely, versus healthcare which is done more locally.  The model provides the structure with which to estimate shock effects and how those differ across different supply network structures, and this application presents an interesting future agenda.

\noindent \textbf{Endogenous Power}
Our power calculations provide an interesting foundation for further analysis of endogenous supply chains.  Choosing from where to source different imports could impact the leverage that one country has over another.   For example,  \cite{de2019disentangling} provides evidence that Mexican  exports to the US use a higher fraction of US-sourced goods than Mexican exports to other countries.  This would lower the power the US would have by lowering Mexican imports, as it would feedback to US sourced inputs.  Such strategic supply chain construction is another interesting open question.

{
\linespread{1}\selectfont
\bibliographystyle{abbrv}
\bibliography{globalization}
}

\newpage
\appendix

{\large \center{\bf Supplementary Appendix for ``Supply Chain Disruptions, the Structure of Production
Networks, and the Impact of Globalization,'' by
Matthew L. Elliott and Matthew O. Jackson}}

\bigskip
\setcounter{section}{0}
\setcounter{page}{1}

\renewcommand{\thesection}{\Roman{section}}

\section{Omitted Proofs}\label{sec:proofs}

\noindent {\bf Proof of Proposition~\ref{prop:hulten}:}

We first prove the result for final goods, and then extend it to intermediate goods.

Euler's Homogeneous Function Theorem implies that for any $(c_1, \ldots,   c_F)\in \Re^F_+$:
 \begin{equation}
 \label{euler}
 U(c_1, \ldots,   c_F) =  \sum_f c_f \frac{\partial U}{\partial c_f}.
 \end{equation}

By Lemma \ref{lemma:representative_consumer} total world consumption choices are as if there is a representative consumer with preferences represented by the homogeneous of degree $1$ utility function $U(c_1, \ldots,   c_F)$ and wealth $I=\sum_n L_n p_{Ln}$.
The representative consumer's problem is choosing non-negative amounts of the final goods to consume to
$${\rm maximize \ \ } U(c_1, \ldots, c_F) {\rm \ \  subject \ \ to \ \ } \sum_f p_f c_f \leq I.$$
Thus, in equilibrium,
$$\frac{\partial U}{\partial c_f} = \lambda p_f$$
for all final goods consumed in positive amounts, where $\lambda>0$ is the Lagrange multiplier on the wealth constraint.
Thus, from (\ref{euler}) it follows that in equilibrium
\begin{equation}
U(c_1, \ldots,   c_F) =  \lambda \sum_f c_f p_f.
\label{eq:U}
\end{equation}

Moreover, in equilibrium, constraint (\ref{clearfinal}) holds. Allowing $\tau_f$ to change from one this implies that:\footnote{One can interpret $y_\tau$ as a number of units of operation of the technology, and then $\tau_f$ different from one scales the amount produced.}
\begin{equation}
c_{f} = \sum_{\tau \in T: O(\tau )=f} \tau_f y_{\tau}
 \label{clearfinal2}
\end{equation}
Substituting (\ref{clearfinal2}) into (\ref{eq:U}) yields
\begin{equation}
U(c_1, \ldots,   c_F) =  \lambda \sum_f p_f \sum_{\tau \in T: O(\tau )=f} \tau_f y_{\tau}.
\label{eq:U2}
\end{equation}
As this is an equilibrium expression for utility, the envelope theorem can be applied and hence, for a given production technology $\tau$ with $O(\tau)=f$,
\begin{equation}
\frac{\partial U}{\partial \tau_f} = \lambda p_f y_{\tau} \label{eq:partial_U}
\end{equation} Then from (\ref{eq:partial_U}) and (\ref{eq:U}),
$$
\frac{\partial \log(U)}{\partial \log(\tau_f)} = \left(\frac{\partial U}{\partial \tau_f}\right)\left(\frac{\tau_f}{U}\right) =  \frac{\lambda p_f y_{\tau}\tau_f}{U}\biggr\rvert_{\tau_f=1}= \frac{p_f y_{\tau}}{\sum_f c_f p_f}.
$$
Note that as $GDP = \sum_f p_f c_f,$ by (\ref{eq:U}) $U= \lambda GDP$, and so we also have
$$
\frac{\partial \log(U)}{\partial \log(\tau_f)} =  \frac{\partial [ \log(GDP)+\log (\lambda)]}{\partial \log(\tau_f)} = \frac{\partial  \log(GDP)}{\partial \log(\tau_f)}
 = \frac{p_f y_{\tau}}{GDP}.
$$
We now extend this to intermediate goods.

The zero profit conditions for each firm allow the revenues of a firm to be equated to its costs, which can be expressed in terms of its suppliers' revenues, which are also equated to costs, and so on. Repeating this process, if a technology $\widehat{\tau}$ produces an intermediate good that is used directly or indirectly in the production of a final good, the revenues generated by sales of this final good can be expressed in terms of the revenues generated by technology $\widehat{\tau}$, and the remaining direct and indirect labor costs associated with the production of the final good.

Consider a technology $\tau$ with $O(\tau)=f\in F$. By the zero profit condition for technology $\tau$

$$p_f y_{\tau}=\sum_{k\in I(\tau)}\sum_{\tau':O(\tau')=k} p_{\tau'}x_{\tau'\tau}+\sum_n p_n x_{n\tau}.$$

But, for an input technology $\tau'$ such that $O(\tau')=k\in I(\tau)$ and $x_{\tau'\tau}>0$, we also have, by the zero profit condition

$$p_{\tau'} x_{\tau'\tau}=\left(\frac{x_{\tau'\tau}}{y_{\tau'}}\right)\left(\sum_{k'\in I(\tau')}\sum_{\tau'':O(\tau'')=k'} p_{\tau''}x_{\tau''\tau'}+\sum_n p_n x_{n\tau'}\right).$$

Iteratively substituting in these expressions for all intermediate good technologies except the shocked one, $\widehat{\tau}$, the obtained expression converges to one with the following form:

\begin{equation}p_f y_{\tau} = p_{\widehat{\tau}} \widehat{x}_{\widehat{\tau} \tau} +  \sum_n p_n \widehat{x}_{n\tau},\label{eq:telescoped_inputs}\end{equation}

where $\widehat{x}_{\widehat{\tau} \tau}$ is the amount of good $O(\widehat{\tau})$ produced by technology $\widehat{\tau}$ that ultimately ends up being used (directly or indirectly) by final good technology $\tau$ and $\widehat{x}_{n\tau}$ is the amount of labor, other than that used by technology $\widehat{\tau}$, from country $n$ that ultimately ends up being used (directly or indirectly) by final good technology $\tau$.

Note that as there is no waste in equilibrium all production of an intermediate goods can be assigned to final goods and hence \begin{equation}\sum_{f\in F}\sum_{\tau:O(\tau)=f}\widehat{x}_{\widehat{\tau} \tau}=y_{\widehat{\tau}}.\label{eq:presevered_output}\end{equation}

Consider an intermediate good technology $\widehat{\tau}$ with $O(\widehat{\tau})=k$, and substitute \ref{eq:telescoped_inputs} into \ref{eq:U2}, setting $\tau_f=1$ (as we are not varying it) and adding in $\widehat{\tau}_{k}=1$ (as we will be varying it). This gives
\begin{equation}
U(c_1, \ldots,   c_F) =  \lambda \sum_f \sum_{\tau \in T: O(\tau )=f} \left(\widehat{\tau}_{k} p_{\widehat{\tau}} \widehat{x}_{\widehat{\tau} \tau} +  \sum_n p_n \widehat{x}_{n\tau}\right).
\nonumber
\end{equation} As this is an equilibrium expression for utility, the envelope theorem can again be applied and hence
\begin{equation}
\frac{\partial U}{\partial \widehat{\tau}_k} = \lambda p_{\widehat{\tau}} \sum_f \sum_{\tau \in T: O(\tau )=f} \widehat{x}_{\widehat{\tau} \tau}.\nonumber
\end{equation}
Substituting in \ref{eq:presevered_output} we get
\begin{equation}
\frac{\partial U}{\partial \widehat{\tau}_k} = \lambda p_{\widehat{\tau}} y_{\widehat{\tau}}.\nonumber
\end{equation}
Thus, $$
\frac{\partial \log(U)}{\partial \log(\tau_f)} =  \frac{\partial [ \log(GDP)+\log (\lambda)]}{\partial \log(\widehat{\tau}_k)} = \frac{\partial  \log(GDP)}{\partial \log(\widehat{\tau}_k)}
 = \frac{p_{\widehat{\tau}} y_{\widehat{\tau}}}{GDP},
$$
which is the claimed expression.\eproof

\subsection{Proof of Proposition \ref{prop:shock_prop_alg}}

\begin{proof}
We begin by showing the first half of the proposition: that the output of the algorithm, $\omega^{*}$, is unique and that $\omega^{*}_{ij} \leq \omega_{ij}$ $\forall i,j \in \mathcal{N}$. Moreover, we show that for any path $\mathcal{P}$ starting at an affected node $i$ we have that $\omega^{*}_{jk} < \omega_{jk}$ $\forall j,k \in \mathcal{P}$.

Start with the original equilibrium flow network $\omega$, and reduce the value of the outlinks for the shocked nodes to $\lambda$ their initial level. Let $\Phi$ be the vector consisting of all non-zero link-weights in this network. Now, consider the space  $S=\prod_{j = 1, \ldots , |\Phi|} [0,\phi_{j}]$. Define the partial ordering on this space such that $s\succeq \widehat{s}$ for $s,\widehat{s} \in S$ if it is weakly greater entry by entry (i.e., $s\succeq \widehat{s}$ if and only if $s_i \geq \widehat{s}_i$ for all $i$). Note that $(S,\succeq)$ is a complete lattice.

We represent the flow along each of the links in each iteration of the algorithm by some $\omega \in S$.  It is clear that each iteration of the algorithm provides a continuous mapping $\Gamma:S\rightarrow S$ with $\Gamma(\omega)\preceq \omega$. This implies that $\Gamma(\cdot)$ is an isotone function with respect to $(S,\succeq)$ and hence, by the Knaster-Tarski theorem, the set of fixed points of $\Gamma(\cdot)$ is a complete lattice under $\succeq$. There is thus a largest fixed point, with respect to the partial ordering $\succeq$, which we denote by $\omega^*$.

As the space $S$ is compact, and in each iteration $\sum_i \omega_i$ weakly decreases, the algorithm converges by the monotone convergence theorem.
Thus, the limit of the algorithm, $\omega^{*}$, is well-defined and unique and $\omega^{*}_{ij} \leq \omega_{ij}$ $\forall i,j \in \mathcal{N}$.

The fact that $\omega^{*}$ is a solution to the minimum disruption problem is argued as follows. We have shown that the shock propagation algorithm converges to a fixed point of $\Gamma$. By the construction of $\Gamma$ any fixed point of it must satisfy the constraints in the minimum disruption problem (if there is any violation then the algorithm keeps iterating). Moreover, any flows $\omega$ that are not a fixed point of $\Gamma$ must violate one of the constraints of the minimum disruption problem---otherwise the algorithm would terminate. This implies that a solution to the minimum disruption problem must be a fixed point of $\Gamma$, and specifically the fixed point that maximizes the value of final goods produced. So we just need to show that the algorithm converges to this fixed point of $\Gamma$. To show maximality with respect to network structure notice that each reduction in the flow imposed by the algorithm is inevitable, so the flows $\omega^{*}$ represent the maximum that can be produced final good by final good after the disruption.

Finally, note that for each country-industry node that ends up with a strictly lower value of a flow on an in-link relative to the initial equilibrium (i.e., $\omega^*_i<\omega_i$), its output, and hence the flow on each of its out-links, must be strictly lower than in ${\mathcal{G}}$. This proves that for any path $\mathcal{P}$ starting at an affected node $i$ we have that $\omega^{*}_{jk} < \omega_{jk}$ $\forall j,k \in \mathcal{P}$.
\end{proof}

\subsection{Proof of Proposition \ref{prop:shocked_GDP}}


\begin{proof}The upper bound follows from the algorithm directly, noting that if none of a firm's suppliers have production levels below $\lambda$ of their initial levels, then, given the structure of the production functions, the firm's output is ever below $\lambda$ of its initial level. Thus, given that all goods start with at least $\lambda$ of their initial levels, any limit point of the algorithm has the production of each good at $\lambda$ of its initial level or higher.
\end{proof}

\subsection{Proof of Proposition \ref{prop:cut}}

\begin{proof}
Suppose that the edges adjacent to $T^{Shocked}$ constitute a $( R(T^{Shocked}), F(T^{Shocked}))$-cut.
One way in which we can assign the equilibrium flows (before the shock) to supply chains is by assigning them to supply chains in which all input goods are single-sourced and where, as there are no cycles in the disrupted industries subnetwork, there are no cycles in each such supply chain. We then observe that (i) for a set of nodes $T^{Shocked}$ to be an $( R(T^{Shocked}), F(T^{Shocked}))$-cut every path in the network leading to every final good $k\in  F(T^{Shocked})$ must contain at least one node in $T^{Shocked}$; and (ii), that for any path that passes through a node in $T^{Shocked}$, output is reduced by $(1-\lambda)$.  Part (ii) holds because along each path all goods are single-sourced---thus the output of each node downstream of the shocked node must have its output reduced by exactly $(1-\lambda)$ of its initial level. Parts (i) and (ii) together imply that the output of paths in the network leading to all final goods in $ F(T^{Shocked})$ are reduced by a proportion $1-\lambda$ of their initial level, and the upper bound is obtained.
\end{proof}

\subsection{Proof of Proposition \ref{prop:disruption_frontier}}

\begin{proof}
Before presenting the proof, let us mention the logic.
Given that no technological flow is sent to 0, and all of $j$'s proportional rationings are done relative to the original equilibrium flows (regardless of the sequence of disruptions), it follows that the marginal impact of a disruption consistent with disruption of technology $\tau_i\in T_i$, does not depend on what other disruptions are conducted before or after it.   
Thus, the total value of a sequence of disruptions can be written as a weighted sum of the values of the powers of marginal disruptions.  It follows that the total value can be no more than the maximum power across the marginal disruptions, which comes from a single $\tau_i\in T_i$.

More formally, consider a disruption $(x_{\tau\tau'},y_{\tau},L_{\tau})_{\tau,\tau'}\in \mathcal{D}_i$ that is obtained via a sequence of $K$ individual technology disruptions by Country $i$ given an initial equilibrium $(x^*_{\tau\tau'},y^*_{\tau},L^*_{\tau})_{\tau,\tau'}$. 
Let $(x^{\tilde K}_{\tau\tau'},y^{\tilde K}_{\tau},L^{\tilde K}_{\tau})_{\tau,\tau'}$ be the outcome reached after ${\tilde K}\leq K$ of the individual technology shocks. Thus $(x^{\tilde K}_{\tau\tau'},y^{\tilde K}_{\tau},L^{\tilde K}_{\tau})_{\tau,\tau'}=(x_{\tau\tau'},y_{\tau},L_{\tau})_{\tau,\tau'}$. As all such outcomes are consistent (by construction) $$y^{\tilde K}_{\tau}=\sum_{\tau'} x^{\tilde K}_{\tau\tau'}=\sum_{\tau':O(\tau')=g} \frac{ x^{\tilde K}_{\tau'\tau}}{\tau_g},$$ for all $g\in I(\tau)$, all $\tau$ and all $\tilde K\leq K$. 

Thus, if we let $d^{k}_{\tau}$ be the output reduction of technology $\tau$, and let $d^{k}_{\tau\tau'}$ be the flow reduction from technology $\tau$ to $\tau'$, caused by the $k$th individual technology disruption then \begin{equation}y^{*}_{\tau}-\sum_{k=1}^{\tilde K} d^{k}_{\tau}=\sum_{\tau'} (x^*_{\tau\tau'}- \sum_{k=1}^{\tilde K} d^{k}_{\tau\tau'})=\sum_{\tau':O(\tau')=g} \frac{1}{\tau_g}\bigg(x^*_{\tau'\tau}-\sum_{k=1}^{\tilde K} d^{k}_{\tau'\tau}\bigg),\label{eq:disruption_sequence_consistency}\end{equation} for all $g\in I(\tau)$, all $\tau$ and all $\tilde K\leq K$.

This implies that for $\tilde K=1$ the disruption $(x^{1}_{\tau\tau'},y^{1}_{\tau},L^{1}_{\tau})_{\tau,\tau'}$ is a consistent individual technology disruption (as it must be because in the first step we start from the equilibrium outcome). Now consider $\tilde K=k>1$. As equation (\ref{eq:disruption_sequence_consistency}) holds for $\tilde K-1$, this implies that $$-d^{k}_{\tau}=-\sum_{\tau'} d^{k}_{\tau\tau'}=-\sum_{\tau':O(\tau')=g} \frac{d^{k}_{\tau'\tau}}{\tau_g},$$ for all $g\in I(\tau)$ and all $\tau$, so using consistency of the initial equilibrium $$y^*_{\tau}-d^{k}_{\tau}=\sum_{\tau'}x^*_{\tau\tau'}- d^{k}_{\tau\tau'}=\sum_{\tau':O(\tau')=g} \frac{1}{\tau_g}\left(x^*_{\tau\tau'}-d^{k}_{\tau'\tau}\right),$$ for all $g\in I(\tau)$ and all $\tau$. Thus each disruption in the sequence is by itself a consistent individual technology disruption with respect to the initial equilibrium.

As by assumption $x^*_{\tau\tau'}>0$ implies the final flow associated with the disruption $x^K_{\tau\tau'}>0$ and, absent any zero-outed flows, rationing that occurs at all of the $K$ steps in the sequence of disruptions is done relative to the initial equilibrium flows, the $k$th disruption is a valid and available individual technology disruption for all $k$, for the initial equilibrium.

Without loss of generality label these valid individual technology disruptions $\zeta_1,\dots, \zeta_K$. Each such consistent disruption in this set is associated with a change in employed labor in country $i$ and in country $j$. Let $\Delta L_i(\zeta_1)$ denote the change in employed labor in country $i$ due to disruption $\zeta_1$. The power of the disruption for country $i$ is then a strictly increasing function of $\sum_{k=1}^{K} \Delta L_j(\zeta_k)/\sum_{k=1}^{K} \Delta L_i(\zeta_k).$ Moreover, $$\frac{\sum_{k=1}^{K} \Delta L_j(\zeta_k)}{\sum_{k=1}^{K} \Delta L_i(\zeta_k)}\leq \max_{k}\frac{\Delta L_j(\zeta_k)}{\Delta L_i(\zeta_k)},$$ and so there exists a valid individual technology disruption to a technology in $T_i$ that is weakly more powerful.
\end{proof}


\section{Extended Example}\label{sec:extended_eg}

In this section we develop a richer example to better illustrate some of the key ideas. We begin with technological dependencies. Figure \ref{fig:tech} provides an example in which the technological dependencies exhibit a cycle.

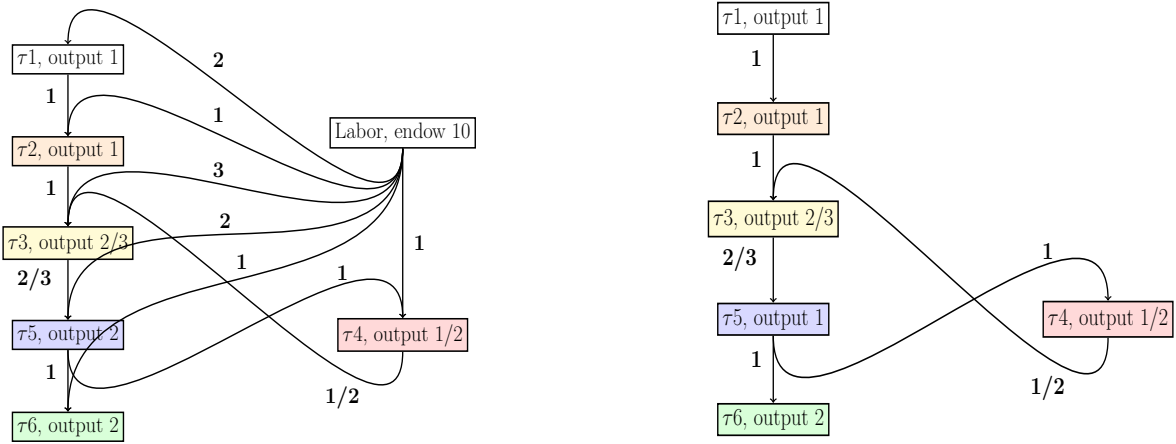
\begin{figure}[!ht]
    \subfloat[Technological dependencies\label{subfig:Tech}]
    {
\centering
\resizebox{2.9in}{3.2in}{		
\begin{tikzpicture}[
		sqRED/.style={rectangle, draw=black!240, fill=red!15, very thick, minimum size=7mm},
		sqBLUE/.style={rectangle, draw=black!240, fill=blue!15, very thick, minimum size=7mm},
		roundnode/.style={ellipse, draw=black!240, fill=green!15, very thick, dashed, minimum size=7mm},
		sqGREEN/.style={rectangle, draw=black!240, fill=green!15, very thick, minimum size=7mm},
		roundRED/.style={ellipse, draw=black!240, fill=red!15, very thick, dashed, minimum size=7mm},
		roundREDb/.style={ellipse, draw=black!240, fill=red!15, very thick, minimum size=7mm},
		roundYELL/.style={ellipse, draw=black!240, fill=yellow!20, very thick, dashed, minimum size=7mm},
		sqYELL/.style={rectangle, draw=black!240, fill=yellow!20, very thick, minimum size=7mm},
		roundORA/.style={ellipse, draw=black!240, fill=orange!15, very thick, dashed, minimum size=7mm},		
		sqORA/.style={rectangle, draw=black!240, fill=orange!15, very thick, minimum size=7mm},
		squaredBLACK/.style={rectangle, draw=black!240, fill=white!15, very thick, minimum size=7mm},	
		roundBROWN/.style={ellipse, draw=black!240, fill=brown!15, very thick, dashed, minimum size=7mm},
		sqBROWN/.style={rectangle, draw=black!240, fill=brown!15, very thick, minimum size=7mm},		
        sqPURPLE/.style={rectangle, draw=black!240, fill=purple!15, very thick, minimum size=7mm},				
		]
		\node[](S1) at (-3, -7) {};
        \node[](S2) at (3, 7) {};
        \node[](S3) at (13, 7) {};
        \node[](S4) at (13, -7) {};

		\node[squaredBLACK](R1) at (0, 5) {\Large $\tau1$, output 1};
        \node[] at (4.5, 5) {\Large $\bm{2}$};
        \node[] at (4.5, 3.5) {\Large $\bm{1}$};
        \node[] at (4.5, 2) {\Large $\bm{3}$};
        \node[] at (4.7, 0.6) {\Large $\bm{2}$};
        \node[] at (5.2, -.5) {\Large $\bm{1}$};

        \node[] at (-0.5, 4) {\Large $\bm{1}$};

		\node[sqORA, align=left](A1) at (0, 2.5) {\Large $\tau2$, output 1};
        \node[] at (-0.5, 1.5) {\Large $\bm{1}$};

		\node[sqYELL, align=left](C1) at (0, 0) {\Large $\tau3$, output 2/3};
        \node[] at (-1, -1) {\Large $\bm{2/3}$};

        \node[] at (8.2, -0.8) {\Large $\bm{1}$};

        \node[sqRED, align=left](D1) at (10, -2.5) {\Large $\tau4$, output 1/2};
        \node[] at (8.2, -4.2) {\Large $\bm{1/2}$};

		\node[sqBLUE, align=left](E1) at (0, -2.5) {\Large $\tau5$, output 2};
        \node[] at (-0.5, -3.5) {\Large $\bm{1}$};

		\node[sqGREEN, align=left](F1) at (0, -5) {\Large $\tau6$, output 2};
			
		
		\node[squaredBLACK, align=left](L) at (10, 3) {\Large Labor, endow 10};
		\node[] at (10.5, 0) {\Large $\bm{1}$};

		\tikzset{thick edge/.style={-, black, fill=none, thick, text=black}}
		\tikzset{thick arc/.style={->, black, fill=black, thick, >=stealth, text=black}}

		\draw[line width=0.4mm, black,-> ] (R1) to[out=-90,in=90] (A1);
		
        \draw[line width=0.4mm, black,-> ] (L) to[out=-90,in=90] (R1);
        \draw[line width=0.4mm, black,-> ] (L) to[out=-90,in=90] (A1);
        \draw[line width=0.4mm, black,-> ] (L) to[out=-90,in=90] (C1);
        \draw[line width=0.4mm, black,-> ] (L) to[out=-90,in=90] (D1);
        \draw[line width=0.4mm, black,-> ] (L) to[out=-90,in=90] (E1);
        \draw[line width=0.4mm, black,-> ] (L) to[out=-90,in=90] (F1);		
		
        \draw[line width=0.4mm, black,-> ] (A1) to[out=-90,in=90] (C1);
				
		\draw[line width=0.4mm, black,-> ] (D1) to[out=-90,in=90] (C1);

        \draw[line width=0.4mm, black,-> ] (E1) to[out=-90,in=90] (D1);
			
		\draw[line width=0.4mm, black,-> ] (C1) to[out=-90,in=90] (E1);

		\draw[line width=0.4mm, black,-> ] (E1) to[out=-90,in=90] (F1);
		

		\end{tikzpicture}
		}
	}
    \hfill
    \subfloat[Technological dependencies without labor pictured\label{subfig:tech_no_labor}]
    {
     \centering
\resizebox{2.9in}{3in}{		
\begin{tikzpicture}[
		sqRED/.style={rectangle, draw=black!240, fill=red!15, very thick, minimum size=7mm},
		sqBLUE/.style={rectangle, draw=black!240, fill=blue!15, very thick, minimum size=7mm},
		roundnode/.style={ellipse, draw=black!240, fill=green!15, very thick, dashed, minimum size=7mm},
		sqGREEN/.style={rectangle, draw=black!240, fill=green!15, very thick, minimum size=7mm},
		roundRED/.style={ellipse, draw=black!240, fill=red!15, very thick, dashed, minimum size=7mm},
		roundREDb/.style={ellipse, draw=black!240, fill=red!15, very thick, minimum size=7mm},
		roundYELL/.style={ellipse, draw=black!240, fill=yellow!20, very thick, dashed, minimum size=7mm},
		sqYELL/.style={rectangle, draw=black!240, fill=yellow!20, very thick, minimum size=7mm},
		roundORA/.style={ellipse, draw=black!240, fill=orange!15, very thick, dashed, minimum size=7mm},		
		sqORA/.style={rectangle, draw=black!240, fill=orange!15, very thick, minimum size=7mm},
		squaredBLACK/.style={rectangle, draw=black!240, fill=white!15, very thick, minimum size=7mm},	
		roundBROWN/.style={ellipse, draw=black!240, fill=brown!15, very thick, dashed, minimum size=7mm},
		sqBROWN/.style={rectangle, draw=black!240, fill=brown!15, very thick, minimum size=7mm},		
        sqPURPLE/.style={rectangle, draw=black!240, fill=purple!15, very thick, minimum size=7mm},				
		]
		\node[](S1) at (-3, -7) {};
        \node[](S2) at (2, 7) {};
        \node[](S3) at (13, 7) {};
        \node[](S4) at (13, -7) {};

		\node[squaredBLACK](R1) at (0, 5) {\Large $\tau1$, output 1};

        \node[] at (-0.5, 4) {\Large $\bm{1}$};

		\node[sqORA, align=left](A1) at (0, 2.5) {\Large $\tau2$, output 1};
        \node[] at (-0.5, 1.5) {\Large $\bm{1}$};

		\node[sqYELL, align=left](C1) at (0, 0) {\Large $\tau3$, output 2/3};
        \node[] at (-1, -1) {\Large $\bm{2/3}$};

        \node[] at (8.2, -0.8) {\Large $\bm{1}$};

        \node[sqRED, align=left](D1) at (10, -2.5) {\Large $\tau4$, output 1/2};
        \node[] at (8.2, -4.2) {\Large $\bm{1/2}$};

		\node[sqBLUE, align=left](E1) at (0, -2.5) {\Large $\tau5$, output 1};
        \node[] at (-0.5, -3.5) {\Large $\bm{1}$};

		\node[sqGREEN, align=left](F1) at (0, -5) {\Large $\tau6$, output 2};
			
		
		
		\tikzset{thick edge/.style={-, black, fill=none, thick, text=black}}
		\tikzset{thick arc/.style={->, black, fill=black, thick, >=stealth, text=black}}

		\draw[line width=0.4mm, black,-> ] (R1) to[out=-90,in=90] (A1);

        \draw[line width=0.4mm, black,-> ] (A1) to[out=-90,in=90] (C1);
				
		\draw[line width=0.4mm, black,-> ] (D1) to[out=-90,in=90] (C1);

        \draw[line width=0.4mm, black,-> ] (E1) to[out=-90,in=90] (D1);
			
		\draw[line width=0.4mm, black,-> ] (C1) to[out=-90,in=90] (E1);

		\draw[line width=0.4mm, black,-> ] (E1) to[out=-90,in=90] (F1);
		

		\end{tikzpicture}
		}
    }
    \caption{\footnotesize{An example of technological interdependencies and good flows: The weight on a directed link from an input into a technology represents the number of units of that input shipped. It is convenient to occasionally omit the dependence on labor to simplify the figures in what follows.  The final good technology is $\tau6$}}
    \label{fig:tech}
  \end{figure}
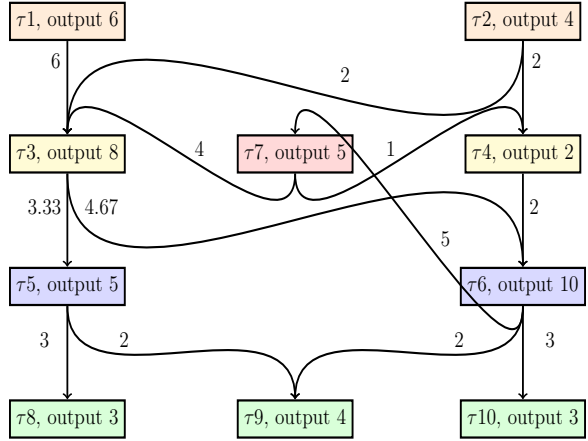
\begin{figure}[ht!]
{
\centering
\resizebox{4.9in}{2.7in}{	
\begin{tikzpicture}[
		sqRED/.style={rectangle, draw=black!240, fill=red!15, very thick, minimum size=7mm},
		sqBLUE/.style={rectangle, draw=black!240, fill=blue!15, very thick, minimum size=7mm},
		sqBLUEb/.style={rectangle, draw=black!240, fill=blue!15, very thick, dashed, minimum size=7mm},
        roundnode/.style={ellipse, draw=black!240, fill=green!15, very thick, dashed, minimum size=7mm},
		sqGREEN/.style={rectangle, draw=black!240, fill=green!15, very thick, minimum size=7mm},
		sqGREENb/.style={rectangle, draw=black!240, fill=green!15, very thick, dashed, minimum size=7mm},
        roundRED/.style={ellipse, draw=black!240, fill=red!15, very thick, dashed, minimum size=7mm},
		roundREDb/.style={ellipse, draw=black!240, fill=red!15, very thick, minimum size=7mm},
		roundYELL/.style={ellipse, draw=black!240, fill=yellow!20, very thick, dashed, minimum size=7mm},
		sqYELL/.style={rectangle, draw=black!240, fill=yellow!20, very thick, minimum size=7mm},
        sqYELLb/.style={rectangle, draw=black!240, fill=yellow!20, very thick, dashed, minimum size=7mm},
		roundORA/.style={ellipse, draw=black!240, fill=orange!15, very thick, dashed, minimum size=7mm},		
		sqORA/.style={rectangle, draw=black!240, fill=orange!15, very thick, minimum size=7mm},
        sqORAb/.style={rectangle, draw=black!240, fill=orange!15, very thick, dashed, minimum size=7mm},		
        squaredBLACK/.style={rectangle, draw=black!240, fill=white!15, very thick, minimum size=7mm},
        squaredBLACK2/.style={rectangle, draw=red!240, fill=white!15, very thick, dashed, minimum size=7mm},			
		roundBROWN/.style={ellipse, draw=black!240, fill=brown!15, very thick, dashed, minimum size=7mm},
		sqBROWN/.style={rectangle, draw=black!240, fill=brown!15, very thick, minimum size=7mm},		
        sqBROWNb/.style={rectangle, draw=black!240, fill=brown!15, very thick, dashed, minimum size=7mm},	
        sqPURPLE/.style={rectangle, draw=black!240, fill=purple!15, very thick, minimum size=7mm},				
		]
        \node[] at (8, 0) {};
        \node[] at (-12, 0) {};
		

		\node[sqORA](A1) at (-5, 2.5) { $\tau 1$, output  $6$};
        \node[] at (-5.25, 1.75) { $6$};

        \node[sqORA](A2) at (5, 2.5) { $\tau 2$, output $4$};
        \node[] at (1.1, 1.5) { $2$};
		\node[] at (5.3, 1.75) { $2$};
		
		\node[sqYELL](C1) at (-5, 0) { $\tau 3$, output $8$};
        \node[] at (-5.5, -1) { $3.33$};
		\node[] at (-4.25, -1) { $4.67$};

        \node[sqYELL](C2) at (5, 0) { $\tau 4$, output $2$};
		\node[] at (5.25, -1) { $2$};
				
		\node[sqRED](D1) at (0, 0) { $\tau 7$, output $5$};
        \node[] at (2.1, 0.1) { $1$};
		\node[] at (-2.1, 0.1) { $4$};

		\node[sqBLUE](E1) at (-5, -2.5) { $\tau 5$, output $5$};
        \node[] at (-5.5, -3.5) { $3$};
		\node[] at (-3.75, -3.5) { $2$};

        \node[sqBLUE](E2) at (5, -2.5) { $\tau 6$, output $10$};
        \node[] at (5.6, -3.5) { $3$};
		\node[] at (3.6, -3.5) { $2$};
        \node[] at (3.3, -1.6) { $5$};

		\node[sqGREEN](F1) at (-5, -5) {$\tau 8$, output $3$};
		\node[sqGREEN](F2) at (0, -5) {$\tau 9$, output $4$};
		\node[sqGREEN](F3) at (5, -5) {$\tau 10$, output $3$};
%
		
		\tikzset{thick edge/.style={-, black, fill=none, thick, text=black}}
		\tikzset{thick arc/.style={->, black, fill=black, thick, >=stealth, text=black}}
		

		
		
		\draw[line width=0.4mm, black,-> ] (A1) to[out=-90,in=90] (C1);
		\draw[line width=0.4mm, black,-> ] (A2) to[out=-90,in=90] (C1);
        \draw[line width=0.4mm, black,-> ] (A2) to[out=-90,in=90] (C2);

		\draw[line width=0.4mm, black,-> ] (D1) to[out=-90,in=90] (C1);
        \draw[line width=0.4mm, black,-> ] (D1) to[out=-90,in=90] (C2);

		\draw[line width=0.4mm, black,-> ] (C1) to[out=-90,in=90] (E1);
		\draw[line width=0.4mm, black,-> ] (C1) to[out=-90,in=90] (E2);
		\draw[line width=0.4mm, black,-> ] (C2) to[out=-90,in=90] (E2);

		
		\draw[line width=0.4mm, black,-> ] (E2) to[out=-90,in=90] (D1);

		\draw[line width=0.4mm, black,-> ] (E1) to[out=-90,in=90] (F1);
		\draw[line width=0.4mm, black,-> ] (E1) to[out=-90,in=90] (F2);
		\draw[line width=0.4mm, black,-> ] (E2) to[out=-90,in=90] (F3);
        \draw[line width=0.4mm, black,-> ] (E2) to[out=-90,in=90] (F2);

		\end{tikzpicture}
}}	
\caption{\footnotesize{An example of equilibrium flows between technologies / producers, labor not pictured. The weights on the directed edges correspond to the size of flows between technologies. Nodes of the same color produce the same good. Green technologies produce final goods.
\label{eqfig}}}
\end{figure}

A similar network can illustrate the equilibrium flow of goods between technologies. This is shown in Figure \ref{eqfig}, where multiple technologies are sometimes used to produce the same good, and in this example they do so using the same combinations of inputs.\footnote{ This case is non-generic in so far as as different countries will not have exactly the same technologies, with identical productivities, for producing a given good. However, more complicated examples with iceberg costs and heterogeneous technologies work in much the same way.}

\begin{figure}[ht!]
  \subfloat[10 percent shock to $\tau2$]
{
\centering
\resizebox{2.7in}{2.16in}{		
		\begin{tikzpicture}[
		sqRED/.style={rectangle, draw=black!240, fill=red!15, very thick, minimum size=7mm},
		sqBLUE/.style={rectangle, draw=black!240, fill=blue!15, very thick, minimum size=7mm},
		sqBLUEb/.style={rectangle, draw=black!240, fill=blue!15, very thick, dashed, minimum size=7mm},
        roundnode/.style={ellipse, draw=black!240, fill=green!15, very thick, dashed, minimum size=7mm},
		sqGREEN/.style={rectangle, draw=black!240, fill=green!15, very thick, minimum size=7mm},
		sqGREENb/.style={rectangle, draw=black!240, fill=green!15, very thick, dashed, minimum size=7mm},
        roundRED/.style={ellipse, draw=black!240, fill=red!15, very thick, dashed, minimum size=7mm},
		roundREDb/.style={ellipse, draw=black!240, fill=red!15, very thick, minimum size=7mm},
		roundYELL/.style={ellipse, draw=black!240, fill=yellow!20, very thick, dashed, minimum size=7mm},
		sqYELL/.style={rectangle, draw=black!240, fill=yellow!20, very thick, minimum size=7mm},
        sqYELLb/.style={rectangle, draw=black!240, fill=yellow!20, very thick, dashed, minimum size=7mm},
		roundORA/.style={ellipse, draw=black!240, fill=orange!15, very thick, dashed, minimum size=7mm},		
		sqORA/.style={rectangle, draw=black!240, fill=orange!15, very thick, minimum size=7mm},
        sqORAb/.style={rectangle, draw=red!240, fill=orange!15, very thick, dashed, minimum size=7mm},		
        squaredBLACK/.style={rectangle, draw=black!240, fill=white!15, very thick, minimum size=7mm},	
        squaredBLACK2/.style={rectangle, draw=red!240, fill=white!15, very thick, dashed, minimum size=7mm},		
		roundBROWN/.style={ellipse, draw=black!240, fill=brown!15, very thick, dashed, minimum size=7mm},
		sqBROWN/.style={rectangle, draw=black!240, fill=brown!15, very thick, minimum size=7mm},		
        sqBROWNb/.style={rectangle, draw=black!240, fill=brown!15, very thick, dashed, minimum size=7mm},	
        sqPURPLE/.style={rectangle, draw=black!240, fill=purple!15, very thick, minimum size=7mm},				
		]
        \node[] at (5, 0) {};
        \node[] at (-5, 0) {};
		

		\node[sqORA](A1) at (-4, 2.5) { $\tau 1$, $y=6$};
        \node[] at (-4.25, 1.75) { $6$};

        \node[sqORAb](A2) at (4, 2.5) {$\tau 2$, $y=\cancel{4}\;\textcolor{red}{3.6}$};
        \node[] at (0.1, 1.5) {$\cancel{2}$ \textcolor{red}{$1.8$}};
		\node[] at (4.45, 1.75) {$\cancel{2}$ \textcolor{red}{$1.8$}};

		\node[sqYELL](C1) at (-4, 0) { $\tau 3$, $y=8$};
        \node[] at (-4.5, -1) { $3.33$};
		\node[] at (-3.25, -1) { $4.67$};

        \node[sqYELL](C2) at (4, 0) { $\tau 4$, $y=2$};
		\node[] at (4.25, -1) { $2$};
				
		\node[sqRED](D1) at (0, 0) { $\tau 7$, $y=5$};
        \node[] at (1.7, 0.3) { $1$};
		\node[] at (-1.7, 0.3) { $4$};

		\node[sqBLUE](E1) at (-4, -2.5) { $\tau 5$, $y=5$};
        \node[] at (-4.5, -3.5) { $3$};
		\node[] at (-2.75, -3.5) { $2$};

        \node[sqBLUE](E2) at (4, -2.5) { $\tau 6$, $y=10$};
        \node[] at (4.6, -3.5) { $3$};
		\node[] at (2.6, -3.5) { $2$};
        \node[] at (1.6, -1.6) { $5$};

		\node[sqGREEN](F1) at (-4, -5) { $\tau 8$, $y=3$};
		\node[sqGREEN](F2) at (0, -5) { $\tau 9$, $y=4$};
		\node[sqGREEN](F3) at (4, -5) { $\tau 10$, $y=3$};

		\tikzset{thick edge/.style={-, black, fill=none, thick, text=black}}
		\tikzset{thick arc/.style={->, black, fill=black, thick, >=stealth, text=black}}

		\draw[line width=0.4mm, black,-> ] (A1) to[out=-90,in=90] (C1);
		\draw[line width=0.4mm, black,-> ] (A2) to[out=-90,in=90] (C1);
        \draw[line width=0.4mm, black,-> ] (A2) to[out=-90,in=90] (C2);

		\draw[line width=0.4mm, black,-> ] (D1) to[out=-90,in=90] (C1);
        \draw[line width=0.4mm, black,-> ] (D1) to[out=-90,in=90] (C2);

		\draw[line width=0.4mm, black,-> ] (C1) to[out=-90,in=90] (E1);
		\draw[line width=0.4mm, black,-> ] (C1) to[out=-90,in=90] (E2);
		\draw[line width=0.4mm, black,-> ] (C2) to[out=-90,in=90] (E2);

		
		\draw[line width=0.4mm, black,-> ] (E2) to[out=-90,in=90] (D1);

		\draw[line width=0.4mm, black,-> ] (E1) to[out=-90,in=90] (F1);
		\draw[line width=0.4mm, black,-> ] (E1) to[out=-90,in=90] (F2);
		\draw[line width=0.4mm, black,-> ] (E2) to[out=-90,in=90] (F3);
        \draw[line width=0.4mm, black,-> ] (E2) to[out=-90,in=90] (F2);

		\end{tikzpicture}
}
}
\hfill
  \subfloat[Shock propagates to $\tau3,\tau4$]
{
\centering
\resizebox{2.7in}{2.16in}{			
		\begin{tikzpicture}[
		sqRED/.style={rectangle, draw=black!240, fill=red!15, very thick, minimum size=7mm},
		sqBLUE/.style={rectangle, draw=black!240, fill=blue!15, very thick, minimum size=7mm},
		sqBLUEb/.style={rectangle, draw=black!240, fill=blue!15, very thick, dashed, minimum size=7mm},
        roundnode/.style={ellipse, draw=black!240, fill=green!15, very thick, dashed, minimum size=7mm},
		sqGREEN/.style={rectangle, draw=black!240, fill=green!15, very thick, minimum size=7mm},
		sqGREENb/.style={rectangle, draw=black!240, fill=green!15, very thick, dashed, minimum size=7mm},
        roundRED/.style={ellipse, draw=black!240, fill=red!15, very thick, dashed, minimum size=7mm},
		roundREDb/.style={ellipse, draw=black!240, fill=red!15, very thick, minimum size=7mm},
		roundYELL/.style={ellipse, draw=black!240, fill=yellow!20, very thick, dashed, minimum size=7mm},
		sqYELL/.style={rectangle, draw=black!240, fill=yellow!20, very thick, minimum size=7mm},
        sqYELLb/.style={rectangle, draw=red!240, fill=yellow!20, very thick, dashed, minimum size=7mm},
		roundORA/.style={ellipse, draw=black!240, fill=orange!15, very thick, dashed, minimum size=7mm},		
		sqORA/.style={rectangle, draw=black!240, fill=orange!15, very thick, minimum size=7mm},
        sqORAb/.style={rectangle, draw=black!240, fill=orange!15, very thick, dashed, minimum size=7mm},		
        squaredBLACK/.style={rectangle, draw=black!240, fill=white!15, very thick, minimum size=7mm},
        squaredBLACK2/.style={rectangle, draw=red!240, fill=white!15, very thick, dashed, minimum size=7mm},			
		roundBROWN/.style={ellipse, draw=black!240, fill=brown!15, very thick, dashed, minimum size=7mm},
		sqBROWN/.style={rectangle, draw=black!240, fill=brown!15, very thick, minimum size=7mm},		
        sqBROWNb/.style={rectangle, draw=black!240, fill=brown!15, very thick, dashed, minimum size=7mm},	
        sqPURPLE/.style={rectangle, draw=black!240, fill=purple!15, very thick, minimum size=7mm},				
		]
        \node[] at (5, 0) {};
        \node[] at (-5, 0) {};
		

		\node[sqORA](A1) at (-4, 2.5) { $\tau 1$, $y=6$};
        \node[] at (-4.25, 1.75) { $6$};

        \node[sqORA](A2) at (4, 2.5) { $\tau 2$, $y=3.6$};
        \node[] at (0.1, 1.5) { $1.8$};
		\node[] at (4.3, 1.75) { $1.8$};

		\node[sqYELLb](C1) at (-4, 0) {$\tau 3$, $y=\cancel{8}\;\textcolor{red}{7.8}$};
        \node[] at (-5.1, -1) {\cancel{$3.33$}\;\textcolor{red}{$3.25$}};
		\node[] at (-2.75, -1) {\cancel{$4.67$}\;\textcolor{red}{$4.55$}};

        \node[sqYELLb](C2) at (4, 0) {$\tau 4$, $y=\cancel{2}\;\textcolor{red}{1.8}$};
		\node[] at (4.45, -1) {\cancel{$2$}\;\textcolor{red}{$1.8$} };
				
		\node[sqRED](D1) at (0, 0) { $\tau 7$, $y=5$};
        \node[] at (1.7, 0.3) { $1$};
		\node[] at (-1.7, 0.3) { $4$};

		\node[sqBLUE](E1) at (-4, -2.5) { $\tau 5$, $y=5$};
        \node[] at (-4.5, -3.5) { $3$};
		\node[] at (-2.75, -3.5) { $2$};

        \node[sqBLUE](E2) at (4, -2.5) { $\tau 6$, $y=10$};
        \node[] at (4.6, -3.5) { $3$};
		\node[] at (2.6, -3.5) { $2$};
        \node[] at (1.6, -1.6) { $5$};

		\node[sqGREEN](F1) at (-4, -5) { $\tau 8$, $y=3$};
		\node[sqGREEN](F2) at (0, -5) { $\tau 9$, $y=4$};
		\node[sqGREEN](F3) at (4, -5) { $\tau 10$, $y=3$};

		\tikzset{thick edge/.style={-, black, fill=none, thick, text=black}}
		\tikzset{thick arc/.style={->, black, fill=black, thick, >=stealth, text=black}}

		\draw[line width=0.4mm, black,-> ] (A1) to[out=-90,in=90] (C1);
		\draw[line width=0.4mm, black,-> ] (A2) to[out=-90,in=90] (C1);
        \draw[line width=0.4mm, black,-> ] (A2) to[out=-90,in=90] (C2);

		\draw[line width=0.4mm, black,-> ] (D1) to[out=-90,in=90] (C1);
        \draw[line width=0.4mm, black,-> ] (D1) to[out=-90,in=90] (C2);

		\draw[line width=0.4mm, black,-> ] (C1) to[out=-90,in=90] (E1);
		\draw[line width=0.4mm, black,-> ] (C1) to[out=-90,in=90] (E2);
		\draw[line width=0.4mm, black,-> ] (C2) to[out=-90,in=90] (E2);

		
		\draw[line width=0.4mm, black,-> ] (E2) to[out=-90,in=90] (D1);

		\draw[line width=0.4mm, black,-> ] (E1) to[out=-90,in=90] (F1);
		\draw[line width=0.4mm, black,-> ] (E1) to[out=-90,in=90] (F2);
		\draw[line width=0.4mm, black,-> ] (E2) to[out=-90,in=90] (F3);
        \draw[line width=0.4mm, black,-> ] (E2) to[out=-90,in=90] (F2);

		\end{tikzpicture}
}	
}

  \subfloat[Shock continues to $\tau5,\tau6$]
{
\centering
\resizebox{2.7in}{2.16in}{			
		\begin{tikzpicture}[
		sqRED/.style={rectangle, draw=black!240, fill=red!15, very thick, minimum size=7mm},
		sqBLUE/.style={rectangle, draw=black!240, fill=blue!15, very thick, minimum size=7mm},
		sqBLUEb/.style={rectangle, draw=red!240, fill=blue!15, very thick, dashed, minimum size=7mm},
        roundnode/.style={ellipse, draw=black!240, fill=green!15, very thick, dashed, minimum size=7mm},
		sqGREEN/.style={rectangle, draw=black!240, fill=green!15, very thick, minimum size=7mm},
		sqGREENb/.style={rectangle, draw=black!240, fill=green!15, very thick, dashed, minimum size=7mm},
        roundRED/.style={ellipse, draw=black!240, fill=red!15, very thick, dashed, minimum size=7mm},
		roundREDb/.style={ellipse, draw=black!240, fill=red!15, very thick, minimum size=7mm},
		roundYELL/.style={ellipse, draw=black!240, fill=yellow!20, very thick, dashed, minimum size=7mm},
		sqYELL/.style={rectangle, draw=black!240, fill=yellow!20, very thick, minimum size=7mm},
        sqYELLb/.style={rectangle, draw=black!240, fill=yellow!20, very thick, dashed, minimum size=7mm},
		roundORA/.style={ellipse, draw=black!240, fill=orange!15, very thick, dashed, minimum size=7mm},		
		sqORA/.style={rectangle, draw=black!240, fill=orange!15, very thick, minimum size=7mm},
        sqORAb/.style={rectangle, draw=black!240, fill=orange!15, very thick, dashed, minimum size=7mm},		
        squaredBLACK/.style={rectangle, draw=black!240, fill=white!15, very thick, minimum size=7mm},
        squaredBLACK2/.style={rectangle, draw=red!240, fill=white!15, very thick, dashed, minimum size=7mm},			
		roundBROWN/.style={ellipse, draw=black!240, fill=brown!15, very thick, dashed, minimum size=7mm},
		sqBROWN/.style={rectangle, draw=black!240, fill=brown!15, very thick, minimum size=7mm},		
        sqBROWNb/.style={rectangle, draw=black!240, fill=brown!15, very thick, dashed, minimum size=7mm},	
        sqPURPLE/.style={rectangle, draw=black!240, fill=purple!15, very thick, minimum size=7mm},				
		]
        \node[] at (5, 0) {};
        \node[] at (-5, 0) {};
		

		\node[sqORA](A1) at (-4, 2.5) { $\tau 1$, $y=6$};
        \node[] at (-4.25, 1.75) { $6$};

        \node[sqORA](A2) at (4, 2.5) { $\tau 2$, $y=3.6$};
        \node[] at (0.1, 1.5) { $1.8$};
		\node[] at (4.3, 1.75) { $1.8$};

		\node[sqYELL](C1) at (-4, 0) { $\tau 3$, $y=7.8$};
        \node[] at (-4.5, -1) { $3.25$};
		\node[] at (-3.25, -1) { $4.55$};

        \node[sqYELL](C2) at (4, 0) { $\tau 4$, $y=1.8$};
		\node[] at (4.3, -1) { $1.8$ };
				
		\node[sqRED](D1) at (0, 0) { $\tau 7$, $y=5$};
        \node[] at (1.7, 0.3) { $1$};
		\node[] at (-1.7, 0.3) { $4$};

		\node[sqBLUEb](E1) at (-4, -2.5) {$\tau 5$, $y=\cancel{5}\;\textcolor{red}{4.88}$};
        \node[] at (-4.75, -3.5) {$\cancel{3}\;\textcolor{red}{2.93}$};
		\node[] at (-2.5, -3.5) {$\cancel{2}\;\textcolor{red}{1.95}$};

        \node[sqBLUEb](E2) at (4, -2.5) {$\tau 6$, $y=\cancel{10}\;\textcolor{red}{9.53}$};
        \node[] at (4.8, -3.5) {$\cancel{3}\;\textcolor{red}{2.86}$};
		\node[] at (2.4, -3.5) {$\cancel{2}\;\textcolor{red}{1.91}$};
        \node[] at (1.4, -1.6) {$\cancel{5}\;\textcolor{red}{4.76}$};

		\node[sqGREEN](F1) at (-4, -5) { $\tau 8$, $y=3$};
		\node[sqGREEN](F2) at (0, -5) { $\tau 9$, $y=4$};
		\node[sqGREEN](F3) at (4, -5) { $\tau 10$, $y=3$};

		\tikzset{thick edge/.style={-, black, fill=none, thick, text=black}}
		\tikzset{thick arc/.style={->, black, fill=black, thick, >=stealth, text=black}}

		\draw[line width=0.4mm, black,-> ] (A1) to[out=-90,in=90] (C1);
		\draw[line width=0.4mm, black,-> ] (A2) to[out=-90,in=90] (C1);
        \draw[line width=0.4mm, black,-> ] (A2) to[out=-90,in=90] (C2);

		\draw[line width=0.4mm, black,-> ] (D1) to[out=-90,in=90] (C1);
        \draw[line width=0.4mm, black,-> ] (D1) to[out=-90,in=90] (C2);

		\draw[line width=0.4mm, black,-> ] (C1) to[out=-90,in=90] (E1);
		\draw[line width=0.4mm, black,-> ] (C1) to[out=-90,in=90] (E2);
		\draw[line width=0.4mm, black,-> ] (C2) to[out=-90,in=90] (E2);

		
		\draw[line width=0.4mm, black,-> ] (E2) to[out=-90,in=90] (D1);

		\draw[line width=0.4mm, black,-> ] (E1) to[out=-90,in=90] (F1);
		\draw[line width=0.4mm, black,-> ] (E1) to[out=-90,in=90] (F2);
		\draw[line width=0.4mm, black,-> ] (E2) to[out=-90,in=90] (F3);
        \draw[line width=0.4mm, black,-> ] (E2) to[out=-90,in=90] (F2);

		\end{tikzpicture}
}	
}
\hfill
  \subfloat[Shock hits $\tau7,\tau8,\tau9,\tau10$]
{
\centering
\resizebox{2.7in}{2.16in}{			
		\begin{tikzpicture}[
		sqRED/.style={rectangle, draw=black!240, fill=red!15, very thick, minimum size=7mm},
		sqREDb/.style={rectangle, draw=red!240, fill=red!15, very thick, dashed, minimum size=7mm},
		sqBLUE/.style={rectangle, draw=black!240, fill=blue!15, very thick, minimum size=7mm},
		sqBLUEb/.style={rectangle, draw=black!240, fill=blue!15, very thick, dashed, minimum size=7mm},
        roundnode/.style={ellipse, draw=black!240, fill=green!15, very thick, dashed, minimum size=7mm},
		sqGREEN/.style={rectangle, draw=black!240, fill=green!15, very thick, minimum size=7mm},
		sqGREENb/.style={rectangle, draw=red!240, fill=green!15, very thick, dashed, minimum size=7mm},
        roundRED/.style={ellipse, draw=black!240, fill=red!15, very thick, dashed, minimum size=7mm},
		roundREDb/.style={ellipse, draw=red!240, fill=red!15, very thick, dashed, minimum size=7mm},
		roundYELL/.style={ellipse, draw=black!240, fill=yellow!20, very thick, dashed, minimum size=7mm},
		sqYELL/.style={rectangle, draw=black!240, fill=yellow!20, very thick, minimum size=7mm},
        sqYELLb/.style={rectangle, draw=black!240, fill=yellow!20, very thick, dashed, minimum size=7mm},
		roundORA/.style={ellipse, draw=black!240, fill=orange!15, very thick, dashed, minimum size=7mm},		
		sqORA/.style={rectangle, draw=black!240, fill=orange!15, very thick, minimum size=7mm},
        sqORAb/.style={rectangle, draw=black!240, fill=orange!15, very thick, dashed, minimum size=7mm},		
        squaredBLACK/.style={rectangle, draw=black!240, fill=white!15, very thick, minimum size=7mm},
        squaredBLACK2/.style={rectangle, draw=red!240, fill=white!15, very thick, dashed, minimum size=7mm},			
		roundBROWN/.style={ellipse, draw=black!240, fill=brown!15, very thick, dashed, minimum size=7mm},
		sqBROWN/.style={rectangle, draw=black!240, fill=brown!15, very thick, minimum size=7mm},		
        sqBROWNb/.style={rectangle, draw=black!240, fill=brown!15, very thick, dashed, minimum size=7mm},	
        sqPURPLE/.style={rectangle, draw=black!240, fill=purple!15, very thick, minimum size=7mm},				
		]
		
        \node[] at (5, 0) {};
        \node[] at (-5, 0) {};


		\node[sqORA](A1) at (-4, 2.5) { $\tau 1$, $y=6$};
        \node[] at (-4.25, 1.75) { $6$};

        \node[sqORA](A2) at (4, 2.5) { $\tau 2$, $y=3.6$};
        \node[] at (0.1, 1.5) { $1.8$};
		\node[] at (4.3, 1.75) { $1.8$};

		\node[sqYELL](C1) at (-4, 0) { $\tau 3$, $y=7.8$};
        \node[] at (-4.5, -1) { $3.25$};
		\node[] at (-3.25, -1) { $4.55$};

        \node[sqYELL](C2) at (4, 0) { $\tau 4$, $y=1.8$};
		\node[] at (4.3, -1) { $1.8$};

		\node[sqREDb](D1) at (0, 0) {$\tau 7$, $y=\cancel{5}\;\textcolor{red}{4.76}$};
        \node[] at (2.1, 0.55) {$\cancel{1}\;\textcolor{red}{0.95}$};
		\node[] at (-2.1, 0.55) {$\cancel{4}\;\textcolor{red}{3.81}$};

		\node[sqBLUE](E1) at (-4, -2.5) { $\tau 5$, $y=4.88$};
        \node[] at (-4.75, -3.5) { $2.93$};
		\node[] at (-2.5, -3.5) { $1.95$};

        \node[sqBLUE](E2) at (4, -2.5) { $\tau 6$, $y=9.53$};
        \node[] at (4.8, -3.5) { $2.86$};
		\node[] at (2.4, -3.5) { $1.91$};
        \node[] at (1.6, -1.6) { $4.76$};

		\node[sqGREENb](F1) at (-4, -5) {$\tau 8$, $y=\cancel{3}\;\textcolor{red}{2.93}$};
		\node[sqGREENb](F2) at (0, -5) {$\tau 9$, $y=\cancel{4}\;\textcolor{red}{3.86}$};
		\node[sqGREENb](F3) at (4, -5) {$\tau 10$, $y=\cancel{3}\;\textcolor{red}{2.86}$};

		\tikzset{thick edge/.style={-, black, fill=none, thick, text=black}}
		\tikzset{thick arc/.style={->, black, fill=black, thick, >=stealth, text=black}}

		\draw[line width=0.4mm, black,-> ] (A1) to[out=-90,in=90] (C1);
		\draw[line width=0.4mm, black,-> ] (A2) to[out=-90,in=90] (C1);
        \draw[line width=0.4mm, black,-> ] (A2) to[out=-90,in=90] (C2);

		\draw[line width=0.4mm, black,-> ] (D1) to[out=-90,in=90] (C1);
        \draw[line width=0.4mm, black,-> ] (D1) to[out=-90,in=90] (C2);

		\draw[line width=0.4mm, black,-> ] (C1) to[out=-90,in=90] (E1);
		\draw[line width=0.4mm, black,-> ] (C1) to[out=-90,in=90] (E2);
		\draw[line width=0.4mm, black,-> ] (C2) to[out=-90,in=90] (E2);

		
		\draw[line width=0.4mm, black,-> ] (E2) to[out=-90,in=90] (D1);

		\draw[line width=0.4mm, black,-> ] (E1) to[out=-90,in=90] (F1);
		\draw[line width=0.4mm, black,-> ] (E1) to[out=-90,in=90] (F2);
		\draw[line width=0.4mm, black,-> ] (E2) to[out=-90,in=90] (F3);
        \draw[line width=0.4mm, black,-> ] (E2) to[out=-90,in=90] (F2);	
		\end{tikzpicture}
}}

\subfloat[Shock further disrupts $\tau 3$ but not $\tau 4$]
{
\centering
\resizebox{2.7in}{2.16in}{		
		\begin{tikzpicture}[
		sqRED/.style={rectangle, draw=black!240, fill=red!15, very thick, minimum size=7mm},
		sqREDb/.style={rectangle, draw=red!240, fill=red!15, very thick, dashed, minimum size=7mm},
		sqBLUE/.style={rectangle, draw=black!240, fill=blue!15, very thick, minimum size=7mm},
		sqBLUEb/.style={rectangle, draw=black!240, fill=blue!15, very thick, dashed, minimum size=7mm},
        roundnode/.style={ellipse, draw=black!240, fill=green!15, very thick, dashed, minimum size=7mm},
		sqGREEN/.style={rectangle, draw=black!240, fill=green!15, very thick, minimum size=7mm},
		sqGREENb/.style={rectangle, draw=red!240, fill=green!15, very thick, dashed, minimum size=7mm},
        roundRED/.style={ellipse, draw=black!240, fill=red!15, very thick, dashed, minimum size=7mm},
		roundREDb/.style={ellipse, draw=red!240, fill=red!15, very thick, dashed, minimum size=7mm},
		roundYELL/.style={ellipse, draw=black!240, fill=yellow!20, very thick, dashed, minimum size=7mm},
		sqYELL/.style={rectangle, draw=black!240, fill=yellow!20, very thick, minimum size=7mm},
        sqYELLb/.style={rectangle, draw=red!240, fill=yellow!20, very thick, dashed, minimum size=7mm},
		roundORA/.style={ellipse, draw=black!240, fill=orange!15, very thick, dashed, minimum size=7mm},		
		sqORA/.style={rectangle, draw=black!240, fill=orange!15, very thick, minimum size=7mm},
        sqORAb/.style={rectangle, draw=black!240, fill=orange!15, very thick, dashed, minimum size=7mm},		
        squaredBLACK/.style={rectangle, draw=black!240, fill=white!15, very thick, minimum size=7mm},
        squaredBLACK2/.style={rectangle, draw=red!240, fill=white!15, very thick, dashed, minimum size=7mm},			
		roundBROWN/.style={ellipse, draw=black!240, fill=brown!15, very thick, dashed, minimum size=7mm},
		sqBROWN/.style={rectangle, draw=black!240, fill=brown!15, very thick, minimum size=7mm},		
        sqBROWNb/.style={rectangle, draw=black!240, fill=brown!15, very thick, dashed, minimum size=7mm},	
        sqPURPLE/.style={rectangle, draw=black!240, fill=purple!15, very thick, minimum size=7mm},				
		]
		
        \node[] at (5, 0) {};
        \node[] at (-5, 0) {};


		\node[sqORA](A1) at (-4, 2.5) { $\tau 1$, $y=6$};
        \node[] at (-4.25, 1.75) { $6$};

        \node[sqORA](A2) at (4, 2.5) { $\tau 2$, $y=3.6$};
        \node[] at (0.1, 1.5) { $1.8$};
		\node[] at (4.3, 1.75) { $1.8$};

		\node[sqYELLb](C1) at (-4, 0) { $\tau 3$, $y=\cancel{7.8}\;\textcolor{red}{7.62}$};
        \node[] at (-5.00, -1) { $\cancel{3.25}\;\textcolor{red}{3.17}$};
		\node[] at (-3.05, -1) { $\cancel{4.55}\;\textcolor{red}{4.45}$};

        \node[sqYELL](C2) at (4, 0) { $\tau 4$, $y=1.8$};
		\node[] at (4.3, -1) { $1.8$};

		\node[sqRED](D1) at (0, 0) {$\tau 7$, $y=4.76$};
        \node[] at (2.1, 0.55) {$0.95$};
		\node[] at (-2.1, 0.55) {$3.81$};

		\node[sqBLUE](E1) at (-4, -2.5) { $\tau 5$, $y=4.88$};
        \node[] at (-4.75, -3.5) { $2.93$};
		\node[] at (-2.5, -3.5) { $1.95$};

        \node[sqBLUE](E2) at (4, -2.5) { $\tau 6$, $y=9.53$};
        \node[] at (4.8, -3.5) { $2.86$};
		\node[] at (2.4, -3.5) { $1.91$};
        \node[] at (1.6, -1.6) { $4.76$};

		\node[sqGREEN](F1) at (-4, -5) {$\tau 8$, $y=2.93$};
		\node[sqGREEN](F2) at (0, -5) {$\tau 9$, $y=3.86$};
		\node[sqGREEN](F3) at (4, -5) {$\tau 10$, $y=2.86$};

		\tikzset{thick edge/.style={-, black, fill=none, thick, text=black}}
		\tikzset{thick arc/.style={->, black, fill=black, thick, >=stealth, text=black}}

		\draw[line width=0.4mm, black,-> ] (A1) to[out=-90,in=90] (C1);
		\draw[line width=0.4mm, black,-> ] (A2) to[out=-90,in=90] (C1);
        \draw[line width=0.4mm, black,-> ] (A2) to[out=-90,in=90] (C2);

		\draw[line width=0.4mm, black,-> ] (D1) to[out=-90,in=90] (C1);
        \draw[line width=0.4mm, black,-> ] (D1) to[out=-90,in=90] (C2);

		\draw[line width=0.4mm, black,-> ] (C1) to[out=-90,in=90] (E1);
		\draw[line width=0.4mm, black,-> ] (C1) to[out=-90,in=90] (E2);
		\draw[line width=0.4mm, black,-> ] (C2) to[out=-90,in=90] (E2);

		
		\draw[line width=0.4mm, black,-> ] (E2) to[out=-90,in=90] (D1);

		\draw[line width=0.4mm, black,-> ] (E1) to[out=-90,in=90] (F1);
		\draw[line width=0.4mm, black,-> ] (E1) to[out=-90,in=90] (F2);
		\draw[line width=0.4mm, black,-> ] (E2) to[out=-90,in=90] (F3);
        \draw[line width=0.4mm, black,-> ] (E2) to[out=-90,in=90] (F2);	
		\end{tikzpicture}
}}\hfill
\subfloat[Algorithm converges to these flows]
{
\centering
\resizebox{2.7in}{2.16in}{			
		\begin{tikzpicture}[
		sqRED/.style={rectangle, draw=black!240, fill=red!15, very thick, minimum size=7mm},
		sqREDb/.style={rectangle, draw=red!240, fill=red!15, very thick, dashed, minimum size=7mm},
		sqBLUE/.style={rectangle, draw=black!240, fill=blue!15, very thick, minimum size=7mm},
		sqBLUEb/.style={rectangle, draw=black!240, fill=blue!15, very thick, dashed, minimum size=7mm},
        roundnode/.style={ellipse, draw=black!240, fill=green!15, very thick, dashed, minimum size=7mm},
		sqGREEN/.style={rectangle, draw=black!240, fill=green!15, very thick, minimum size=7mm},
		sqGREENb/.style={rectangle, draw=red!240, fill=green!15, very thick, dashed, minimum size=7mm},
        roundRED/.style={ellipse, draw=black!240, fill=red!15, very thick, dashed, minimum size=7mm},
		roundREDb/.style={ellipse, draw=red!240, fill=red!15, very thick, dashed, minimum size=7mm},
		roundYELL/.style={ellipse, draw=black!240, fill=yellow!20, very thick, dashed, minimum size=7mm},
		sqYELL/.style={rectangle, draw=black!240, fill=yellow!20, very thick, minimum size=7mm},
        sqYELLb/.style={rectangle, draw=black!240, fill=yellow!20, very thick, dashed, minimum size=7mm},
		roundORA/.style={ellipse, draw=black!240, fill=orange!15, very thick, dashed, minimum size=7mm},		
		sqORA/.style={rectangle, draw=black!240, fill=orange!15, very thick, minimum size=7mm},
        sqORAb/.style={rectangle, draw=black!240, fill=orange!15, very thick, dashed, minimum size=7mm},		
        squaredBLACK/.style={rectangle, draw=black!240, fill=white!15, very thick, minimum size=7mm},
        squaredBLACK2/.style={rectangle, draw=red!240, fill=white!15, very thick, dashed, minimum size=7mm},			
		roundBROWN/.style={ellipse, draw=black!240, fill=brown!15, very thick, dashed, minimum size=7mm},
		sqBROWN/.style={rectangle, draw=black!240, fill=brown!15, very thick, minimum size=7mm},		
        sqBROWNb/.style={rectangle, draw=black!240, fill=brown!15, very thick, dashed, minimum size=7mm},	
        sqPURPLE/.style={rectangle, draw=black!240, fill=purple!15, very thick, minimum size=7mm},				
		]
		
        \node[] at (5, 0) {};
        \node[] at (-5, 0) {};

		\node[sqORA](A1) at (-4, 2.5) { $\tau 1$, $y=6$};
        \node[] at (-4.25, 1.75) { $6$};

        \node[sqORA](A2) at (4, 2.5) { $\tau 2$, $y=3.6$};
        \node[] at (0.1, 1.5) { $1.8$};
		\node[] at (4.3, 1.75) { $1.8$};

		\node[sqYELL](C1) at (-4, 0) { $\tau 3$, $y=7.2$};
        \node[] at (-4.50, -1) { $3$};
		\node[] at (-3.25, -1) { $4.2$};

        \node[sqYELL](C2) at (4, 0) { $\tau 4$, $y=1.8$};
		\node[] at (4.3, -1) { $1.8$};

		\node[sqRED](D1) at (0, 0) {$\tau 7$, $y=4.5$};
        \node[] at (2.1, 0.55) {$0.9$};
		\node[] at (-2.1, 0.55) {$3.6$};

		\node[sqBLUE](E1) at (-4, -2.5) { $\tau 5$, $y=4.5$};
        \node[] at (-4.75, -3.5) { $2.7$};
		\node[] at (-2.5, -3.5) { $1.8$};

        \node[sqBLUE](E2) at (4, -2.5) { $\tau 6$, $y=9$};
        \node[] at (4.8, -3.5) { $2.7$};
		\node[] at (2.4, -3.5) { $1.8$};
        \node[] at (1.6, -1.6) { $4.5$};

		\node[sqGREEN](F1) at (-4, -5) {$\tau 8$, $y=2.7$};
		\node[sqGREEN](F2) at (0, -5) {$\tau 9$, $y=3.6$};
		\node[sqGREEN](F3) at (4, -5) {$\tau 10$, $y=2.7$};

		\tikzset{thick edge/.style={-, black, fill=none, thick, text=black}}
		\tikzset{thick arc/.style={->, black, fill=black, thick, >=stealth, text=black}}

		\draw[line width=0.4mm, black,-> ] (A1) to[out=-90,in=90] (C1);
		\draw[line width=0.4mm, black,-> ] (A2) to[out=-90,in=90] (C1);
        \draw[line width=0.4mm, black,-> ] (A2) to[out=-90,in=90] (C2);

		\draw[line width=0.4mm, black,-> ] (D1) to[out=-90,in=90] (C1);
        \draw[line width=0.4mm, black,-> ] (D1) to[out=-90,in=90] (C2);

		\draw[line width=0.4mm, black,-> ] (C1) to[out=-90,in=90] (E1);
		\draw[line width=0.4mm, black,-> ] (C1) to[out=-90,in=90] (E2);
		\draw[line width=0.4mm, black,-> ] (C2) to[out=-90,in=90] (E2);

		
		\draw[line width=0.4mm, black,-> ] (E2) to[out=-90,in=90] (D1);

		\draw[line width=0.4mm, black,-> ] (E1) to[out=-90,in=90] (F1);
		\draw[line width=0.4mm, black,-> ] (E1) to[out=-90,in=90] (F2);
		\draw[line width=0.4mm, black,-> ] (E2) to[out=-90,in=90] (F3);
        \draw[line width=0.4mm, black,-> ] (E2) to[out=-90,in=90] (F2);	
		\end{tikzpicture}
}}\caption{\footnotesize{An illustration of the Shock Propagation Algorithm on the example from Figure \ref{eqfig}\label{shockalg}. Nodes that are the same color produce the same good. }}
\end{figure}
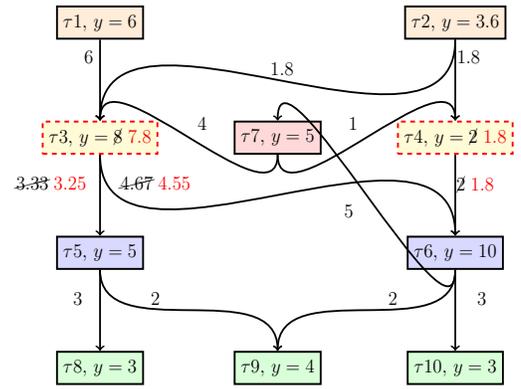
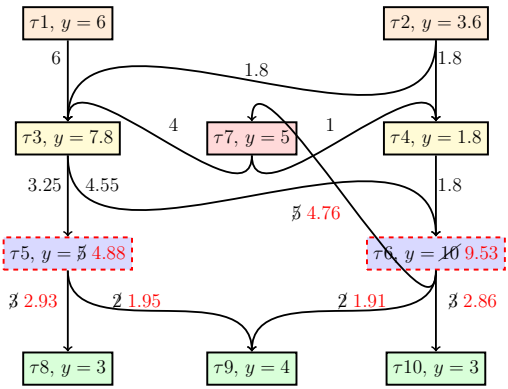
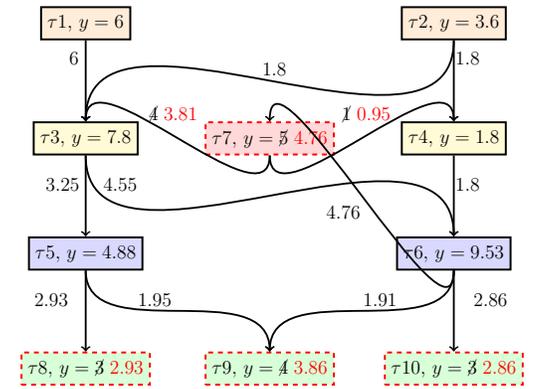
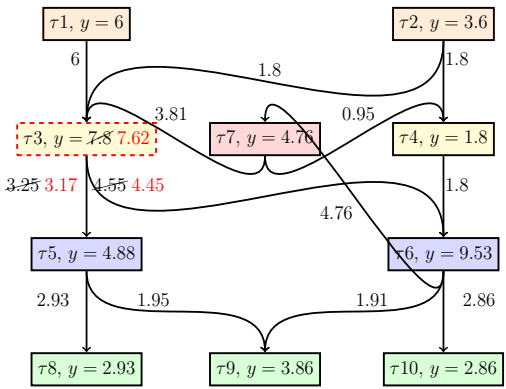
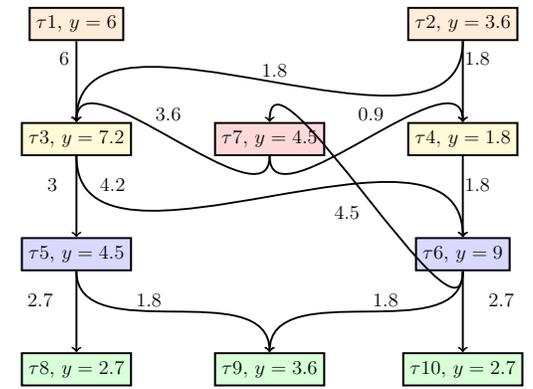

The Shock Propagation Algorithm is illustrated in Figure \ref{shockalg} in which there is
a 10 percent shock to technology $\tau2$. It traces the impact of the shock and updates the output of a node each time the supply of one of its inputs is reduced.
In this example there are several cycles, and note that $\tau6$ and $\tau7$ are on cycles involving $\tau3$ and $\tau4$.  The feedback via $\tau4$ stops after a few steps in the algorithm since the diminished flow directly from $\tau2$ ends up being the binding one.  However, the feedback via $\tau3$ continues infinitely, and eventually converges to the final levels pictured in panel (f).

Extending this example helps illustrate the concept of a disrupted industries sub-network as used in Proposition \ref{prop:cut}. This is shown in Figure \ref{fig:disruption}.

\begin{figure}[!ht]
    \subfloat[Equilibrium flows ${\mathcal{G}}$\label{subfig:disruption-0}]
    {
\centering
\resizebox{3.7in}{2.62in}{		
    \begin{tikzpicture}[
		sqRED/.style={rectangle, draw=black!240, fill=red!15, very thick, minimum size=7mm},
		sqREDb/.style={rectangle, draw=black!240, fill=red!15, very thick, dashed, minimum size=7mm},
		sqBLUE/.style={rectangle, draw=black!240, fill=blue!15, very thick, minimum size=7mm},
		sqBLUEb/.style={rectangle, draw=black!240, fill=blue!15, very thick, dashed, minimum size=7mm},
		roundnode/.style={ellipse, draw=black!240, fill=green!15, very thick, dashed, minimum size=7mm},
		sqGREEN/.style={rectangle, draw=black!240, fill=green!15, very thick, minimum size=7mm},
		roundRED/.style={ellipse, draw=black!240, fill=red!15, very thick, dashed, minimum size=7mm},
		roundREDb/.style={ellipse, draw=black!240, fill=red!15, very thick, minimum size=7mm},
		roundYELL/.style={ellipse, draw=black!240, fill=yellow!20, very thick, dashed, minimum size=7mm},
		sqYELL/.style={rectangle, draw=black!240, fill=yellow!20, very thick, minimum size=7mm},
		roundORA/.style={ellipse, draw=black!240, fill=orange!15, very thick, dashed, minimum size=7mm},		
		sqORA/.style={rectangle, draw=black!240, fill=orange!15, very thick, minimum size=7mm},
		sqORAb/.style={rectangle, draw=black!240, fill=orange!15, very thick, dashed, minimum size=7mm},
		sqMAG/.style={rectangle, draw=black!240, fill=magenta!5, very thick, minimum size=7mm},
		squaredBLACK/.style={rectangle, draw=black!240, fill=white!15, very thick, minimum size=7mm},	
		roundBROWN/.style={ellipse, draw=black!240, fill=brown!15, very thick, dashed, minimum size=7mm},
		sqBROWN/.style={rectangle, draw=black!240, fill=brown!15, very thick, minimum size=7mm},		
		]
        \node[] at (11, 0) {};		
        \node[] at (-6, 0) {};

		\node[squaredBLACK](R2) at (7, 5) { $\tau0$, $y=10$};
        \node[] at (7.25, 4) { $10$};
        \node[sqBROWN](Z1) at (7, 0) { $\tau 11$, $y=5$};
		\node[] at (7.25, -1) { $5$};

		\node[sqGREEN](X1) at (7, -5) { $\tau 12$, $y=5$};


		\node[sqORA](A1) at (-4, 2.5) { $\tau 1$, $y=6$};
        \node[] at (-4.25, 1.75) { $6$};

        \node[sqORA](A2) at (4, 2.5) { $\tau 2$, $y=6$};
        \node[] at (0, 1.5) { $2$};
		\node[] at (4.15, 1.15) { $2$};
		\node[] at (4.5, 1.75) { $2$};

		\node[sqYELL](C1) at (-4, 0) { $\tau 3$, $y=8$};
        \node[] at (-4.5, -1) { $3.33$};
		\node[] at (-3.25, -1) { $4.67$};

        \node[sqYELL](C2) at (4, 0) { $\tau 4$, $y=2$};
		\node[] at (4.25, -1) { $2$};

		\node[sqRED](D1) at (0, 0) { $\tau 7$, $y=5$};
        \node[] at (1.7, 0.1) { $1$};
		\node[] at (-1.7, 0.1) { $4$};

		\node[sqBLUE](E1) at (-4, -2.5) { $\tau 5$, $y=5$};
        \node[] at (-4.5, -3.5) { $3$};
		\node[] at (-3, -3.5) { $2$};

        \node[sqBLUE](E2) at (4, -2.5) { $\tau 6$, $y=10$};
        \node[] at (4.5, -3.5) { $3$};
		\node[] at (2.75, -3.5) { $2$};
        \node[] at (1.9, -1.6) { $5$};

		\node[sqGREEN](F1) at (-4, -5) { $\tau 8$, $y=3$};
		\node[sqGREEN](F2) at (0, -5) { $\tau 9$, $y=4$};
		\node[sqGREEN](F3) at (4, -5) { $\tau 10$, $y=3$};

		\tikzset{thick edge/.style={-, black, fill=none, thick, text=black}}
		\tikzset{thick arc/.style={->, black, fill=black, thick, >=stealth, text=black}}


		\draw[line width=0.4mm, black,-> ] (A1) to[out=-90,in=90] (C1);
		\draw[line width=0.4mm, black,-> ] (A2) to[out=-90,in=90] (C1);
        \draw[line width=0.4mm, black,-> ] (A2) to[out=-90,in=90] (C2);

		\draw[line width=0.4mm, black,-> ] (D1) to[out=-90,in=90] (C1);
        \draw[line width=0.4mm, black,-> ] (D1) to[out=-90,in=90] (C2);

		\draw[line width=0.4mm, black,-> ] (C1) to[out=-90,in=90] (E1);
		\draw[line width=0.4mm, black,-> ] (C1) to[out=-90,in=90] (E2);
		\draw[line width=0.4mm, black,-> ] (C2) to[out=-90,in=90] (E2);

		\draw[line width=0.4mm, black,-> ] (E2) to[out=-90,in=90, looseness = 1.7] (D1);

		\draw[line width=0.4mm, black,-> ] (E1) to[out=-90,in=90] (F1);
		\draw[line width=0.4mm, black,-> ] (E1) to[out=-90,in=90] (F2);
		\draw[line width=0.4mm, black,-> ] (E2) to[out=-90,in=90] (F3);
        \draw[line width=0.4mm, black,-> ] (E2) to[out=-90,in=90] (F2);

        \draw[line width=0.4mm, black,-> ] (R2) to[out=-90,in=90] (Z1);
        \draw[line width=0.4mm, black,-> ] (Z1) to[out=-90,in=90] (X1);
        \draw[line width=0.4mm, black,-> ] (A2) to[out=-90,in=90] (Z1);

		\end{tikzpicture}
		}
	}
    \hfill
    \subfloat[Disrupted industries sub-network $\widehat{{\mathcal{G}}}(\tau11)$\label{subfig:disruption-1}]
    {
     \centering
\resizebox{2.8in}{2.62in}{		
    \begin{tikzpicture}[
		sqRED/.style={rectangle, draw=black!240, fill=red!15, very thick, minimum size=7mm},
		sqREDb/.style={rectangle, draw=black!240, fill=red!15, very thick, dashed, minimum size=7mm},
		sqBLUE/.style={rectangle, draw=black!240, fill=blue!15, very thick, minimum size=7mm},
		sqBLUEb/.style={rectangle, draw=black!240, fill=blue!15, very thick, dashed, minimum size=7mm},
		roundnode/.style={ellipse, draw=black!240, fill=green!15, very thick, dashed, minimum size=7mm},
		sqGREEN/.style={rectangle, draw=black!240, fill=green!15, very thick, minimum size=7mm},
		roundRED/.style={ellipse, draw=black!240, fill=red!15, very thick, dashed, minimum size=7mm},
		roundREDb/.style={ellipse, draw=black!240, fill=red!15, very thick, minimum size=7mm},
		roundYELL/.style={ellipse, draw=black!240, fill=yellow!20, very thick, dashed, minimum size=7mm},
		sqYELL/.style={rectangle, draw=black!240, fill=yellow!20, very thick, minimum size=7mm},
		roundORA/.style={ellipse, draw=black!240, fill=orange!15, very thick, dashed, minimum size=7mm},		
		sqORA/.style={rectangle, draw=black!240, fill=orange!15, very thick, minimum size=7mm},
		sqORAb/.style={rectangle, draw=black!240, fill=orange!15, very thick, dashed, minimum size=7mm},
		sqMAG/.style={rectangle, draw=black!240, fill=magenta!5, very thick, minimum size=7mm},
		squaredBLACK/.style={rectangle, draw=black!240, fill=white!15, very thick, minimum size=7mm},	
		roundBROWN/.style={ellipse, draw=black!240, fill=brown!15, very thick, dashed, minimum size=7mm},
		sqBROWN/.style={rectangle, draw=black!240, fill=brown!15, very thick, minimum size=7mm},		
		sqBROWNb/.style={rectangle, draw=black!240, fill=brown!15, very thick, dashed, minimum size=7mm},		
		]
        \node[] at (12, 0) {};		
        \node[] at (0, -6) {};		

		\node[squaredBLACK](R2) at (7, 5) {$\tau0$, $y=10$};
        \node[] at (7.25, 4) { $10$};
        \node[sqBROWNb](Z1) at (7, 0) { $\tau 11$, $y=5$};
		\node[] at (7.25, -1) { $5$};

		\node[sqGREEN](X1) at (7, -5) { $\tau 12$, $y=5$};


        \node[sqORA](A2) at (4, 2.5) { $\tau 2$, $y=6$};

		\node[] at (4.5, 1.75) { $2$};
			
		\tikzset{thick edge/.style={-, black, fill=none, thick, text=black}}
		\tikzset{thick arc/.style={->, black, fill=black, thick, >=stealth, text=black}}


        \draw[line width=0.4mm, black,-> ] (R2) to[out=-90,in=90] (Z1);
        \draw[line width=0.4mm, black,-> ] (Z1) to[out=-90,in=90] (X1);
        \draw[line width=0.4mm, black,-> ] (A2) to[out=-90,in=90] (Z1);
		
		\end{tikzpicture}
		}
    }
    \caption{\footnotesize{Disrupted industries sub-networks: For the equilibrium flows in panel (a), panel (b) shows the disrupted industries subnetwork for $T^{Shocked}=\{\tau11\}$. The disrupted industries sub-network for $T^{Shocked}=\{\tau2\}$ is the network shown in panel (a), while the disrupted industries sub-network for $T^{Shocked}=\{\tau1\}$ is the same as the equilibrium flows network shown in Figure \ref{eqfig}.}}
    \label{fig:disruption}
  \end{figure}
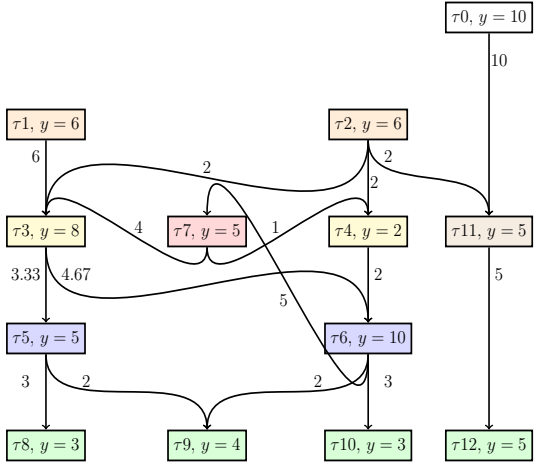
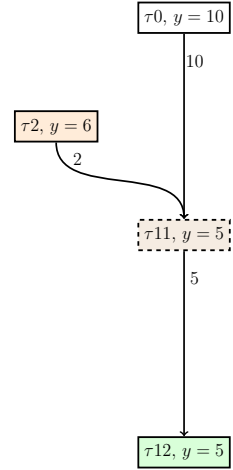

Finally, we use this example, but with the cycle removed, to illustrate the disruption centrality calculation. \begin{figure}[!ht]
    \subfloat[Disruptions from $\tau2$ to $\tau8,\tau9,\tau10$]
    {
\centering
\resizebox{3.7in}{2.62in}{		
    \begin{tikzpicture}[
sqRED/.style={rectangle, draw=black!240, fill=red!15, very thick, minimum size=7mm},
		sqREDb/.style={rectangle, draw=black!240, fill=red!15, very thick, dashed, minimum size=7mm},
		sqBLUE/.style={rectangle, draw=black!240, fill=blue!15, very thick, minimum size=7mm},
		sqBLUEb/.style={rectangle, draw=black!240, fill=blue!15, very thick, dashed, minimum size=7mm},
		roundnode/.style={ellipse, draw=black!240, fill=green!15, very thick, dashed, minimum size=7mm},
		sqGREEN/.style={rectangle, draw=black!240, fill=green!15, very thick, minimum size=7mm},
		sqGREENb/.style={rectangle, draw=black!240, fill=green!15, very thick, dashed, minimum size=7mm},
		roundRED/.style={ellipse, draw=black!240, fill=red!15, very thick, dashed, minimum size=7mm},
		roundREDb/.style={ellipse, draw=black!240, fill=red!15, very thick, minimum size=7mm},
		roundYELL/.style={ellipse, draw=black!240, fill=yellow!20, very thick, dashed, minimum size=7mm},
		sqYELL/.style={rectangle, draw=black!240, fill=yellow!20, very thick, minimum size=7mm},
		roundORA/.style={ellipse, draw=black!240, fill=orange!15, very thick, dashed, minimum size=7mm},		
		sqORA/.style={rectangle, draw=black!240, fill=orange!15, very thick, minimum size=7mm},
		sqORAb/.style={rectangle, draw=black!240, fill=orange!15, very thick, dashed, minimum size=7mm},
		sqMAG/.style={rectangle, draw=black!240, fill=magenta!5, very thick, minimum size=7mm},
		squaredBLACK/.style={rectangle, draw=black!240, fill=white!15, very thick, minimum size=7mm},	
		roundBROWN/.style={ellipse, draw=black!240, fill=brown!15, very thick, dashed, minimum size=7mm},
		sqBROWN/.style={rectangle, draw=black!240, fill=brown!15, very thick, minimum size=7mm},		
		]
        \node[] at (8.5, 0) {};		
        \node[] at (-8.5, 0) {};

 \node[] at (7, 5) {};

		\node[] at (0, 5) {};

		\node[sqORA](A1) at (-4, 2.5) {$\tau1, d=0$};
        \node[] at (-4.5, 1.75) { $0.75$};

        \node[sqORAb](A2) at (4, 2.5) {$\tau2, d=1$};
        \node[] at (0, 1.5) { $0.25$};
		\node[] at (4.15, 1.15) { $1$};
			
		\node[sqYELL](C1) at (-4, 0) {$\tau3,d=0.25$};
        \node[] at (-4.5, -1) { $1$};
		\node[] at (-3.25, -1) { $0.7$};

        \node[sqYELL](C2) at (4, 0) {$\tau4,d=1$};
		\node[] at (4.5, -1) { $0.3$};

		\node[sqBLUE](E1) at (-4, -2.5) {$\tau6,d=0.25$ };
        \node[] at (-4.5, -3.5) { $1$};
		\node[] at (-2.75, -3.5) { $0.5$};

        \node[sqBLUE](E2) at (4, -2.5) {$\tau7,d=0.475$};
        \node[] at (4.5, -3.5) { $1$};
		\node[] at (2.75, -3.5) { $0.5$};
		
		\node[sqGREENb](F1) at (-4, -5) {$\tau8,d=0.25$};
		\node[sqGREENb](F2) at (0, -5) {$\tau9,d=0.3625$};
		\node[sqGREENb](F3) at (4, -5) {$\tau10,d=0.475$};	
		
		\tikzset{thick edge/.style={-, black, fill=none, thick, text=black}}
		\tikzset{thick arc/.style={->, black, fill=black, thick, >=stealth, text=black}}

		\draw[line width=0.4mm, black,-> ] (A1) to[out=-90,in=90] (C1);
		\draw[line width=0.4mm, black,-> ] (A2) to[out=-90,in=90] (C1);
        \draw[line width=0.4mm, black,-> ] (A2) to[out=-90,in=90] (C2);

		\draw[line width=0.4mm, black,-> ] (C1) to[out=-90,in=90] (E1);
		\draw[line width=0.4mm, black,-> ] (C1) to[out=-90,in=90] (E2);
		\draw[line width=0.4mm, black,-> ] (C2) to[out=-90,in=90] (E2);

		\draw[line width=0.4mm, black,-> ] (E1) to[out=-90,in=90] (F1);
		\draw[line width=0.4mm, black,-> ] (E1) to[out=-90,in=90] (F2);
		\draw[line width=0.4mm, black,-> ] (E2) to[out=-90,in=90] (F3);
        \draw[line width=0.4mm, black,-> ] (E2) to[out=-90,in=90] (F2);
		
		\end{tikzpicture}
		}
	}
    \hfill
    \subfloat[Disruptions from $\tau2$ downstream to $\tau12$]
    {
     \centering
\resizebox{2.8in}{2.1in}{		
    \begin{tikzpicture}[
				sqRED/.style={rectangle, draw=black!240, fill=red!15, very thick, minimum size=7mm},
		sqREDb/.style={rectangle, draw=black!240, fill=red!15, very thick, dashed, minimum size=7mm},
		sqBLUE/.style={rectangle, draw=black!240, fill=blue!15, very thick, minimum size=7mm},
sqGREENb/.style={rectangle, draw=black!240, fill=green!15, very thick, dashed, minimum size=7mm},
		sqBLUEb/.style={rectangle, draw=black!240, fill=blue!15, very thick, dashed, minimum size=7mm},
		roundnode/.style={ellipse, draw=black!240, fill=green!15, very thick, dashed, minimum size=7mm},
		sqGREEN/.style={rectangle, draw=black!240, fill=green!15, very thick, minimum size=7mm},
		roundRED/.style={ellipse, draw=black!240, fill=red!15, very thick, dashed, minimum size=7mm},
		roundREDb/.style={ellipse, draw=black!240, fill=red!15, very thick, minimum size=7mm},
		roundYELL/.style={ellipse, draw=black!240, fill=yellow!20, very thick, dashed, minimum size=7mm},
		sqYELL/.style={rectangle, draw=black!240, fill=yellow!20, very thick, minimum size=7mm},
		roundORA/.style={ellipse, draw=black!240, fill=orange!15, very thick, dashed, minimum size=7mm},		
		sqORA/.style={rectangle, draw=black!240, fill=orange!15, very thick, minimum size=7mm},
		sqORAb/.style={rectangle, draw=black!240, fill=orange!15, very thick, dashed, minimum size=7mm},
		sqMAG/.style={rectangle, draw=black!240, fill=magenta!5, very thick, minimum size=7mm},
		sqMAGb/.style={rectangle, draw=black!240, fill=magenta!5, very thick, dashed, minimum size=7mm},
		squaredBLACK/.style={rectangle, draw=black!240, fill=white!15, very thick, minimum size=7mm},	
		roundBROWN/.style={ellipse, draw=black!240, fill=brown!15, very thick, dashed, minimum size=7mm},
		sqBROWN/.style={rectangle, draw=black!240, fill=brown!15, very thick, minimum size=7mm},		
		]
        \node[] at (0, 3) {};
        \node[] at (13, 3) {};
	
		\node[squaredBLACK](R2) at (7, 4) {$\tau0,d=0$};
        \node[] at (7.25, 3) { $1$};
        \node[sqBROWN](Z1) at (7, 0) {$\tau11,d=1$};
		\node[] at (7.25, -1) { $1$};

		\node[sqGREENb](X1) at (7, -3) {$\tau12,d=1$};

        \node[sqORAb](A2) at (4, 2.5) {$\tau2,d=1$};

		\node[] at (4.5, 1.75) { $1$};

		\tikzset{thick edge/.style={-, black, fill=none, thick, text=black}}
		\tikzset{thick arc/.style={->, black, fill=black, thick, >=stealth, text=black}}

        \draw[line width=0.4mm, black,-> ] (R2) to[out=-90,in=90] (Z1);
        \draw[line width=0.4mm, black,-> ] (Z1) to[out=-90,in=90] (X1);
        \draw[line width=0.4mm, black,-> ] (A2) to[out=-90,in=90] (Z1);
	
		\end{tikzpicture}
		}
    }
    \caption{\footnotesize{Consider the equilibrium flows shown in Figure \ref{fig:disruption}(a). Technology $\tau2 $ affects parts of the network,
     that leading to $\tau{8}, \tau{9}$ and $\tau{10}$, as well as that produced by $\tau{12}$. Panel (a) shows the disruptions for goods between $\tau2$ and $\tau{8}, \tau{9}$ and $\tau{10}$, as well as for good $\tau1$ as this is needed for the recursive calculations. Panel (b) shows the disruptions for goods between $\tau2$ and $\tau12$ as well as for technology $\tau0$. In both panels the weight on an edge from $\tau'$ to $\tau''$ represents $S(\tau',\tau'')$.}}
    \label{fig:disruption_centrality}
  \end{figure}
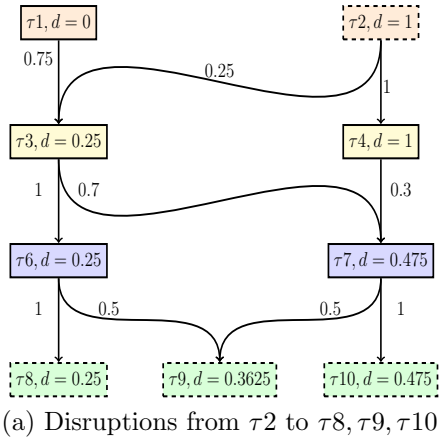
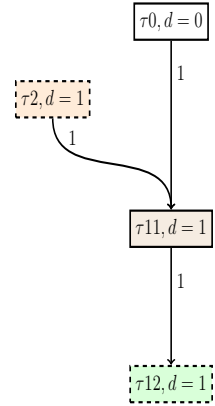

Figure \ref{fig:disruption_centrality} provides an example of how the $d_{\tau,f}(\tau')$ terms are calculated for the equilibrium flows shown in Figure \ref{fig:disruption}(a) with the disrupted technology being $\tau=\tau2$. 
As an example, if the equilibrium prices of the final goods are all 1 (which depend on the labor inputs of the different technologies), then the disruption centrality of technology $\tau2$ for these equilibrium flows is
\begin{equation}
D(\tau2)=\frac{10}{15} \bigg(0.25(0.3)+0.3625(0.4)+0.475(0.3)\bigg)+\frac{5}{15}=0.575\nonumber
\end{equation}

\section{The Medium Run}\label{sec:mediumrun}


While in the short run existing contracts prevent prices from adjusting, as embedded within our proportional rationing assumption, once prices can adjust existing production can reroute which can decrease the impact of the shock, and end up between
the short run and long run impacts.\footnote{E.g., see \cite{iyoha2024exports}.} We now consider this medium run situation, in which prices can adjust and 
shortages can be rationed efficiently, but in which production technologies are fixed, as is labor, which prevents any firms from increasing their output beyond pre-shock levels.

As the medium run can vary and end up anywhere between the short and long run outcomes, depending on circumstances, we provide examples and illustrate the forces, as a full characterization is simply an equilibrium statement that does not provide much further insight.

Flexible prices can correlate downstream disruptions in a way that minimizes the overall disruption. Consider, for example, 
the short run impact of a shock to producer $\tau1$ (without flexible prices) shown in panel (a) of Figure \ref{fig:flexible}.

Here a disruption to $\tau1$ affects the production of the final goods $\tau4$ through two potential channels. First $\tau1$ is used directly as an input, and secondly it is used to produce $\tau3$ which is an input for $\tau4$.  In this example $\tau1$ and $\tau2$ produce the same good,  and so by redirecting some of the output of $\tau1$ to $\tau3$ there is less disruption of that good which is critical to the production of
$\tau4$, while there is a substitute for $\tau1$ in the direct input for $\tau4$.

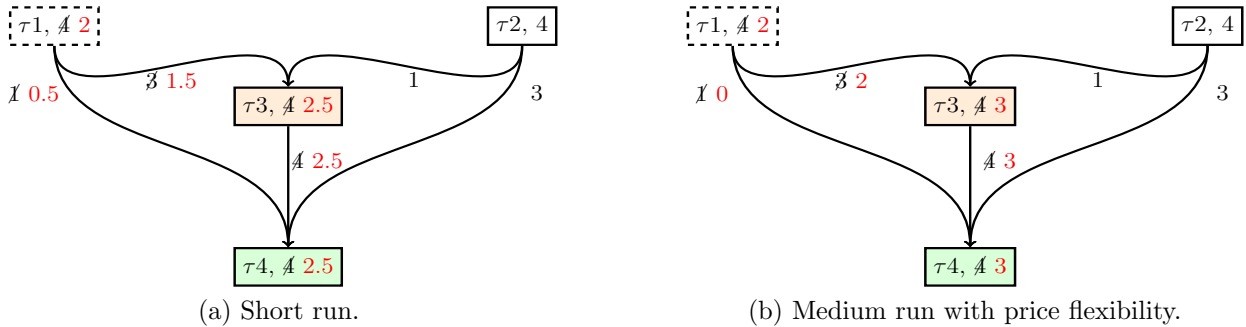
\begin{figure}[!ht]
    \subfloat[Short run.]
    {
\centering
\resizebox{3in}{1.5in}{		
\begin{tikzpicture}[
		sqRED/.style={rectangle, draw=black!240, fill=red!15, very thick, minimum size=7mm},
		sqBLUE/.style={rectangle, draw=black!240, fill=blue!15, very thick, minimum size=7mm},
		roundnode/.style={ellipse, draw=black!240, fill=green!15, very thick, dashed, minimum size=7mm},
		sqGREEN/.style={rectangle, draw=black!240, fill=green!15, very thick, minimum size=7mm},
		roundRED/.style={ellipse, draw=black!240, fill=red!15, very thick, dashed, minimum size=7mm},
		roundREDb/.style={ellipse, draw=black!240, fill=red!15, very thick, minimum size=7mm},
		roundYELL/.style={ellipse, draw=black!240, fill=yellow!20, very thick, dashed, minimum size=7mm},
		sqYELL/.style={rectangle, draw=black!240, fill=yellow!20, very thick, minimum size=7mm},
		roundORA/.style={ellipse, draw=black!240, fill=orange!15, very thick, dashed, minimum size=7mm},		
		sqORA/.style={rectangle, draw=black!240, fill=orange!15, very thick, minimum size=7mm},
		squaredBLACK/.style={rectangle, draw=black!240, fill=white!15, very thick, minimum size=7mm},	
		squaredBLACK2/.style={rectangle, draw=black!240, fill=white!15, very thick, dashed, minimum size=7mm},	
		roundBROWN/.style={ellipse, draw=black!240, fill=brown!15, very thick, dashed, minimum size=7mm},
		sqBROWN/.style={rectangle, draw=black!240, fill=brown!15, very thick, minimum size=7mm},		
        sqPURPLE/.style={rectangle, draw=black!240, fill=purple!15, very thick, minimum size=7mm},				
		]


		\node[squaredBLACK2](R1) at (-4, 2.5) { $\tau1$, $\cancel{4}$ \textcolor{red}{$2$}};
		\node[squaredBLACK](R2) at (4, 2.5) { $\tau2$, $4$};

        \node[] at (2.15, 1.5) { $1$};
        \node[] at (-2, 1.5) { $\cancel{3}$ \textcolor{red}{$1.5$}};

		\node[sqORA](I1) at (0, 1) {$\tau3$, $\cancel{4}$ \textcolor{red}{$2.5$}};
        \node[] at (0.5, 0) { $\cancel{4}$ \textcolor{red}{$2.5$}};
		\node[] at (-4.35, 1.25) { $\cancel{1}$ \textcolor{red}{$0.5$}};

		\node[] at (4.25, 1.25) { $3$};

		\node[sqGREEN](F1) at (0, -2) { $\tau4$, $\cancel{4}$ \textcolor{red}{$2.5$}};
	
		\tikzset{thick edge/.style={-, black, fill=none, thick, text=black}}
		\tikzset{thick arc/.style={->, black, fill=black, thick, >=stealth, text=black}}

		\draw[line width=0.4mm, black,-> ] (R1) to[out=-90,in=90] (I1);
		\draw[line width=0.4mm, black,-> ] (R2) to[out=-90,in=90] (I1);

		\draw[line width=0.4mm, black,-> ] (I1) to[out=-90,in=90] (F1);

		\draw[line width=0.4mm, black,-> ] (R1) to[out=-90,in=90] (F1);
		\draw[line width=0.4mm, black,-> ] (R2) to[out=-90,in=90] (F1);

		\end{tikzpicture}
}
	}
    \hfill
    \subfloat[Medium run with price flexibility.]
    {
     \centering
\resizebox{3in}{1.5in}{		

\begin{tikzpicture}[
		sqRED/.style={rectangle, draw=black!240, fill=red!15, very thick, minimum size=7mm},
		sqBLUE/.style={rectangle, draw=black!240, fill=blue!15, very thick, minimum size=7mm},
		roundnode/.style={ellipse, draw=black!240, fill=green!15, very thick, dashed, minimum size=7mm},
		sqGREEN/.style={rectangle, draw=black!240, fill=green!15, very thick, minimum size=7mm},
		roundRED/.style={ellipse, draw=black!240, fill=red!15, very thick, dashed, minimum size=7mm},
		roundREDb/.style={ellipse, draw=black!240, fill=red!15, very thick, minimum size=7mm},
		roundYELL/.style={ellipse, draw=black!240, fill=yellow!20, very thick, dashed, minimum size=7mm},
		sqYELL/.style={rectangle, draw=black!240, fill=yellow!20, very thick, minimum size=7mm},
		roundORA/.style={ellipse, draw=black!240, fill=orange!15, very thick, dashed, minimum size=7mm},		
		sqORA/.style={rectangle, draw=black!240, fill=orange!15, very thick, minimum size=7mm},
		squaredBLACK/.style={rectangle, draw=black!240, fill=white!15, very thick, minimum size=7mm},	
		squaredBLACK2/.style={rectangle, draw=black!240, fill=white!15, very thick, dashed, minimum size=7mm},	
		roundBROWN/.style={ellipse, draw=black!240, fill=brown!15, very thick, dashed, minimum size=7mm},
		sqBROWN/.style={rectangle, draw=black!240, fill=brown!15, very thick, minimum size=7mm},		
        sqPURPLE/.style={rectangle, draw=black!240, fill=purple!15, very thick, minimum size=7mm},				
		]


		\node[squaredBLACK2](R1) at (-4, 2.5) { $\tau1$, $\cancel{4}$ \textcolor{red}{$2$}};
		\node[squaredBLACK](R2) at (4, 2.5) { $\tau2$, $4$};

        \node[] at (2.15, 1.5) { $1$};
        \node[] at (-2, 1.5) { $\cancel{3}$ \textcolor{red}{$2$}};

		\node[sqORA](I1) at (0, 1) { $\tau3$, $\cancel{4}$ \textcolor{red}{$3$}};
        \node[] at (0.5, 0) { $\cancel{4}$ \textcolor{red}{$3$}};
		\node[] at (-4.35, 1.25) { $\cancel{1}$ \textcolor{red}{$0$}};

		\node[] at (4.25, 1.25) { $3$};

		\node[sqGREEN](F1) at (0, -2) { $\tau4$, $\cancel{4}$ \textcolor{red}{$3$}};

		\tikzset{thick edge/.style={-, black, fill=none, thick, text=black}}
		\tikzset{thick arc/.style={->, black, fill=black, thick, >=stealth, text=black}}

		\draw[line width=0.4mm, black,-> ] (R1) to[out=-90,in=90] (I1);
		\draw[line width=0.4mm, black,-> ] (R2) to[out=-90,in=90] (I1);
		);

		\draw[line width=0.4mm, black,-> ] (I1) to[out=-90,in=90] (F1);

		\draw[line width=0.4mm, black,-> ] (R1) to[out=-90,in=90] (F1);
		\draw[line width=0.4mm, black,-> ] (R2) to[out=-90,in=90] (F1);

		\end{tikzpicture}
		}
    }
    \caption{\footnotesize{How price adjustments can partly mitigate shocks. Here $\tau1$ and $\tau2$ are producing the same goods, and so can be substituted for each other.}}
    \label{fig:flexible}
  \end{figure}

A second example is illustrated in Figure \ref{fig:chips_medium_run}. Suppose labor is priced at $1$, one unit of labor is needed to make one unit of the intermediate good $R1$, $1$ unit of $R1$ and no units of labor are needed to make final good $F1$, while $1/9$th of a unit of $R1$ and $8/9$ths of a unit of labor are needed to make a unit of final good $F2$. Thus both $F1$ and $F2$ are priced at $1$. Suppose there are $10$ units of labor, and at these prices, labor demands $9$ units of good $F2$ and $1$ unit of $F1$. Thus the overall production by $F2$ is much more valuable than the overall production by $F1$: $F2$'s output contributes $9$ to a GDP of $10$, while $F1$'s output contributes $1$.

Consider now a 10\% total factor productivity shock to $R1$. In the short run, proportional rationing leads the output of both final goods to be reduced by 10\%, and GDP is correspondingly reduced by 10\% (achieving the short run upper bound). However, if we relax the proportional rationing constraint and allow intermediate good to flow to its most valuable use, then production of good $F2$ will not be affected at all, while output of $F1$ will decrease by 20\%. However, because production of good $F2$ is substantially more valuable, the reduction in GDP is now just 2\%.

For a real world example where such a reallocation would have been valuable, consider the semiconductor / computer chip crisis. In the short run the production of very valuable downstream goods like cars was disrupted, along with many other consumer goods. With flexible prices, in the medium run, the disruptions should become concentrated on less valuable consumer goods (like, for example, cheap toys). The extent to and speed with which this happened is an empirical question of interest.

Another illustrative example along these lines is the rolling electricity blackouts in California in the summer of 2020. This was applied without discrimination (save some emergency services and legal restrictions) and so fits our proportional rationing assumption. However, in anticipation of further future blackouts, more complicated contracts have emerged that result in priorities in rationing and corresponding differences in prices, with electricity being allocated to its highest value uses when there are shortages. Again, it is an empirical question regarding the extent to which efficient contracts have been put in place.

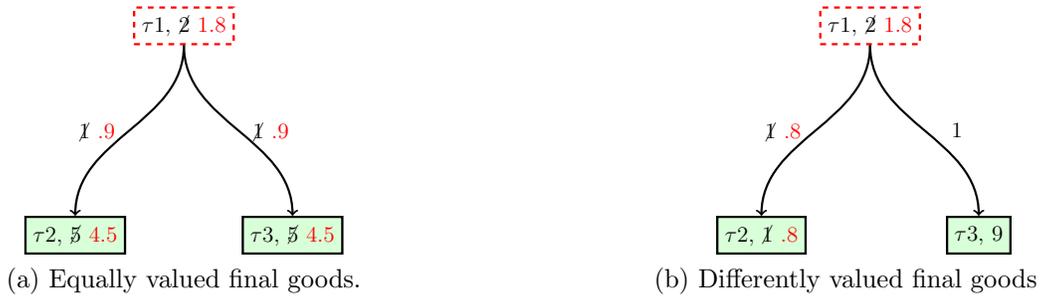
\begin{figure}[!ht]
    \subfloat[Equally valued final goods.]
    {
\centering
\resizebox{3in}{1.3in}{		
\begin{tikzpicture}[
		sqRED/.style={rectangle, draw=black!240, fill=red!15, very thick, minimum size=7mm},
		sqBLUE/.style={rectangle, draw=black!240, fill=blue!15, very thick, minimum size=7mm},
		roundnode/.style={ellipse, draw=black!240, fill=green!15, very thick, dashed, minimum size=7mm},
		sqGREEN/.style={rectangle, draw=black!240, fill=green!15, very thick, minimum size=7mm},
		roundRED/.style={ellipse, draw=black!240, fill=red!15, very thick, dashed, minimum size=7mm},
		roundREDb/.style={ellipse, draw=black!240, fill=red!15, very thick, minimum size=7mm},
		roundYELL/.style={ellipse, draw=black!240, fill=yellow!20, very thick, dashed, minimum size=7mm},
		sqYELL/.style={rectangle, draw=black!240, fill=yellow!20, very thick, minimum size=7mm},
		roundORA/.style={ellipse, draw=black!240, fill=orange!15, very thick, dashed, minimum size=7mm},		
		sqORA/.style={rectangle, draw=black!240, fill=orange!15, very thick, minimum size=7mm},
		squaredBLACK/.style={rectangle, draw=black!240, fill=white!15, very thick, minimum size=7mm},	
		squaredBLACK2/.style={rectangle, draw=red!240, fill=white!15, very thick, dashed, minimum size=7mm},	
		roundBROWN/.style={ellipse, draw=black!240, fill=brown!15, very thick, dashed, minimum size=7mm},
		sqBROWN/.style={rectangle, draw=black!240, fill=brown!15, very thick, minimum size=7mm},		
        sqPURPLE/.style={rectangle, draw=black!240, fill=purple!15, very thick, minimum size=7mm},				
		]

        \node[] at (5, 0) {  };
        \node[] at (-5, 0) {  };

		\node[squaredBLACK2](R1) at (0, 4) { $\tau1$, $\cancel{2}$ \textcolor{red}{$1.8$}};

        \node[] at (1.6, 2) {  $\cancel{1}$ \textcolor{red}{$.9$}};
        \node[] at (-1.6, 2) { $\cancel{1}$ \textcolor{red}{$.9$}};

		\node[sqGREEN](F1) at (-2, 0) { $\tau2$, $\cancel{5}$ \textcolor{red}{$4.5$}};
		\node[sqGREEN](F2) at (2, 0) { $\tau3$, $\cancel{5}$ \textcolor{red}{$4.5$}};


		\tikzset{thick edge/.style={-, black, fill=none, thick, text=black}}
		\tikzset{thick arc/.style={->, black, fill=black, thick, >=stealth, text=black}}

		\draw[line width=0.4mm, black,-> ] (R1) to[out=-90,in=90] (F2);
		\draw[line width=0.4mm, black,-> ] (R1) to[out=-90,in=90] (F1);


%

		\end{tikzpicture}
}
	}
    \hfill
    \subfloat[Differently valued final goods]
    {
     \centering
\hspace{0.2in}
\resizebox{3in}{1.3in}{		

\begin{tikzpicture}[
		sqRED/.style={rectangle, draw=black!240, fill=red!15, very thick, minimum size=7mm},
		sqBLUE/.style={rectangle, draw=black!240, fill=blue!15, very thick, minimum size=7mm},
		roundnode/.style={ellipse, draw=black!240, fill=green!15, very thick, dashed, minimum size=7mm},
		sqGREEN/.style={rectangle, draw=black!240, fill=green!15, very thick, minimum size=7mm},
		roundRED/.style={ellipse, draw=black!240, fill=red!15, very thick, dashed, minimum size=7mm},
		roundREDb/.style={ellipse, draw=black!240, fill=red!15, very thick, minimum size=7mm},
		roundYELL/.style={ellipse, draw=black!240, fill=yellow!20, very thick, dashed, minimum size=7mm},
		sqYELL/.style={rectangle, draw=black!240, fill=yellow!20, very thick, minimum size=7mm},
		roundORA/.style={ellipse, draw=black!240, fill=orange!15, very thick, dashed, minimum size=7mm},		
		sqORA/.style={rectangle, draw=black!240, fill=orange!15, very thick, minimum size=7mm},
		squaredBLACK/.style={rectangle, draw=black!240, fill=white!15, very thick, minimum size=7mm},	
		squaredBLACK2/.style={rectangle, draw=red!240, fill=white!15, very thick, dashed, minimum size=7mm},	
		roundBROWN/.style={ellipse, draw=black!240, fill=brown!15, very thick, dashed, minimum size=7mm},
		sqBROWN/.style={rectangle, draw=black!240, fill=brown!15, very thick, minimum size=7mm},		
        sqPURPLE/.style={rectangle, draw=black!240, fill=purple!15, very thick, minimum size=7mm},				
		]

        \node[] at (5, 0) {  };
        \node[] at (-5, 0) {  };

		\node[squaredBLACK2](R1) at (0, 4) { $\tau1$, $\cancel{2}$ \textcolor{red}{$1.8$}};

        \node[] at (1.6, 2) {  ${1}$ };
        \node[] at (-1.6, 2) { $\cancel{1}$ \textcolor{red}{$.8$}};

		\node[sqGREEN](F1) at (-2, 0) { $\tau2$, $\cancel{1}$ \textcolor{red}{$.8$}};
		\node[sqGREEN](F2) at (2, 0) { $\tau3$, ${9}$ };


		\tikzset{thick edge/.style={-, black, fill=none, thick, text=black}}
		\tikzset{thick arc/.style={->, black, fill=black, thick, >=stealth, text=black}}

		\draw[line width=0.4mm, black,-> ] (R1) to[out=-90,in=90] (F2);
		\draw[line width=0.4mm, black,-> ] (R1) to[out=-90,in=90] (F1);
	

		\end{tikzpicture}
		}
    }
    \caption{{\footnotesize The benefits of pricing flexibility. If the downstream goods have equal values the short run and medium run outcomes are the same and 10\% of output is lost in the medium run (Panel (a)). However, the downstream goods have different values, then the adjustment in the medium run is helpful as it can direct the disruption down the least valuable path. Now in the medium run only a 2\% of output is lost, versus 10\% in the short run that instead propagates equally down both paths (Panel (b)).}}
    \label{fig:chips_medium_run}
  \end{figure}

The output loss in the medium run (with flexible prices) is weakly less than the output loss in the short run, because we are relaxing the proportional rationing constraint (and effectively, reoptimizing). However, there are situations in which there is no difference between the medium and short run. For example, if the conditions of Proposition \ref{prop:general_industry_shocks} hold, such that there is no technological diversity and there are industry specific shocks and each shocked technology is used---directly or indirectly---in the production of only one final good, then there is no scope for reducing the impact of the shock in the medium run.   There is no scope for correlating the downstream instances of the disruption nor any scope for assigning the disruptions disproportionately to supply chains of lower-value final goods. On the other hand, it is possible to construct examples in which price flexibility makes a big difference. Specifically, the ratio of lost output in the short run to lost output in the medium run with price flexibility is unbounded (as we show in Supplementary Appendix \ref{sec:Potential_price_flexibility}).


In this section we have considered the medium run relaxation of the short run impact of a shock that allows prices to adjust. 
There are further variations that live between
the medium and long run.  For example, one can relax the no increased output constraints for some technologies. In a somewhat different setting, this is the problem studied in \cite{carvalhobottleneck2024}.
They define exogenous capacity constraints on outputs and good flows, and allow production of used technologies to be scaled up subject to these constraints. This can result in some firms, both upstream and downstream of a disruption, increasing their outputs and it is no longer possible to trace through the impact of a shock with a simple and intuitive algorithm like the shock propagation algorithm we consider here.

\section{Potential impact of price flexibility}\label{sec:Potential_price_flexibility}

In other cases, the ability for prices to adjust can make a big difference. One way to measure this difference is to consider the ratio of lost output when prices can adjust, to lost output when prices cannot adjust (and the proportional rationing constraint must be satisfied). Given an initial economy $\mathcal{E}$ and associated pre-shock equilibrium, and letting $y^{ip}_{\tau}$ be a solution to the minimum disruption problem (with inflexible prices), we define the \emph{loss to price rigidity (LPR)} as $$LPR(\mathcal{E},T^{Shocked},\lambda)=\frac{\sum_{\tau:O(\tau)\in F} (y_{\tau}^*-y^{ip}_{\tau}) }{\sum_{\tau:O(\tau)\in F}(y_{\tau}^*- y^{fp}_{\tau})}\geq 1.$$

\begin{proposition}\label{prop:unbounded_LPR}
The potential loss to price rigidity is unbounded: there exists a sequence of economies $\{\mathcal{E}_t\}_t$ such that $\lim_{t\rightarrow \infty}LPR(\mathcal{E}_t,T^{Shocked},\lambda)=\infty$.
\end{proposition}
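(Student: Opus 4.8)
The plan is to push the ``differently-valued final goods'' economy of Figure~\ref{fig:chips_medium_run}(b) to an extreme. For each $t$ let $\mathcal{E}_t$ have a single country, one intermediate good $R$, two final goods $F_1,F_2$, and three technologies: $\tau_R$ turns one unit of labor into one unit of $R$; $\tau_1$ turns $\nu$ units of labor and one unit of $R$ into one unit of $F_1$ (for a fixed $\nu>0$, so that every process uses labor); and $\tau_2$ turns $1-\epsilon_t$ units of labor and $\epsilon_t$ units of $R$ into one unit of $F_2$, with $\epsilon_t\downarrow 0$. Take homothetic (e.g.\ Cobb--Douglas) preferences over $(F_1,F_2)$ with the expenditure weight on $F_1$ chosen so that the equilibrium final outputs are $y_{F_1}=1$ and $y_{F_2}=1/\epsilon_t$; then $p_R=1$, $p_{F_1}=1+\nu$, $p_{F_2}=1$, the $R$-output is $y_R=y_{F_1}+\epsilon_t y_{F_2}=2$, and the labor endowment is $L_t=2+\nu+(1-\epsilon_t)/\epsilon_t$. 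I would shock $\tau_R$ by a fixed factor $\lambda\in(1/2,1)$, say $\lambda=3/4$, uniformly in $t$.

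In the short run the Shock Propagation Algorithm cuts both flows out of $\tau_R$ by the factor $\lambda$; since the labor inputs to $\tau_1$ and $\tau_2$ are untouched and the technologies are Leontief, both final goods drop to $\lambda$ of their pre-shock levels, so $\sum_{O(\tau)\in F}(y^*_\tau-y^{ip}_\tau)=(1-\lambda)(1+1/\epsilon_t)$. In the medium run, prices adjust and $R$ flows to its highest-value use. Good $F_2$ yields value $p_{F_2}/\epsilon_t=1/\epsilon_t$ per unit of $R$ against only $p_{F_1}=1+\nu$ for $F_1$, so for $\epsilon_t<1/(1+\nu)$ all the $R$ that $F_2$ can absorb goes to $F_2$; but $\tau_2$'s labor input is frozen at $(1-\epsilon_t)y_{F_2}$, which caps $y_{F_2}$ at its pre-shock value $1/\epsilon_t$ and hence caps $F_2$'s $R$-demand at $\epsilon_t y_{F_2}=1<2\lambda$. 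Consequently the constrained (medium-run) equilibrium leaves $y^{fp}_{F_2}=1/\epsilon_t$ and devotes the remaining $2\lambda-1>0$ units of $R$ to $F_1$, giving $y^{fp}_{F_1}=2\lambda-1$ and a medium-run loss of $\sum_{O(\tau)\in F}(y^*_\tau-y^{fp}_\tau)=1-(2\lambda-1)=2(1-\lambda)$. Therefore $LPR(\mathcal{E}_t,\{\tau_R\},\lambda)=\dfrac{(1-\lambda)(1+1/\epsilon_t)}{2(1-\lambda)}=\tfrac12\bigl(1+1/\epsilon_t\bigr)$, which is at least $1$ for $\epsilon_t\le 1$ and diverges as $\epsilon_t\downarrow 0$, establishing the claim; note that $\lambda$ and the shocked set (the $R$-technology) are the same along the whole sequence.

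The equilibrium bookkeeping and the short-run step are routine. The one step requiring care is the medium-run computation: one must invoke that the constrained equilibrium with fixed labor and technologies is efficient given those constraints (the First Welfare Theorem for the adjusted ``one-country-per-technology'' economy of the medium-run definition), so the allocation of the single reshufflable scarce input $R$ equalizes marginal values and thus fills $F_2$ to its Leontief labor cap before sending the residual to $F_1$; and one must exhibit a goods-price vector supporting the resulting consumption $(2\lambda-1,\,1/\epsilon_t)$, which for Cobb--Douglas preferences follows by a one-line first-order computation showing that, because the weight on $F_2$ is forced to be large (to match the pre-shock quantities), $F_2$'s labor cap binds at the utility optimum. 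If one is fussy about the $C^1$ hypothesis on $U$ at the boundary, one may replace Cobb--Douglas by a $C^1$ CES utility of comparable elasticity without affecting anything, since the relevant equilibria are interior.
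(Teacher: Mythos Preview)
Your construction is correct and takes a genuinely different route from the paper. The paper builds a sequence $\mathcal{E}_t$ with $t$ countries, $t^2$ level-2 intermediate technologies and $t$ final-good producers: a shock to one upstream technology hits one input in each country's supply chain, so under proportional rationing all $t$ final-good outputs fall, whereas with flexible prices the unaffected intermediate outputs can be \emph{rerouted across countries} so that only country~1's final output is lost, giving $LPR=t\to\infty$. Your example is much leaner---one country, one intermediate good, two final goods---and isolates a different mechanism: not rerouting to correlate disruptions, but \emph{redirecting a scarce input to its low-intensity use} (the mechanism behind Figure~\ref{fig:chips_medium_run}(b), pushed to the limit). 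What the paper's construction buys is that it works for every $\lambda\in(0,1)$, whereas yours needs $\lambda>1/2$ so that the residual $R$ after filling $F_2$'s cap is positive; since the proposition is existential and you fix $\lambda=3/4$, this is not a gap, but you should flag it. Your verification that the medium-run constrained optimum has the $F_2$ labor cap binding (because the Cobb--Douglas weight on $F_2$ is $1-\alpha=(1/\epsilon_t)/L_t\to 1$, forcing the unconstrained tangency on the $R$-simplex to $c_2\approx 2\lambda/\epsilon_t>1/\epsilon_t$) is the key step, and it goes through as you indicate.
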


We prove Proposition \ref{prop:unbounded_LPR} by example. Consider an economy $\mathcal{E}_t$ with initial equilibrium flows as depicted in Figure \ref{fig:LPR_Example} for $t=3$. For this example it is helpful to define intermediate good levels. There are two levels of intermediate goods. Level 1 intermediate goods use only labor as an input. Level 2 intermediate good use labor and intermediate level 1 goods as inputs. Finally, there are final goods which use intermediate level 2 goods and labor as inputs.

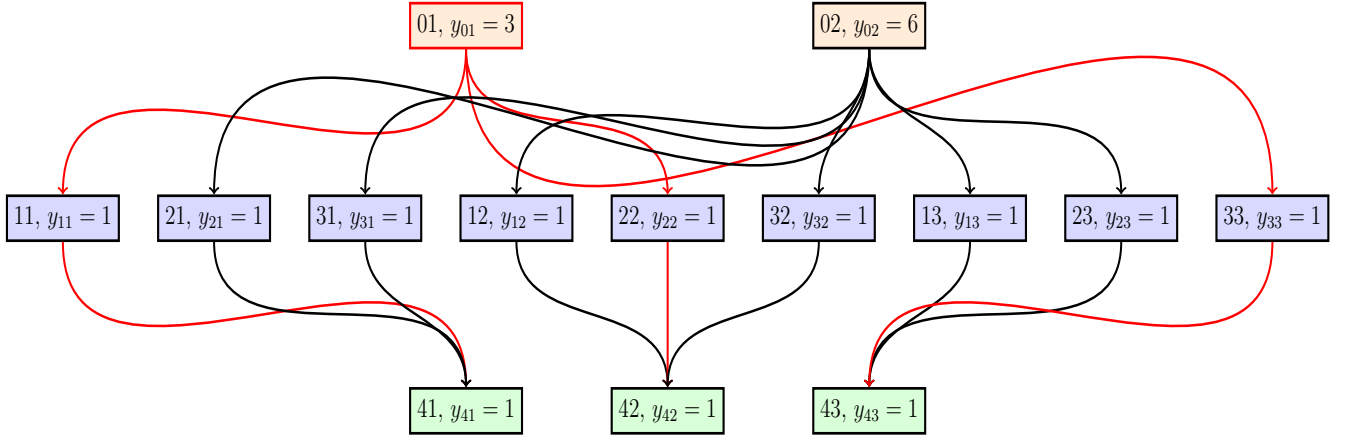
\begin{figure}[t!]
	\begin{center}
\resizebox{7in}{3.5in}{		
\begin{tikzpicture}[
		sqRED/.style={rectangle, draw=red!240, fill=orange!15, very thick, minimum size=7mm},
		sqBLUE/.style={rectangle, draw=black!240, fill=blue!15, very thick, minimum size=7mm},
		sqGREEN/.style={rectangle, draw=black!240, fill=green!15, very thick, minimum size=7mm},
		sqORA/.style={rectangle, draw=black!240, fill=orange!15, very thick, minimum size=7mm},
		]

		\node[sqRED](01) at (-4, 2) { 01, $y_{01}=3$};

        \node[sqORA](02) at (4, 2) { 02, $y_{02}=6$};
 		
        \node[sqBLUE](11) at (-12, -1) { 11, $y_{11}=1$};
        		
        \node[sqBLUE](21) at (-9, -1) { 21, $y_{21}=1$};
        		
        \node[sqBLUE](31) at (-6, -1) { 31, $y_{31}=1$};
        		
        \node[sqBLUE](12) at (-3, -1) { 12, $y_{12}=1$};
        		
        \node[sqBLUE](22) at (0, -1) { 22, $y_{22}=1$};

        \node[sqBLUE](32) at (3, -1) { 32, $y_{32}=1$};
        		
        \node[sqBLUE](13) at (6, -1) { 13, $y_{13}=1$};
        		
        \node[sqBLUE](23) at (9, -1) { 23, $y_{23}=1$};
        		
        \node[sqBLUE](33) at (12, -1) { 33, $y_{33}=1$};

        \node[sqGREEN](41) at (-4, -4) { 41, $y_{41}=1$};
		\node[sqGREEN](42) at (0, -4) { 42, $y_{42}=1$};
		\node[sqGREEN](43) at (4, -4) { 43, $y_{43}=1$};

		\tikzset{thick edge/.style={-, black, fill=none, thick, text=black}}
		\tikzset{thick arc/.style={->, black, fill=black, thick, >=stealth, text=black}}
		
	    \draw[line width=0.4mm, red,-> ] (01) to[out=-90,in=90] (11);
		\draw[line width=0.4mm, red,-> ] (01) to[out=-90,in=90] (22);
		\draw[line width=0.4mm, red,-> ] (01) to[out=-90,in=90] (33);

	    \draw[line width=0.4mm, black,-> ] (02) to[out=-90,in=90] (12);
		\draw[line width=0.4mm, black,-> ] (02) to[out=-90,in=90] (13);
		\draw[line width=0.4mm, black,-> ] (02) to[out=-90,in=90] (21);
	    \draw[line width=0.4mm, black,-> ] (02) to[out=-90,in=90] (23);
		\draw[line width=0.4mm, black,-> ] (02) to[out=-90,in=90] (31);
		\draw[line width=0.4mm, black,-> ] (02) to[out=-90,in=90] (32);

        \draw[line width=0.4mm, red,-> ] (11) to[out=-90,in=90] (41);
        \draw[line width=0.4mm, black,-> ] (21) to[out=-90,in=90] (41);
        \draw[line width=0.4mm, black,-> ] (31) to[out=-90,in=90] (41);

        \draw[line width=0.4mm, black,-> ] (12) to[out=-90,in=90] (42);
        \draw[line width=0.4mm, red,-> ] (22) to[out=-90,in=90] (42);
        \draw[line width=0.4mm, black,-> ] (32) to[out=-90,in=90] (42);

        \draw[line width=0.4mm, black,-> ] (13) to[out=-90,in=90] (43);
        \draw[line width=0.4mm, black,-> ] (23) to[out=-90,in=90] (43);
        \draw[line width=0.4mm, red,-> ] (33) to[out=-90,in=90] (43);
		\end{tikzpicture}
		}
		\caption{\label{fig:LPR_Example}{\footnotesize Initial equilibrium of economy $\mathcal{E}_3$, with red lines indicating flows that will be disrupted with inflexible pricing following a shock to technology $01$.}}					
	\end{center}
\end{figure}

Intermediate level 1 consists of a single type of good, good $0$, produced by two countries $1$ and $2$. Both countries require 1 unit of labor to produce one unit of output. Intermediate level 2 consists of $t$ goods. All $t$ goods are produced in all $t$ countries. Each good intermediate level $1$ good $i$ requires one unit of good $0$ and one unit of labor. In equilibrium, all goods are produced in all countries and for the production of good $i$ by country $i$, and for $i=1,\ldots,t$, input good $0$ is sourced from country $1$, while otherwise input good $0$ is sourced from country $2$. Finally, each country also produces the unique final good. Production of this final good in all countries requires one unit of all intermediate goods $1, \ldots, t$ and one unit of labor to produce one unit of output.

We assume that no transportation costs for any goods except labor. Country $1$, is endowed with $2t+1$ units of labor, country 2 is endowed with $t^2+1$ units of labor and the other countries are endowed with $t+1$ units of labor. In equilibrium: Country $1$ produces $t$ units of good $0$, one unit of goods $1,\ldots,t$ and one unit of the final good, Country $2$ produces $t(t-1)$ units of good $0$, one unit of goods $1,\ldots,t$ and one unit of the final good. The other countries produce one unit of goods $1,\ldots,t$ and one unit of the final good. There is then an equilibrium in which all final goods are produced using only domestically produced inputs, and all labor works in its home country.

\begin{figure}[t!]
	\begin{center}
\resizebox{7in}{3.5in}{		
\begin{tikzpicture}[
		sqRED/.style={rectangle, draw=red!240, fill=orange!15, very thick, minimum size=7mm},
		sqBLUE/.style={rectangle, draw=black!240, fill=blue!15, very thick, minimum size=7mm},
		sqGREEN/.style={rectangle, draw=black!240, fill=green!15, very thick, minimum size=7mm},
		sqORA/.style={rectangle, draw=black!240, fill=orange!15, very thick, minimum size=7mm},
		]

		\node[sqRED](01) at (-4, 2) { 01, $y_{01}=3$};

        \node[sqORA](02) at (4, 2) { 02, $y_{02}=6$};
 		
        \node[sqBLUE](11) at (-12, -1) { 11, $y_{11}=1$};
        		
        \node[sqBLUE](21) at (-9, -1) { 21, $y_{21}=1$};
        		
        \node[sqBLUE](31) at (-6, -1) { 31, $y_{31}=1$};
        		
        \node[sqBLUE](12) at (-3, -1) { 12, $y_{12}=1$};
        		
        \node[sqBLUE](22) at (0, -1) { 22, $y_{22}=1$};

        \node[sqBLUE](32) at (3, -1) { 32, $y_{32}=1$};
        		
        \node[sqBLUE](13) at (6, -1) { 13, $y_{13}=1$};
        		
        \node[sqBLUE](23) at (9, -1) { 23, $y_{23}=1$};
        		
        \node[sqBLUE](33) at (12, -1) { 33, $y_{33}=1$};

        \node[sqGREEN](41) at (-4, -4) { 41, $y_{41}=1$};
		\node[sqGREEN](42) at (0, -4) { 42, $y_{42}=1$};
		\node[sqGREEN](43) at (4, -4) { 43, $y_{43}=1$};

		\tikzset{thick edge/.style={-, black, fill=none, thick, text=black}}
		\tikzset{thick arc/.style={->, black, fill=black, thick, >=stealth, text=black}}
		

	    \draw[line width=0.4mm, black,-> ] (02) to[out=-90,in=90] (12);
		\draw[line width=0.4mm, black,-> ] (02) to[out=-90,in=90] (13);
		\draw[line width=0.4mm, black,-> ] (02) to[out=-90,in=90] (21);
	    \draw[line width=0.4mm, black,-> ] (02) to[out=-90,in=90] (23);
		\draw[line width=0.4mm, black,-> ] (02) to[out=-90,in=90] (31);
		\draw[line width=0.4mm, black,-> ] (02) to[out=-90,in=90] (32);

        \draw[line width=0.4mm, black,-> ] (21) to[out=-90,in=90] (42);
        \draw[line width=0.4mm, black,-> ] (31) to[out=-90,in=90] (43);

        \draw[line width=0.4mm, black,-> ] (12) to[out=-90,in=90] (42);
        \draw[line width=0.4mm, black,-> ] (32) to[out=-90,in=90] (42);

        \draw[line width=0.4mm, black,-> ] (13) to[out=-90,in=90] (43);
        \draw[line width=0.4mm, black,-> ] (23) to[out=-90,in=90] (43);
		\end{tikzpicture}
		}
		\caption{\label{fig:LPR_Example2}\footnotesize{ Rerouted flows with flexible pricing following a shock to technology $01$.}}					
	\end{center}
\end{figure}
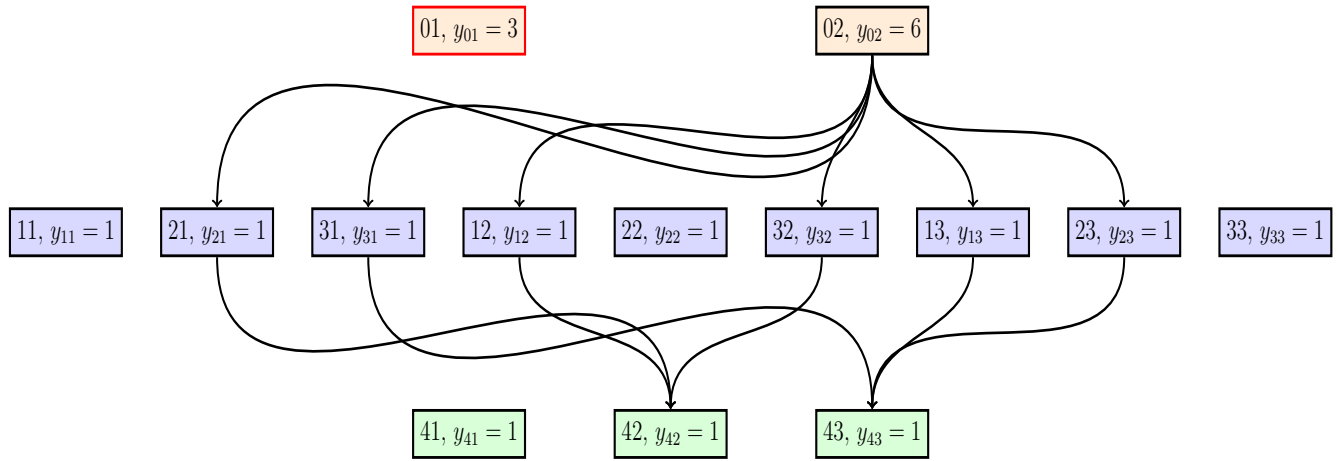

We consider a shock to the output of good $0$ production in country $1$ that reduces its output to $(1-\lambda)t$ units. With inflexible prices (such that there is proportional rationing), by Proposition \ref{prop:cut}, this reduces the aggregate output of final good production to $(1-\lambda)$ of its initial level (i.e., it achieves the bound from Proposition \ref{prop:shocked_GDP}). Intuitively, output of good $i$ in country $i$ is reduced to $(1-\lambda)$ of its initial level, because it sources good $0$ from country $1$, and as all of good $i$ used by country $i$ in the production of its final good is sourced domestically, output of its final good production also reduced to $(1-\lambda)$ of its initial level.

In contrast, with flexible prices, it is possible to maintain final good production in countries $2,\ldots,t$. As shown in Figure \ref{fig:LPR_Example2} for $t=3$, this can be done by rerouting the output of input good $i$ in country $1$ to country $i$, for $i=2,\dots, t$, for use in country $i$'s final good production. Thus, letting $\tau_{01}$ be the technology used to produce good $0$ in country $1$, $LPR(\mathcal{E}_t,\tau_{01},\lambda)=\frac{\lambda}{\lambda/t},$ and so $$\lim_{t\rightarrow \infty}LPR(\mathcal{E}_t,\tau_{01},\lambda)=\lim_{t\rightarrow \infty}t\left(\frac{\lambda}{\lambda}\right)=\infty.$$

However, if instead both level $0$ intermediate good producers has been shocked, then by Proposition \ref{prop:general_industry_shocks} we would have $LPR(\mathcal{E}_t,\{\tau_{01},\tau_{02}\},\lambda)=\lambda/\lambda=1,$ for all $t$.

\section{Power}

This appendix contains additional material to support the section on power in the main paper (Section \ref{sec:power}).

\subsection{A More Involved Example of Power}\label{sec:involvedexample}

Consider five countries labeled $1-5$, three raw materials (intermediate goods that only use labor in their production) $R1$ to $R3$, six intermediate goods $I1$ to $I6$ and four final goods $F1$ to $F4$. The initial equilibrium flows are shown in Figure \ref{fig:power_calculation}. Labor is immobile and we index each active technology by the good it produces and the country in which it is located. Countries 3-5 each operate a single technology, but both Country 1 and Country 2 operate multiple technologies. 
Units of output are defined such that one unit of labor is required to produce one unit of output (which can be done when each good is produced by one technology (possibly in different countries, but with identical recipes). 

Country 1 and Country 2 are both endowed with $100$ units of labor, which are fully deployed across their active technologies (illustrated by the respective shaded regions in Figure \ref{fig:power_calculation}). The GDP of Country 1 is therefore $100 w_1$, where $w_1$ is wage in country $1$, and correspondingly, the GDP of Country 2 is $100 w_2$.

\begin{figure}[!ht]
       \subfloat[Initial Equilibrium]
    {
\centering
\resizebox{6.5in}{5in}{		
    \begin{tikzpicture}[ 
        he/.style={draw, thick, rounded corners},
		sqRED/.style={rectangle, draw=black!240, fill=red!25, very thick, minimum size=7mm},
		sqBLUE/.style={rectangle, draw=black!240, fill=blue!15, very thick, minimum size=7mm},
		sqPURPLE/.style={rectangle, draw=black!240, fill=purple!15, very thick, minimum size=7mm},
        sqGREEN/.style={rectangle, draw=black!240, fill=green!15, very thick, minimum size=7mm},
		roundRED/.style={ellipse, draw=black!240, fill=red!15, very thick, dashed, minimum size=7mm},
		roundREDb/.style={ellipse, draw=black!240, fill=red!15, very thick, minimum size=7mm},
		roundYELL/.style={ellipse, draw=black!240, fill=yellow!20, very thick, dashed, minimum size=7mm},
		sqYELL/.style={rectangle, draw=black!240, fill=yellow!25, very thick, minimum size=7mm},
		roundORA/.style={ellipse, draw=black!240, fill=orange!20, very thick, dashed, minimum size=7mm},		
		sqORA/.style={rectangle, draw=black!240, fill=orange!15, very thick, minimum size=7mm},
		squaredBLACK/.style={rectangle, draw=black!240, fill=white!15, very thick, minimum size=7mm},
squaredBLACK2/.style={rectangle, draw=red!240, fill=white!15, very thick, dashed, minimum size=7mm},		
		roundBROWN/.style={ellipse, draw=black!240, fill=brown!15, very thick, dashed, minimum size=7mm},
		sqBROWN/.style={rectangle, draw=black!240, fill=brown!25, very thick, minimum size=7mm},		
        sqPURPLE/.style={rectangle, draw=black!240, fill=purple!15, very thick, minimum size=7mm},				
        sqGRAY/.style={rectangle, draw=black!240, fill=gray!25, very thick, minimum size=7mm},	 
        sqGRAY2/.style={rectangle, draw=black!240, fill=gray!15, very thick, minimum size=7mm},	 
        sqWHITE/.style={rectangle, draw=black!240, fill=white!10, very thick, minimum size=7mm},	 
        sqBLACK/.style={rectangle, draw=black!240, fill=black!25, very thick, minimum size=7mm},	 sqPINK/.style={rectangle, draw=black!240, fill=pink!25, very thick, minimum size=7mm},	sqTEAL/.style={rectangle, draw=black!240, fill=teal!25, very thick, minimum size=7mm},				
		]

\node[](HG1A) at (6.75, 5.25) {};
\node[](HG1B) at (-0.3, -10.25) {};
\node [he, fill=cyan!5, fit=(HG1A) (HG1B)] {};

\node[](HG2A) at (-5.85, 1.75) {};
\node[](HG2B) at (-1.1, -10.25) {};
\node [he, fill=lime!5, fit=(HG2A) (HG2B)] {};

		\node[sqTEAL](R1-3) at (-2.25, 5) {$\tau^{R1-3}, 4$};	
        \node[sqBLUE](R2-1) at (2.5, 5) {$\tau^{R2-1}, 11$};
        \node[sqGREEN](R3-5) at (8.1, 5) {$\tau^{R3-4}, 11$};

        \node[] at (3, 2.5) { $11$};
        
        \node[] at (-0.75, 3.5) { $2$};
		\node[] at (-3.75, 3.5) { $2$};
        \node[] at (7.5, 1.5) { $11$};

		\node[sqORA](I3-1) at (2.5, 0) {$\tau^{I3-1},11$};

        \node[] at (0, -2.5) { $3$};
        \node[] at (2, -2.5) { $2$};
        \node[] at (3.25, -2.5) { $6$};

		\node[sqYELL](I1-2) at (-5, 1.5) {$\tau^{I1-2}, 20$};
        \node[sqPURPLE](I2-1) at (0.5, 1.5) {$\tau^{I2-1},20$};
        \node[sqBROWN](I4-1) at (6, -2.5) {$\tau^{I4-1},4$};
        
        \node[] at (-7.75, -6) { $10$};
		\node[] at (-4, -5.8) { $10$};
        \node[] at (-4.75, -4) { $10$};
		\node[] at (-4, -4) { $10$};
        \node[] at (6.25, -6.25) { $4$};
        
        \node[sqRED](I5-2) at (-2, -5.5) {$\tau^{I5-2},15$};
        \node[sqRED](I5-1) at (0.5, -5.5) {$\tau^{I5-1},10$};
        \node[sqPINK](I6-1) at (3.5, -5.5) {$\tau^{I6-1},6$};

        \node[] at (-1.75, -6.5) { $15$};
        \node[] at (0.75, -6.5) { $10$};
        
        \node[] at (3.75, -8.75) { $1$};
        \node[] at (1.25, -8.75) { $2$};
        \node[] at (-1.25, -8.75) { $3$};

       \node[sqBLACK](F1-6) at (-8, -10) {$\tau^{F1-5},35$};

        \node[sqBLACK](F1-2) at (-5, -10) {$\tau^{F1-2},35$};

         \node[sqGRAY](F2-2) at (-2, -10) {$\tau^{F2-2},30$};

         \node[sqGRAY](F2-1) at (0.5, -10) {$\tau^{F2-1},20$};

        \node[sqGRAY2](F3-1) at (3.5, -10) {$\tau^{F3-1},10$};

         \node[sqWHITE](F4-1) at (6, -10) {$\tau^{F4-1},10$};

		\tikzset{thick edge/.style={-, black, fill=none, thick, text=black}}
		\tikzset{thick arc/.style={->, black, fill=black, thick, >=stealth, text=black}}

		\draw[line width=0.4mm, black,-> ] (R1-3) to[out=-90,in=90] (I1-2);
		\draw[line width=0.4mm, black,-> ] (R1-3) to[out=-90,in=90] (I2-1);
		\draw[line width=0.4mm, black,-> ] (I1-2) to[out=-90,in=90] (F1-6);
		\draw[line width=0.4mm, black,-> ] (I1-2) to[out=-90,in=90] (F1-2);
		\draw[line width=0.4mm, black,-> ] (I2-1) to[out=-90,in=90] (F1-6);
		\draw[line width=0.4mm, black,-> ] (I2-1) to[out=-90,in=90] (F1-2);

        \draw[line width=0.4mm, black,-> ] (R2-1) to[out=-90,in=90] (I3-1);
        \draw[line width=0.4mm, black,-> ] (I3-1) to[out=-90,in=90] (I5-2);
		\draw[line width=0.4mm, black,-> ] (I3-1) to[out=-90,in=90] (I5-1);
		\draw[line width=0.4mm, black,-> ] (I3-1) to[out=-90,in=90] (I6-1);
		\draw[line width=0.4mm, black,-> ] (I5-2) to[out=-90,in=90] (F2-2);
		\draw[line width=0.4mm, black,-> ] (I5-1) to[out=-90,in=90] (F2-1);
		\draw[line width=0.4mm, black,-> ] (I6-1) to[out=-90,in=90] (F2-2);
		\draw[line width=0.4mm, black,-> ] (I6-1) to[out=-90,in=90] (F2-1);
		\draw[line width=0.4mm, black,-> ] (I6-1) to[out=-90,in=90] (F3-1);

        \draw[line width=0.4mm, black,-> ] (R3-5) to[out=-90,in=90] (I4-1);
        \draw[line width=0.4mm, black,-> ] (I4-1) to[out=-90,in=90] (F4-1);

		\end{tikzpicture}
		}
	}
    \caption{\footnotesize{There are five countries labeled 1 to 5.  Country 1 is in the blue rectangle and Country 2 is in the yellow rectangle.   Each technology is indexed by the good it produces and then the country in which it is located. So technology $I1-2$ produces intermediate good $I1$ and is located in country $2$. Countries 3-5 operate a single technology. However Countries $1$ and $2$ each operate multiple technologies, as illustrated by the shaded areas.}}
    \label{fig:power_calculation}
  \end{figure}
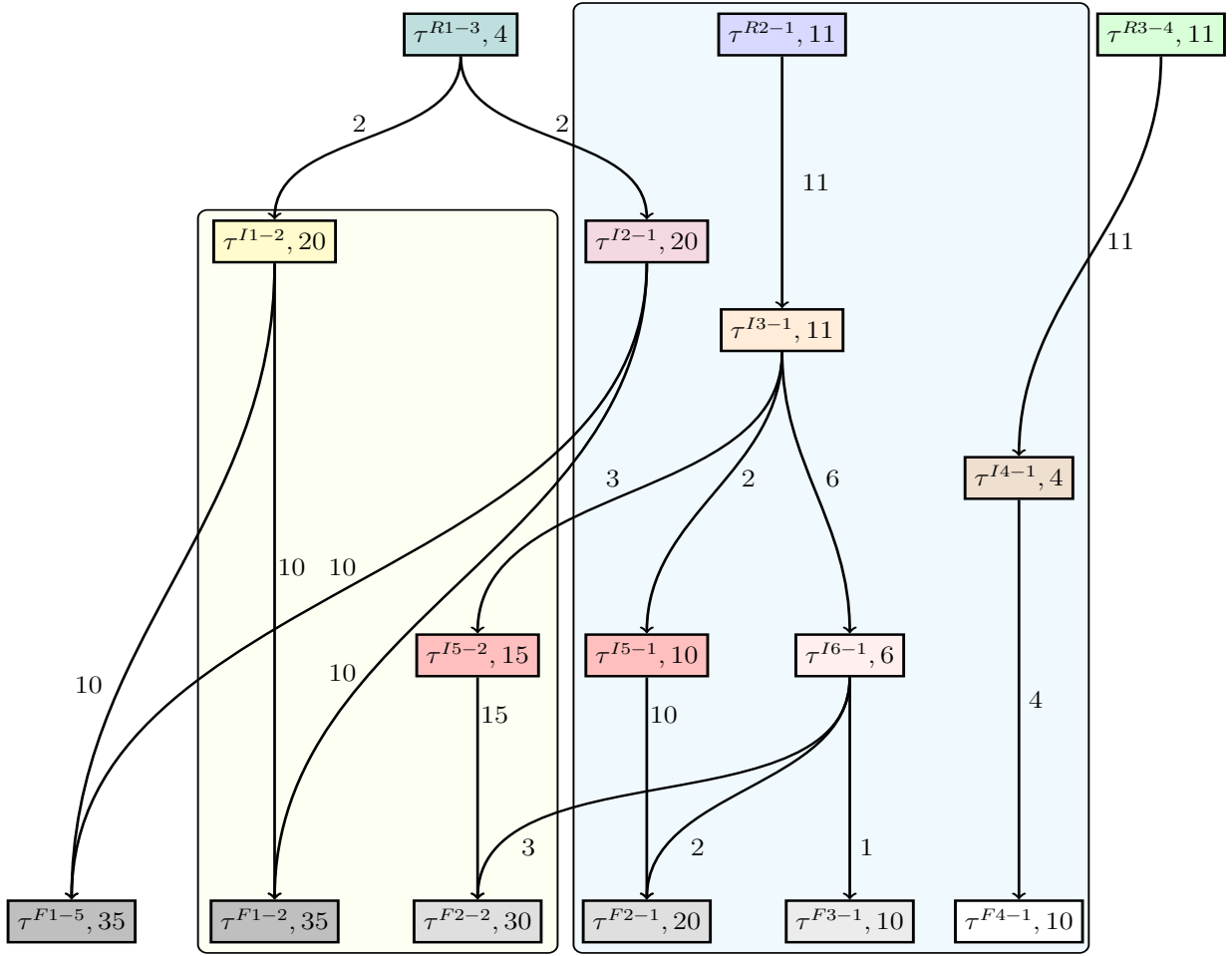


Consider the power Country $1$ has to disrupt production in Country $2$. By Proposition \ref{prop:disruption_frontier} we just need to consider individual technology disruptions.  And, as we remarked above, not even all of these need to be considered. In this case it is sufficient to consider individual disruptions to $\{\tau^{I2-1},\tau^{I3-1},\tau^{I6-1}\}$. As Country 1 operates 9 technologies and several of these technologies have multiple flows in and/or out of them there are many different combinations of disruptions that Country 1 could potentially take and so this simplifies matters considerably. 

First consider a disruption to $\tau^{I2-1}$. Reducing the flow $\tau^{I2-1}\tau^{F1-2}$ to $0$ reduces Country $1$'s GDP by $10 w_1$ and reduces Country $2$'s GDP by $45 w_2$. As $\tau^{F1-2}$ is no longer able to source input $I2$  it shuts down, leaving $35$ units of labor in Country 2 idle, and this reduces $\tau^{F1-2}$'s demand for input $I1$ leaving a further $10$ units of labor idle in Country 2. Thus this disruption grants Country $1$ power $4.5$ over Country $2$. 

An alternative disruption to $\tau^{I2-1}$ available to Country $1$ is to eliminate the flow $\tau^{I2-1}\tau^{F1-5}$. This will reduce the demand for $\tau^{I1-2}$'s output, causing Country 2 to lose $10w_2$ in GDP at a cost of $10w_1$ in GDP for Country 1. Thus this disruption is less effective, and as other disruptions to $\tau^{I2-1}$ are some convex combination of these two, the most power that disrupting $\tau^{I2-1}$ can give Country $1$ over Country $2$ is $4.5$. 

The second technology that Country $1$ might want to disrupt is $\tau^{I3-1}$. If Country 1 eliminated the flow $\tau^{I3-1}\tau^{I5-2}$, then propagating the shock both upstream and downstream, $15$ units of labor become idle in Country 1 at a cost to GDP of $15w_1$, while GDP in Country $2$ is reduced by $45 w_2$. This grants Country $1$ power $3$ over Country $2$ and is the most powerful disruption of technology $\tau^{I3-1}$ available to Country $1$.  

The final technology that Country $1$ might want to disrupt is $\tau^{I6-1}$. However, when the disruption to the flow $\tau^{I3-1}\tau^{I5-2}$ considered above is propagated it eliminates the flow $\tau^{I6-1}\tau^{F2-2}$ and disrupting either of these flows turn out to be equivalent. So again, this only gives Country $1$ power $3$ over Country $2$.

Thus, the power that Country $1$ has over Country $2$ is $\text{Power}_{12}=4.5$.

It is also instructive to consider the power that Country $2$ has over the other countries. It has $0$ power over Country 4 because it cannot impact its GDP at all. It has power $10$ over Country 5, because it can completely eliminate its GDP at a cost of $10\%$ of its own GDP by withholding the flow $\tau^{I1-2}\tau^{F1-5}$. The same disruption gives it power $2.5$  over Country 3, because by eliminating its flow $\tau^{I1-2} \tau^{F1-5}$ demand for $\tau^{R1-3}$'s output falls to $1$ causing a 25\% loss in GDP for Country 3 for the 10\% loss in own GDP. Finally, Country 2 has power $1$ over Country 1, which comes again from disrupting the flow $\tau^{I1-2}\tau^{F1-5}$. This is effectively no power, unless Country 2 is willing to forgo an equal percentage loss in GDP as it imposes on Country 2.

The example helps identify the characteristics of the supply network that grant Country $i$ power over Country $j$. Country $i$ will have a substantial amount of power over Country $j$ when: (i) Country $j$'s output is relatively highly concentrated in one supply chain (measured by its proportion of wage income associated with that supply chain); (ii) Country $i$ has the ability to disrupt this supply chain; and (iii) Country $i$ has relatively diversified production (measured by its proportion of wage income associated with other supply chains). 

An example of a powerful disruption is one in which a key good is relatively monopolized by one country, but 
the supply chain in which it is involved (up and/or downstream) lies nontrivially in another country.   

Note that it is possible for Countries $i$ and $j$ to both have substantial power over each other (potentially via completely separate parts of the supply network), and for the power relationship to be very asymmetric---so that one country has substantial power over the other but not vice versa.

\subsection{Strategic Power}\label{sec:strategicpower}

The definitions above work with non-strategic decisions at all flows except those by the country that is doing the disruption.  This does not provide scope for other countries responding to mitigate the damage or retaliate. While in the short run other countries may not have time to respond, by the medium run a response should be expected. Moreover, if the disruption has been anticipated by the target country then it might be in a position to immediately respond.
We now develop a definition of power that includes a strategic response.

To see some of the basic issues, consider the example in Figure \ref{fig:strategic}.   Here, by responding strategically with its intermediate good, Country $j$ can drastically blunt Country $i$'s power.

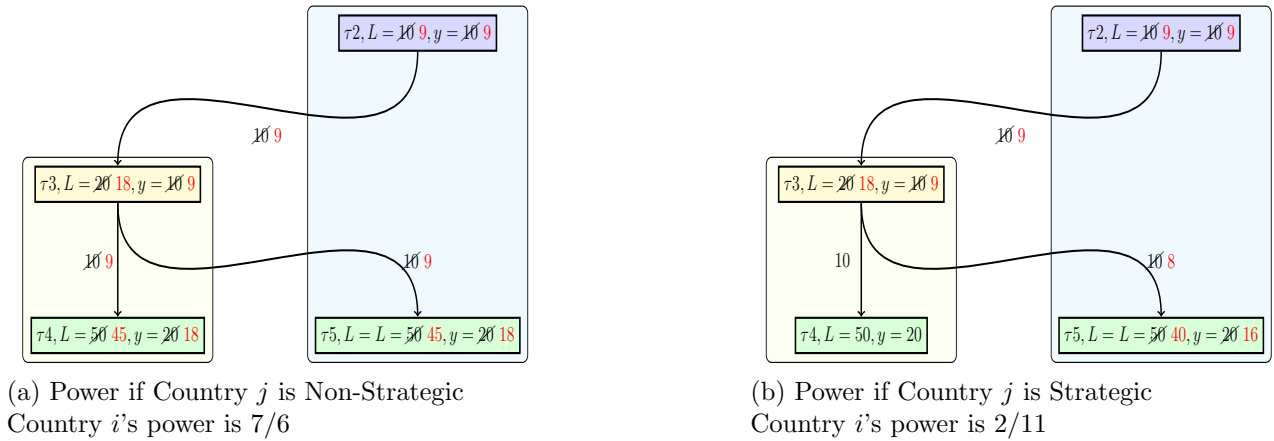
\begin{figure}[!ht]
    \subfloat[Power if Country $j$ is Non-Strategic \\
   $\hspace{1cm}$ Country $i$'s power is 7/6]
    {
\centering
\resizebox{2.7in}{1.92in}{		
    \begin{tikzpicture}[
sqRED/.style={rectangle, draw=black!240, fill=red!15, very thick, minimum size=7mm},
		sqREDb/.style={rectangle, draw=black!240, fill=red!15, very thick, dashed, minimum size=7mm},
		sqBLUE/.style={rectangle, draw=black!240, fill=blue!15, very thick, minimum size=7mm},
		sqBLUEb/.style={rectangle, draw=black!240, fill=blue!15, very thick, dashed, minimum size=7mm},
		roundnode/.style={ellipse, draw=black!240, fill=green!15, very thick, dashed, minimum size=7mm},
		sqGREEN/.style={rectangle, draw=black!240, fill=green!15, very thick, minimum size=7mm},
		sqGREENb/.style={rectangle, draw=black!240, fill=green!15, very thick, dashed, minimum size=7mm},
		roundRED/.style={ellipse, draw=black!240, fill=red!15, very thick, dashed, minimum size=7mm},
		roundREDb/.style={ellipse, draw=black!240, fill=red!15, very thick, minimum size=7mm},
		roundYELL/.style={ellipse, draw=black!240, fill=yellow!20, very thick, dashed, minimum size=7mm},
		sqYELL/.style={rectangle, draw=black!240, fill=yellow!20, very thick, minimum size=7mm},
		roundORA/.style={ellipse, draw=black!240, fill=orange!15, very thick, dashed, minimum size=7mm},		
		sqORA/.style={rectangle, draw=black!240, fill=orange!15, very thick, minimum size=7mm},
		sqORAb/.style={rectangle, draw=black!240, fill=orange!15, very thick, dashed, minimum size=7mm},
		sqMAG/.style={rectangle, draw=black!240, fill=magenta!5, very thick, minimum size=7mm},
		squaredBLACK/.style={rectangle, draw=black!240, fill=white!15, very thick, minimum size=7mm},	
		roundBROWN/.style={ellipse, draw=black!240, fill=brown!15, very thick, dashed, minimum size=7mm},
		sqBROWN/.style={rectangle, draw=black!240, fill=brown!15, very thick, minimum size=7mm},		
        ]

\node[](HG1A) at (-1.75, 0.25) {};
\node[](HG1B) at (-6.25, -3.25) {};
\node [he, draw=black!240, fill=lime!5, fit=(HG1A) (HG1B)] {};

\node[](HG2A) at (1.35, 3.25) {};
\node[](HG2B) at (6.65, -3.25) {};
\node [he, draw=black!240, fill=cyan!5, fit=(HG2A) (HG2B)] {};

        \node[sqBLUE](A2) at (4, 3) {$\tau2,  L=\cancel{10} \ \textcolor{red}{9},  y=\cancel{10} \ \textcolor{red}{9}$};
        \node[] at (0, 1) { $\cancel{10} \ \textcolor{red}{9}$};
		\node[] at (4, -1.5) { $\cancel{10} \ \textcolor{red}{9}$};
			
		\node[sqYELL](C1) at (-4, 0) {$\tau3, L=\cancel{20} \ \textcolor{red}{18}, y =\cancel{10} \ \textcolor{red}{9} $};
       \node[] at (-4.5, -1.5) { $\cancel{10} \ \textcolor{red}{9}$};

		
		\node[sqGREEN](F1) at (-4, -3) {$\tau4,L=\cancel{50} \ \textcolor{red}{45}, y=\cancel{20} \ \textcolor{red}{18}$};
	
		\node[sqGREEN](F2) at (4, -3) {$\tau5, L = L=\cancel{50} \ \textcolor{red}{45}, y=\cancel{20} \ \textcolor{red}{18}$};	
		
		\tikzset{thick edge/.style={-, black, fill=none, thick, text=black}}
		\tikzset{thick arc/.style={->, black, fill=black, thick, >=stealth, text=black}}
		
		\draw[line width=0.4mm, black,-> ] (A2) to[out=-90,in=90] (C1);
		
		\draw[line width=0.4mm, black,-> ] (C1) to[out=-90,in=90] (F1);
		\draw[line width=0.4mm, black,-> ] (C1) to[out=-90,in=90] (F2);

		
       
		\end{tikzpicture}
		}
        }
           \hfill
    \subfloat[Power if Country $j$ is Strategic \\
    \hspace{1cm} Country $i$'s power is 2/11]
    {
\centering
\resizebox{2.7in}{1.92in}{		
    \begin{tikzpicture}[
sqRED/.style={rectangle, draw=black!240, fill=red!15, very thick, minimum size=7mm},
		sqREDb/.style={rectangle, draw=black!240, fill=red!15, very thick, dashed, minimum size=7mm},
		sqBLUE/.style={rectangle, draw=black!240, fill=blue!15, very thick, minimum size=7mm},
		sqBLUEb/.style={rectangle, draw=black!240, fill=blue!15, very thick, dashed, minimum size=7mm},
		roundnode/.style={ellipse, draw=black!240, fill=green!15, very thick, dashed, minimum size=7mm},
		sqGREEN/.style={rectangle, draw=black!240, fill=green!15, very thick, minimum size=7mm},
		sqGREENb/.style={rectangle, draw=black!240, fill=green!15, very thick, dashed, minimum size=7mm},
		roundRED/.style={ellipse, draw=black!240, fill=red!15, very thick, dashed, minimum size=7mm},
		roundREDb/.style={ellipse, draw=black!240, fill=red!15, very thick, minimum size=7mm},
		roundYELL/.style={ellipse, draw=black!240, fill=yellow!20, very thick, dashed, minimum size=7mm},
		sqYELL/.style={rectangle, draw=black!240, fill=yellow!20, very thick, minimum size=7mm},
		roundORA/.style={ellipse, draw=black!240, fill=orange!15, very thick, dashed, minimum size=7mm},		
		sqORA/.style={rectangle, draw=black!240, fill=orange!15, very thick, minimum size=7mm},
		sqORAb/.style={rectangle, draw=black!240, fill=orange!15, very thick, dashed, minimum size=7mm},
		sqMAG/.style={rectangle, draw=black!240, fill=magenta!5, very thick, minimum size=7mm},
		squaredBLACK/.style={rectangle, draw=black!240, fill=white!15, very thick, minimum size=7mm},	
		roundBROWN/.style={ellipse, draw=black!240, fill=brown!15, very thick, dashed, minimum size=7mm},
		sqBROWN/.style={rectangle, draw=black!240, fill=brown!15, very thick, minimum size=7mm},		
        ]

\node[](HG1A) at (-1.75, 0.25) {};
\node[](HG1B) at (-6.25, -3.25) {};
\node [he, draw=black!240, fill=lime!5, fit=(HG1A) (HG1B)] {};

\node[](HG2A) at (1.35, 3.25) {};
\node[](HG2B) at (6.65, -3.25) {};
\node [he, draw=black!240, fill=cyan!5, fit=(HG2A) (HG2B)] {};

        \node[sqBLUE](A2) at (4, 3) {$\tau2,  L=\cancel{10} \ \textcolor{red}{9},  y=\cancel{10} \ \textcolor{red}{9}$};
        \node[] at (0, 1) { $\cancel{10} \ \textcolor{red}{9}$};
		\node[] at (4, -1.5) { $\cancel{10} \ \textcolor{red}{8}$};
			
		\node[sqYELL](C1) at (-4, 0) {$\tau3, L=\cancel{20} \ \textcolor{red}{18}, y =\cancel{10} \ \textcolor{red}{9} $};
       \node[] at (-4.5, -1.5) { $10$};

		
		\node[sqGREEN](F1) at (-4, -3) {$\tau4,L=50, y=20$};
	
		\node[sqGREEN](F2) at (4, -3) {$\tau5, L = L=\cancel{50} \ \textcolor{red}{40}, y=\cancel{20} \ \textcolor{red}{16}$};	
		
		\tikzset{thick edge/.style={-, black, fill=none, thick, text=black}}
		\tikzset{thick arc/.style={->, black, fill=black, thick, >=stealth, text=black}}
		
		\draw[line width=0.4mm, black,-> ] (A2) to[out=-90,in=90] (C1);
		
		\draw[line width=0.4mm, black,-> ] (C1) to[out=-90,in=90] (F1);
		\draw[line width=0.4mm, black,-> ] (C1) to[out=-90,in=90] (F2);

		
       
		\end{tikzpicture}
		}
        }
    \caption{\footnotesize{Calculation of Country $i$'s power over Country $j$.   Country $j$ is on the left and Country $i$ is on the right.  
    If Country $j$ is non-strategic, then Country $j$ loses 7 units of labor to Country $i$'s 6 units of labor and Country $i$'s power is 7/6.   In contrast, if Country $j$ is strategic, then Country $j$ loses 2 units of labor and Country $i$ loses 11,  and Country $j$'s power is only 2/11.  (In this example, that is actually the inverse of Country $i$'s power over Country $j$ which is 11/2). }
    \label{fig:strategic}}   
  \end{figure}

Accounting for strategic reactions can actually simplify power calculations on the one hand, and complicate them on the other.   
The simplification is that optimal strategic disruptions throughout the supply network on the margin can be calculated via ``threads''---that have one entrance and one exit from each technology on the thread and go between a raw material (i.e., an intermediate good with no incoming flows) and a final good.  
A complication is that the only relevant threads are those that are solutions to an equilibrium and nested max (or min) problem,  where each country at each node is minimizing its domestic disruption anticipating the consequences given other countries' strategies at every node, and potentially maximizing the disruption of some other countries.  
A second complication is that we now need to specify what third-party countries do when they have choices of whom to harm, holding their own harm constant.

To keep things uncluttered, we focus on the case of two countries, but one can extend this to put third-party countries in one of the two blocks (with some complications if countries care about which country within the other block they do the most harm to).

We first note a complication that requires us to specify an equilibrium definition, where actions at each technology are specified, and each has to be a best response given the others.
To see why this is true, consider the following example that shows that the optimal strategy at one technology depends on what is being done at others. 

Suppose that the labor endowment of both countries is the same and that Country $j$ is seeking to minimize the power that Country $i$ has over it. Country $j$ has the choice at a given technology to either disrupt 3 units of labor in each country, or else to disrupt 2 units of labor in the other country and 1 unit of its own labor.  If all other paths involve only minor labor disruptions, then the second option results in $i$ having a power of approximately $1/2$ and is preferable.  However, if 5 units are already disrupted in $j$ and none in $i$, then the first gives $i$ a power of $8/3$ while the second gives $i$ a power of $6/2$, and so now the first disruption minimizes $i$'s power. 

To define an equilibrium, we can adjust the definition of a consistent disruption above to find the {\sl strategic power} of Country $i$ over Country $j$ from disrupting some initial technology.  
The only difference is that we now allow Country $j$ to have discretion over its flows as well.
Note that if Country $j$ acts to minimize $i$'s power, then this is a zero-sum game (Country $i$'s payoff is its power and Country $j$ acts to minimize that power and so has a payoff equal to minus Country $i$'s power). Thus, it has a unique Nash equilibrium outcome and associated power. 

That is, to define the strategic power of Country $i$ over $j$ from some initial technology, we adjust the previous definition of a consistent disruption to eliminate the requirement that if $\tau^{\ell'}\in T_j$ then there is proportional rationing.  Everything else in the definition is as it was. Then we define a game for each $\tau^i\in T_i^{max}$ such that $i$ and $j$ choose flows that satisfy consistency as defined above (but not requiring $j$ to equalize disruptions), and $i$ maximizes its power while $j$ minimizes it. As this is a zero sum game it has a unique equilibrium power for each $\tau^i\in T_i^{max}$,
and $i$'s strategic power over $j$ is the max of these powers over  
$\tau^i\in T_i^{max}$.

Computing the Nash equilibrium outcome given any initial $\tau^i\in T_i^{max}$ can be challenging, given that even in the absence of (directed) cycles a technology can be reached via multiple paths from another technology---as it could be reached from both upstream and downstream from paths originating at the other technology.   
If we focus on supply chains with no {\sl undirected} cycles in the supply chain, then a technology can only be reached from one direction from some other technology, and so it has a well-defined direction of disruption to consider if it is reached.
From any starting technology, we can then calculate the equilibrium power as follows.   
Consider all technologies that are on undirected paths from this technology, and thus lie on some combination of paths going up and downstream.
Given the absence of undirected cycles,  this means that each of these technologies is either reached via some path coming from upstream or coming from downstream, but not both.   Not all such technologies will be reached, but we have to specify what they would do if reached in order to find the equilibrium disruption consequence.  
We can work by backward induction.  We specify what a country will do at maximum distance for every given set of choices of technologies at lower distance.  We then move steps back iteratively, given what we have found at greater distances and for every possible set of choices at technologies of smaller distances.  
Equilibrium requires that decisions 
maximize the power of Country $i$ when the technology is in Country $i$ and minimize the power of $i$ when the technology is in Country $j$. Note that in this case, one can find a pure strategy equilibrium by selecting a pure best response at each node in the backward induction. 

This analysis of strategic power is summarized as follows.
\begin{itemize}
\item The strategic power of Country $i$ over Country $j$ is well-defined and the (unique) value of a zero sum game in which $i$ picks some technology to disrupt and chooses consistent flows at each technology in that supply chain that it controls (as defined above), and $j$ also chooses consistent flows at each technology. 
\item In the case of undirected cycles that power can be found in pure strategies and via backward induction.
\end{itemize}

Our notion of strategic power gives a maximum power for any given disruption, presuming that the two countries stay within the part of the supply chain associated with the initial disruption. There is a richer game in which $j$ could respond with some other disruptions of its own, and so forth.  To solve that game, one has to introduce some more widely encompassing preferences, and we leave that for further research.  Our analysis provides an important building block for any further extension.

\section{Equilibrium}\label{sec:equilibrium}


It is helpful to define the transportation-cost adjusted prices that technology $\tau$ faces. We let $\widehat{p}_{\tau} \in \Re_+^{1+M+F}$ denote these prices. The first entry of this vector records the adjusted cost of labor for technology $\tau$. This is given by $\widehat{p}_{\tau L}=\min_{n'\in N}\theta_{n'\tau}p_{n'}$. The next $M$ entries record the adjusted sourcing costs of inputs, with entry $\widehat{p}_{\tau m}=\min_{\tau':O(\tau')=m}\theta_{\tau'\tau}p_{\tau'}$. The final $F$ entries are redundant, as by assumption final goods are never used as inputs, but for consistency we set $\widehat{p}_{\tau f}=\min_{\tau':O(\tau')=f}p_{\tau'}=p_f$.

An {\sl equilibrium} of an economy is $(N, M, F, \{T_n\}_n, {\{L_{n}\}_n}, \theta )$ is a specification of
\begin{itemize}
\item prices $p \in \Re_+^{N+T}$ for labor and technologies (whether in use or not), and
\item for all countries $n$ and technologies $\tau\in T^{n}$:
\begin{itemize}
\item a corresponding amount of the output $O(\tau)$ for each  $\tau\in T$ denoted by $y^{\tau }\geq 0$,
\item amounts (in units shipped) of inputs for use by the technology $\tau$
$x_{k \tau}\in  \Re_+$ for all $k\in N\cup T$, and
\end{itemize}
\item the total amount of each final good $f$ consumed in each country $n$, $c_{fn}$;
\end{itemize}
that satisfy the following conditions:

\begin{itemize}

\item \textbf{Consumers optimize:}
 Labor is fully supplied at the highest available wage (accounting for transportation costs) and wages are
 spent on final goods to maximize utility. That is, laborers in country $n$ choose consumption $(c_{1n}, \ldots, c_{Fn})$ to maximize
$$
U(c_{1n}, \ldots, c_{Fn}) {\rm \ \ subject \ \  to \ \ } \sum_{f\in F} p_{f} c_{fn} = L_n p_n.
$$

\item \textbf{Producers optimize (and earn zero profits):} For each active technology $\tau$ (such that $y^{\tau}>0$)
$$\widehat{p}_{\tau} \cdot \tau =0,$$ and for each inactive technology $\tau$ (such that $y^{\tau}=0$)
$$\widehat{p}_{\tau}\cdot \tau \leq 0.$$

\item \textbf{Production Plans are Feasible} For each active technology $\tau$
\begin{equation}
\sum_{\tau':O(\tau')=k} \frac{x_{\tau' \tau}}{\theta_{\tau' \tau} }=  -  \tau_k  y_{\tau},
\label{feasibleproduction}
\end{equation}
which says that the total inputs sourced, adjusted for shipping losses, equal the amount of inputs required under the technology to produce the desired level of output.

\item \textbf{Markets clear:}

\begin{itemize}
\item Labor markets clear: the total amount of labor from each country $n$ being used in production throughout the world is equal to the its endowment:
\begin{equation}
L_n = \sum_{\tau} x_{n \tau}.
\label{clearlabor}
\end{equation}

\item Intermediate goods markets clear: For each technology $\tau$ such that $O(\tau)\in M$, the amount being used in production across all countries is equal to the amount produced. So:
\begin{equation}
\sum_{\tau'} x_{\tau \tau'} = y_{\tau}.
\label{clearintermediate}
\end{equation}

\item Final goods markets clear:  For each technology $\tau$ such that $O(\tau)\in F$, the amount being consumed across all countries is equal to the amount produced. So:
\begin{equation}
\sum_{n} c_{fn} = \sum_{\tau \in T: O(\tau )=f} y_{\tau}.
 \label{clearfinal}
\end{equation}
\end{itemize}
\end{itemize}

Given constant returns to scale, in equilibrium active technologies make zero profits and hence all goods that are produced in positive amounts are priced at their respective unit costs. In particular, for all countries $n$ and technologies $\tau\in T^{n}$, if $y_{\tau}>0$ and $O(\tau)=k$, then
$$p_{\tau} =  \sum_{k'\neq k} -\tau_{k'} \widehat{p}_{\tau k'}.$$

Note that the left hand side is $p$ and the right hand side is $\widehat{p}$, capturing that this is the cost of output in $n$, while inputs are sourced from their cheapest source.

\subsection{Existence and Uniqueness of Consumption}

To ensure equilibrium existence, it is necessary that the available technologies avoid money pumps.
We assume that each technology uses a positive amount of labor,  and given the limited supply of labor, that is sufficient for avoiding money pumps and is quite natural, although weaker conditions could be used.\footnote{ We could instead just assume that each final good supply chain uses a positive amount of labor, but at the expense of some technicalities (including the possibility of zero prices for some intermediate goods) that lead to more complex and obscure proofs.}

We also assume that $T$ is such that every final good has a viable supply chain, meaning that there is some combination of technologies that produce a positive amount of it.  This ensures that every final good has a finite price in equilibrium.

Lemma \ref{lemma:exists} in Appendix \ref{sec:proofs} shows that there exists an equilibrium and the same bundle of final goods is produced in all equilibria.  We show that by expanding the space of technologies to let them take source-specific inputs, transportation costs can be incorporated into them. Then, with this transformation in hand, standard existence results apply.

We note that every final good is consumed and has a finite, positive price in equilibrium, and that the same is true of each country's labor.  These follow from the fact that every final good has a viable supply chain, each country has a positive endowment of labor that is completely supplied in equilibrium, and preferences are increasing.  This ensures that all final goods are produced and consumed, and so must have positive, finite prices (given increasing preferences), and labor cannot have either a 0 price (or some technology would use an infinite amount of it\footnote{Note that this follows from the proof in that there are equivalent primitive technologies in terms of labor units, and 0 profits ensure that it is used in}), or an infinite price (or some consumer's problem would result in infinite demand and markets would not clear).

Although there is unique total consumption in equilibrium, this can sometimes be supported by multiple wage profiles, and the existence of multiple wage profiles is generic. To see this, suppose there are three countries, $A$, $B$ and $C$ and each have a unit of labor endowment.  There is one final good $f$ and two intermediate goods $1$ and $2$.
Suppose that labor transportation costs are sufficiently large that labor never moves, but in contrast intermediate goods have transportation costs of $1$, so that they ship with no loss. Suppose country $A$ can only produce intermediate good $1$, and it takes $1$ unit of its labor to produce one unit of it.  Country 2 can only produce intermediate good $2$, and it takes $1$ unit of its labor to do so. Country $3$ can only produce the final good, and to produce one unit it has to combine $1$ unit of intermediate good $1$,  $1$ unit of intermediate good $2$, and $1$ unit of its labor. Thus it takes $1$ unit of labor from each country to produce $1$ unit of the consumption good, and in equilibrium exactly $1$ unit of the final good is produced. However, any vector of wages that sums to $1$ can be supported as part of an equilibrium. These wages just pin down the relative amounts of consumption that the three different countries can afford, and the consumption good market clears (as well as the labor good markets).

Note that this example is ``robust'' in the following sense: the multiplicity is robust to the ratios at which intermediate goods are combined to produce the final good, and how much labor is needed by each country in its production technology.\footnote{Suppose we added in a technology for each country that converted its labor into the final good $f$. For country $i$ suppose that $k_i>3$ units of labor are needed to produce one unit of good $f$. Then the equilibrium wages that could be supported would be any positive vector summing to $1$ such that the wage in country $i$ is weakly greater than $1/k_i$.}  If instead, labor transportation costs are sufficiently low that labor in one country can substitute for that in another, then that begins to pin down relative wages, until with sufficient mobility  wages are equal across countries.

\section{Additional Proofs: Representative Consumer and Equilibrium Existence}\label{sec:proofs}

\begin{lemma}\label{lemma:representative_consumer}
In any equilibrium aggregate consumption is equivalent to the consumption choice of a single representative consumer with preferences represented by the utility function $U(c_1, \ldots, c_F)$ and with wealth $\sum_n L_n p_{Ln}$.
\end{lemma}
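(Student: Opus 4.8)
The plan is to exploit the homotheticity of preferences, which forces every consumer to choose a bundle lying on the same ray through the origin, with the scaling factor proportional to individual wealth. Since there are no transportation costs on final goods, all consumers face the same price vector $p_f$, $f\in F$, and their wealths are $L_n p_{Ln}$. The homogeneity of degree $1$ of $U$, together with strict quasi-concavity and continuous differentiability, implies that the solution to each consumer's problem $\max U(c)$ s.t. $\sum_f p_f c_f \le I_n$ is of the form $c^n = I_n \, \xi(p)$ for a common function $\xi(p)$ (the per-unit-wealth demand), where $I_n = L_n p_{Ln}$.

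First I would set up the individual consumer problems and note that, because $U$ is homogeneous of degree $1$, if $c^*$ solves the problem at wealth $1$ and prices $p$, then $I c^*$ solves the problem at wealth $I$ and the same prices: feasibility scales linearly, and if some $c'$ with $\sum_f p_f c'_f \le I$ gave $U(c') > U(I c^*) = I\,U(c^*)$, then $c'/I$ would be feasible at wealth $1$ and give $U(c'/I) = U(c')/I > U(c^*)$, a contradiction. Hence $c^n = (L_n p_{Ln})\, c^*$ for all $n$. Summing over $n$, aggregate consumption is $\sum_n c^n = \big(\sum_n L_n p_{Ln}\big) c^*$. Second I would verify that $\big(\sum_n L_n p_{Ln}\big) c^*$ is exactly the solution to the single representative consumer's problem $\max U(c)$ s.t. $\sum_f p_f c_f \le \sum_n L_n p_{Ln}$ — this is immediate from the same scaling argument with $I = \sum_n L_n p_{Ln}$. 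Therefore aggregate consumption coincides with the representative consumer's demand at wealth equal to total labor income, which is the claim.

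The main obstacle — really the only one requiring care — is handling potential non-uniqueness or boundary issues in the individual demands. Strict quasi-concavity of $U$ gives uniqueness of the demand bundle at each wealth level (given that the budget set is compact and $U$ continuous), so $c^*$ is well-defined; continuous differentiability and increasingness ensure the budget constraint binds and interior first-order conditions can be used if needed, though the scaling argument above sidesteps first-order conditions entirely. I should also note that equilibrium prices for all final goods are positive and finite (as argued in the existence discussion), so $c^*$ is finite and the scaling is well-defined. One should also confirm that the objects coincide as \emph{choices}, not merely that utilities match: uniqueness of the representative consumer's optimum (again from strict quasi-concavity) plus the fact that $\big(\sum_n L_n p_{Ln}\big)c^*$ is feasible and optimal for that problem pins it down. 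With these observations the proof is short; I would present it in essentially the four steps above (individual problem, scaling lemma, summation, identification with representative problem).
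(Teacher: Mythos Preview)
Your proposal is correct and follows essentially the same approach as the paper's proof: both exploit that all consumers face identical final-good prices, use homogeneity of degree $1$ of $U$ to show individual demands are wealth-scalings of a common bundle, invoke strict quasi-concavity for uniqueness, and sum to identify aggregate demand with the representative consumer's choice. Your scaling argument is a bit more explicit than the paper's, but the structure and key ideas are the same.
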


\noindent {\bf Proof of Lemma \ref{lemma:representative_consumer}:}

Country $n$'s consumer solves
$$
\max_{c_{1n}, \ldots, c_{Fn}} U(c_{1n}, \ldots, c_{Fn}) {\rm \ \ subject \ \  to \ \ } \sum_{f\in F} p_{f} c_{fn} = L_n p_n.
$$
Given finite prices for all final goods and labor (which follows in equilibrium, as shown below),
and given that the utility function is $U(\cdot)$ is increasing and strictly quasi-concave there is a unique solution to this problem. Further, as consumers' utility functions are identical and all consumers in all countries face the same prices for final goods (because there are no transportation costs on final goods), each country's consumer solves the same problem except for differences in their labor endowments inducing differences in their wealth levels. Thus, as $U(\cdot)$ is homogeneous of degree $1$, the solution to each country's consumer problem is a re-scaling of the same bundle of goods. Moreover, it follows that a representative consumer with utility function $U(c_1, \ldots, c_F)$ and with wealth $\sum_n L_n p_{Ln}$ chooses a re-scaling of the same bundle of goods, and that re-scaling is the aggregate consumption.\eproof

\subsection{Proof of Lemma~\ref{lemma:exists}}

\begin{lemma}\label{lemma:exists}
There exists an equilibrium. Moreover, the same bundle of final goods is produced in all equilibria.   
\end{lemma}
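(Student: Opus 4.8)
The plan is to reduce the model to a standard Arrow--Debreu economy with finitely many commodities and constant-returns activities, apply a classical existence theorem, and then combine the First Welfare Theorem with the representative-consumer reduction (Lemma~\ref{lemma:representative_consumer}) and strict quasi-concavity of $U$ to pin down the final-good bundle. For existence, I would first enlarge the commodity space so that, for each country $n$ and each technology $\tau$, ``labor of $n$ delivered to $\tau$'' and ``good $O(\tau')$ delivered to $\tau$'' are distinct commodities, and add linear shipping activities converting one unit of good $O(\tau')$ (resp.\ labor of $n$) at the point of production into $1/\theta_{\tau'\tau}\le 1$ (resp.\ $1/\theta_{n\tau}\le 1$) units of the corresponding delivered commodity. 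Since $N$ and $T$ are finite this commodity space is finite, each original technology becomes an ordinary activity vector in it, the matrix $\theta$ is fully encoded, and the iceberg losses make these shipping activities weakly wasteful so that free disposal is harmless. In this reformulated economy the aggregate production set $Y$ is the convex cone generated by finitely many activity vectors.

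I would then verify the hypotheses of a standard existence theorem for economies with a constant-returns production cone: (i) $U$ is continuous, increasing (hence locally nonsatiated) and strictly quasi-concave; (ii) each consumer holds a strictly positive endowment of her own country's labor, yielding a cheaper point than the budget at any price system with positive labor prices; (iii) $Y$ is a closed convex cone with $Y\cap(-Y)=\{0\}$ and $Y\cap(\text{nonnegative orthant})=\{0\}$, both irreversibility and ``no free production'' following because every technology requires a strictly positive amount of labor and labor is neither producible nor in infinite supply; and (iv) the set of feasible allocations is compact --- again a consequence of the scarcity of labor together with the positive labor requirement of every activity, which caps all outputs and hence all shipping flows. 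Constant returns gives zero equilibrium profits, so ownership shares are irrelevant, as the text notes. Finally, the assumption that every final good has a viable supply chain and that preferences are increasing forces every final good to be produced and consumed in positive amounts at a finite positive price, and rules out a zero wage in any country (an unused positive labor endowment would violate market clearing); thus an equilibrium of the reformulated economy translates back to an equilibrium of the original one with finite positive prices for labor and final goods.

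For the uniqueness claim, since production has constant returns and there are no externalities, the First Welfare Theorem applies and every equilibrium allocation is Pareto efficient. By Lemma~\ref{lemma:representative_consumer}, aggregate final consumption in any equilibrium is the choice of a single representative consumer with utility $U$, so Pareto efficiency means this bundle maximizes $U$ over the \emph{attainable set} of aggregate final-good vectors $c\in\Re_+^{F}$ --- those for which some feasible plan delivers at least $c$ using no more than the endowments $\{L_n\}$. This set is the projection onto the final-good coordinates of a convex set ($Y$ intersected with the labor and intermediate-good constraints), hence convex, and it is compact and nonempty by the existence argument. A strictly quasi-concave function has at most one maximizer on a convex set, so the aggregate final-consumption bundle is the same in every equilibrium; since final goods are never used as inputs and markets clear, total production of each final good equals its total consumption, which is the claim.

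\textbf{Main obstacle.} The conceptual content is routine once the reduction is set up; the step I expect to demand the most care is the iceberg reformulation and the verification of the production-side hypotheses --- confirming that the enlarged production cone is genuinely closed and satisfies irreversibility and no-free-lunch (the positive-labor assumption is what drives this, but one must also rule out degenerate zero prices for intermediate goods and zero wages, which is exactly where ``every final good has a viable supply chain'' is used) and that the feasible set is compact so the underlying fixed-point argument goes through.
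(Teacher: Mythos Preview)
Your proposal is correct and follows essentially the same route as the paper: reduce to a standard Arrow--Debreu economy by absorbing iceberg costs into an enlarged commodity/technology description, invoke a textbook existence theorem, then use the First Welfare Theorem together with Lemma~\ref{lemma:representative_consumer} and strict quasi-concavity of $U$ on a convex, compact attainable set to pin down the aggregate final-good bundle. The only noteworthy difference is in the reformulation step: the paper enumerates \emph{source-specific technologies} (one technology for each combination of suppliers, with iceberg losses folded into the input coefficients), whereas you index \emph{commodities by destination} and introduce separate linear shipping activities; both are standard and equivalent for these purposes, and your more explicit checklist of the production-cone hypotheses (closedness, irreversibility, no free production, compactness of the feasible set via the positive-labor requirement) is, if anything, a bit more carefully articulated than the paper's appeal to Theorem~17BB2 of Mas-Colell--Whinston--Green.
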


\begin{proof}
We prove existence of an equilibrium by mapping our economy into one without transportation costs, and then applying standard results. In order to do this we use the transportation costs to map each technology $\tau\in \Re^{1+M+F}$ into source-specific technologies $t\in\Re^{N+T}$. For each technology $\tau$ that requires inputs $I(\tau)$, we create $\prod_{k\in I(\tau)}|\{\tau:O(\tau)=k\}|$ source-specific technologies allowing for each possible combination of sourcing choices from different technologies across the different inputs (where $|\cdot|$ denotes cardinality).
Further, we adjust the number of units of each input required to represent the number of units that need to be sourced from that technology, including those units that will be lost to transportation costs. So, if technology $\tau$ requires $k$ units of good $g$, and the transportation costs associated with sourcing good $g$ from $\tau'$ are $2$, then corresponding technology source-specific technologies $t$ that source good $g$ from technology $\tau'$, requires $2k$ units of good $g$; and similarly for the $N$ sources of labor.

Replacing transportation costs and technologies with technology source-specific technologies, existence of equilibrium in our environment can be proven using standard techniques (noting that each technology can then be represented as a production possibility set by adding free disposal), such as that used to prove Theorem 17BB2 of \cite{mas1995microeconomic}. This establishes existence.

An equilibrium of our economy must be Pareto efficient by the first Welfare Theorem. Further, as by Lemma \ref{lemma:representative_consumer} our economy admits a representative consumer with utility $U(c_1, \ldots, c_F)$, all Pareto efficient allocations must maximize this utility function subject to feasibility constraints. We show that the set of feasible consumption bundles is convex and compact, and hence, as the representative consumer's utility function is increasing and quasi-concave, that there is a unique bundle of final goods that solves the Pareto problem.

A consumption bundle $c\in\Re_+^F$ is feasible if
\begin{enumerate}
\item $c_f\leq \sum_{\tau:O(\tau)=f} y_{\tau}$ for all $f$,
\item $y_{\tau}\leq \min_{k\in I(\tau)} \sum_{\tau':O(\tau')=k} \left(\frac{x_{\tau'\tau}}{-\tau_k\theta_{\tau'\tau}}\right)$  for all $\tau$
\item $\sum_{\tau} x_{n\tau}\leq L_n$ for all $n$.
\end{enumerate}

The first conditions requires that enough final goods are produced. The second condition requires that sufficient inputs are sourced to support the required output for each technology. The final condition requires that the use of labor satisfies the labor endowments.

Consider two feasible consumption bundles $c$ and $c'$. We first show that this implies the consumption bundle $c''=\lambda c + (1-\lambda) c'$ is also feasible for all $\lambda\in[0,1]$. First suppose that we reduced the labor endowments of all countries to $\lambda$ their initial levels. As all technologies are constant returns to scale, and it was feasible before the reduction to produce the bundle $c$, it must be feasible after the reduction to produce the bundle $\lambda c$. This can be obtained by reducing all inputs (and hence all outputs) of all used technologies to $\lambda$ their initial levels. Equivalently, if all labor endowments are reduced to $(1-\lambda)$ their initial levels, then the bundle $(1-\lambda)c'$ is feasible. Hence, the bundle $c''$ is feasible with the initial labor endowments. This shows that set of feasible consumption bundles is convex.

We now show that the set of feasible consumption bundles is compact. Take any feasible use of technologies that produces a non-zero consumption bundle $c$. As all technologies use a strictly positive amount of labor, producing this bundle uses a strictly positive amount of labor. Let $L(c)\in \Re_+^n$ denote the vector of labor inputs used across countries. Fixing the use of technologies, as all technologies are constant returns to scale, we can increase all inputs to $\lambda\geq 1$ their initial level to produce the consumption bundle $\lambda c$. Thus there exists a unique $\bar \lambda \geq 1 $ that maximizes $\lambda c $ subject to $\lambda L(c)_n \leq L_n$ for all $n$. As the consumptions must be non-negative and the $0$ bundle is feasible, the set of feasible consumption bundle is compact.

The Pareto problem therefore involves maximizing a strictly quasi-concave function subject to a convex and compact constraint set, and thus has a unique solution. Hence, in all equilibria, the same aggregate consumption must occur.
\end{proof}

\end{document}